\newif\ifDRAFT 
\theoremstyle{plain}
\newtheorem{theorem}{Theorem}[section]
\newtheorem{lemma}[theorem]{Lemma}
\newtheorem{observation}[theorem]{Observation}
\newtheorem{claim}[theorem]{Claim}
\newtheorem{question}{Question}[section]
\theoremstyle{definition}
\newtheorem{definition}{Definition}[section]
\crefname{equation}{Eqn.}{Eqns.}
\newcommand{\IGNORE}[1]{}
\DeclareMathOperator*{\argmin}{arg\,min}
\DeclareSymbolFont{bbold}{U}{bbold}{m}{n}
\DeclareSymbolFontAlphabet{\mathbbold}{bbold}
\newcommand{\Ind}{\mathbbold{1}}
\newcommand{\Event}{\mathcal{E}}
\newcommand{\E}{{\mathbb E\/}}
\newcommand{\paren}[1]{\left( #1 \right)}
\newcommand{\ang}[1]{\left< #1 \right>}
\newcommand{\ceil}[1]{\lceil #1 \rceil}
\newcommand{\poly}{\operatorname{poly}}
\newcommand{\bydef}{\stackrel{\operatorname{def}}{=}}
\newcommand{\Decomp}{{\sf Decomp}}
\newcommand{\C}{\mathcal{C}}
\newcommand{\K}{\mathcal{K}}
\newcommand{\LL}{\mathcal{L}}
\newcommand{\Hier}{\mathcal{H}}
\renewcommand{\P}{\mathcal{P}}
\newcommand{\Unify}{\mathsf{Unify}}
\newcommand{\id}{\mathsf{id}}
\newcommand{\uid}{\mathsf{uid}}
\newcommand{\eid}{\mathsf{eid}}
\newcommand{\xid}{\mathsf{xid}}
\newcommand{\type}{\mathsf{type}}
\newcommand{\anc}{\mathsf{anc}}
\newcommand{\sketch}{\mathsf{sketch}}
\newcommand{\dsketch}{\mathsf{dsketch}}
\newcommand{\pre}{\mathsf{pre}}
\newcommand{\post}{\mathsf{post}}
\newcommand{\up}{\operatorname{up}}
\newcommand{\down}{\operatorname{down}}
\newcommand{\bad}{\operatorname{bad}}
\newcommand{\cut}{\operatorname{cut}}
\newcommand{\all}{\operatorname{all}}
\newcommand{\hash}{\operatorname{hash}}
\newcommand{\miss}{\operatorname{miss}}
\newcommand{\hit}{\operatorname{hit}}
\newcommand{\FTDecomp}{\mathsf{FTDecomp}}
\newcommand{\Port}{\mathsf{port}}
\newcommand{\TD}{\textit{TD}}
\newcommand{\GetEdge}{\mathsf{GetEdge}}
\newcommand{\Merge}{\mathsf{Merge}}
\newcommand{\name}{\mathsf{name}}
\newcommand{\dest}{\mathsf{dest}}
\newcommand{\tabl}{\mathsf{tabl}}
\newcommand{\port}{\mathsf{port}}
\newcommand{\Furer}{F\"{u}rer}
\newcommand{\Boruvka}{Bor\r{u}vka}
    \title{Connectivity Labeling and Routing with Multiple Vertex Failures}
    \author{Anonymous Authors}
    \title{Connectivity Labeling and Routing with Multiple Vertex Failures\thanks{Supported by NSF Grant CCF-2221980,
    by the European Research Council (ERC) under the European Union’s Horizon 2020
    research and innovation programme, grant agreement No. 949083,
    and by the Israeli Science Foundation (ISF), grant 2084/18.}}
    \author{Merav Parter\\
     Weizmann Institute
     \and
     Asaf Petruschka\\
     Weizmann Institute
     \and
     Seth Pettie\\
     University of Michigan}
\date{}
\begin{document}

\pagenumbering{roman}

\maketitle

\begin{abstract}

We present succinct \emph{labeling schemes} for answering connectivity queries in graphs subject to a specified number of \emph{vertex failures}.
An $f$-vertex/edge fault tolerant ($f$-V/EFT) connectivity labeling is a scheme that produces succinct labels for the vertices (and possibly to the edges) of an $n$-vertex graph $G$, such that given only the labels of two vertices $s,t$ and of at most $f$ faulty vertices/edges $F$, one can infer if $s$ and $t$ are connected in $G-F$.
The primary complexity measure is the maximum label length (in bits).

The $f$-EFT setting is relatively well understood: [Dory and  Parter, PODC 2021] gave a randomized scheme with succinct labels of $O(\log^3 n)$ bits, which was subsequently derandomized by [Izumi et al., PODC 2023] with $\tilde{O}(f^2)$-bit labels.
As both noted, handling vertex faults is more challenging.
The known bounds for the $f$-VFT setting are far away: 
[Parter and Petruschka, DISC 2022] gave $\tilde{O}(n^{1-1/2^{\Theta(f)}})$-bit labels, which is linear in $n$ already for $f =\Omega(\log\log n)$.

In this work we present an efficient $f$-VFT connectivity labeling scheme using $\poly(f, \log n)$ bits.
Specifically, we present a randomized scheme with $O(f^3 \log^5 n)$-bit labels, and a derandomized version with $O(f^7 \log^{13} n)$-bit labels, compared to an $\Omega(f)$-bit lower bound on the required label length.
Our schemes are based on a new \emph{low-degree graph decomposition} that improves on [Duan and Pettie, SODA 2017], and facilitates its distributed representation into labels.
This is accompanied with specialized \emph{linear graph sketches} that
extend the techniques of the Dory and Parter to the vertex 
fault setting,
which are derandomized by adapting the approach of Izumi et al.\ and combining it with \emph{hit-miss hash families} of [Karthik and Parter, SODA 2021].

Finally, we show that our labels naturally yield routing schemes avoiding a given set of at most $f$ vertex failures with table and header sizes of only $\poly(f,\log n)$ bits. This improves significantly over the linear size bounds implied by the EFT routing scheme of Dory and Parter.

\end{abstract}

\newpage
{\small\tableofcontents}
\newpage

\pagenumbering{arabic}

\section{Introduction}

Labeling schemes are fundamental distributed graph data structures, with various applications in communication networks, distributed computing and graph algorithms.
Such schemes are concerned with assigning the vertices (and possibly also edges) of a given graph with succinct and meaningful names, or \emph{labels}.
The inherent susceptibility to errors in many real-life networks creates a need for supporting various logical structures and services in the presence of failures. The focus of this paper is on labeling and routing schemes for connectivity under a limited number of \emph{vertex faults}, which is
poorly understood compared to 
the edge fault setting.

Let $G=(V,E)$ be an $n$-vertex graph, and $f\geq 1$ be an integer parameter. 
An $f$-vertex fault tolerant (VFT) labeling scheme assigns short labels to the vertices, so that
given a query $\ang{s,t,F} \in V \times V \times {V \choose \leq f}$, 
one can determine if $s$ and $t$ are connected in $G-F$,
merely by inspecting the labels of the query vertices $\{s,t\}\cup F$.
Edge fault tolerant (EFT) labelings are defined similarly, only with $F \subseteq E$.
The main complexity measure of a labeling scheme is the maximal \emph{label length} (in bits), while construction and query time are secondary.

Since their first explicit introduction by Courcelle and Twigg~\cite{CourcelleT07} and until recently, all $f$-EFT and $f$-VFT labeling schemes were tailored to specialized graph classes, such as bounded treewidth, planar, or bounded doubling dimension~\cite{CourcelleT07,CourcelleGKT08,AbrahamCG12,AbrahamCGP16,CourcelleGKT08},
or limited to handling only a small number of faults~\cite{KhannaB10,ChechikLPR12,ParterP22a}. 

Dory and Parter~\cite{DoryP21} were the first to provide $f$-EFT connectivity labels for general graphs. They developed a \emph{randomized} scheme with label size of $O(\log^3 n)$ bits, regardless of $f$, in which queries are answered correctly with high probability, i.e., of $1-1/\poly(n)$. Their construction is based  on the \emph{linear graph sketching} technique of~\cite{KapronKM13,AhnGM12}. Notably, their labels can be used in an almost black-box manner to yield approximate distances and routing schemes; see \cite{ChechikLPR12,DoryP21}. 
By increasing the label length of the Dory-Parter scheme to $\tilde{O}(f)$ bits, the randomly assigned labels will, with high probability, answer \emph{all} possible $n^{O(f)}$ queries correctly.
Izumi, Emek, Wadayama, and Masuzawa~\cite{IzumiEWM23} provided a full derandomization of the Dory-Parter scheme, where 
labels are assigned 
deterministically in polynomial time
and have length $\tilde{O}(f^2)$ bits.

Vertex faults are considerably harder to deal with than edge faults. 
A small number of failing vertices can break the graph into a possibly linear number of connected components. 
Moreover, known structural characterization of \emph{how} $f$ vertex faults change connectivity are lacking, unless $f$ is small; see~\cite{DiBattistaT96,KanevskyTBC91,PettieY21,PilipczukSSTV22}.
By a naive reduction from vertex to edge faults, the Dory-Parter scheme yields VFT connectivity labels of size $\tilde{O}(\Delta(G))$, where $\Delta(G)$ is the maximum degree in $G$.
This dependency is unsatisfactory, as $\Delta(G)$ might be even linear in $n$.
Very recently, Parter and Petruschka~\cite{ParterP22a} designed $f$-VFT connectivity labeling schemes for small values of $f$. 
For $f=1$ and $f=2$ their labels have size $O(\log n)$ 
and $O(\log^3 n)$, respectively, and in general the size is $\tilde{O}(n^{1-1/2^{f-2}})$,  which is sublinear in $n$ whenever $f=o(\log\log n)$.
By comparing this state of affairs to the EFT setting, the following question naturally arises:
\begin{question}\label{question:VFT-labels}
    Is there an $f$-VFT connectivity labeling scheme with labels of $\poly(f,\log n)$ bits?
\end{question}

\paragraph{Compact Routing.}
An essential requirement in communication networks is to provide efficient routing protocols, and the error-prone nature of such networks demands that we route messages avoiding vertex/edge faults.
A routing scheme consists of two algorithms. The first is a preprocessing algorithm that computes (succinct) routing tables and labels for each vertex. The second is a routing algorithm that routes a message from $s$ to $t$.  Initially the labels of $s,t$ are known to $s$. 
At each intermediate node $v$, upon receiving the message, $v$ uses only its local table and the (short) header of the message to determine the next-hop, specified by a \emph{port number}, to which it should forward the message. 
When dealing with a given set $F$ of at most $f$ faults, the goal is to route the message along an $s$-to-$t$ path in $G-F$.
We consider the case where the labels of $F$ are initially known to $s$, also known as \emph{forbidden-set routing}.%
\footnote{This assumption is made only for simplicity and clarity of presentation. It can be omitted at the cost of increasing the route length and the space bounds by factors that are small polynomials in $f$, using similar ideas as in \cite{DoryP21}.}.

The primary efficiency measures of a routing scheme are the \emph{space} of the routing tables, labels and headers; and the \emph{stretch} of the route, i.e., the ratio between the length of the $s$-$t$ routing path in $G-F$, and the corresponding shortest path distance.
In the fault-free and EFT settings, efficient routing schemes for in general graphs are known; we refer to ~\cite{DoryP21} for an overview.
The known bounds for VFT routing schemes in general graphs are much worse; there is no such scheme with space bounds sublinear in $n$, even when allowing unbounded stretch.
This is in sharp contrast to the $f$-EFT setting for which \cite{DoryP21} provides each vertex a table of $\tilde{O}(f^3 n^{1/k})$ bits, labels of $\widetilde{O}(f)$ bits (for vertices and edges) and headers of $\tilde{O}(f^3)$ bits, while guaranteeing a route stretch of $O(k f)$. The current large gap in the quality of routing schemes under vertex faults compared to their edge-faulty counterparts leads to the following question.

\begin{question}\label{q:route}
Is there an $f$-VFT routing scheme for general graphs with \emph{sublinear} space bounds for tables, labels and headers?
\end{question}

\paragraph{The Centralized Setting and Low-Degree Decompositions.}  A closely related problem is that of designing \emph{centralized sensitivity oracles} for $f$-VFT connectivity, which, in contrast to its distributed labeling counterpart, is very well understood.
Results of Duan and Pettie~\cite{DuanP20} followed by Long and Saranurak~\cite{LongS22} imply an $\tilde{O}(\min\{m, fn\})$-space data structure (where $m$ is the number of edges), that updates in response to a given failed set $F \subseteq V$, $|F| \leq f$ within $\hat{O}(f^2)$ time, then answers connectivity queries in $G-F$ in $O(f)$ time.
These bounds are almost-optimal under certain hardness assumptions~\cite{KopelowitzPP16,HenzingerKNS15,LongS22}. 
See~\cite{BrandS19,PilipczukSSTV22} for similar
oracles with update/query time independent of $n$.

As previously noted, a major challenge with vertex faults, arising also in the centralized setting, is dealing with large degrees.
To tackle this challenge, Duan and Pettie~\cite{DuanP20} used a recursive version of the \Furer-Raghavachari~\cite{FurerR94} algorithm to build a \emph{low-degree hierarchy}. For any graph $G$,
it returns a $\log n$-height hierarchical partition of $V(G)$ into vertex sets, 
each spanned by a Steiner tree of degree at most $4$. 
For $f$-VFT connectivity queries, 
having an $O(1)$-degree tree is almost as good as having $\Delta(G)=O(1)$.
Duan, Gu, and Ren~\cite{DuanGR21} extended the low-degree hierarchy~\cite{DuanP20} 
to answer $f$-VFT approximate distance queries, and Long and Saranurak~\cite{LongS22}
gave a faster construction of low-degree hierarchies (with $n^{o(1)}$-degree trees) 
using expander decompositions.  
However, prior usages of such hierarchies seem to hinge significantly on centralization, and facilitating their distributed representation for labeling schemes calls for new ideas.

\subsection{Our Results}
The central contribution of this paper is in settling \Cref{question:VFT-labels} to the affirmative.
We present new randomized and deterministic labeling schemes 
for answering $f$-failure connectivity queries
with label length $\poly(f, \log n)$, 
which improves on~\cite{ParterP22a} for all $f\geq 3$.
Our main result is:

\begin{theorem}\label{thm:main-randomized}
    There is a randomized polynomial-time labeling scheme 
    for $f$-VFT connectivity queries that outputs 
    labels with length $O(f^3\log^5 n)$.
    That is, the algorithm computes a labeling function 
    $L : V\to \{0,1\}^{O(f^3\log^5 n)}$ such that given 
    $L(s)$, $L(t)$ and $\{L(v) \mid v\in F\}$ where $|F|\leq f$, 
    one can report whether $s$ and $t$ 
    are connected in $G-F$, which is correct 
    with probability $1-1/\poly(n)$.
\end{theorem}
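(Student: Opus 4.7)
The plan is to hierarchically decompose $G$ using the new low-degree decomposition of \Cref{sect:decomp} and, at each level, equip every vertex with a bundle of randomized linear graph sketches in the style of~\cite{AhnGM12,KapronKM13}, so that a connectivity query in $G - F$ is answered by a fault-tolerant \Boruvka{} procedure operating level-by-level on the pieces induced by the failures. The result is $O(\log n)$ levels of $\poly(f,\log n)$-bit sketches per vertex, which telescope to the stated $O(f^3 \log^5 n)$-bit label length.

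The first step is to build a hierarchy $\Hier$ of depth $O(\log n)$ in which every level partitions $V(G)$ into components, each spanned by a constant-degree tree. This guarantees that deleting any $F$ with $|F| \leq f$ shatters each component at each level into at most $O(f)$ connected \emph{pieces}, whose structure can be reconstructed solely from the labels of $F \cup \{s,t\}$. The second step is to attach to every $v$, at every level $i$ containing $v$, a bundle of independent $O(\log^2 n)$-bit linear sketches designed so that the XOR of vertex-sketches over a piece $P$ --- corrected by terms read from the labels of $F$ to cancel the ``ghost'' edges that $P$ sends into $F$ --- gives a sketch of the edges crossing out of $P$ in the graph obtained by contracting each component at the level just below; from such a sketch, a random outgoing edge can be recovered with probability $1 - 1/\poly(n)$. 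A vertex's label consists of its trajectory through $\Hier$ together with these sketch bundles, and accounting over $O(\log n)$ levels, $O(f)$ pieces per component, and $O(\log n)$ \Boruvka{} repetitions (with additional fault-tolerant sparsifier overhead to ensure inter-piece connectivity is preserved under vertex deletions) yields the claimed $O(f^3 \log^5 n)$ total.

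To answer a query $\langle s, t, F \rangle$, the decoder sweeps bottom-up through $\Hier$: at each level it reads off the piece structure of each component from the labels, runs $O(\log f)$ rounds of \Boruvka{} using one fresh sketch bundle per round to extract an inter-piece edge and merge accordingly, and thereby determines which components at the next level are to be glued together. After $O(\log n)$ levels the algorithm reports whether $s$ and $t$ lie in the same component of $G - F$, correctly with probability $1 - 1/\poly(n)$ by a union bound over all $\poly(f, \log n)$ sketch-decoding events triggered by the query.

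The principal obstacle is the \emph{ghost-edge cancellation}: a naive XOR of sketches over $P$ also captures edges from $P$ to $F$, so we must compactly encode, in each vertex's own label, enough information about its incident edges to enable canceling these contributions using only the labels of the (at most $f$) failed vertices, while preserving both the linearity of the sketches and polylogarithmic per-vertex storage. Overcoming this forces a fault-tolerant graph sparsifier and carefully structured hashing to be layered on top of the vanilla~\cite{AhnGM12,KapronKM13} sketches, and it is this layering --- multiplied against the $O(f)$ pieces that must each carry decodable sketches and the $O(\log n)$-level cascade --- that drives the polynomial $f$-dependence in the label length.
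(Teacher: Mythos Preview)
Your proposal misses the central structural idea of the paper and, as written, would not go through. You describe ``a hierarchy $\Hier$ of depth $O(\log n)$ in which every level partitions $V(G)$ into components, each spanned by a constant-degree tree'' --- but that is precisely the original Duan--Pettie decomposition (\Cref{thm:DP-decomposition}), whose trees are \emph{Steiner} trees with Steiner points lying outside the component. The paper explicitly identifies this as the obstruction to a labeling scheme: you cannot store per-vertex subtree sketches when the tree spanning your component borrows vertices from elsewhere. The actual fix (\Cref{thm:S-decomposition}) is to build not one but $f{+}1$ coarsened hierarchies $\Hier(S_1),\ldots,\Hier(S_{f+1})$, keyed by a partition $(S_1,\ldots,S_{f+1})$ of $V$, so that (i) for any query set $F$ with $|F|\le f$ some $S_i$ avoids $F$ entirely, and (ii) in $\Hier(S_i)$ every component has a genuine (non-Steiner) spanning tree in which all non-$S_i$ vertices --- hence all of $F$ --- have degree $O(\log n)$, while $S_i$-vertices may have arbitrary degree. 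This pigeonhole over $f{+}1$ hierarchies is what makes the failed vertices tractable, and it is the source of one factor of $f$ in the label length; your proposal has no analogue of it.

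The query algorithm is also different in kind from your ``level-by-level bottom-up sweep.'' The paper does not process levels sequentially. Instead, having fixed the good hierarchy $\Hier(S_i)$, it defines a single auxiliary multigraph $\hat{G}$ that augments $G$ with \emph{typed} clique edges on each neighbor set $N(\Hier_K)$ (these shortcut entire unaffected subtrees), then restricts to a query graph $G^*$ on the union of all affected components and runs one flat \Boruvka{} on $G^*-F$ (\Cref{lem:G*}). The decomposition of cut-sets into $\hat{E}_{\up}$, $\hat{E}_{\down}$, $\hat{E}_{\bad}$ pieces (\Cref{lem:E*cut}) and the $\tilde{O}(f^2)$-outdegree orientation of $\hat{G}$ (\Cref{lem:orienting-aux-graph}) are what make the ghost-edge cancellation you flag actually implementable from short labels; your proposal names the obstacle but supplies no mechanism. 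Finally, your bookkeeping (``$O(\log n)$ levels, $O(f)$ pieces, $O(\log n)$ \Boruvka{} rounds'') does not produce $f^3$; the paper's $f^3$ is $(f{+}1)$ hierarchies times $O(f^2\log^5 n)$ per-hierarchy label, the latter dominated by $O(\log n)$ ancestor components each storing $|N(\Hier_K)|=O(f\log n)$ sketches of size $O(f\log^3 n)$.
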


This resolves an open problem raised in \cite{DoryP21},
improves significantly over the state-of-the-art $\poly(n)$-bit labels when $f\geq 3$~\cite{DoryP21,ParterP22a}, 
and is only polynomially off from an 
$\Omega(f)$-bit lower bound provided in this paper (Theorem~\ref{thm:label-lb}).
The labeling scheme of \Cref{thm:main-randomized} is based on a new low-degree hierarchy theorem extending the Duan-Pettie~\cite{DuanP20} construction, which overcomes the hurdles presented by the latter for facilitating its distributed representation.  

Further, we derandomize the construction of Theorem~\ref{thm:main-randomized}, 
by combining the approach of Izumi et al.~\cite{IzumiEWM23} 
with the deterministic ``hit-miss hashing" technique of Karthik and Parter~\cite{KarthikP21}, which addresses an open problem of Izumi et al.~\cite{IzumiEWM23}, as follows.

\begin{theorem}\label{thm:main-det}
    There is a deterministic polynomial-time labeling scheme 
    for $f$-VFT connectivity queries that outputs 
    labels with length $O(f^7\log^{13} n)$.
\end{theorem}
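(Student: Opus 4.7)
The plan is to derandomize the $O(f^3\log^5 n)$-bit scheme of \Cref{thm:main-randomized} by replacing each randomized ingredient with a deterministic counterpart, incurring a controlled $\poly(f,\log n)$ overhead per step. The subtle point that drives the whole argument is that, while the randomized labels of \Cref{thm:main-randomized} are correct on any \emph{fixed} query $\langle s,t,F\rangle$ with probability $1-1/\poly(n)$, a deterministic scheme must answer \emph{all} $n^{O(f)}$ queries correctly. Hence the derandomization must succeed against every query simultaneously; we cannot just derandomize each component separately against a fixed query.

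First I would identify the randomized ingredients used to prove \Cref{thm:main-randomized}: (i) hitting sets used to choose ``portal''/sampled vertices at each level of the low-degree hierarchy, (ii) the fault-tolerant sparsifiers for connectivity, and (iii) the $\ell_0$-linear sketches of \cite{AhnGM12,KapronKM13} that drive the \textsf{Unify} subroutine. Items (i) and (ii) admit essentially standard derandomizations: the bad events are combinatorial (certain $f$-subsets failing to hit/avoid a $\poly(n)$-sized family of sets), so the method of conditional expectations with a pessimistic estimator over all $n^{O(f)}$ queries produces polynomial-time deterministic constructions. This costs only an $f^{O(1)}\poly\log n$ factor in label size over the randomized baseline.

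The main obstacle, and the technically heaviest step, is derandomizing the linear sketches (iii). These sketches cannot be derandomized in full generality, but as observed by Izumi, Emek, Wadayama, and Masuzawa~\cite{IzumiEWM23} in the EFT setting, in this application each sketch is only ever invoked on subsets that are structured by the hierarchical decomposition, and the success event for a fixed target vertex $v$ with a fixed failure set $F$ can be captured as the containment of a point in a small family of axis-aligned rectangles. I would invoke the $\epsilon$-net construction of \cite{IzumiEWM23} to deterministically select the sketches' hash functions, and compose it with the hit-miss hash families of Karthik and Parter~\cite{KarthikP21} to handle the additional VFT-specific difficulty: for every query $\langle s,t,F\rangle$ and every one of the $O(\log n)$ hierarchy levels, we need a hash bucket that ``hits'' the relevant set of vertices in a cluster while ``missing'' $F$, so that a non-faulty representative survives inside a sketch cell. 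Hit-miss families deliver precisely this universal guarantee, deterministically, with size polynomial in $f$ and $\log n$.

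Finally, the accounting: the deterministic hitting-set and sparsifier replacements blow up the baseline by a $\poly(f,\log n)$ factor; composing the $\epsilon$-net construction (contributing roughly an $\tilde{O}(f^2)$ factor, as in~\cite{IzumiEWM23}) with hit-miss families (another $f^{O(1)}\poly\log n$ factor) over $O(\log n)$ hierarchy levels contributes an additional $\poly(f,\log n)$ blowup. Tracking the exponents carefully through the hierarchy and the sketch derandomization gives the claimed $O(f^7\log^{13} n)$ bound. Polynomial-time constructibility is inherited from the efficient construction of all deterministic primitives invoked (hit-miss hash families, $\epsilon$-nets for rectangles, and the conditional-expectations step over the explicit pessimistic estimator).
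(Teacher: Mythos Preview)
Your high-level plan matches the paper's approach: derandomize the partition $(S_1,\ldots,S_{f+1})$ by conditional expectations, replace the fault-tolerant vertex sampling with the hit-miss families of~\cite{KarthikP21}, and replace the edge-sampling inside the sketches using the $\epsilon$-net machinery of~\cite{IzumiEWM23}. However, several concrete points in your proposal are either wrong or too vague to constitute a proof.

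First, the linear sketches do not ``drive the \textsf{Unify} subroutine.'' \textsf{Unify} is a purely deterministic part of the hierarchy construction; the sketches are used only at \emph{query time} to implement $\GetEdge$ in the \Boruvka-style algorithm. Relatedly, your ``pessimistic estimator over all $n^{O(f)}$ queries'' is not polynomial time for non-constant $f$. The paper never enumerates queries: the conditional-expectations step only needs to satisfy, for each $\gamma\in\C$ with $|N(\Hier^0_\gamma)|\ge 3(f+1)\ln n$, that every color class hits $N(\Hier^0_\gamma)$. These are $O(n)$ constraints, not $n^{O(f)}$, and the sparsifier/sketching derandomizations likewise work against polynomially many structural constraints, not against the full query space.

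Second, and most importantly, you are missing the technical crux that makes the $\epsilon$-net technique applicable here. In~\cite{IzumiEWM23} the cut-queries on a single spanning tree are already expressible as rectangle ranges; in the VFT setting there is no global spanning tree. The paper introduces a virtual tree $T_{\Hier(S)}$ gluing together all the component trees $T(K)$ and the hierarchy skeleton, maps each edge to a 2D point via DFS timestamps, and then proves (\Cref{lem:rectangles}) that for any union $P$ of initial parts, the relevant edge set $\hat{E}_i(P)$ is exactly the set of $\hat{G}_i$-points inside a region $R=R_1\cap R_2\cap R_3$ which decomposes into $O(f^2\log^2 n)$ disjoint axis-aligned rectangles. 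Without this lemma you cannot invoke the $\epsilon$-net construction at all, and the parameters $\alpha=O(f^2\log^2 n)$ and $\beta=\Theta(f^2\log^3 n)$ that come out of it are what drive the final exponent.

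Finally, your accounting is not an accounting. The paper's bound comes from: miss-hit family of size $k=O(f^3\log^6 n)$ (for $(O(f\log n),2)$-miss-hit), $h=O(\log n)$ nested $\epsilon$-net levels, $\xid_\beta$ strings of $O(\beta\log n)=O(f^2\log^4 n)$ bits, giving a $\dsketch$ of $O(f^5\log^{11} n)$ bits; the deterministic orientation via an $(O(f\log n),3)$-miss-hit family gives outdegree $O(f^4\log^8 n)$; and then the same length analysis as in \Cref{sect:labels} yields $O(f^7\log^{13} n)$. You need to actually carry out this calculation, not assert that ``tracking the exponents carefully'' gives the answer.
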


We also give an alternative deterministic scheme (in \Cref{sect:alternative}), with larger $\widetilde{O}(f^{O(1/\epsilon)}\cdot n^{\epsilon})$-bit labels, but with the benefit of using existing tools in a more black-box manner. 
It relies on a different extension of the Duan-Pettie decomposition, which may be of independent interest.

To address \Cref{q:route}, we use the labels of Theorem \ref{thm:main-randomized} that naturally yield compact routing schemes in the presence of $f$ vertex faults.

\begin{theorem}\label{thm:randomized-routing}
    There is a randomized forbidden-set routing scheme resilient to $f$ (or less) vertex faults, that assigns each vertex $v \in V$ a label $L(v)$ of $O(f^3 \log^5 n)$ bits, and a routing table $R(v)$ of $O(f \log n)$ bits.
    The header size required for routing a message is $O(f \log^2 n)$ bits. The $s$-$t$ route has $O(fn \log n)$ many hops.
 \end{theorem}

This improves considerably upon the current linear space bounds implied by the $f$-EFT routing scheme of \cite{DoryP21}, with the same hop-bound. The routing scheme can also be derandomized in a straightforward manner, using the deterministic labels of Theorem \ref{thm:main-det}.

\subsection{Preliminaries}

Throughout, we fix the $n$-vertex input graph 
$G = (V,E)$, assumed to be connected without loss of generality.
For $U \subseteq V$, 
$G[U]$ and $G-U$ denote the subgraphs of $G$ induced by $U$ and $V-U$, respectively.
When $P_0, P_1$ are paths, $P_0 \circ P_1$ denotes their concatenation, defined only when the last vertex of $P_0$ coincides with the first vertex of $P_1$.
We use the operator $\oplus$ to denote \emph{both} the 
symmetric difference of sets ($A\oplus B = (A-B)\cup(B-A)$) 
and the bitwise-XOR of bit-strings.
The correct interpretation will be clear from 
the type of the arguments.

\section{Technical Overview}\label{sect:tech-overview}

At the macro level, our main $f$-VFT connectivity labeling scheme (\Cref{thm:main-randomized}), is obtained by substantially extending and combining two main tools:
\begin{enumerate}
    \item[(I)] The Dory-Parter~\cite{DoryP21} labels for connectivity in presence of \emph{edge faults}, based on the \emph{linear graph sketching} technique of \cite{AhnGM12,KapronKM13}.

    \item[(II)] The Duan-Pettie~\cite{DuanP20} \emph{low-degree hierarchy}, originally constructed for \emph{centralized} connectivity oracles under vertex failures. 
\end{enumerate}

The overview focuses on the randomized construction; we briefly discuss derandomization afterwards. We start with a short primer on graph sketching and the Dory-Parter labeling scheme, since we build upon these techniques in a ``white box" manner. Our starting observation shows how the Dory-Parter labels can be extended to handle vertex faults, when assuming the existence of a \emph{low-degree spanning tree}. We then introduce the Duan-Pettie low-degree hierarchy, which has been proven useful in the centralized setting;
intuitively, such a hierarchy lets us reduce general graphs to the low-degree spanning tree case.
We explain our strategy for using a low-degree hierarchy to obtain an $f$-VFT labeling scheme, which also pinpoints the hurdles preventing us from using the Duan-Pettie hierarchy ``as is" for this purpose.
Next, we discuss the resolution of these hurdles obtained by novel construction of low-degree hierarchies with improved key properties, and tie everything together to describe the resulting scheme.
Finally, we briefly discuss how to optimize the label size by a new combination of graph sketches with graph sparsification and \emph{low-outdegree orientations}.

\subsection{Basic Tools (I): Graph Sketches and the Dory-Parter Labels}\label{sect:sketches-into}

\paragraph{Graph Sketches.}
The \emph{linear graph sketching} technique of \cite{AhnGM12,KapronKM13} is a tool for identifying \emph{outgoing edges} 
from a given vertex subset $U \subseteq V$.
We give a short informal description of how it works, which could be skipped by the familiar reader.
Generate nested edge-subsets $E = E_0 \supseteq E_1 \supseteq \cdots \supseteq E_{O(\log n)} = \emptyset$ by sampling each $e \in E_i$ into $E_{i+1}$ with probability $1/2$.
Thus, for any $\emptyset \neq E' \subseteq E$, some $E_i$ contains \emph{exactly one} of the edges in $E'$, with some constant probability.
The sketch of $E'$, denoted $\sketch(E')$, is a list where the $i$-th entry holds the \emph{bitwise-XOR} of (the identifiers) of edges from $E'$ sampled into $E_i$:
$\bigoplus_{e \in E' \cap E_i} \id(e)$.
Crucially, the sketches are \emph{linear} with respect to the $\bigoplus$ operator:
$\sketch(E') \oplus \sketch(E'') = \sketch(E' \oplus E'')$.
The edge sketches are extended to vertex subsets $U \subseteq V$ as
\[
\sketch(U) = \bigoplus_{u \in U} \sketch(\{e \in E \mid \text{$e$ incident to $u$}\}).
\]
By linearity, the $U \times U$ edges cancel out, so $\sketch(U)$ is the sketch of \emph{outgoing edges from $U$}.
Most entries in $\sketch(U)$ are ``garbage strings" formed by XORing many edges, but the sketch property ensures that one of them contains $\id(e)$ of an edge $e$ outgoing from $U$, with constant probability.

\paragraph{The Dory-Parter Labels.}
Our approach builds upon the Dory-Parter~\cite{DoryP21} labels for \emph{edge faults}, which we now briefly explain. 
Choose any rooted spanning tree $T$ of $G$.
Construct standard \emph{$T$-ancestry labels}:
each $v\in V$ gets an $O(\log n)$-bit string $\anc(v)$.
Given $\anc(u),\anc(v)$ one can check if $u$ is a $T$-ancestor of $v$.
These are the vertex labels.
The label of an edge $e = \{u,v\}$ always stores $\sketch(e)$ and $\anc(u),\anc(v)$.
The labels of \emph{tree edges} are the ones doing the heavy lifting:
if $e \in E(T)$, we additionally store the \emph{subtree-sketches}
$\sketch(V(T_u))$ and $\sketch(V(T_v))$, where $T_x$ denotes the subtree rooted at $x$.

Given the labels of $s,t \in V$ and of failing $F \subseteq E$, 
the connectivity query (i.e., if $s,t$ are connected in $G - F$) is answered by a 
forest growing approach in the spirit of \Boruvka’s 1926 algorithm~\cite{Bor26,NesetrilMN01}. 
Letting $F_T=F \cap E(T)$, observe that $T -F_T$ consists of $|F_T| + 1$ connected \emph{parts} $\P = \{P_0, \dots, P_{|F_T|} \}$.
Each part can be expressed as $P_i = \bigoplus_{x} V(T_x)$, where the $\bigoplus$ runs over some subset of endpoints of $F_T$.
Thus, at initialization, the algorithm
computes the sketch $\sketch(P_i)$ by XORing subtree-sketches stored in the $F_T$-labels. (It knows which subtrees to XOR using the ancestry labels.) To avoid getting outgoing-edges that are in $F$, the $F$-edges  are \emph{deleted} from the relevant part-sketches:
For each $e = \{u,v\} \in F$, we locate the parts $P_u, P_v \in \P$ that contain $u,v$ (using ancestry labels), and if $P_u \neq P_v$, we update the sketches of $P_u,P_v$ by XORing them with $\sketch(e)$.
So, the part-sketches now refer to $G-F$ instead of $G$.

We next run \Boruvka, by working in $O(\log n)$ \emph{rounds}.
In each round, we use the part-sketches to find outgoing edges and \emph{merge} parts along them, forming a coarser partition.
The sketches of the new parts are computed by XORing the old ones.
By the final round, the parts become the connected components of $G-F$, with high probability.
Finally, we locate which initial parts contained $s,t$ using the ancestry labels, and see if these ended up in the same final part.

\subsection{Starting Point: Vertex Faults in Low-Degree Spanning Tree}\label{sect:intuition}

The intuition for our approach comes from the following idea.
Suppose we were somehow able to find a spanning tree $T$ of $G$ with small maximum degree, say $\Delta(T) = \tilde{O}(1)$.
Since the \emph{tree edges} are the ones doing the heavy lifting in the Dory-Parter scheme (by storing the subtree-sketches), the label of a failing vertex $x$ may store only the $\tilde{O}(1)$ labels of $x$'s incident edges in $T$.
However, there is an issue: how do we delete the non-tree edges incident to failing vertices from the part-sketches?
We cannot afford to store the sketch of each such edge explicitly, as the degrees in $G$ may be high.

To overcome this issue, we use the paradigm of \emph{fault-tolerant sampling}, first introduced by~\cite{ChuzhoyK09,weimann2013replacement}.
We generate $f^2$ random subgraphs $G_1, \ldots, G_{f^2}$. Each $G_i$ is formed by sampling each vertex w.p.\ $1/f$, and keeping only the edges with both endpoints sampled. This ensures that for every fault-set $F\subseteq V$, $|F|\leq f$, and every edge $e$ of $G-F$, with constant probability, at least one $G_i$ contains $e$ ($G_i$ ``hits" $e$) but no edge incident to $F$ ($G_i$ ``misses" $F$).
We replace the subtree-sketches stored in the labels with $f^2$ basic sketches, one for each $G_i$. 
When trying to get an outgoing edge from a part $P$, the guarantee is that with constant probability, there will be some basic $G_i$-sketch of $P$ such that $G_i$ misses $F$ but hits one of the outgoing edges of $P$ in $G-F$; such a basic sketch which will provide us (again with constant probability) a desired outgoing edge.

The label length of the approach above becomes $\tilde{O}(f^2 \cdot \Delta(T))$ bits, as each vertex stores $f^2$ basic sketches for each of its incident tree edges.

\subsection{Basic Tools (II): The Duan-Pettie Low-Degree Hierarchy}\label{sect:DP-heirarchy}

The issue with the low-degree spanning tree idea is clear: such a tree might not exist.
The \emph{low-degree hierarchy} of Duan and Pettie~\cite{DuanP20} was designed for centrlized oracles for connectivity under vertex faults, in order to tackle exactly this issue.
Their construction is based on a recursive version of the 
\Furer-Raghavachari algorithm~\cite{FurerR94}, but understanding the algorithm is less important for our current purposes.
Rather, we focus on explaining its output, namely, \emph{what the low-degree hierarchy is}, and what are its key properties.

The Duan-Pettie hierarchy%
\footnote{The $0$-superscript in the notation $\Hier^0$ is used since the Duan-Pettie hierarchy serves as the initial point for other hierarchy constructions, introduced in \Cref{sect:resolution}.}
$\Hier^0$ consists of a partition $\C$ of the vertices $V$ into \emph{components}.
We use the letter $\gamma$ to denote one such component.
So, $V = \bigcup_{\gamma \in \C} \gamma$, and $\gamma \cap \gamma' = \emptyset$ for any two distinct components $\gamma,\gamma' \in \C$.
The components in $\C$ are hierarchically placed as the nodes of a \emph{virtual tree} (hence the name ``hierarchy").
We call the virtual hierarchy edges \emph{links}, to distinguish them from the edges of the original graph $G$.
For two components $\gamma,\gamma' \in \C$, we denote $\gamma \prec \gamma'$ if $\gamma$ is a strict \emph{descendant} of $\gamma'$ (i.e., $\gamma'$ is a strict \emph{ancestor} of $\gamma$) in the hierarchy tree.
Two components $\gamma,\gamma'$ such that $\gamma \preceq \gamma'$ or $\gamma \succeq \gamma'$ are called \emph{related}.
The key properties of the hierarchy $\Hier^0$ are as follows:
\begin{enumerate}
    \item \textbf{Logarithmic height:}
    The hierarchy tree $\Hier^0$ has height $O(\log n)$.
    
    \item \textbf{No lateral edges:}
    There are no \emph{lateral} $G$-edges that cross between unrelated components.
    Namely, if $\{u,v\}$ is an edge of $G$, and $\gamma_u, \gamma_v \in \C$ are the components containing $u,v$ respectively, then $\gamma_u$ and $\gamma_v$ are related.

    \item \textbf{Connected sub-hierarchies:}
    The vertices in each sub-hierarchy induce a connected subgraph of $G$.
    Namely, let $\Hier^0_\gamma$ be the subtree of $\Hier^0$ rooted at component $\gamma \in \C$, and $V(\Hier^0_\gamma) = \bigcup_{\gamma' \preceq \gamma} \gamma'$ be the vertices appearing in descdedants of $\gamma$ (i.e., found in the nodes of $\Hier^0_\gamma$).
    Then the subgraph $G[V(\Hier^0_\gamma)]$ is connected.

    \item \textbf{Low-degree Steiner trees:}
    Each component $\gamma \in \C$ is associated with a \emph{Steiner tree} $T^0(\gamma)$, whose \emph{terminal set} is $\gamma$.
    The tree $T^0(\gamma)$ is a subgraph of $G$ that spans all the vertices in $\gamma$, and has maximum degree $\leq 4$.
    However, it may contain \emph{Steiner points}: vertices outside $\gamma$.
\end{enumerate}

\subsection{First Attempt: Using the Duan-Pettie Hierarchy}\label{sect:first-attempt}

We now give an overview of how we would like to use the low-degree hierarchy, by taking the following methodological approach: 
First, we provide the general idea for constructing labels based on a low-degree hierarchy such as the Duan-Pettie hierarchy $\Hier^0$.
Then, we highlight the key properties that are \emph{missing} from $\Hier^0$ to make it satisfactory for our purposes.
In the following subsection (\Cref{sect:resolution}), we present our modified low-degree hierarchy constructions, which mitigate these barriers. 

\paragraph{Preprocessing: Creating the Auxiliary ``Shortcuts-Graph" $\hat{G}$.}
The labels are built on top of an \emph{auxiliary graph} $\hat{G}$ computed in a preprocessing step. The graph $\hat{G}$ consists of all 
$G$-edges plus an additional set of \emph{shortcut edges} that are computed based on the hierarchy, as explained next. 
For a component $\gamma$, let $N(\Hier^0_\gamma)$ denote the set of vertices \emph{outside} $V(\Hier^0_\gamma)$ that are adjacent to some vertex \emph{inside} $V(\Hier^0_\gamma)$ (that is, the \emph{neighbors of $\Hier^0_\gamma$}).
Note that as there are no lateral edges, $N(\Hier^0_\gamma)$ contains only vertices from strict ancestor components of $\gamma$.
Also, by the connected sub-hierarchies property, every distinct $u,v \in N(\Hier^0_\gamma)$ are connected in $G$ by a path whose internal vertices are contained in $V(\Hier^0_\gamma)$. We therefore add a shortcut edge between $u,v$ that represents the existence of such a path. 
To make sure we know that this edge corresponds to a path through $V(\Hier^0_\gamma)$, the shortcut edge is marked with \emph{type} ``$\gamma$".
To conclude, the auxiliary graph $\hat{G}$ is the graph formed by starting with $G$, giving all its edges type ``\emph{original}", and then, for each $\gamma \in \C$, adding a clique on $N(\Hier^0_\gamma)$ with edges of type ``$\gamma$".
Note that there may be multiple edges (with different types) connecting two vertices, so $\hat{G}$ is an \emph{edge-typed multi-graph}.

\paragraph{Query: Affected Components and the Query Graph $G^*$.}
We now shift our attention to focus on how any specific connectivity query $\ang{s,t,F}$ interacts with $\hat{G}$.
First, we define the notion of components that are \emph{affected} by the query. 
Intuitively, an affected component is one whose corresponding shortcut edges are no longer trusted, because the path they represent might contain faults from $F$.
Formally, $\gamma \in \C$ is called \emph{affected} if $V(\Hier^0_\gamma) \cap (F \cup \{s,t\}) \neq \emptyset$.%
\footnote{
    In case there are no faults from $F$ in $V(\Hier^0_\gamma)$, we do not really care if $s$ or $t$ are there; the shortcut edges with type ``$\gamma$" are still reliable. However, it will be more convenient (although not needed) to assume that $s,t$ are in affected components,  hence we also force this condition.
}
Observe that the set of affected components is \emph{upwards-closed}: If $\gamma$ is affected, then every $\gamma' \succeq \gamma$ is also affected.
As the query vertices $F \cup \{s,t\}$ lie only in at most $f+2$ different components, and the hierarchy has 
$L \leq \log n$ levels, there are only $O(f \log n)$ affected components.
The \emph{query graph} $G^*$ is defined as the subgraph of $\hat{G}$ that consists of all vertices lying in affected components, and all the edges of $\hat{G}$ that connect them and have \emph{unaffected types}. 
Namely, we delete ``\emph{bad}'' shortcut edges whose types are affected.
The key property we prove about $G^*$ is that $s,t$ are connected in $G-F$ if and only if they are connected in $G^* - F$. 
Hence, we would like our labels to support \Boruvka\ execution in $G^* - F$.
Note that unlike $\hat{G}$, which depends only on $G$, the graph $G^*$ is a function of $G$ and of the query elements $s,t$ and $F$. As we will see, one of the challenges of the decoding algorithm will be in performing computation on $G^*$ given label information computed based on the \emph{preprocessing graph} $\hat{G}$.

\paragraph{Key Obstacles in Labelizing the Duan-Pettie Hierarchy.} 
The general idea is that each affected component $\gamma$ has a low-degree spanning tree $T^0 (\gamma)$, enabling us to employ our approach for low-degree spanning trees: store in the label of a vertex the sketches of subtrees rooted at its tree-neighbors.
Thus, for each affected component, we can compute the sketches of the parts into which its tree breaks after the vertex-set $F$ fails.\footnote{An affected component does not necessarily have $F$-vertices in it, so it could remain as one intact part.}
Together, these parts constitute the initial partition for running the \Boruvka\ algorithm in $G^* - F$.
However, there are two main obstacles:

\begin{enumerate}
    \item[(a)] \textbf{Steiner points.}
    A vertex $x$ appears only in one component $\gamma_x$, but can appear in \emph{many trees} $T^0(\gamma)$ with $\gamma \neq \gamma_x$ \emph{as a Steiner point}.
    So even though $x$ only has $\leq 4$ neighbors in each such $T^0(\gamma)$, the total number of subtree-sketches we need to store in $x$'s label may be large.
    
    \item[(b)] \textbf{Large $N(\Hier^0_\gamma)$ sets.}
    The decoding algorithm is required to obtain sketch information with respect to the query graph $G^*$.
    When constructing the label of a vertex $x$ (in the preprocessing step), we think of $x$ as participating in an unknown query, which gives only \emph{partial} information on the future graph $G^*$: all ancestor components $\gamma \succeq \gamma_x$ will be affected. To modify $\hat{G}$-sketches into sketches in the query graph $G^*$, the shortcut-edges of type ``$\gamma$" should be deleted from the given sketches.
    To this end, we would like to store in $x$'s label, for every $\gamma \succeq \gamma_x$ and every $v \in N(\Hier^0_\gamma)$, the sketch of the edges $\hat{E}_\gamma (v)$: edges with type ``$\gamma$" that are incident to $v$.
    This is problematic as the neighbor-set $N(\Hier^0_\gamma)$ might be too large. 
\end{enumerate}

\subsection{Resolution: New Low-Degree Hierarchies}\label{sect:resolution}
To overcome obstacles (a) and (b), we develop a new low-degree decomposition theorem, which essentially shows how we can alter the Duan-Pettie hierarchy $\Hier^0$ to (a) admit low-degree spanning trees without Steiner points, and (b) to have small neighbor-sets of sub-hierarchies.

\paragraph{The $\Unify$ Procedure.}
We start with tackling obstacle (a).
The idea is rather intuitive: our issue with the trees $\{T^0(\gamma)\}$ is that they may contain edges connecting two different components.
I.e., a problematic edge $e = \{u,v\}$ appearing in $T^\cup = \bigcup_{\gamma \in \C} T^0(\gamma)$ is such that $\gamma_u \neq \gamma_v$.
To fix $e$, we want to \emph{unify} $\gamma_u$ and $\gamma_v$ into one component.
As there are no lateral edges, $\gamma_u$ and $\gamma_v$ must be related, say $\gamma_u \succ \gamma_v$.
If it happened to be that $\gamma_u$ is the \emph{parent} of $\gamma_v$, then this is easy: we merge $\gamma_u,\gamma_v$ into a new component $\gamma_{new} = \gamma_u \cup \gamma_v$, associated with the tree formed by connecting $T^0 (\gamma_u),T^0 (\gamma_v)$ through $e$, i.e., $T(\gamma_{new}) = T^0 (\gamma_u) \cup \{e\} \cup T^0(\gamma_{v})$.
The child-components of $\gamma_u, \gamma_v$ become children of the unified $\gamma_{new}$.
However, if $\gamma_u$ is a further-up ancestor of $\gamma_v$, such a unification can cause other issues; it may violate the ``no lateral edges" and ``connected sub-hierarchies" properties.
The reason these issues did not appear for a parent-child pair is that their unification can be seen as a \emph{contraction} of a hierarchy link.

We therefore develop a recursive procedure called $\Unify$, that when asked to unify $\gamma_u$ and $\gamma_v$, returns a \emph{connected} set of hierarchy-nodes that contains $\gamma_u,\gamma_v$.
Further, $\Unify$ exploits the properties of the low-degree hierarchy to also provide edges through which we can connect the trees $T^0 (\gamma)$ of the components $\gamma$ appearing in this set (while keeping the degrees in the unified tree small).
Thus, we can unify them and fix $e$.
By iteratively applying $\Unify$ to fix problematic $T^\cup$-edges, we end up with a spanning tree for each component, rather than with a Steiner tree.
Further, we prove that this does not increase the maximum tree-degree very much; it grows from $4$ to only $O(\log n)$. 
See \Cref{fig:Unify} (in \Cref{sect:decomp}) for an illustration of $\Unify$ (the notations and captions of \Cref{fig:Unify} are more technical, and should be understood after reading the formal \Cref{sect:decomp}).

\paragraph{Hierarchies Based on ``Safe" Subsets of Vertices}
To tackle obstacle (b), we exploit the following insight: 
The low-degree requirement can be relaxed, as long as we ensure that the \emph{failed $F$-vertices} have low degrees in the trees; the degree of non-failing vertices does not matter.
At first sight, this might not seem very helpful, as we do not know in advance which vertices are faulty (namely, we should prepare to \emph{any} possible set $F$ of $f$ vertex faults).
In order to deal with this challenge,
we randomly partition the vertices $V$ into $f+1$ sets $S_1, \dots, S_{f+1}$.
Each of these sets gets a tailor-made hierarchy $\Hier(S_i)$ constructed for it.
When constructing $\Hier(S_i)$, we think of $S_i$ as a set of \emph{safe} vertices, that will not fail, and are therefore allowed to have high degrees, while the vertices in $V-S_i$ should remain with small degrees.
Note that for every $F \subseteq V$ with $|F|\leq f$, there is some $S_i$ such that $S_i \cap F = \emptyset$;
the hierarchy $\Hier(S_i)$ will be used to handle queries with faulty-set $F$, so that $F$-vertices will have low degrees, as needed.

We now give a high-level explanation of how our relaxed degree requirement, allowing large degrees for $S_i$-vertices, can be used for eliminating large neighbor-sets of sub-hierarchies and obtaining $\Hier(S_i)$.
We set the ``large" threshold at $\Theta(f \log n)$.
Suppose $\gamma \in \C$ is some component with $|N(\Hier^0_\gamma)| = \Omega(f \log n)$.
Our goal is to eliminate this problematic component $\gamma$.
Again, the trick will be unifications.
Because each vertex in $N(\Hier^0_\gamma)$ has probability $1/(f+1)$ to be an $S_i$-vertex, with high probability, there is some safe vertex $u \in S_i \cap N(\Hier^0_\gamma)$.
Therefore, there is some $G$-edge $e = \{u,v\}$ with $\gamma_u \succ \gamma \succeq \gamma_v$.
We call $\Unify$ asking to unite $\gamma_u$ with $\gamma_v$, through the edge $e$.
As $\Unify$ returns connected sets of nodes, the resulting unified component will also include the problematic component $\gamma$, and it will be eliminated.
On a high level, the reason we may use the edge $e$ for connecting trees is because we are allowed to increase the degree of the safe vertex $u \in S_i$.
So, after repeatedly eliminating problematic components, all neighbor-sets of sub-hierarchies have size $O(f \log n)$,
and the degree of all vertices in $V-S_i$ (i.e., the unsafe vertices) in the trees remains $O(\log n)$.

\paragraph{The New Hierarchies.}
To summarize, we get $f+1$ hierarchies $\Hier(S_1),\dots, \Hier(S_{f+1})$, each corresponding to one set from a partition $(S_1, \dots, S_{f+1})$ of the vertices $V$.
So as not to confuse them with the Duan-Pettie Hierarchy $\Hier^0$, we denote the partition of $V$ to components \emph{in each hierarchy} $\Hier(S_i)$ by $\K(S_i)$, and denote components such by the letter $K$ (instead of $\gamma$).
Now, $\Hier_K (S_i)$ denotes the sub-hierarchy of $\Hier(S_i)$ rooted at component $K \in \K(S_i)$, and $N(\Hier_K (S_i))$ denotes its neighbor-set.
Each hierarchy $\Hier(S_i)$ has the following key properties:
\begin{enumerate}
    \item (Old) \textbf{Logarithmic height:} As before
    
    \item (Old) \textbf{No lateral edges:} As before.
    
    \item (Old) \textbf{Connected sub-hierarchies:} As before.
    
    \item (Modified) \textbf{Spanning trees with low-degrees of unsafe vertices:} Each $K \in \K(S_i)$ is associated with a tree $T(K)$ which is a subgraph of $G$ containing only the $K$-vertices (with no Steiner points), such that each vertex in $K-S_i$ has degree $O(\log n)$ in $T(K)$.
    
    \item (New) \textbf{Small neighbor-sets:} For every $K \in \K(S_i)$, $|N(\Hier_K (S_i))| = O(f \log n)$. 
\end{enumerate}

\subsection{Putting It All Together}\label{sect:putting-it-together}

We can now give a rough description of how the labels are constructed and used to answer queries, ignoring some nuances and technicalities.

\paragraph{Constructing Labels.}
We focus on the label of an (assumed to be) faulty vertex $x$, as these do most of the work during queries.
The label $L(x)$ is a concatenation of $f+1$ labels $L_i (x)$, one for each hierarchy $\Hier(S_i)$.
We only care about sets $S_i$ where $x \notin S_i$, as the $S_i$ vertices are considered safe (otherwise, we leave $L_i(x)$ empty).
We construct an auxiliary shortcut graph $\hat{G}(\Hier(S_i))$  based on $\Hier(S_i)$, by adding typed shortcut edges, exactly as explained in \Cref{sect:first-attempt}.
Let $K_x$ be the component containing $x$.
\begin{itemize}
    \item For each neighbor $y$ of $x$ in $T(K_x)$, of which there are $O(\log n)$ since $x \notin S_i$, let $T_y (K_x)$ be the subtree rooted at $y$.
    We store $\sketch(V(T_y(K_x)))$, constructed with respect to $\hat{G}(\Hier(S_i))$.
    These are akin to the subtree-sketches from \Cref{sect:intuition}.
    
    \item Next, we refer to the $O(\log n)$ components $K \succeq K_x$, which we know will be affected.
    \begin{itemize}
        \item Our main concern is the ability to delete edges with type ``$K$" from the sketches, since these are unrelible when $K$ is affected.
        We thus store $\sketch(\hat{E}_K (v))$, the sketch of the ``$K$''-type edges touching $v$, for every $v \in N(\Hier_K (S_i))$.

        \item Also, to account for the possibility that no $F$-vertex will lend in $K$, so $K$ will be a part in the initial \Boruvka\ partition, we store $\sketch(K)$
    \end{itemize}
\end{itemize}

The length of labels is bounded as follows.
First, the sketches are constructed using the fault-tolerant sampling approach of \Cref{sect:intuition}, hence a single $\sketch(\cdot)$ takes up $\tilde{O}(f^2)$ bits.
As neighbor-sets $N(\Hier_K (S_i))$ are of size $\tilde{O}(f)$, the label $L_i(x)$ consists of $\tilde{O}(f^3)$ bits.
The final label $L(x)$, which concatenates $f+1$ different $L_i(x)$-labels, thus consists of $\tilde{O}(f^4)$ bits.
In fact, we can reduce one $f$-factor from the size of sketches by using an ``orientation trick", explained in the following \Cref{sect:orientation-intro}.

\paragraph{Answering Queries.}
Finally, we discuss how queries are answered.
Fix a query $\ang{s,t,F}$ with $F \cap S_i \neq \emptyset$.
It defines the affected components in $\K(S_i)$, and hence the query graph $G^*$ as in \Cref{sect:first-attempt}, which is the subgraph of $\hat{G}(\Hier(S_i))$ induced on vertices in affected components, but only with the edges of unaffected types.
The parts to which each tree $T(K)$ of an affected component $K$ breaks after the failure of $F$ constitute the initial partition for running \Boruvka\ in $G^*-F$.
We compute part-sketches by XORing subtree sketches, similarly to \Cref{sect:intuition}.
We also delete the bad edges from these, using the stored $\sketch(\hat{E}_K (v))$ of every affected $K$ and $v \in N(\Hier_K (S_i))$, so that the part-sketches now represent $G^*$.
Using these sketches we can simulate the \Boruvka\ algorithm in $G^* - F$, and check if the initial parts containing $s,t$ ended up in the same final part.
We answer that $s,t$ are connected in $G-F$ if and only if this is the case.

\subsection{Improvement: The ``Orientation Trick"}\label{sect:orientation-intro}

In fact, we can save one $f$ factor in the length of the labels described in the previous section, by an idea we refer to as the ``orientation trick".
We first explain how this trick can be applied for the intuitive approach of \Cref{sect:intuition}, where we are given a low-degree spanning tree $T$ of $G$.

Apply Nagamochi-Ibaraki~\cite{NagamochiI92} sparsification, and replace $G$ with an \emph{$f$-vertex connectivity certificate}: a subgraph where all connectivity queries under $\leq f$ vertex faults have the same answers as in $G$.
The certificate (which we assume is $G$ itself from now on) has \emph{arboricity} $\leq f$, meaning we can \emph{orient} the edges of $G$ so that vertices have outdegrees at most $f$.
We do not think of the orientation as making $G$ directed; an edge $\{u,v\}$ oriented as $u\to v$ is still allowed to be traversed from $v$ to $u$.
Rather, the orientation is a trick that lets us mix the two strategies we have for avoiding edges incident to $F$ when extracting outgoing edges from sketches: explicit deletion (as in Dory-Parter, \Cref{sect:sketches-into}), or fault-tolerant sketching (as in \Cref{sect:intuition}).

The idea works roughly as follows.
We generate just $f$ random subgraph $G_1, \dots, G_f$ (instead of $f^2$).
Each $G_i$ is generated by sampling each vertex w.p.\ $1/f$, and only keeping the edges oriented as $u\to v$ with $v$ sampled (even if $u$ is not sampled).
We now have $f$ basic sketches instead of every $G$-sketch; one for each $G_i$.
When we extract an outgoing edge from a part $P$, we can avoid edges oriented as $P \to F$, i.e., outgoing edges from $P$ that are  \emph{incoming} to a failed vertex from $F$.
However, we may still get edges oriented in the reverse $F\to P$ direction.
It therefore remains to delete from the sketches the edges that are outgoing from $F$-vertices.
To this end, we first replace the independent sampling in generating the sketches with \emph{pairwise independent hash functions}, maintaining the ability to extract an outgoing edge with constant probability.
Now, each failed vertex can store all its $\leq f$ outgoing edges along with a short $\tilde{O}(1)$ random seed, from which we can deduce their sketches for explicit deletion.
So now, each failed vertex stores only $f$ basic sketches for each incident tree edge, and additional $\tilde{O}(f)$ information regarding its \emph{outgoing} edges, resulting in $\tilde{O}(f \Delta(T))$-bit labels.

In order to apply this trick on the hierarchy-based sketches, i.e., upon each auxiliary shortcut graph $\hat{G}(\Hier(S_i))$ constructed for hierarchy $\Hier(S_i)$,
we develop a different sparsification procedure than \cite{NagamochiI92}, which is sensitive to the different types of edges, and produces a low-arboricity ``certificate" that can replace $\hat{G}(S_i)$.

\subsection{Derandomization}

There are three main randomized tools in our construction.
First, the partition of $V$ into $S_1, \dots, S_{f+1}$ is random, to ensure that each $S_i$ a hitting set for neighbor-sets of sub-hierarchies with size $\Omega(f \log n)$.
Using the \emph{method of conditional expectations}, we provide such deterministic partition.

Next, the fault-tolerant sampling approach is used to create the sketches that avoid edges to failed vertices, as explained in \Cref{sect:intuition} and \Cref{sect:orientation-intro}.
In fact, the specialized sparsification procedure, mentioned in the latter section, also uses fault-tolerant sampling.
Karthik and Parter~\cite{KarthikP21} provided a general derandomization method for this technique by constructing  \emph{small hit-miss hash families}, which we use to replace the fault-tolerant sampling components of our construction, while not incurring too much loss in label length.

Finally, a major source of randomization is in the edge-sampling for generating sketches, as explained in \Cref{sect:sketches-into}.
Izumi et al.\ \cite{IzumiEWM23} derandomized the edge-sampling in the sketches for $f$-EFT labels of Dory-Parter \cite{DoryP21} by representing cut queries geometrically.
We adapt their approach and introduce an appropriate representation of edges as points in $\mathbb{R}^2$ that works together with the new hierarchies, which characterizes relevant cut-sets as lying in unions of disjoint axis-aligned rectangles.
This lets us replace the edge-sampling with a deterministic $\epsilon$-net construction.

\subsection{Organization}
In Section~\ref{sect:decomp} we construct the new low-degree hierarchies.
Section~\ref{sect:aux-graph-structures} defines the auxiliary graphs used in the preprocessing and query stages, and walks through their use in the query algorithm at a high level.
In Section~\ref{sect:sketching-and-labeling}
we construct the $\tilde{O}(f^3)$-bit vertex labels, and in Section~\ref{sect:query} we give the implementation details of the query algorithm.
\Cref{sect:routing} presents the application of the labels to routing.
In Section~\ref{sect:derandomization} we derandomize
the scheme, which results in $\tilde{O}(f^7)$-bit labels.
In Section~\ref{sect:lowerbounds} we prove some straightforward lower bounds on fault-tolerant connectivity labels.
We conclude in Section~\ref{sect:conclusion} with 
some open problems.
\section{A New Low-Degree Decomposition Theorem}\label{sect:decomp}

In this section, we construct the new low-degree hierarchies on which our labeling scheme is based.
Recall our starting point is Duan and Pettie's low degree hierarchy~\cite{DuanP20}, whose properties are overviewed in \Cref{sect:DP-heirarchy}.
We state them succinctly and formally in the following \Cref{thm:DP-decomposition}.

\begin{theorem}[Modification of~\protect{\cite[Section 4]{DuanP20}}]\label{thm:DP-decomposition}
There is a partition $\C$ of $V(G)$
and a rooted hierarchy tree $\Hier^0 = (\C,E(\Hier^0))$ with the following properties.  
\begin{enumerate}
    \item $\Hier^0$ has height at most $\log n$.\label{DP:item1}

    \item For $\gamma,\gamma'\in\C$, $\gamma\prec \gamma'$ denotes that $\gamma$ is a strict descendant of $\gamma'$.
    If $\{u,v\}\in E(G)$, and $\gamma_u,\gamma_v\in\C$ are the parts containing $u,v$, then $\gamma_u\prec\gamma_v$ or $\gamma_v\preceq \gamma_u$.\label{DP:item2}

    \item For every $\gamma\in \C$, 
    the graph induced by 
    $V(\Hier^0_\gamma) \bydef \bigcup_{\gamma'\preceq \gamma}\gamma'$
    is connected.  In particular, 
    for every $\gamma\in \C$ and child $\gamma'$, 
    $E\cap (\gamma \times V(\Hier^0_{\gamma'}))\neq \emptyset$.
    \label{DP:item3}

    \item Each $\gamma\in \C$ is spanned by Steiner tree $T^0(\gamma)$ (that may have Steiner vertices not in $\gamma$) with maximum degree at most $4$.
    Further, for $T^\cup \bydef \bigcup_{\gamma \in \C} T^0(\gamma)$, it holds that the maximum degree in $T^\cup$ is at most $2 \log n$.\label{DP:item4}
\end{enumerate}
\end{theorem}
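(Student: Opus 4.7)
My plan is to construct the hierarchy from the bottom up by iteratively invoking the $\Decomp$ procedure of Theorem~\ref{thm:Decomp} with degree parameter $s=4$. Set $U_0=V$ and, for each $i\ge 0$, compute $(T_i,B_i)=\Decomp(G,U_i,4)$ and $U_{i+1}=B_i$. Theorem~\ref{thm:Decomp}(3) guarantees $|B_i|<|U_i|/(s-2)=|U_i|/2$, so $|U_i|<n/2^i$ and the iteration halts after at most $\log n$ rounds, which gives property~\ref{DP:item1}. I then assign each $v\in V$ to a unique level $\ell(v)=i$, where $v\in U_i\setminus B_i$ (equivalently, $v \in U_i\setminus U_{i+1}$). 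The partition $\C$ consists of the sets $\gamma_C=V(C)\cap (U_i\setminus B_i)$, one for each connected component $C$ of $T_i-B_i$, and I take $T^0(\gamma_C)=C$. Property~\ref{DP:item4} follows from Theorem~\ref{thm:Decomp}(2), since $T^0(\gamma_C)$ is a subtree of $T_i-B_i$ and hence has maximum degree at most $s=4$; its Steiner points $V(C)\setminus U_i$ all lie at strictly lower levels.

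To organize $\C$ into a rooted tree $\Hier^0$ I will define the parent of a level-$i$ part $\gamma_C$ to be the unique higher-level part that canonically absorbs $\gamma_C$ as a descendant---for instance, by picking a distinguished $T_i$-neighbor $v^\star\in B_i$ of $V(C)$ (with ties broken by vertex ID) and declaring the level-$\ell(v^\star)$ part containing $v^\star$ to be the parent of $\gamma_C$. Property~\ref{DP:item3} (connectedness of $V(\Hier^0_\gamma)$ together with the "in particular" clause) is then verified by induction on level: $T^0(\gamma_C)$ already connects $\gamma_C$ to every Steiner point of $C$, each such Steiner point belongs to a descendant part whose subtree is inductively a connected subgraph, and the tree edges of $T^0(\gamma_C)$ supply the required edge between $\gamma_C$ and each child's subtree. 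Property~\ref{DP:item2} is then obtained from Theorem~\ref{thm:Decomp}(1): for an edge $\{u,v\}\in E(G)$ with $\ell(u)=\ell(v)=i$, the edge is a $(G-B_i)$-path, so $u$ and $v$ must lie in the same component of $T_i-B_i$, giving $\gamma_u=\gamma_v$; for $\ell(u)<\ell(v)=j$, induction on the level-difference $j-\ell(u)$ shows that the ancestor of $\gamma_u$ at level $j$ is exactly $\gamma_v$.

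The main obstacle I anticipate is closing the cross-level case of property~\ref{DP:item2}: when $\ell(u)<\ell(v)=j$, the $\Decomp$ call at level $j$ may enroll neither $u$ nor any vertex of $V(T^0(\gamma_u))$ as a Steiner point of $T_j$, so the edge $\{u,v\}$ is invisible to the level-$j$ decomposition and the inductive step can fail. I would address this by running $\Decomp$ at each level $i$ not on $G$ itself but on the contraction $G/\C_{<i}$ in which every already-finalized lower-level part is collapsed to a single supernode. Every $G$-edge from a lower-level vertex into $U_i$ then becomes an edge between a supernode and a level-$i$ vertex in $G/\C_{<i}$; Theorem~\ref{thm:Decomp}(1), applied on the contracted graph, places each such supernode into a unique component of the contracted $T_i-B_i$, canonically identifying its level-$i$ ancestor and making property~\ref{DP:item2} automatic. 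The remaining care is to show that Theorem~\ref{thm:Decomp}'s degree bound on the contracted Steiner tree lifts back to a legitimate degree-$4$ Steiner tree $T^0(\gamma)$ in $G$, which is arranged by choosing the expansion of each supernode into its previously constructed $T^0$ so as to line up with how the contracted tree's edges attach to it.
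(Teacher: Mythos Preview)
Your construction has two genuine gaps, both stemming from a subtlety you overlook: the sets $U_0\supseteq U_1\supseteq\cdots$ need \emph{not} be nested. The call $\Decomp(G,U_i,4)$ may place arbitrary Steiner points of $T_i$ into $B_i$, so a vertex $w\notin U_i$ can reappear in $U_{i+1}=B_i$ (or in some $U_j$ with $j>i+1$). Consequently your level assignment $\ell(v)=i$ with $v\in U_i\setminus U_{i+1}$ is not single-valued, and your claim that the Steiner points $V(C)\setminus U_i$ of a level-$i$ tree ``all lie at strictly lower levels'' is false. The paper handles this by assigning each vertex to its \emph{highest} level of appearance, defining $\hat B_i \bydef B_i\setminus(B_{i+1}\cup\cdots\cup B_{L-1})$ so that $(\hat B_0,\ldots,\hat B_{L-1})$ is a genuine partition of $V$.

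The second and more consequential divergence is how the parts and the ancestry relation are defined. You take the parts at level $i$ to be the components of the \emph{tree} $T_i-B_i$ and then try to cobble together a parent map via a ``distinguished $T_i$-neighbor in $B_i$''; this is exactly what forces you into the cross-level difficulty for property~\ref{DP:item2} and the contraction workaround, whose degree-lifting step you leave unproved. The paper instead defines $\C_i$ as the partition of $\hat B_i$ induced by the connected components of $G-(B_{i+1}\cup\cdots\cup B_{L-1})$, and declares $\gamma_i\prec\gamma_j$ (for $i<j$) iff $\gamma_i,\gamma_j$ lie in the same component of $G-(B_{j+1}\cup\cdots\cup B_{L-1})$. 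With this definition property~\ref{DP:item2} is a one-line observation: an edge $\{u,v\}$ with levels $i\le j$ already witnesses that $u,v$ lie in the same component of $G-(B_{j+1}\cup\cdots)$, hence $\gamma_u\preceq\gamma_v$. Property~\ref{DP:item3} is similarly immediate from the definition, and property~\ref{DP:item4} follows because each $\gamma_i\subseteq\hat B_i\subseteq B_i$ lies in a single component of $G-B_{i+1}$ and hence, by Theorem~\ref{thm:Decomp}(1), in a single tree of $T_{i+1}-B_{i+1}$. No contraction is needed.
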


As the above formulation of \Cref{thm:DP-decomposition} is a slight modification of the one in~\cite{DuanP20}, in \Cref{sect:DP-decomp} we give a stand-alone proof based on black-box use of Duan and Pettie's recursive version of the \Furer-Raghavachari algorithm~\cite{FurerR94}.
We note that Long and Saranurak~\cite{LongS22} 
recently gave a fast construction of 
a low degree hierarchy, in $O(m^{1+o(1)})$ time,
while increasing the degree bound of Theorem~\ref{thm:DP-decomposition}(4) from $4$ to $n^{o(1)}$.
The space for our labeling scheme depends linearly
on this degree bound, so we prefer Theorem~\ref{thm:DP-decomposition} over~\cite{LongS22} even though the construction time is higher.

\def\APPENDDPHIER{
    \section{The Duan-Pettie Hierarchy: Proof of \Cref{thm:DP-decomposition}}\label{sect:DP-decomp}
    
    The basic building block of the Duan-Pettie Hierarchy is a procedure named $\Decomp$, based on a recursive version of the 
    \Furer-Raghavachari algorithm~\cite{FurerR94}.
    Theorem~\ref{thm:Decomp} summarizes the properties of $\Decomp$ from~\cite{DuanP20}.
    
    \begin{theorem}[Duan and Pettie~\cite{DuanP20}]\label{thm:Decomp}
    Let $U\subseteq V$ be a designated \emph{terminal set},
    and let $s\ge 3$.  
    There is an algorithm \Decomp$(G,U,s)$ that returns a pair $(T,B)$
    such that the following hold.
    \begin{enumerate}
    \item $T$ is a Steiner forest for $U$ and $T-B$ is a Steiner forest for $U-B$. Moreover, there are no $(G - B)$-paths between distinct components of $T-B$. In other words, $T-B$ is a Steiner forest for $V(T)-B$.
    
    \item $\Delta(T-B) \le s$.
    \item $|B| < |U|/(s-2)$ and $|B\cap U| < |U|/(s-1)$.
    \end{enumerate}
    The running time of \Decomp{} is $O(|U| m\log|U|)$.
    \end{theorem}
    
    The rest of this section is devoted for the construction of $\Hier^0$ of \Cref{thm:DP-decomposition}.
    This construction begins by 
    calling $\Decomp$ iteratively as follows, 
    with $B_0=V$ and the degree threshold
    fixed at $s=4$.
    \begin{align*}
    (T_1,B_1) &\leftarrow  \Decomp(G,B_0,4),				\\
    (T_2,B_2) &\leftarrow  \Decomp(G,B_1,4),				\\
    		&\cdots\\
    (T_i,B_i) &\leftarrow  \Decomp(G,B_{i-1},4),			\\
    		&\cdots\\
    (T_L,\emptyset) &\leftarrow \Decomp(G,B_{L-1},4).
    \end{align*}
    In other words, the initial terminal set is $B_0=V$, and the terminal set for one invocation of 
    $\Decomp$ is the $B$-set of the previous invocation. 
    
    \begin{lemma}[Part~\ref{DP:item1} of Theorem~\ref{thm:DP-decomposition}]\label{lem:L-logn}
    $L\leq \log n-1$.
    \end{lemma}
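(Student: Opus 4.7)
The plan is to iterate the size-contraction property from Theorem~\ref{thm:Decomp}(3). With $s=4$, every call $(T_i, B_i) \leftarrow \Decomp(G, B_{i-1}, 4)$ produces an output with $|B_i| < |B_{i-1}|/(s-2) = |B_{i-1}|/2$. Starting from $|B_0| = |V| = n$, a one-line induction on $i$ gives $|B_i| < n/2^i$ for every $i \geq 1$.

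To pin down the claimed constant, I would use that $L$ is by definition the smallest index with $B_L = \emptyset$, so $B_{L-1}$ is nonempty. Combining the integrality of $|B_i|$ with the strict inequality $|B_i| < |B_{i-1}|/2$ upgrades it to the integer bound $|B_{i-1}| \geq 2|B_i| + 1$; iterating this backward from $|B_L| = 0$, and noting that the final iteration is nontrivial (so $|B_{L-1}| \geq 2$, since a singleton terminal set can be treated as a final component without needing a separate invocation of $\Decomp$), yields $|B_{L-i}| \geq 3\cdot 2^{i-1} - 1$ for $i \geq 1$. Plugging in $i = L$ gives $n = |B_0| \geq 3 \cdot 2^{L-1} - 1$, which rearranges to $L \leq \log n - 1$.

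No real obstacle is anticipated: the argument is a short induction driven entirely by Theorem~\ref{thm:Decomp}(3), with only some mild bookkeeping needed to extract the precise constant from the strict-inequality/integrality interplay.
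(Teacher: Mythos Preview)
Your overall strategy---iterate the size-contraction bound from Theorem~\ref{thm:Decomp}(3)---is exactly what the paper does. However, your final step does not go through: from $n \geq 3\cdot 2^{L-1}-1$ you only obtain $2^{L-1}\le (n+1)/3$, i.e.\ $L \le 1+\log\big((n+1)/3\big)=\log(n+1)-\log 3+1$, which is \emph{strictly larger} than $\log n - 1$ for every $n\ge 1$ (for instance, $n=11$ satisfies your inequality with $L=3$, yet $\log 11 - 1 < 2.5$). The claimed rearrangement is simply false. In addition, the premise $|B_{L-1}|\ge 2$ is not part of the construction as written; the iteration just runs $\Decomp$ until $B_L=\emptyset$, with no special handling of singleton terminal sets, so that assumption is unjustified.

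The paper recovers the missing constant factors via two observations you did not use. First, in the initial call the terminal set is $B_0=V$, so $B_1\subseteq V=B_0$ and the \emph{sharper} bound $|B_1|=|B_1\cap B_0|<|B_0|/(s-1)=n/3$ from Theorem~\ref{thm:Decomp}(3) applies to the first step; only the later steps are limited to the weaker $|B_{i+1}|<|B_i|/2$. Second, once $|B_{L-1}|\le 4$, the next call returns $B_L=\emptyset$ (at most four terminals always admit a Steiner tree of maximum degree~$4$), so termination occurs one step earlier than your backward recursion from $|B_L|=0$ accounts for. Together these two extra factors are what yield $L\le \log n - 1$.
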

    
    \begin{proof}
    By Theorem~\ref{thm:Decomp}, 
    $|B_1|\leq n/3$
    and $|B_{i+1}| \leq |B_i|/2$.  
    When $|B_{L-1}|\leq 4$, $B_L=\emptyset$, 
    so $L\leq \log n - 1$ levels suffice.
    \end{proof}
    
    It would be nice if the $B$-sets were nested like 
    $V=B_0 \supset B_1 \supset \cdots \supset B_{L-1}$
    but this may not happen.
    It could be that a vertex appears multiple times in the sets 
    $(B_0-B_1), (B_1-B_2), \ldots, (B_{L-1}-B_L)$.
    Define $\hat{B}_i = B_i - (B_{i+1}\cup \cdots \cup B_{L-1})$,
    so $(\hat{B}_0,\ldots,\hat{B}_{L-1})$ is a partition of $V$.
    Let $\C_i$ be the partition of 
    $\hat{B}_i$ such that two vertices
    appear in the same part if they are in 
    the same connected component of 
    $V-(B_{i+1}\cup\cdots\cup B_{L-1})$,
    and define 
    $\C = \C_0\cup \cdots \cup \C_{L-1}$ to 
    be the full partition of $V$.
    
    The ancestry relation $\prec$ on $\C$ 
    defines the hierarchy $\Hier^0$.
    For $\gamma_i\in \C_i, \gamma_j\in\C_j$, 
    we define
    $\gamma_i\prec\gamma_j$ iff $i<j$
    and 
    $\gamma_i,\gamma_j$ are in the same 
    connected component of 
    $V - (B_{j+1}\cup\cdots\cup B_{L-1}) = V - (\hat{B}_{j+1}\cup\cdots\cup\hat{B}_{L-1})$.
    
    \begin{lemma}[Part~\ref{DP:item2} of Theorem~\ref{thm:DP-decomposition}]\label{lem:ancestor-descendant}
    If $\{u,v\}\in E(G)$, and $\gamma_x\in \C$
    is the part containing $x$, then $\gamma_u\prec \gamma_v$ or $\gamma_v\preceq \gamma_u$.
    \end{lemma}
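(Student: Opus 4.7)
The plan is to do case analysis on which levels of the construction $u$ and $v$ belong to. Let $i,j$ be such that $u\in \hat{B}_i$ and $v\in \hat{B}_j$, and assume without loss of generality that $i\leq j$. The goal then becomes to establish $\gamma_u=\gamma_v$ when $i=j$ and $\gamma_u\prec \gamma_v$ when $i<j$, either of which implies one of the two alternatives $\gamma_u\prec\gamma_v$ or $\gamma_v\preceq\gamma_u$ in the statement.

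For the case $i=j$, the argument is short: both $u$ and $v$ lie in $\hat{B}_i=B_i-(B_{i+1}\cup\cdots\cup B_{L-1})$, so neither endpoint of the edge belongs to $B_{i+1}\cup\cdots\cup B_{L-1}$. Hence $\{u,v\}$ is an edge of the subgraph of $G$ induced by $V-(B_{i+1}\cup\cdots\cup B_{L-1})$, so $u$ and $v$ lie in the same connected component of that subgraph. By the very definition of $\C_i$, this places them in the same part, i.e.\ $\gamma_u=\gamma_v$, and in particular $\gamma_v\preceq \gamma_u$.

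For the case $i<j$, I would unpack the definition of $\prec$ and verify its two requirements. The first, $i<j$, holds by assumption. The second is that $\gamma_u$ and $\gamma_v$ sit in the same connected component of the subgraph induced by $V-(B_{j+1}\cup\cdots\cup B_{L-1})=V-(\hat{B}_{j+1}\cup\cdots\cup\hat{B}_{L-1})$. Since $u\in\hat{B}_i$ with $i\leq j$ and $v\in\hat{B}_j$, both vertices survive this deletion, and the edge $\{u,v\}\in E(G)$ is therefore present in the induced subgraph, giving a direct adjacency witness that $u$ and $v$ (and hence their parts $\gamma_u\ni u$, $\gamma_v\ni v$) are in the same connected component. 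A small sanity check is that this part-level claim is well-posed: $\gamma_u\in\C_i$ is, by the level-$i$ definition, contained in a single component of $V-(B_{i+1}\cup\cdots\cup B_{L-1})$, and because further deletion only enlarges components, it remains in a single component of $V-(B_{j+1}\cup\cdots\cup B_{L-1})$; the same holds for $\gamma_v$ directly at level $j$.

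The main obstacle is purely notational: one has to be careful to distinguish the (possibly overlapping) sets $B_i$ from the disjoint ``last occurrence'' sets $\hat{B}_i$, and to use the identity $B_{j+1}\cup\cdots\cup B_{L-1}=\hat{B}_{j+1}\cup\cdots\cup\hat{B}_{L-1}$ already noted in the construction. There is no genuine combinatorial content; the lemma is essentially an unfolding of the definitions of $\hat{B}_i$, $\C_i$, and $\prec$ laid out just above the statement.
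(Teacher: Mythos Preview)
Your proof is correct and follows essentially the same approach as the paper's, which handles the $i=j$ and $i<j$ cases uniformly in a single sentence by observing that $u,v$ lie in the same component of $V-(B_{j+1}\cup\cdots\cup B_{L-1})$ whenever $i\leq j$, hence $\gamma_u\preceq\gamma_v$. One minor slip in your sanity check: you write ``further deletion only enlarges components,'' but you mean the opposite---passing from $V-(B_{i+1}\cup\cdots\cup B_{L-1})$ to $V-(B_{j+1}\cup\cdots\cup B_{L-1})$ with $j>i$ is \emph{less} deletion, which is indeed why components can only merge.
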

    
    \begin{proof}
    Suppose $\gamma_u\in \C_i,\gamma_v\in \C_j$, $i\leq j$.  Since $\{u,v\}\in E(G)$, $\gamma_u,\gamma_v$ are in the same connected
    component of 
    $V-(B_{j+1}\cup\cdots\cup B_{L-1})$
    and, as a consequence 
    $\gamma_u \preceq \gamma_v$.
    \end{proof}
    
    \begin{lemma}[Part~\ref{DP:item3} of Theorem~\ref{thm:DP-decomposition}]\label{lem:parent-in-Hier0}
    If $\gamma_i\in\C_i$ and $\gamma_j\in\C_j$ 
    is the parent of $\gamma_i$ in $\Hier^0$, then 
    $E\cap (\gamma_j \times V(\Hier^0_{\gamma_i})) \neq \emptyset$, i.e., there is an edge joining $\gamma_j$ and some vertex in the subtree rooted at $\gamma_i$.  As a consequence, for every $\gamma\in \C$, the subgraph induced by $V(\Hier^0_\gamma)$ is connected.
    \end{lemma}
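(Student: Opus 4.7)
The plan is to reinterpret the subtrees $V(\Hier^0_\gamma)$ concretely as connected components of an appropriate induced subgraph, and then exploit the minimality of the parent's level to pinpoint the required edge. Let $H_k \bydef G[V - (B_{k+1}\cup\cdots\cup B_{L-1})]$. A preliminary observation, using that every $v \in V$ lies in $\hat{B}_\ell$ for the largest $\ell$ with $v \in B_\ell$, is that $V(H_k) = \hat{B}_0 \cup \cdots \cup \hat{B}_k$; so the graphs $H_0 \subseteq H_1 \subseteq \cdots \subseteq H_{L-1} = G$ are obtained by successively adding the layers $\hat{B}_k$.

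I will first show that for any $\gamma \in \C_i$, the set $V(\Hier^0_\gamma)$ is exactly the connected component of $H_i$ that contains $\gamma$. The inclusion $\subseteq$ is immediate from the definition of $\prec$: every $\gamma' \preceq \gamma$ shares an $H_i$-component with $\gamma$. For $\supseteq$, any vertex $v$ in that $H_i$-component belongs to some $\hat{B}_k$ with $k \leq i$, and its part $\gamma_v \in \C_k$ must satisfy $\gamma_v \preceq \gamma$ (when $k < i$ directly by definition of $\prec$; when $k = i$, the definition of $\C_i$ forces $\gamma_v = \gamma$ since both lie in the same $H_i$-component). This identification immediately gives the stated consequence: $V(\Hier^0_\gamma)$ is a connected subgraph of $H_i$, hence of $G$.

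The key step is then to use the minimality in the definition of ``parent.'' Since $\gamma_j$ is the parent of $\gamma_i$, for every intermediate level $k$ with $i < k < j$ there is no ancestor of $\gamma_i$ in $\C_k$; equivalently, the $H_k$-component of $\gamma_i$ contains no vertex of $\hat{B}_k$. I will then argue inductively that the $H_k$-component of $\gamma_i$ coincides with the $H_i$-component of $\gamma_i$, i.e.\ with $V(\Hier^0_{\gamma_i})$, for all $k$ with $i \leq k \leq j-1$. The inductive step is clean: passing from $H_{k-1}$ to $H_k$ only adds vertices of $\hat{B}_k$ and their incident edges, so if the $H_k$-component grew, it would acquire a vertex of $\hat{B}_k$, producing an ancestor at level $k<j$ and contradicting minimality.

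Finally, since $\gamma_j \prec^{-1} \gamma_i$ places $\gamma_j$ in the same $H_j$-component as $\gamma_i$, going from $H_{j-1}$ to $H_j$ must strictly enlarge the $\gamma_i$-component. The newly absorbed vertices lie in $\hat{B}_j$, so some $v \in \hat{B}_j$ has a $G$-edge into $V(\Hier^0_{\gamma_i})$. Because $v$ and $\gamma_j$ both lie in the same $H_j$-component as $\gamma_i$, the definition of the partition $\C_j$ of $\hat{B}_j$ forces $v \in \gamma_j$, yielding the desired edge in $E \cap (\gamma_j \times V(\Hier^0_{\gamma_i}))$. The main obstacle I anticipate is bookkeeping the distinction between $B_k$ and $\hat{B}_k$ while invoking the parent's minimality; the initial simplification $V(H_k) = \hat{B}_0 \cup \cdots \cup \hat{B}_k$ is what keeps the argument transparent.
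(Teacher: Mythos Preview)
Your proof is correct, and it takes a somewhat different route from the paper's. The paper first invokes Lemma~\ref{lem:ancestor-descendant} to rule out ``lateral'' edges between sibling subtrees, then observes that $V(\Hier^0_{\gamma_j})$ is a single connected component of $H_j$; since $V(\Hier^0_{\gamma_i})$ is a proper subset, some edge leaves it inside $V(\Hier^0_{\gamma_j})$, and the no-lateral-edges fact forces its other endpoint into $\gamma_j$. Your argument instead works ``bottom-up'': you track the connected component of $\gamma_i$ through the filtration $H_i \subseteq H_{i+1} \subseteq \cdots$, show it stays equal to $V(\Hier^0_{\gamma_i})$ for all levels $k<j$ (because growth at level $k$ would manufacture a level-$k$ ancestor, contradicting that $\gamma_j$ is the parent), and then note that at level $j$ it must grow by absorbing some $v\in\hat{B}_j$ adjacent to $V(\Hier^0_{\gamma_i})$, with $v\in\gamma_j$ forced by the definition of $\C_j$. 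Your approach is self-contained---it does not cite Lemma~\ref{lem:ancestor-descendant}---and makes the role of the parent's \emph{minimality} explicit, at the cost of a slightly longer inductive argument; the paper's version is terser but leans on the previously established structural lemma.
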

    
    \begin{proof}
    By Lemma~\ref{lem:ancestor-descendant}, 
    if $\gamma'$ is a sibling of $\gamma_i$ 
    (i.e., another child of $\gamma_j$), 
    then there are no ``lateral'' edges joining 
    $V(\Hier^0_{\gamma_i})$ and $V(\Hier^0_{\gamma'})$.
    By definition $V(\Hier^0_{\gamma_j})$ 
    is the connected component
    of $V-(B_{j+1}\cup\cdots\cup B_L)$
    that contains $\gamma_j$, 
    and since $\gamma_i\prec\gamma_j$, 
    $V(\Hier^0_{\gamma_i}) \subset V(\Hier^0_{\gamma_j})$.
    Since there are no lateral edges,
    in the subgraph induced by $V(\Hier^0_{\gamma_j})$,
    all edges with exactly one endpoint in $V(\Hier^0_{\gamma_i})$
    have the other in $\gamma_j$.
    (Note that it may be the case that $j > i+1$ if
    there are no edges from $V(\Hier^0_{\gamma_i})$ 
    to $\hat{B}_{i+1}$.)
    \end{proof}

    \begin{lemma}[Part~\ref{DP:item4} of Theorem~\ref{thm:DP-decomposition}]\label{lem:spanned-by-deg-4-tree}
    Each $\gamma_i \in \C_i$ is spanned 
    by a degree-$4$ subtree 
    $T^0(\gamma_i)$ of $(T_{i+1}-B_{i+1})$.
    Furthermore, the maximum degree of $\bigcup_{i=1}^L (T_i - B_i)$ is at most $2 \log n$.
    \end{lemma}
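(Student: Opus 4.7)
The plan is to identify $\gamma_i$ as a subset of a single tree-component of $T_{i+1}-B_{i+1}$, and then take $T^0(\gamma_i)$ to be the minimal subtree of that component containing $\gamma_i$ (allowing Steiner points). The degree bound then transfers from the ambient forest.

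First, I would verify the containment $\gamma_i \subseteq V(T_{i+1}) - B_{i+1}$. By definition $\gamma_i \subseteq \hat{B}_i = B_i - (B_{i+1}\cup\cdots\cup B_{L-1})$, so in particular $\gamma_i \subseteq B_i$ and $\gamma_i \cap B_{i+1} = \emptyset$. Since the invocation that produced $T_{i+1}$ was $\Decomp(G,B_i,4)$, \Cref{thm:Decomp}(1) tells us $T_{i+1}$ is a Steiner forest for $B_i$, hence $B_i \subseteq V(T_{i+1})$, and therefore $\gamma_i \subseteq V(T_{i+1}) - B_{i+1} = V(T_{i+1} - B_{i+1})$.

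Next I would show that all of $\gamma_i$ lies in a single connected component of $T_{i+1}-B_{i+1}$. By the definition of $\C_i$, any two vertices $u,v\in\gamma_i$ lie in the same connected component of $V-(B_{i+1}\cup\cdots\cup B_{L-1})$, and in particular are joined by a path in $G-B_{i+1}$. By the ``moreover'' clause of \Cref{thm:Decomp}(1), the existence of a $(G-B_{i+1})$-path between $u$ and $v$ forces them into the same component of $T_{i+1}-B_{i+1}$. Thus $\gamma_i$ is contained in one tree component $C$ of $T_{i+1}-B_{i+1}$.

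Finally, I would let $T^0(\gamma_i)$ be the minimal subtree of $C$ containing every vertex of $\gamma_i$ (this is the union of the unique $C$-paths between pairs of $\gamma_i$-vertices, and is well-defined since $C$ is a tree). Every vertex's degree in $T^0(\gamma_i)$ is at most its degree in $C$, which is at most $\Delta(T_{i+1}-B_{i+1}) \le 4$ by \Cref{thm:Decomp}(2) applied to the invocation producing $(T_{i+1},B_{i+1})$. Vertices of $T^0(\gamma_i)\setminus\gamma_i$ are the legitimate Steiner points. I don't anticipate a real obstacle here; the only thing to be careful about is keeping the two usages of ``Steiner forest'' in \Cref{thm:Decomp}(1) straight, and in particular invoking the no-$(G-B)$-paths-between-components clause to pass from connectivity in $G-B_{i+1}$ to co-componentness in $T_{i+1}-B_{i+1}$.
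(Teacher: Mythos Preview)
Your proof is correct and follows essentially the same approach as the paper's: both use that $\gamma_i \subseteq B_i - B_{i+1}$ to place $\gamma_i$ inside $V(T_{i+1}-B_{i+1})$, invoke the ``no $(G-B_{i+1})$-paths between components'' clause of \Cref{thm:Decomp}(1) to confine $\gamma_i$ to a single tree of $T_{i+1}-B_{i+1}$, and then inherit the degree-4 bound from \Cref{thm:Decomp}(2). Your version is simply more explicit about each step and about taking the minimal spanning subtree, which the paper leaves implicit.
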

    
    \begin{proof}
    By construction $T_{i+1}$ spans $B_i$,
    and $T_{i+1}-B_{i+1}$ has degree at most 4.
    Each $\gamma_i \subset \hat{B}_i$ 
    is contained in one 
    connected component 
    of 
    $V - (B_{i+1}\cup\cdots\cup B_L) = V-(\hat{B}_{i+1}\cup\cdots\cup \hat{B}_L)$,
    so $\gamma_i$ is contained in a single tree
    of $T_{i+1}-B_{i+1}$.

    For the ``furthermore" part, note that a vertex can be in at most half the sets 
    $B_0-B_1, B_1-B_2,\ldots,B_{L-1}-B_L$.
    If $v\in B_{i}-B_{i+1}$ it contributes at most $s=4$
    tree edges to $T_{i+1}-B_{i+1}$, so 
    $\deg_{T^{\cup}}(v)\leq s\ceil{L/2}\leq 2\log n$.
    \end{proof}
}%\APPENDDPHIER

The following~\Cref{thm:S-decomposition} states the properties of the new low-degree hierarchies constructed in this paper, as overviewed in~\Cref{sect:resolution}.

\begin{theorem}[New Low-Degree Hierarchies]\label{thm:S-decomposition}
Let $f \geq 1$ be an integer.
There exists a partition $(S_1,\ldots,S_{f+1})$ of $V(G)$,
such that each $S\in \{S_1,\ldots,S_{f+1}\}$ is associated with a hierarchy $\Hier(S)$ of components $\K(S)$ that partition $V(G)$, and the following hold.
\begin{enumerate}
    \item $\Hier=\Hier(S)=(\K(S),E(\Hier(S)))$ 
    is a \emph{coarsening} of $\Hier^0$.
    $\K(S)$ is obtained by unifying 
    connected subtrees of $\Hier^0$.
    $\Hier$ inherits 
    Properties~\ref{DP:item1}--\ref{DP:item3} of 
    Theorem~\ref{thm:DP-decomposition}.
    In particular, define 
    $\Hier_K$ to be the subhierarchy rooted at $K\in\K(S)$,
    and $V(\Hier_K) \bydef \bigcup_{K'\preceq K}K'$.
    Then the graph induced by $V(\Hier_K)$ is connected, and if $K'$ is a child of $K$ then 
    $E\cap (K\times V(\Hier_{K'}))\neq \emptyset$.
    \label{prop:item-S1}

    \item Each $K \in \K(S)$
    has a spanning tree $T(K)$
    in the subgraph of $G$ induced by $K$.
    All vertices in $K-S$
    have degree at most $3 \log n$ in $T(K)$, 
    whereas $S$-vertices can have arbitrarily large degree.
    \label{prop:item-S2}

    \item For $K \in \K(S)$, define $N(\Hier_K)$ to be the set of vertices in $V-V(\Hier_K)$ that
    are adjacent to some vertex in $V(\Hier_K)$.
    Then $|N(\Hier_K)| = O(f \log n)$.
    \label{prop:item-S3}
\end{enumerate}
\end{theorem}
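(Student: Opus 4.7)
The plan is to first produce $(S_1,\dots,S_{f+1})$ by a random partition of $V(G)$---each vertex assigned independently and uniformly to one of $f+1$ color classes---and then, for each $S=S_i$, to build $\K(S)$ by a coarsening of the Duan--Pettie hierarchy $\Hier^0$ from Theorem~\ref{thm:DP-decomposition}. The partition property ``$\exists i.\, S_i\cap F=\emptyset$'' whenever $|F|\le f$ is unconditional by pigeonhole (the $f$ elements of $F$ can occupy at most $f$ of the $f+1$ parts), so randomness is needed only to make the subsequent hierarchy construction succeed simultaneously for every $S_i$.

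For the coarsening, I would process the components $\gamma\in\C$ in a bottom-up pass over $\Hier^0$ and maintain a partition of $\C$ into ``blocks,'' starting with singletons. At each $\gamma$, unite $\gamma$'s block with that of its $\Hier^0$-parent whenever either (i) the boundary $|N(\Hier^0_\gamma)|$ exceeds a threshold $\tau=\Theta(f\log n)$, or (ii) the connection between $\gamma$'s block and a neighbor block in $\Hier^0$ guaranteed by Lemma~\ref{lem:parent-in-Hier0} fails to yield an edge inside the corresponding $G[K]$ (an edge from a parent $\gamma_p\in K$ to $V(\Hier^0_\gamma)$ could land at a vertex deeper than $\gamma$ itself, breaking connectivity of $G[K]$). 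The resulting blocks define $\K(S)$, and $\Hier(S)$ is the induced quotient tree.

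Once $\K(S)$ is defined, Properties 1 and 3 are essentially automatic: contracting edges of $\Hier^0$ preserves ancestry and does not increase heights, and since $V(\Hier_K)=V(\Hier^0_{\gamma_0})$ for the $\Hier^0$-root $\gamma_0$ of each block, Lemmas~\ref{lem:ancestor-descendant}--\ref{lem:parent-in-Hier0} transfer verbatim to give Property~1. The stopping rule (i) then directly gives $|N(\Hier_K)|=|N(\Hier^0_{\gamma_0})|\le\tau=O(f\log n)$ for Property~3.

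The main obstacle is Property~2: exhibiting a spanning tree $T(K)$ of $G[K]$ in which every vertex of $K-S$ has degree at most $3\log n$. I would build $T(K)$ by grafting together the degree-$4$ Steiner trees $T^0(\gamma)$ from Lemma~\ref{lem:spanned-by-deg-4-tree} through splicing edges in $G[K]$, preferring at each splice a connecting edge incident to an $S$-vertex so that the extra degree lands on $S$. A non-$S$ vertex then accrues only $O(1)$ splice degree per level of $\Hier^0$, which across the at-most-$\log n$ levels yields the $3\log n$ bound after tracking constants. The delicate step---and where the randomness of $S$ really enters---is ensuring that ``enough'' $S$-vertices are always available to absorb splice degree: I would complete the argument by a Chernoff plus union bound over $\gamma\in\C$ and $i\in[f+1]$, showing that whenever a candidate set for a splice has size $\Omega(f\log n)$ it contains an $S_i$-vertex with high probability, while any candidate set that is already smaller contributes only to the bounded boundary counted in Property~3. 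Reconciling (a) connectedness of $G[K]$, (b) the $O(f\log n)$ boundary bound, and (c) the $3\log n$ degree bound on $K-S$ within a single merge rule and a single random coloring is the central technical difficulty of the theorem.
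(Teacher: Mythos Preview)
Your high-level plan (random coloring, pigeonhole, coarsen $\Hier^0$) matches the paper, but your coarsening criterion is fundamentally different from the paper's, and this difference is where your argument breaks.

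You merge $\gamma$ upward when $|N(\Hier^0_\gamma)|>\tau$ or when some vague ``connectivity check'' fails, and then separately try to assemble $T(K)$ from the $T^0(\gamma)$'s by ``grafting, preferring $S$-incident edges.'' The paper instead makes the merge rule \emph{be} the spanning-tree construction: processing $\gamma$ in postorder, it merges $K_\gamma$ with a descendant $K_1$ precisely when there is either (i) an edge of $T^\cup=\bigcup_i (T_{i+1}-B_{i+1})$ between them, or (ii) a $G$-edge between $K_1$ and $\gamma\cap S$. The triggering edge, together with a careful recursive procedure $\Unify$ that connects intermediate components, becomes part of $T(K)$. The randomness then shows that any $\gamma'$ with $|N(\Hier^0_{\gamma'})|\ge 3(f{+}1)\ln n$ has a type-(ii) edge to its parent w.h.p., so it cannot survive as a block root---this is how the boundary bound follows \emph{from} the edge-based rule, rather than being imposed as a separate threshold.

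The concrete gap in your approach is Steiner points. Each $T^0(\gamma_i)$ is a Steiner tree that may use vertices of strictly higher components $\gamma_j\succ\gamma_i$ (Lemma~\ref{lem:spanned-by-deg-4-tree}); it is \emph{not} a subgraph of $G[\gamma_i]$. The paper's type-(i) rule is exactly what eliminates them: a Steiner vertex $v\in\gamma_j$ of $T^0(\gamma_i)$ gives a $T^\cup$-edge from $\gamma_j$ down to (the block containing) $\gamma_i$, forcing a merge, so after all merges every $T(K)$ is Steiner-free (Lemma~\ref{lem:3logn-deg}). Your threshold rule has no such mechanism; after your coarsening a Steiner point of $T^0(\gamma_i)$ may lie outside $K\ni\gamma_i$, and then $T^0(\gamma_i)$ is simply unusable as a building block for $T(K)$. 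Your rule (ii) does not help here: the issue is not a missing parent--child edge but a foreign vertex sitting in the middle of the tree you want to reuse.

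A second gap is the degree accounting. The paper gets $3\log n$ as $2\log n$ from $T^\cup$ (Lemma~\ref{lem:Tcup-deg-2logn}) plus at most $\log n$ from $\Unify$-edges, the latter because each strict ancestor of $\gamma_i$ contributes at most one such edge incident to $\gamma_i$ before absorbing it. Your ``$O(1)$ splice degree per level'' claim needs exactly this kind of charging argument, and ``prefer an $S$-incident edge'' is not enough: you must also control how many \emph{non}-$S$-incident splices can hit a single vertex, which requires the $\Unify$-style structure where a component is touched at most once per ancestor.
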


The remainder of this section constitutes a proof of 
Theorem~\ref{thm:S-decomposition}.
We first choose the partition $(S_1, \dots, S_f)$ of $V$ uniformly at random among all partitions of $V$ into $f+1$ sets.

Fix some $S = S_i$.
We now explain how its corresponding hierarchy $\Hier = \Hier(S)$ is constructed.
We obtain $\K = \K(S)$ from $\C$
by iteratively unifying connected 
subtrees of the component tree $\Hier^0$.
Initially $\K = \C$ and $\Hier = \Hier^0$.
By Theorem~\ref{thm:DP-decomposition} each 
$K = \gamma_i \in \C_i$ is initially spanned by a 
degree-4 Steiner tree 
$T(K)=T^0(\gamma_i)$ in $T_{i+1}-B_{i+1}$.
We process each $\gamma\in\C$ in postorder (with respect to the tree $\Hier^0$).
Suppose, in the current state of the partition, 
that $K_\gamma \in\K$ is the 
part containing $\gamma$.
While there exists a 
$K_1\in\K$ such that 
$K_1$ 
is a descendant of $K_\gamma$ 
and one of the 
following criteria hold:
\begin{itemize}
    \item[(i)] $T^{\cup} \cap (K_1 \times \gamma) \neq \emptyset$, or
    \item[(ii)] $E \cap (K_1 \times (\gamma\cap S)) \neq \emptyset$,
\end{itemize}
then we will \emph{unify} a connected subtree of $\Hier$
that includes $K_\gamma,K_1$ and potentially 
many other parts of the current partition $\K$.
Let $e_\gamma$ be an edge from set (i) or (ii).  
If $K_1$ is a child of $K_\gamma$ 
then we simply replace $K_\gamma,K_1$ in $\K$ with $K_\gamma\cup K_1$, 
spanned by $T(K_\gamma)\cup\{e_\gamma\}\cup T(K_1)$.
In general, let $K_0$ be the child of $K_\gamma$ 
that is ancestral to $K_1$. 
We call a procedure
$\Unify(K_0,\{K_1\})$
that outputs a set of edges $E'$ that connects $K_0,K_1$ 
and possibly other components.
We then replace the components
in $\K$ spanned by 
$E' \cup \{e_\gamma\}$ with their union, whose spanning tree consists of the constituent spanning trees and $E'\cup\{e_\gamma\}$.
This unification process is repeated so long as there is \emph{some} $\gamma$, \emph{some} descendant $K_1$, and \emph{some} 
edge $e_\gamma$ in sets (i) or (ii).

In general 
$\Unify$ takes two arguments: 
a $K_0$ and a set $\LL$ 
of descendants of $K_0$.
See Figure~\ref{fig:Unify} for an 
illustration of how edges are selected by $\Unify$.

\begin{algorithm}[H]
\caption{$\Unify(K_0,\LL)$}\label{alg:unify}
\textbf{Input:} A root component $K_0$ and set $\LL$ of descendants of $K_0$.\\
\textbf{Output:} A set of edges $E'$ joining $\{K_0\}\cup \LL$ (and possibly others) into a single tree.
\begin{algorithmic}[1]
\If{$\{K_0\}\cup\LL = \{K_0\}$} 
\State \Return $\emptyset$ \Comment{Nothing to do}
\EndIf
\State $E'\gets \emptyset$
\State Define $K_0^1,\ldots,K_0^t$ to be the children
of $K_0$ that are ancestral to some component in $\LL$.
\For{$i=1$ to $t$}
\State  Let $\LL_i \subseteq \LL$ be the descendants of $K_0^i$.
\State  Let $e_i\in E\cap (K_0 \times V(\Hier_{K_0^i}))$ be an edge joining $K_0$ and some descendant $K_i$ of $K_0^i$. 
\State $E_i \gets 
\Unify(K_0^i, \LL_i\cup\{K_i\})$.
\State $E' \gets E' \cup E_i \cup \{e_i\}$
\EndFor 
\State \Return $E'$
\end{algorithmic}
\end{algorithm}

\begin{figure}
    \centering
    \begin{tabular}{c}
    \includegraphics[width=\textwidth]{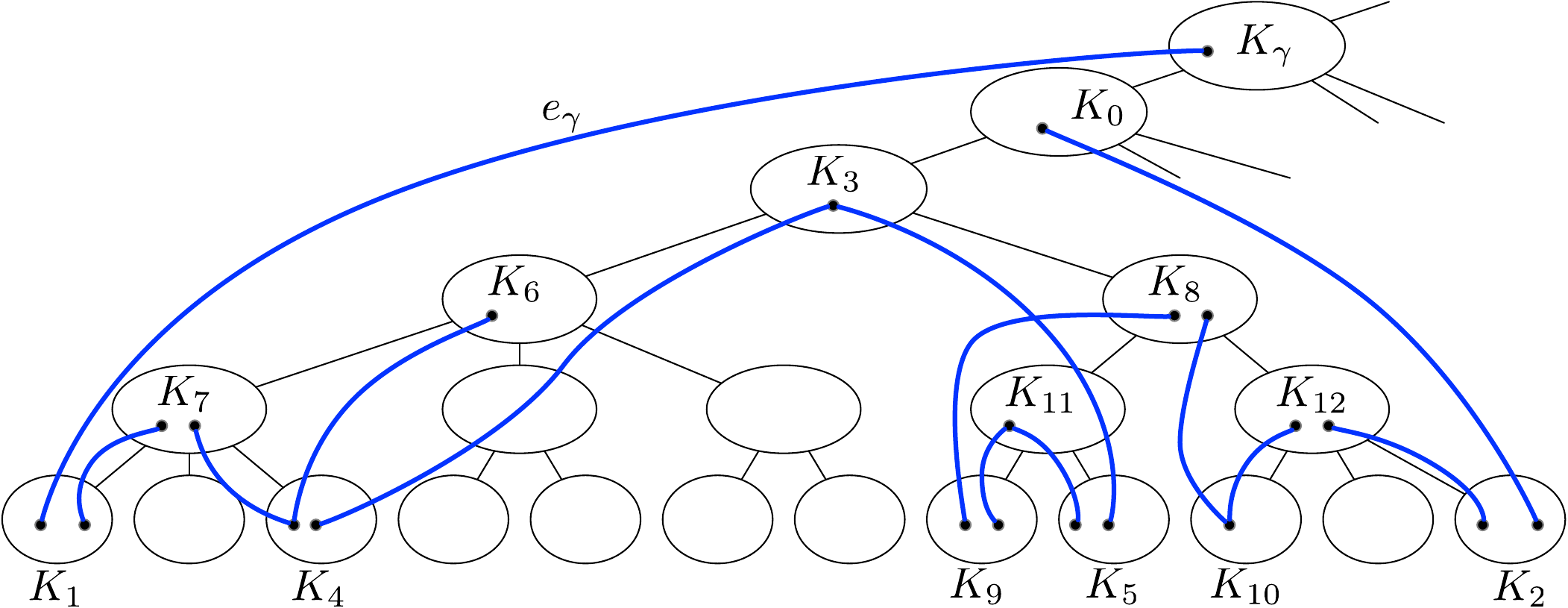}\\
    { }\\
    \bfseries{(a)}\\
    { }\\
    { }\\
    { }\\
    \includegraphics[width=\textwidth]{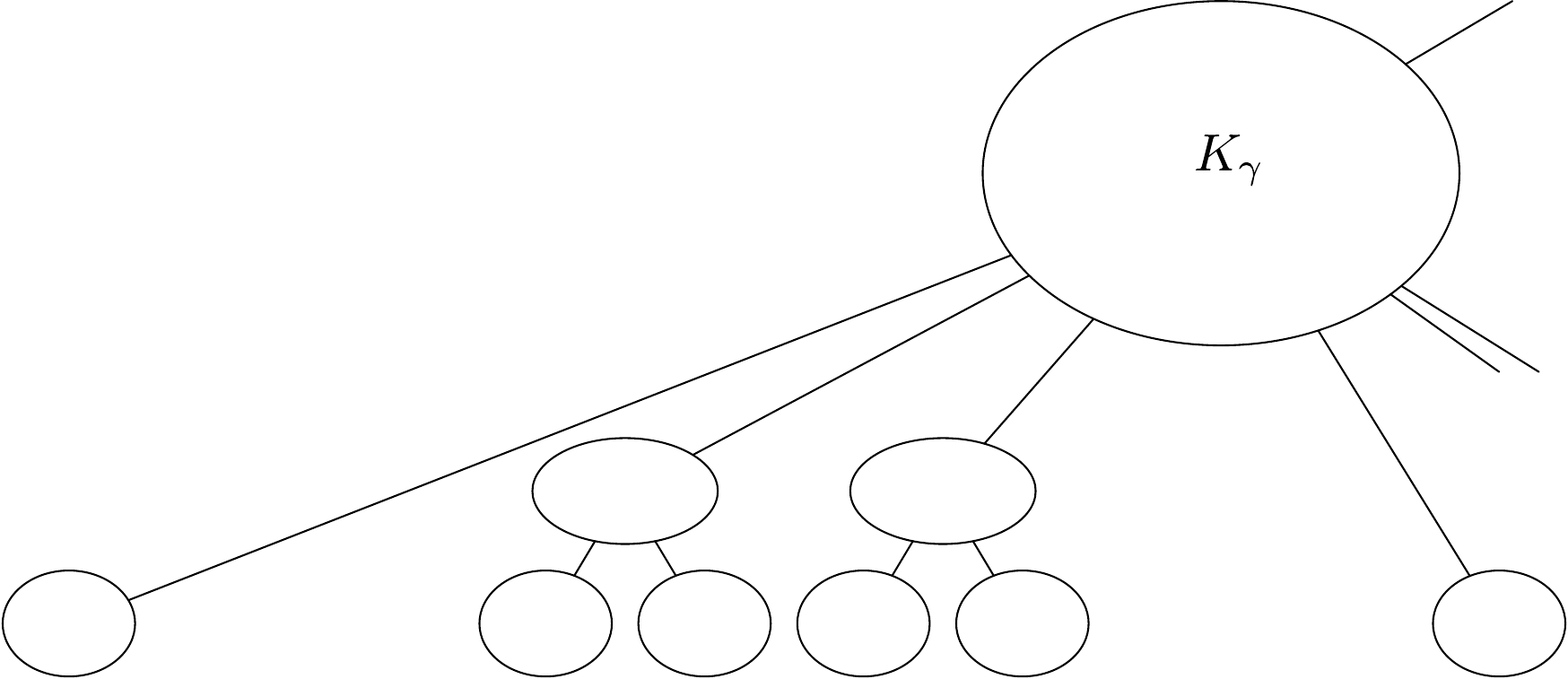}\\
    { }\\
    \bfseries{(b)}
    \end{tabular}
    \caption{ {\bfseries (a)} An example execution of $\Unify$. $K_\gamma$ is the component in the current partition containing $\gamma$.
    The triggering edge $e_{\gamma}$ joins
    $K_\gamma$ and $K_1$, and $K_0$ is the child of $K_\gamma$ that is ancestral to $K_1$.  The initial call $\Unify(K_0,\{K_1\})$ finds an edge to $K_2$ and makes a recursive call $\Unify(K_3,\{K_1,K_2\})$. This, in turn, finds edges to $K_4,K_5$ and makes recursive calls $\Unify(K_6,\{K_1,K_4\})$ and $\Unify(K_8,\{K_2,K_5\})$, and so on.  {\bfseries (b)} The ``new'' $K_\gamma$ after contracting all edges reported by $\Unify$ and $e_\gamma$.}
    \label{fig:Unify}
\end{figure}

\begin{lemma}\label{lem:unify-correct}
$\Unify(K_0,\LL)$ returns 
an edge set $E' \subseteq E(G)$ that forms a 
tree on a subset $U$ of the components in the 
current state of the hierarchy $\Hier$.
The subgraph of $\Hier$ induced by $U$ is a connected subtree 
rooted at $K_0$ and containing $\{K_0\}\cup\LL$.
\end{lemma}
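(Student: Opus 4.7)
The plan is induction on the height of the subhierarchy of $\Hier$ rooted at $K_0$ (equivalently, on the recursion depth, which is bounded since each recursive call descends to a strict descendant of $K_0$). For the base case $\LL = \emptyset$, the algorithm returns $\emptyset$, and I take $U = \{K_0\}$; the three conclusions hold trivially.

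For the inductive step, I first observe that since $\Hier$ is a tree, each descendant of $K_0$ lies in the subtree of exactly one child of $K_0$, so the sets $\LL_1,\ldots,\LL_t$ computed in the loop partition $\LL$. I also need the edge $e_i$ on Line~8 to exist; this follows from Property~\ref{prop:item-S1} of Theorem~\ref{thm:S-decomposition}, which guarantees $E\cap (K\times V(\Hier_{K'}))\neq\emptyset$ whenever $K'$ is a child of $K$. (The property is inherited from Lemma~\ref{lem:parent-in-Hier0} since coarsening by unifying connected subtrees preserves parent-to-subtree adjacency.)

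I then apply the inductive hypothesis to each recursive call $\Unify(K_0^i,\LL_i\cup\{K_i\})$ to obtain an edge set $E_i$ spanning a connected subtree $U_i$ of $\Hier$ rooted at $K_0^i$ with $\LL_i\cup\{K_i\}\subseteq U_i$. Setting $U=\{K_0\}\cup U_1\cup\cdots\cup U_t$, disjointness of the $U_i$'s (and of $\{K_0\}$ from each) follows from distinct children of $K_0$ having disjoint subtrees in $\Hier$. The returned edge set $E'=\bigcup_i(E_i\cup\{e_i\})$ has $\sum_i(|U_i|-1)+t=|U|-1$ edges; combined with connectivity (each $E_i$ connects $U_i$ by induction, and $e_i$ links $K_0$ to $K_i\in U_i$), this makes $E'$ a spanning tree of $U$. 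The subtree-of-$\Hier$ property follows because attaching each $U_i$ under $K_0$ yields a subtree rooted at $K_0$, and $\{K_0\}\cup\LL\subseteq U$ since $\LL=\bigcup_i\LL_i\subseteq\bigcup_i U_i$.

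The main obstacle I anticipate is bookkeeping rather than substantive: I must fix a consistent reference to the ``current'' state of $\Hier$ throughout a single top-level invocation of $\Unify$, because $\Hier$ itself is mutated by the outer unification loop between invocations. Within one top-level call, however, $\Hier$ is frozen, so the recursive analysis above goes through cleanly and the lemma follows.
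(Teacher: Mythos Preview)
Your proof is correct and follows essentially the same inductive argument as the paper's. One minor technicality: the algorithm's base case $\{K_0\}\cup\LL=\{K_0\}$ also covers $\LL=\{K_0\}$ (which can arise in recursive calls when the edge $e_i$ lands in $K_i=K_0^i$), but your argument with $U=\{K_0\}$ and $E'=\emptyset$ handles this case verbatim.
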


\begin{proof}
The proof is by induction.
In the base case $\LL=\emptyset$ or $\LL=\{K_0\}$ and the trivial edge set 
$\emptyset$ satisfies the lemma.
In general, Lemma~\ref{lem:parent-in-Hier0} guarantees the existence of edges
$e_1,\ldots,e_t$ joining $K_0$
to \emph{some} components 
$K_1,\ldots,K_t$ in the subtrees 
rooted at $K_0^1,\ldots,K_0^t$, respectively.
By the inductive hypothesis, $E_i$ spans
$\{K_0^i,K_i\}\cup\LL_i$,
which induces a connected subtree in $\Hier$ rooted at $K_0^i$.
Thus,
\[
E' = E_1 \cup\cdots \cup E_t \cup \{e_1,\ldots,e_t\}
\]
spans $\{K_0\}\cup \LL$ and forms a connected subtree in $\Hier$ rooted at $K_0$.
\end{proof}

Let $\Hier(S) = (\K(S),E(\Hier(S)))$ be the coarsened hierarchy 
after all unification events,
and $T(K)$ be the spanning tree of
$K\in \K(S)$.  Lemma~\ref{lem:unify-correct} guarantees that each
unification event is on a connected subtree of the current hierarchy.  
Thus, the final hierarchy $\Hier(S)$ satisfies Part~\ref{prop:item-S1} 
of Theorem~\ref{thm:S-decomposition}.

\begin{lemma}\label{lem:3logn-deg}
$T(K)$ is a spanning tree of $K$; it contains no Steiner vertices outside of $K$.  
For all $v\in K-S$, $\deg_{T(K)}(v) \leq 3\log n$.
\end{lemma}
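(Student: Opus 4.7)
The plan is to prove the two claims of the lemma in turn, using the structure of the unification process.

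For the claim that $T(K)$ is a spanning tree of $K$ with no Steiner vertices outside $K$, I would argue by postorder induction on $\gamma \in \C$ that after processing $\gamma$, the invariant ``$V(T(K_\gamma)) = K_\gamma$ and $T(K_\gamma)$ is a tree'' holds. The absorption step is as follows: consider any Steiner vertex $u \in V(T^0(\gamma)) - \gamma$, and walk from $u$ along $T^0(\gamma)$ to a $\gamma$-vertex. Each edge on this walk is in $T^\cup \subseteq E(G)$ and, by Lemma~\ref{lem:ancestor-descendant}, its endpoints lie in $\Hier^0$-comparable $\C$-parts. Consequently, each such edge satisfies criterion (i) during the processing of the higher of the two parts, triggering a merge. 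A chain of such triggers (executed in postorder) draws $u$'s original $\C$-part into $K_\gamma$, which establishes $V(T(K_\gamma)) \subseteq K_\gamma$. Since each merge unifies currently-disjoint components whose spanning trees are vertex-disjoint (by the inductive invariant) and is glued by a single new edge per pair of components, $T(K_\gamma)$ remains a tree throughout.

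For the degree bound, I would split the edges of $T(K)$ incident to $v \in K - S$ into two types: (a) edges in $T^\cup$, and (b) ``new'' edges added as triggering edges $e_\gamma$ or inside $\Unify$ calls. Lemma~\ref{lem:Tcup-deg-2logn} bounds type (a) by $2 \log n$. For type (b), I would charge each new edge incident to $v$ to the $\Hier^0$-level of the $\gamma$ being processed when the edge was added. Since $\Hier^0$ has at most $L \leq \log n - 1$ levels (Lemma~\ref{lem:L-logn}), showing that at most one new edge incident to $v$ is charged per level yields at most $\log n$ new edges in total and hence $\deg_{T(K)}(v) \leq 2 \log n + \log n = 3 \log n$.

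The main obstacle will be the ``at most one per level'' claim. A single processing step can invoke $\Unify$ multiple times, and $\Unify$ itself recurses to add an edge per level of recursion, so a naive analysis risks overcounting. Two facts should drive the bookkeeping: (i) after $v$ is incident to a new edge added at the level of some $\gamma$, $v$'s current part is absorbed into $K_\gamma$, so all subsequent new edges incident to $v$ are added while processing strict $\Hier^0$-ancestors of $\gamma$ (at strictly higher levels); (ii) since $v \notin S$, $v$ is never the ``top'' endpoint of a criterion-(ii) triggering edge (whose top endpoint is constrained to lie in $\gamma \cap S$). Combined with choosing specific edge-endpoints inside $\Unify$ to avoid re-using $v$ whenever an alternative exists, these facts should force at most one new edge per level to be incident to $v$, closing the argument.
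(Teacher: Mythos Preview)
Your first claim has the right mechanism (Steiner points of $T^0(\gamma)$ eventually trigger criterion-(i) merges), but the stated invariant is false. The Steiner vertices of $T^0(\gamma_i)$ lie in $\hat{B}_{j}$ for $j > i$, i.e., in \emph{strict ancestor} parts $\gamma_j \succ \gamma_i$ of $\Hier^0$. In postorder these $\gamma_j$ are processed \emph{after} $\gamma_i$, so immediately after processing $\gamma_i$ the tree $T(K_{\gamma_i})$ still contains those Steiner points; your invariant ``$V(T(K_\gamma))=K_\gamma$ after processing $\gamma$'' does not hold. What is true --- and what the paper argues --- is that when we later process such a $\gamma_j$, the $T^\cup$-edge from $\gamma_j$ into the descendant component containing $\gamma_i$ fires criterion (i), so after \emph{all} unification events no Steiner points survive. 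Your chain-of-triggers argument is essentially this, just anchored at the wrong point of the induction.

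The degree-bound argument has a more substantial gap. Charging to ``the $\Hier^0$-level of the $\gamma$ being processed'' is too coarse: a single iteration of the while-loop for $\gamma$ executes one triggering edge $e_\gamma$ plus an entire $\Unify(K_0,\{K_1\})$ call, and the recursion inside $\Unify$ can contribute a non-$T^\cup$ edge incident to $K_{\gamma_i}$ at \emph{every} ancestor level between $K_\gamma$ and $K_{\gamma_i}$ (either by landing directly on $K_{\gamma_i}$, or by enlarging the fanout $t$ when the recursion reaches $K_{\gamma_i}$). Thus one processing of $\gamma$ may contribute many new edges incident to $v$, all charged by your scheme to the single level of $\gamma$. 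Your fix (i) only applies \emph{after} $v$'s part has been merged into $K_\gamma$, which is after this whole iteration completes; fix (ii) concerns only the triggering edge; and ``choosing specific endpoints inside $\Unify$'' is a modification of the algorithm that cannot be made uniformly for all $v\in V-S$ at once. The paper uses a finer accounting: each non-$T^\cup$ edge incident to $\gamma_i$ is charged to a distinct \emph{strict ancestor} of $K_{\gamma_i}$ in the current hierarchy --- each such ancestor contributes at most one edge (directly in case (a), or by incrementing $t$ in case (b)) and is then merged away. Over all time the number of distinct strict ancestors is at most the depth of $\gamma_i$ in $\Hier^0$, namely $L-1<\log n$.
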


\begin{proof}
At initialization, it is possible for $T(K)$ to contain
Steiner points. 
Say initially $K=\gamma_i\in\C_i$, then $T(K) = T^0(\gamma_i)$ which can Steiner points outside $\gamma_i$.
Suppose there is a Steiner vertex 
$v\in T^0(\gamma_i)$ with
$v\in \gamma_j \in \C_j$ and $j>i$.
This means that when the algorithm begins processing $\gamma_j$, there will exist \emph{some} type (i) edge $e_{\gamma_j}$ joining 
$v\in\gamma_j$ to 
a descendant $K_1 \supseteq \gamma_i$, which will trigger the unification 
of $K_{\gamma_j}$ and $K_1$.
Thus, after all unification events,
no trees $T(K)$ contain Steiner points.

\Cref{thm:DP-decomposition}(\ref{DP:item4}) states that the
$T^\cup$-degree of vertices is at most
$2\log n$.  
All other spanning tree edges are in the sets 
$E'\cup\{e_\gamma\}$ discovered with $\Unify$.  Thus, we must show that the contribution of these edges to the degree is $\log n$, for all vertices in $V-S$.  
Consider how calls to $\Unify$ find edges incident to some $\gamma_i \in\C_i$.  There may be an unbounded
number of edges $e_{\gamma_i}$ of type (i) or (ii) 
joining $\gamma_i$ to a descendant.  
However, the contribution of 
(i) is already accounted for by~\Cref{thm:DP-decomposition}(\ref{DP:item4}), and all type (ii) edges are adjacent to vertices in $\gamma_i \cap S$, which are permitted to have unbounded degree.
Thus, we only need to consider edges incident to $\gamma_i$ when it is \emph{not} the 
root component under consideration.  
Consider an execution of $\Unify(K_0,\LL)$ that begins
at some current strict ancestor $K_0$ of the component $K_{\gamma_i} \supseteq \gamma_i$.
If $K_{\gamma_i}$ is a descendant of $K_0^j$, then
$e_j$ could be 
(a) directly incident to $K_{\gamma_i}$,
or 
(b) incident to a strict descendant of $K_{\gamma_i}$.
Case (a) increments the degree of one vertex in $K_{\gamma_i}$,
whereas case 
(b) may increment the number of downward
edges (i.e., the number ``$t$'') 
in the future recursive call to $\Unify(K_{\gamma_i},\cdot)$.
In either case, $K_0$ can contribute at most 
one edge incident to $\gamma_i$, 
and will be unified with $K_{\gamma_i}$ immediately afterward.  
Thus, the maximum number of non-$T^\cup$ edges incident to
all vertices in $\gamma_i-S$ is at most the number of strict 
ancestors of $\gamma_i$, or $L-1<\log n$.
\end{proof}

Part~\ref{prop:item-S2} of Theorem~\ref{thm:S-decomposition} 
follows from Lemma~\ref{lem:3logn-deg}.
Only Part~\ref{prop:item-S3} depends on how we choose the partition
$(S_1,\ldots,S_{f+1})$.
Recall this partition was selected uniformly at random,
i.e., we pick a coloring
function $\phi : V\to \{1,\ldots,f+1\}$ 
uniformly at random and 
let $S_i = \{v\in V \mid \phi(v)=i\}$.
Consider any $\gamma'\in \C$ with $|N(\Hier^0_{\gamma'})| \geq 3(f+1)\ln n$.
For any such $\gamma'$ and any index $i\in\{1,\ldots,f+1\}$, 
\[
\Pr[N(\Hier^0_{\gamma'})\cap S_i = \emptyset] \leq \left(1-\frac{1}{f+1}\right)^{3(f+1)\ln n} < n^{-3}.
\]
Taking a union bound over all $(\gamma',i)$, 
$N(\Hier^0_{\gamma'})\cap S_i \neq \emptyset$ with probability at least $1-1/n$.
Assuming this holds, let $e_\gamma$ be an edge joining an $S_i$-vertex in $\gamma$
and some vertex in $\gamma'\prec\gamma$. 
When processing $\gamma$, we would therefore find the 
type-(ii) edge $e_\gamma$ that triggers the unification of $\gamma,\gamma'$.
Thus, 
$\gamma'$ cannot be the root-component 
of any $K\in\K(S_i)$ in the final hierarchy $\Hier(S_i)$, for any $i\in \{1,\ldots,f+1\}$.

This concludes the proof of Theorem~\ref{thm:S-decomposition}.
In \Cref{sect:derand-partition}, we derandomize the construction of $(S_1, \dots, S_{f+1})$ using the method of conditional expectations~\cite{MitzenmacherU05}.

\section{Auxiliary Graph Structures}\label{sect:aux-graph-structures}

In this section, we define and analyze the properties of several auxiliary graph structures, that are based on the low-degree hierarchy $\Hier(S)$  of \Cref{thm:S-decomposition} and on the query $\ang{s,t,F}$,
as described at a high-level in \Cref{sect:first-attempt}.

Recall that $S\in \{S_1,\ldots,S_{f+1}\}$ are vertices
that are not allowed to fail, so whenever the query $\ang{s,t,F}$ is known, $S=S_i$ refers to a part for which $S_i\cap F=\emptyset$.

We continue to use the notation $\Hier=\Hier(S)$,
$\K=\K(S)$, $\Hier_K$, $V(\Hier_K)$, 
$N(\Hier_K)$, $T(K)$, etc.
In addition, for $v \in V(G)$, $K_v \in \K$ is the component containing $v$.
Also, for $u \in K \in \K$, $T_u (K)$ is the subtree of $T(K)$ rooted at $u$, where $T(K)$ is rooted arbitrarily at some vertex $r_K \in K$.

\subsection{The Auxiliary ``Shortcuts-Graph" $\hat{G}$ for the Hierarchy $\Hier(S)$}

We define $\hat{G} = \hat{G}(\Hier(S))$ 
as the \emph{edge-typed} multi-graph, on the vertex set $V(G)$, constructed as follows:
Start with $G$, and give its edges type \emph{original}.
For every component $K\in\K(S)$, 
add a clique on the vertex set $N(\Hier_K)$,
whose edges have type ``$K$.''
Intuitively, these are ``shortcut edges" which represent the fact that any two vertices in $N(\Hier_K)$ are connected by a path in $G$ whose internal vertices are all from $V(\Hier_K)$.
We denote the set of $\hat{G}$-edges by 
$\hat{E} = \hat{E}(\Hier(S))$.

\begin{lemma}\label{lem:no-lateral-aux-edges}
    Let $e = \{u,v\} \in \hat{E}$.\footnote{Throughout, we slightly abuse notation and write $e = \{u,v\}$ to say that $e$ has endpoints $u,v$, even though there might be several different edges with these same endpoints, but with different types.}
    Then $K_u$ and $K_v$ are related by the ancestry relation in $\Hier$.
\end{lemma}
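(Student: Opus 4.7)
The proof splits into two cases according to the type of $e$.

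First, suppose $e = \{u,v\}$ is an original edge, i.e., $e \in E(G)$. Then by Part~\ref{prop:item-S1} of Theorem~\ref{thm:S-decomposition}, $\Hier(S)$ inherits Property~\ref{DP:item2} of Theorem~\ref{thm:DP-decomposition} (concretely, the ancestor-descendant property). Applied to the parts $K_u, K_v \in \K(S)$ containing the endpoints of an edge of $G$, this immediately yields $K_u \preceq K_v$ or $K_v \preceq K_u$.

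Second, suppose $e$ is a clique edge of type $K$ for some $K \in \K(S)$. Then by construction of $\hat{G}$, both endpoints $u,v$ lie in $N(\Hier_K)$. By Part~\ref{prop:item-S3} of Theorem~\ref{thm:S-decomposition}, every vertex $w \in N(\Hier_K)$ satisfies $K \prec K_w$; in particular, $K \prec K_u$ and $K \prec K_v$, so both $K_u$ and $K_v$ are strict ancestors of $K$ in the rooted tree $\Hier$. I would then conclude by invoking the following simple tree fact: any two ancestors of a common node in a rooted tree lie on the unique root-to-node path, and hence are comparable under $\preceq$. Thus $K_u$ and $K_v$ are related.

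The only ``step'' worth flagging is the second one, and even there the only thing that could mildly go wrong is if $u$ and $v$ happen to lie in the \emph{same} part, i.e., $K_u = K_v$; in that case $K_u$ and $K_v$ are trivially related (equal), so the conclusion still holds and the argument is unaffected. Combining both cases, the lemma follows.
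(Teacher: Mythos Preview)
Your proof is correct and follows essentially the same approach as the paper: split on the type of $e$, use the inherited ancestor-descendant property of $\Hier$ for original edges, and for type-$K$ edges use that $u,v\in N(\Hier_K)$ together with Theorem~\ref{thm:S-decomposition}(\ref{prop:item-S3}) to conclude $K_u,K_v$ are both ancestors of $K$ and hence comparable. The paper's version is just terser.
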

\begin{proof}
The hierarchy $\Hier$ guarantees this when $e$ is \emph{original}.
If $e$ is of type $K \in \K$, then $u, v \in N(\Hier_K)$.
By Theorem~\ref{thm:S-decomposition}(\ref{prop:item-S3}), 
both $K_u$ and $K_v$ are ancestors of $K$ in $\Hier$, hence they must be related.
\end{proof}

We next claim that the neighbor set of $V(\Hier_K)$ 
in the graphs $G$ and $\hat{G}$ are equal.

\begin{lemma}\label{lem:aux-graph-neighbors}
For any $K \in \K$, let $\hat{N}(\Hier_K)$ be all vertices in $V-V(\Hier_K)$
that are adjacent, in $\hat{G}$, 
to some vertex in $V(\Hier_K)$.  Then $\hat{N}(\Hier_K) = N(\Hier_K)$.
\end{lemma}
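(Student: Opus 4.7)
The plan is to prove the two inclusions separately. The direction $N(\Hier_K) \subseteq \hat{N}(\Hier_K)$ is immediate because every original $G$-edge is present in $\hat{G}$, so any witness of $v \in N(\Hier_K)$ serves as a witness that $v \in \hat{N}(\Hier_K)$.

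For the reverse inclusion $\hat{N}(\Hier_K) \subseteq N(\Hier_K)$, take $v \in \hat{N}(\Hier_K)$ and fix a $\hat{G}$-edge $e = \{u,v\}$ with $u \in V(\Hier_K)$ and $v \notin V(\Hier_K)$. I would split on the type of $e$. If $e$ is original, then already $v \in N(\Hier_K)$ by definition. The interesting case is when $e$ has type $K'$ for some $K' \in \K$, so that both $u$ and $v$ lie in $N(\Hier_{K'})$.

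The key step is to promote an edge certifying $v \in N(\Hier_{K'})$ into an edge certifying $v \in N(\Hier_K)$, via the fact that $\Hier_{K'}$ sits inside $\Hier_K$. To see this, apply Theorem~\ref{thm:S-decomposition}(\ref{prop:item-S3}) to $u \in N(\Hier_{K'})$: it yields $K' \prec K_u$. Since $u \in V(\Hier_K)$ we have $K_u \preceq K$, and combining gives $K' \prec K$, so $V(\Hier_{K'}) \subseteq V(\Hier_K)$. Now unpack the definition of $v \in N(\Hier_{K'})$: there exists an \emph{original} edge $\{v, w\} \in E(G)$ with $w \in V(\Hier_{K'}) \subseteq V(\Hier_K)$. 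Combined with $v \notin V(\Hier_K)$, this witnesses $v \in N(\Hier_K)$, finishing the proof.

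The only mildly subtle point — and what I'd call the main obstacle — is being careful about the two different kinds of "adjacency" encoded by a $\hat{G}$-edge of type $K'$: such an edge only tells us that both endpoints are in $N(\Hier_{K'})$, not that they are directly adjacent in $G$. The trick above resolves this by using $v$'s membership in $N(\Hier_{K'})$ (not the clique edge $e$ itself) to produce an original $G$-edge from $v$ into $V(\Hier_{K'})$, then leveraging the containment $V(\Hier_{K'}) \subseteq V(\Hier_K)$ that follows from the ancestry constraint on clique-edge endpoints.
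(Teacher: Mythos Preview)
Your proof is correct and follows essentially the same argument as the paper's: both split on the edge type, and in the clique-edge case both use the ancestry constraint $K' \prec K_u \preceq K$ (with $u$ the inside endpoint) to conclude $V(\Hier_{K'}) \subseteq V(\Hier_K)$, then pull an original $G$-edge witnessing $v \in N(\Hier_{K'})$ into a witness for $v \in N(\Hier_K)$. The only cosmetic difference is that the paper swaps the roles of $u$ and $v$ in its notation and leaves the original-edge case implicit.
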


\begin{proof}
$N(\Hier_K) \subseteq \hat{N}(\Hier_K)$ 
follows immediately from the definition 
and $E(G)\subseteq \hat{E}$.
For the converse containment,
suppose $u\in \hat{N}(\Hier_K)$ is
connected by a $\hat{G}$-edge $e$ 
to $v\in V(\Hier_K)$.
If $e$ has type $K'$ then $u,v\in N(\Hier_{K'})$,
$u$ must be connected by a $G$-edge
to some $w\in V(\Hier_{K'})$, and $K' \prec K_v \preceq K$.
This implies $u\in N(\Hier_{K})$ as well.
\end{proof}

\subsection{Affected Components, Valid Edges, and the Query Graph $G^*$}

We now define and analyze notions that are based on the connectivity query $\ang{s,t,F}$ to be answered.
A component $K \in \K(S)$ is \emph{affected} 
by the query $\ang{s,t,F}$ 
if $V(\Hier_K) \cap (F\cup \{s,t\}) \neq \emptyset$.
Note that if $K$ is affected, 
then so are all its ancestor components.
An edge $e = \{u,v\} \in \hat{E}$ of type $\chi$ is \emph{valid} with respect to the query $\ang{s,t,F}$ if both the following hold:
\begin{itemize}
\item[(C1)] $\chi = $ \emph{original} or $\chi = K$ for some unaffected $K$, and
\item[(C2)] $K_u$ and $K_v$ are affected.
\end{itemize}
We denote the set of valid edges by 
$E^* = E^*(\Hier(S), \ang{s,t,F})$. 
Intuitively, two vertices $u,v$ in affected components are connected by a valid edge if there is a \emph{reliable} path between $u,v$ in $G$, whose internal vertices all lie in unaffected components, and therefore cannot intersect $F$.
The \emph{query graph} $G^* = G^*(\Hier(S), \ang{s,t,F})$ is the subgraph of $\hat{G}$ consisting of all vertices lying in affected components of $\Hier(S)$, and all valid edges $E^*$ 
w.r.t.~the query $\ang{s,t,F}$.

The following lemma gives the crucial property of $G^*$
which we use to answer queries:
To decide if $s,t$ are connected in $G-F$, it suffices to determine their connectivity in $G^* - F$.

\begin{lemma}\label{lem:G*}
    Let $G^*$ be the query graph for $\ang{s,t,F}$ and hierarchy $\Hier(S)$.
    If $x,y\in V-F$ are vertices in affected components of $\Hier(S)$,
    then $x$ and $y$ are connected in $G-F$ iff they are connected in $G^*-F$.
\end{lemma}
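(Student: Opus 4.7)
Given any $x$-to-$y$ walk in $G^* - F$, I will produce an $x$-to-$y$ walk in $G - F$ by expanding each edge. An original edge already lies in $G-F$, since the walk avoids $F$. For a type-$K$ clique edge $\{u,v\}$, condition (C1) forces $K$ to be unaffected, so $V(\Hier_K)\cap F=\emptyset$; \Cref{thm:S-decomposition}(\ref{prop:item-S1}) gives that $G[V(\Hier_K)]$ is connected, and since $u,v\in N(\Hier_K)$ each has a $G$-neighbor inside $V(\Hier_K)$, I can route $u$ to $v$ through such a neighbor and any connecting path inside $V(\Hier_K)$. This detour stays in $G-F$, and stitching all detours together proves connectivity.

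\textbf{The $(\Rightarrow)$ direction is the interesting one.} Fix an $x$-to-$y$ path $P=v_0,\dots,v_k$ in $G-F$, and decompose it into maximal blocks $v_i,v_{i+1},\dots,v_j$ whose endpoints $v_i,v_j$ lie in affected components but whose interior vertices $v_{i+1},\dots,v_{j-1}$ all lie in unaffected components. The plan is to replace each such ``excursion'' by a single type-$K^*$ clique edge on $\{v_i,v_j\}$ for a suitable unaffected component $K^*$, and to keep the remaining affected-to-affected edges of $P$ unchanged (they are original edges both of whose endpoints are in affected components, so (C1) and (C2) hold and they are already in $G^*-F$). The hard part is identifying $K^*$ and showing that the replacement edge is valid and lies in $G^*-F$.

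\textbf{Main obstacle: the single-subtree lemma.} I take $K^*$ to be the highest unaffected ancestor of $K_{v_{i+1}}$ in $\Hier$; this is the root of the maximal unaffected subtree containing $K_{v_{i+1}}$, and it is well-defined because any ancestor of an affected component is affected (since $V(\Hier_{K'}) \supseteq V(\Hier_K)$ for $K' \succ K$). I will prove by induction along the excursion that every interior $v_p$ satisfies $K_{v_p}\preceq K^*$, i.e., $v_p\in V(\Hier_{K^*})$. The inductive step uses that consecutive vertices on $P$ are joined by a $G$-edge, so by the no-lateral-edges property inherited in \Cref{thm:S-decomposition}(\ref{prop:item-S1}) their components are $\prec$-comparable; combined with both being unaffected, this forces them into the same maximal unaffected subtree, namely the one rooted at $K^*$. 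Once the excursion is contained in $V(\Hier_{K^*})$, the endpoints $v_i$ and $v_j$ must lie in $N(\Hier_{K^*})$: they are adjacent in $G$ to $v_{i+1},v_{j-1}\in V(\Hier_{K^*})$, yet they themselves fall outside $V(\Hier_{K^*})$ since their components are affected while $V(\Hier_{K^*})$ contains only vertices of unaffected components. Hence the clique edge $\{v_i,v_j\}$ of type $K^*$ belongs to $\hat{E}$, satisfies (C1) and (C2), and avoids $F$ because $v_i,v_j\in V(P)\subseteq V-F$. Performing all replacements yields an $x$-to-$y$ walk in $G^*-F$, completing the proof.
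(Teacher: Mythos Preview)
Your proof is correct and takes essentially the same approach as the paper: the same two directions, the same decomposition of the $G-F$ path into excursions through unaffected territory, and the same conclusion that the excursion endpoints lie in $N(\Hier_{K})$ for some unaffected $K$. The only cosmetic difference is your choice of $K^*$ as the highest unaffected ancestor of $K_{v_{i+1}}$ (with an inductive containment argument along the excursion), whereas the paper takes $K$ to be the least common ancestor of the interior components and argues containment directly from the LCA property; both choices work equally well.
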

\begin{proof}
Suppose first that $x,y$ are connected in $G^* - F$.
Then, it suffices to prove that if $e=\{u,v\}$
is an edge of $G^*-F$, then $u,v$ are connected in $G-F$.
If $e$ is \emph{original}, this is immediate. 
Otherwise, $e$ is a type-$K$ edge, 
for some unaffected $K\in \K(S)$,
where $u,v \in N(\Hier_K)$.
Thus, there are $u',v' \in V(\Hier_K)$ which are 
$G$-neighbors of $u,v$ respectively.
As $V(\Hier_K) \cap F = \emptyset$ 
and the graph induced by $V(\Hier_K)$ 
is connected, 
it follows that $u',v'$ are connected in $G-F$, 
implying the same conclusion for $u,v$.

For the converse direction, assume $x,y$ are connected by a path $P$ in $G-F$.
Write $P$ as $P = P_1 \circ P_2 \circ \cdots \circ P_{\ell}$ where the endpoints $u_i, v_i$ of each $P_i$ are the only $G^*$-vertices in $P_i$. 
Namely, any internal vertices (if they exist) 
lie in unaffected components of $\Hier$.
Note that $x=u_1$, $y = v_\ell$, and $v_i = u_{i+1}$ for $i=1,\dots, \ell-1$.
It suffices to prove that every pair $u_i, v_i$ is connected by an edge in $G^*$.
If $P_i$ has no internal vertices, 
then it is just an original $G$-edge connecting $u_i,v_i$, 
which is still valid in $G^*$.
Otherwise, let $Q_i$ be the subpath of internal vertices in $P_i$, containing at least one vertex.
It follows from Theorem~\ref{thm:S-decomposition}(\ref{prop:item-S1}) that 
there is a component $K_i$ of $\Hier$ such that $V(Q_i) \subseteq V(\Hier_{K_i})$ and 
$V(Q_i) \cap K_i \neq \emptyset$.
% \footnote{$K_i$ is the least common ancestor component, among
% all components in $\K(S)$ containing $Q_i$ vertices. 
% Because all $G$-edges join components related by the ancestry relation $\preceq$, 
% if $Q_i$ intersects two distinct subtrees of $K_i$, it intersects $K_i$.}
Indeed, $K_i$ is the least common ancestor-compoonent of all components that contain vertices from $Q_i$.
Because all $G$-edges join components related by the ancestry relation $\preceq$,
if $Q_i$ intersects two distinct subtrees of $K_i$, it also intersects $K_i$ itself.
The first and last edges of $P_i$ certify that $u_i,v_i \in N(\Hier_{K_i})$. Hence, in $\hat{G}$, $u_i,v_i$ are connected by a type-$K_i$ edge $e_i$.
As $Q_i$ only contains vertices from unaffected components, $K_i$ is unaffected, so the edge $e_i$ remains valid in $G^*$.
\end{proof}

\subsection{Strategy for Connectivity Queries Based on $G^*$}\label{sect:Boruvka}

The query $\ang{s,t,F}$ determines the set $S=S_i$
for which $S\cap F=\emptyset$.  The query algorithm
deals only with $\Hier(S)$ and the graph 
$G^* = G^*(\Hier(S),\ang{s,t,F})$.
In this section we describe how the query 
algorithm works at a high level,
in order to highlight what information
must be stored in the vertex labels of $s,t,F$,
and which operations must be supported by those labels.

The query algorithm depends on a sketch 
(probabilistic data structure) 
for handling a certain type of \emph{cut query}~\cite{KapronKM13,AhnGM12}.  
In subsequent sections we show that such a data structure exists, 
and can be encoded in the labels of the failed vertices.  
For the time being, suppose that 
for any vertex set $P\subseteq V(G^*)$, 
$\sketch(P)$ is some data structure subject to
the operations
\begin{description}
    \item[$\Merge(\sketch(P),\sketch(P'))$ :] 
    Returns $\sketch(P\oplus P')$.
    \item[$\GetEdge(\sketch(P),F)$ :] If $F\cap P=\emptyset$, $|F|\leq f$, returns an edge 
    $e \in E^* \cap (P\times (V-(P\cup F)))$ with 
    probability $\delta=\Omega(1)$ (if any such edge exists), and 
    \textsc{fail} otherwise.
\end{description}

It follows from Theorem~\ref{thm:S-decomposition}
and $S\cap F=\emptyset$ that
the graph $\bigcup_{\text{affected}~ K} (T(K)-F)$ 
consists of $O(f\log n)$ disjoint trees, whose union covers $V(G^*)-F$. 
Let $\P_0$ be the corresponding vertex partition of $V(G^*)-F$.
Following~\cite{AhnGM12,KapronKM13,DuanP20,DoryP21},
we use $\Merge$ and $\GetEdge$ queries to implement an 
unweighted version of \Boruvka's minimum spanning tree
algorithm on $G^* - F$, in $O(\log n)$ parallel rounds.
At round $i$ we have a partition $\P_i$ of $V(G^*)-F$ 
such that each part of $\P_i$ is spanned by a tree in $G^*-F$,
as well as $\sketch(P)$ for every $P\in \P_i$.
For each $P\in \P_i$, we call $\GetEdge(\sketch(P),F)$,
which returns an edge to another part of $\P_i$ with 
probability $\delta$, since all edges to $F$ are excluded.  
The partition $\P_{i+1}$ is obtained
by unifying all parts of $\P_i$ joined 
by an edge returned by $\GetEdge$;
the sketches for $\P_{i+1}$ are obtained by 
calling $\Merge$ on the constituent sketches of $\P_i$.
(Observe that distinct $P,P'\in\P_i$ are 
disjoint so $P\oplus P' = P\cup P'$.)

Once the final partition $\P_{O(\log n)}$ is obtained,
we report \emph{connected} if $s,t$ are in the same part
and \emph{disconnected} otherwise.  By Lemma~\ref{lem:G*}, $s,t$ are connected in $G-F$ iff they are connected in $G^*-F$, so it suffices to prove that $\P_{O(\log n)}$ is the partition of $G^*-F$ into connected components, with high probability.

\paragraph{Analysis.} 
Let $N_i$ be the number of parts of $\P_i$
that are not already connected components of $G^*-F$.
We claim $\E[N_{i+1} | N_i] \leq N_i - (\delta/2)N_i$.
In expectation, $\delta N_i$ of the calls to 
$\GetEdge$ return an edge.  
If there are $z$ successful calls to $\GetEdge$,
the $z$ edges form a pseudoforest\footnote{A subgraph that 
can be oriented so that all vertices have out-degree at most 1.}
and any pseudoforest on $z$ edges has at most $\ceil{z/2}$ connected components.  Thus after $c\ln n$ rounds of \Boruvka's algorithm, 
$\E[N_{c\ln n}] \leq n(1-(\delta/2))^{c\ln n} < n^{1-c\delta/2}$.
By Markov's inequality, $\Pr[N_{c\ln n} \geq 1]\leq n^{1-c\delta/2}$.
In other words, when $c=\Omega(1/\delta)$, 
with high probability 
$N_{c\ln n}=0$ and 
$\P_{c\ln n}$ is exactly 
the partition of $G^*-F$ into connected components.
Thus, any 
connectivity query $\ang{s,t,F}$ is answered correctly, with high probability.

\subsection{Classification of Edges}

The graph $G^*$ depends on $\Hier(S)$ and on the \emph{entire query} $\ang{s,t,F}$.
In contrast, the label of a query vertex $x$ is constructed without knowing the rest of the query,
so storing $G^*$-related information is challenging.
However, we \emph{do} know that all the
ancestor-components of $K_x$ are affected.
This is the intuitive motivation for this section, where we express $G^*$-information in terms of individual vertices and (affected) components.
Specifically, we provide technical structural lemmas that express cut-sets in $G^*$ in terms of several simpler edge-sets, exploiting the structure of the hierarchy $\Hier(S)$.
Sketches of the latter sets can be divided across the labels of the query vertices, which helps us to keep them succinct, as explained in the later~\Cref{sect:sketching-and-labeling}, while enabling the \Boruvka\ initialization described in \Cref{sect:query}.

Fix the hierarchy 
$\Hier = \Hier(S)$ and the query $\ang{s,t,F}$.
Note that $\Hier$ determines the graph $\hat{G}$
whereas $\ang{s,t,F}$ further determines $G^*$.
We define the following edge sets, where $K \in \K(S)$, $v \in V$
(see \Cref{fig:edge-classification} for an illustration):

\begin{itemize}
    \item $\hat{E}(v, K)$: the set of all $\hat{G}$-edges with $v$ as one endpoint, and the other endpoint in $K$.
    \item $\hat{E}_K (v)$: the set of all $\hat{G}$-edges of type $K$ incident to $v$.
    \item $\hat{E}_{\up}(v) \bydef \bigcup_{K \succeq K_v} \hat{E}(v,K)$. 
    I.e., the $\hat{G}$-edges incident to $v$ having their other endpoint in an ancestor component of $K_v$, including $K_v$ itself.
    \item $\hat{E}_{\down}(v) \bydef \bigcup_{K \prec K_v, ~K \text{ affected} } \hat{E}(v, K)$.
    I.e., the $\hat{G}$-edges incident to $v$ having their other endpoint in an \emph{affected} component which is a strict descendant of $K_v$.
    \item $\hat{E}_{\bad} (v) \bydef \bigcup_{K \operatorname{affected}} E_K (v)$.
    I.e., the $\hat{G}$-edges incident to $v$ having affected types.
    \item $E^*(v)$: the set of all $E^*$-edges (valid $\hat{G}$-edges) incident to $v$, defined only when $v\in V(G^*)$.
\end{itemize}
We emphasize that despite their similarity, the notations $\hat{E}(v, K)$ and  $\hat{E}_K (v)$ have entirely different meanings; in the first $K$ serves as the \emph{hosting component} of the non-$v$ endpoints of the edges, while in the second, 
$K$ is the \emph{type} of the edges.
Also, note that the first three sets only depend on the hierarchy $\Hier=\Hier(S)$ while the rest also depend on the query $\ang{s,t,F}$.

\begin{figure}
    \centering
    \includegraphics[height=10cm]{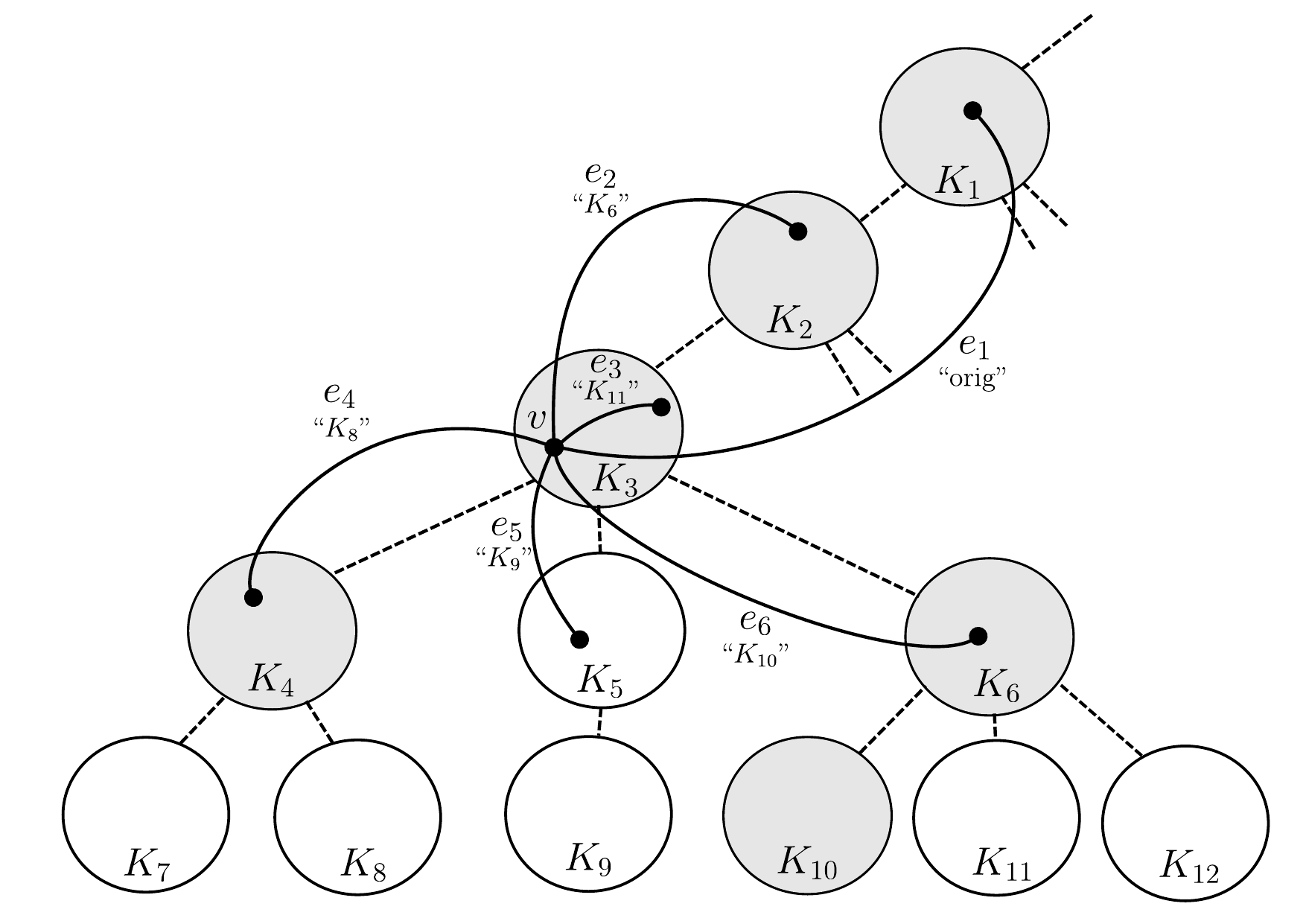}
    \caption{
    Illustration of the different edge-sets associated with a vertex $v$ of $G^*$.
    A part of the hirerachy $\Hier$ is shown.
    Affected components are filled.
    Dashed lines connect parent-child pairs in $\Hier$,
    which are not to be confused with $\hat{G}$-edges.
    Solid curves are $\hat{G}$-edges incident to $v$,
    numbered such that $e_i$ goes between $v$ and $K_i$, 
    so $e_i \in \hat{E}(v, K_i)$.
    Types are indicated with quotes.
    E.g., $e_1$ is an original $G$-edge, and $e_2$ is an edge of type $K_6$, so $e_2 \in \hat{E}_{K_6} (v)$.
    We have: (i) $e_1, e_2, e_3 \in \hat{E}_{\up}(v)$, (ii) $e_4, e_6 \in \hat{E}_{\down} (v)$, (iii) $e_2, e_6 \in \hat{E}_{\bad}(v)$, and (iv) $e_1, e_3, e_4 \in E^*(v)$.
    Note that $e_5 \notin \hat{E}_{\down} (v)$ since $K_5$ is not affected.
    }
    \label{fig:edge-classification}
\end{figure}

The following lemma expresses $E^*(v)$ in terms of $\hat{E}(v,K)$ and $\hat{E}_{K}(v)$ for affected $K \in \K(S)$.
The proof is straightforward but somewhat technical. It appears in~\Cref{sect:missing-proofs} which contains all missing proofs.

\begin{lemma}\label{lem:edge-classification}
    Let $v$ be a vertex in $G^*$. Then:
    \begin{align}
        \hat{E}_{\down}(v) &= \bigoplus_{K \operatorname{affected}, ~ v \in N(\Hier_K)} \hat{E}(v,K) ~. \label{eq:Edown(v)}\\
        \hat{E}_{\bad}(v) &= \bigoplus_{K \operatorname{affected}, ~v \in N(\Hier_K)} \hat{E}_K (v) ~. \label{eq:Ebad(v)}\\
        E^*(v) &= \hat{E}_{\up}(v) \oplus \hat{E}_{\down}(v) \oplus \hat{E}_{\bad} (v) ~. \label{eq:E*(v)}
    \end{align}
\end{lemma}

\def\APPENDEDGECLASS{
\begin{proof}[Proof of \Cref{lem:edge-classification}]
% \begin{proof}
\underline{\Cref{eq:Edown(v)}.}
Consider some affected $K$ with $\hat{E}(v,K) \neq \emptyset$.
If $K \prec K_v$, then $v \in V-V(\Hier_K)$ and $v$ has some $\hat{G}$-neighbor in $K \subseteq V(\Hier_K)$, hence $v \in N(\Hier_K)$ by \Cref{lem:aux-graph-neighbors}.
Conversely, if $v \in N(\Hier_K)$, then $K \prec K_v$ by 
\Cref{thm:S-decomposition}(\ref{prop:item-S3}).
This proves that the union defining $\hat{E}_{\down} (v)$
can just be taken over all affected $K$ with $v \in N(\Hier_K)$. 
Moreover, as $\hat{E}(v,K)$ and $\hat{E}(v,K')$ are disjoint when $K \neq K'$,
we may replace $\bigcup$ by $\bigoplus$.

\medskip
\underline{\Cref{eq:Ebad(v)}.}
By construction of $\hat{G}$, $\hat{E}_K (v) = \emptyset$ whenever $v \notin N(\Hier_K)$.
Hence, the union defining $\hat{E}_{\bad}(v)$ can just 
be evaluated on affected $K$ with $v \in N(\Hier_K)$.
Moreover, as $\hat{E}_K (v)$ and $\hat{E}_{K'} (v)$ are disjoint when $K \neq K'$, the union $\bigcup$ can be replaced with a $\bigoplus$. (Remember that two edges in $\hat{E}_K(v),\hat{E}_{K'}(v)$ may have the same endpoints but different types $K\neq K'$.  They are treated as distinct edges.)

\medskip
\underline{\Cref{eq:E*(v)}.}
$\hat{E}_{\up}(v)\oplus \hat{E}_{\down}(v) \oplus \hat{E}_{\bad}(v)$ is the correct expression,
provided the following three statements hold:
\begin{itemize}
\item[(i)] $E^*(v) = ( \hat{E}_{\up} (v) \cup \hat{E}_{\down} (v) ) - \hat{E}_{\bad} (v)$,
\item[(ii)] $\hat{E}_{\up} (v) \cap \hat{E}_{\down} (v) = \emptyset$, and
\item[(iii)] $\hat{E}_{\bad} (v) \subseteq \hat{E}_{\up} (v) \cup \hat{E}_{\down} (v)$.
\end{itemize}
We start with (i).
As $v \in V(G^*)$, every $K \succeq K_v$ is affected.
Thus, each $e \in \hat{E}_{\up}(v)$ satisfies (C2).
This is also true for $e \in \hat{E}_{\down}(v)$ by  definition.
Therefore, $e \in \hat{E}_{\up}(v) \cup \hat{E}_{\down} (v)$ can only be invalid if it fails to satisfy (C1), namely $e$ has type $K$ for some affected $K$, implying that $e \in \hat{E}_{\bad} (v)$.
This proves the containment `$\supseteq$' of (i).
For the converse containment `$\subseteq$',
let $e = \{v,u\} \in E^*(v)$ of type $K$.
Then $K_u$ is affected by (C2), and related to $K_v$ by \Cref{lem:no-lateral-aux-edges}.
Thus, $e \in \hat{E}(v,K_u) \subseteq \hat{E}_{\up}(v) \cup \hat{E}_{\down}(v)$.
Also, $K$ is unaffected by (C1), hence $e \notin \hat{E}_{\bad}(v)$.

For (ii),
if there were some edge $e=\{u,v\} \in \hat{E}_{\up} (v) \cap \hat{E}_{\down} (v)$,
we would get the contradiction $K_u \prec K_v \preceq K_u$, where the first $\prec$ follows from 
$e \in \hat{E}_{\down}(v)$ 
and the second $\preceq$ from $e \in \hat{E}_{\up}(v)$.

For (iii),
let $e =\{u,v\} \in \hat{E}_{\bad} (v)$.
Then $e$ has type $K$ for some 
affected $K$, 
and hence $v,u \in N(\Hier_K)$.
So, by 
\Cref{thm:S-decomposition}(\ref{prop:item-S3}),
$K_v \succeq K$ and $K_u \succeq K$.
Thus, $K_v, K_u$ are affected and related.
Therefore, $e \in \hat{E}(v, K_u) \subseteq \hat{E}_{\up}(v) \cup \hat{E}_{\down}(v)$, as required.
\end{proof}
}%\APPENDEDGECLASS

We next consider cut-sets in $G^*$.
For a vertex subset $U \subseteq V(G^*)$, let $E^*_{\cut} (U)$ be the set of edges crossing the cut $(U, V(G^*)-U)$ in $G^*$.

\begin{observation}\label{obs:cut-edges}
    $E^*_{\cut} (U) = \bigoplus_{v \in U} E^*(v) $.
\end{observation}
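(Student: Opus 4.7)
The plan is to prove this by a straightforward double-counting / membership-parity argument, based on the fact that $E^*(v)$ is defined as the set of valid edges incident to $v$, and that the XOR of a collection of sets contains exactly those elements appearing in an odd number of the sets.

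First I would fix an arbitrary edge $e = \{u,w\} \in E^*$ (keeping in mind that $\hat{G}$ is a multigraph with typed edges, so $e$ refers to a single concrete edge object, not an unordered pair) and observe that for any vertex $v \in V(G^*)$, we have $e \in E^*(v)$ if and only if $v \in \{u,w\}$. This is immediate from the definition of $E^*(v)$ as the valid edges incident to $v$: since $e$ is itself valid, $e \in E^*(v)$ iff $v$ is an endpoint of $e$.

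Next I would compute, for each $e \in E^*$, the number of times $e$ contributes to the right-hand side $\bigoplus_{v \in U} E^*(v)$: namely it is $|U \cap \{u,w\}|$. By the previous paragraph, $e$ belongs to the symmetric difference iff this count is odd, which is iff exactly one of $u,w$ lies in $U$. This is precisely the condition for $e$ to cross the cut $(U, V(G^*) - U)$, i.e., $e \in E^*_{\cut}(U)$. Hence the two sides agree on every edge.

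I do not anticipate any serious obstacle. The only mild subtlety is bookkeeping around the multigraph structure: we must be careful that each individual typed edge is counted as a single element in $E^*(v)$ (so that two parallel edges between $u$ and $w$ of different types behave independently in the XOR), but this is exactly how $\hat{E}$ and hence $E^*$ were defined. Assuming no self-loops (none are produced by the construction of $\hat{G}$, since cliques $N(\Hier_K)$ connect distinct vertices and original $G$-edges are simple), the argument is complete.
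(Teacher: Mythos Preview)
Your proposal is correct and takes essentially the same approach as the paper: both argue that an edge appears in the $\bigoplus$-sum with multiplicity equal to the number of its endpoints in $U$, so edges with both endpoints in $U$ cancel and exactly the cut edges survive. Your added remarks about the multigraph bookkeeping and absence of self-loops are accurate but not needed for the paper's one-line version.
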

\begin{proof}
    Any $E^*$-edge with both endpoints in $U$ appears twice in the $\bigoplus$ sum.  
    Since $\{e\}\oplus\{e\}=\emptyset$, these edges are \emph{cancelled out}, leaving only those $E^*$-edges with exactly one endpoint in $U$.
\end{proof}

We end the section with the following~\Cref{lem:E*cut} that provides a useful formula for cut-sets in $G^*$.
The proof is by easy applications of \Cref{lem:edge-classification} and \Cref{obs:cut-edges}.

\begin{lemma}\label{lem:E*cut}
    Let $U \subseteq V(G^*)$. Then
    \begin{equation}\label{eq:E*cut}
    E^*_{\cut} (U) = \paren{\bigoplus_{v \in U} E_{\up}(v)} \oplus \paren{\bigoplus_{K \operatorname{affected}}\  \bigoplus_{v \in U \cap N(\Hier_K)} \hat{E}(v,K) \oplus \hat{E}_K (v) } .
    \end{equation}
\end{lemma}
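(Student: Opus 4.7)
The plan is to derive the formula by chaining together \Cref{obs:cut-edges} with the three edge-decomposition identities \cref{eq:Edown(v)}, \cref{eq:Ebad(v)}, and \cref{eq:E*(v)}, and then reorganize the resulting iterated XOR by swapping the order of summation. More concretely, I would start from
\[
E^*_{\cut}(U) \;=\; \bigoplus_{v\in U} E^*(v) \;=\; \bigoplus_{v\in U}\paren{\hat{E}_{\up}(v)\oplus \hat{E}_{\down}(v)\oplus \hat{E}_{\bad}(v)},
\]
which is \Cref{obs:cut-edges} followed by \cref{eq:E*(v)}, and use commutativity/associativity of $\oplus$ to group the three types of terms separately. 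The first group contributes exactly $\bigoplus_{v\in U}\hat{E}_{\up}(v)$, which already matches the first parenthesized expression in the target formula, so the work concentrates entirely on the $\hat{E}_{\down}$ and $\hat{E}_{\bad}$ terms.

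Next I would substitute \cref{eq:Edown(v)} and \cref{eq:Ebad(v)} into the remaining two groups. This yields
\[
\bigoplus_{v\in U}\hat{E}_{\down}(v) \;=\; \bigoplus_{v\in U}\ \bigoplus_{\substack{K \text{ affected}\\ v\in N(\Hier_K)}}\hat{E}(v,K),
\qquad
\bigoplus_{v\in U}\hat{E}_{\bad}(v) \;=\; \bigoplus_{v\in U}\ \bigoplus_{\substack{K \text{ affected}\\ v\in N(\Hier_K)}}\hat{E}_K(v).
\]
The key step is to swap the order of the two $\bigoplus$'s. Since each inner summation is implicitly ranging over pairs $(v,K)$ with $v\in U$, $K$ affected, and $v\in N(\Hier_K)$, and since $\oplus$ is commutative and associative, I can reindex: rewrite the outer index as $K$ ranging over affected components, and then let the inner index be $v\in U\cap N(\Hier_K)$. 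Pairing the two sums (which have the same index set) gives a single iterated XOR of $\hat{E}(v,K)\oplus \hat{E}_K(v)$, matching the second parenthesized expression.

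The only subtle point — and the one I would be most careful about — is the justification of the index swap. This is not a real obstacle but it deserves an explicit sentence, since the summand $\hat{E}(v,K)$ (respectively $\hat{E}_K(v)$) is by convention $\emptyset$ precisely when $v\notin N(\Hier_K)$, so one can freely enlarge the inner index range to all affected $K$ and then reorder. After the swap and the pairing, combining the three grouped terms produces exactly the right-hand side of \cref{eq:E*cut}, completing the proof. Nothing else enters: the lemma is a purely formal consequence of the earlier identities and the algebra of symmetric differences.
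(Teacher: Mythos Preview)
Your proposal is correct and is essentially the same argument as the paper's proof: both combine \Cref{obs:cut-edges} with \cref{eq:Edown(v),eq:Ebad(v),eq:E*(v)} and a swap of the order of $\bigoplus$-summation over pairs $(v,K)$. The only cosmetic difference is direction---the paper starts from the right-hand side of \cref{eq:E*cut} and reduces to $E^*_{\cut}(U)$, whereas you go the other way---and your extra remark about enlarging the index range is unnecessary (the swap is a pure reindexing of the same set of pairs), but harmless.
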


\def\APPENDESTARCUT{
\begin{proof}[Proof of \Cref{lem:E*cut}]
% \begin{proof}
    Changing order of summation, the right-hand-side of \cref{eq:E*cut} equals
    \begin{align*}
        &
        \bigoplus_{v \in U}
        \Bigg(
            E_{\up}(v)
            \oplus
            \Bigg(
                \bigoplus_{\substack{K \, \text{affected}\\ \text{s.t. } v \in N(\Hier_K)}}
                    \hat{E}(v,K) \oplus \hat{E}_K (v)
            \Bigg)
        \Bigg)
        \\
        =~&
        \bigoplus_{v \in U}
        \paren{
            \hat{E}_{\up}(v) \oplus \hat{E}_{\down}(v) \oplus \hat{E}_{\bad}(v)
        }
        && \text{by \cref{eq:Edown(v),eq:Ebad(v)},}
        \\
        =~&
        \bigoplus_{v \in U} E^*(v) = E^*_{\cut}(U)
        &&\text{by \cref{eq:E*(v)}, \Cref{obs:cut-edges},}
    \end{align*}
    as required.
\end{proof}
}%\APPENDESTARCUT

\subsection{Sparsifying and Orienting $\hat{G}$}\label{sect:orientation}

In this section we set the stage for using the ``orientation trick", overviewed in \Cref{sect:orientation-intro}, that ultimately enables us to reduce the label size in our construction further.
We show that we can effectively sparsify $\hat{G}$ to have arboricity $\tilde{O}(f^2)$,
or equivalently, to admit an $\tilde{O}(f^2)$-outdegree orientation,
while preserving, with high probability, 
the key property of $G^*$ stated in \Cref{lem:G*}, 
that $x,y$ are connected in $G-F$ iff they
are connected in $G^*-F$.
This is formalized in the following lemma:

\begin{lemma}\label{lem:orienting-aux-graph}
    There is a randomized procedure that given the graph $\hat{G} = \hat{G}(\Hier(S))$, outputs a subgraph $\tilde{G}$ of $\hat{G}$ with the following properties.
\begin{enumerate}
    \item $\tilde{G}$ has arboricity $O(f^2\log^2 n)$.
    Equivalently, its edges can be oriented so that each vertex has outdegree $O(f^2\log^2 n)$.
    
    \item Fix any query $\ang{s,t,F}$, $|F|\leq f$, which fixes $G^* = G^*(\Hier(S),\ang{s,t,F})$.
    Let $\tilde{G}^* = G^* \cap \tilde{G}$ be the subgraph of $G^*$ whose edges are present in $\tilde{G}$.
    Let $x,y\in V-F$ be two vertices in affected components.
    With high probability, 
    $x,y$ are connected in $G-F$ 
    iff they are connected in $\tilde{G}^*-F$.
\end{enumerate}
\end{lemma}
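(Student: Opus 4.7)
I would build $\tilde G$ as the union of two sparsifiers, one for the original $G$-edges of $\hat G$ and one, per component, for the type-$K$ clique edges. For the original edges I would invoke a randomized fault-tolerant connectivity sparsifier of $G$ (e.g., of the Chuzhoy--Khanna / Weimann--Yuster style, used throughout the paper), yielding a subgraph of arboricity $\tilde O(f)$ whose connected components agree with those of $G - F$ for any fixed $F$ with $|F|\le f$, with probability $1-1/\poly(n)$. For each $K\in\K(S)$, the vertex set $N(\Hier_K)$ has size $O(f\log n)$ by Theorem~\ref{thm:S-decomposition}(\ref{prop:item-S3}), so I would replace the $K$-clique on $N(\Hier_K)$ with an $(f+1)$-vertex-connected spanning subgraph $C_K$ (e.g., the union of $f+1$ edge-disjoint spanning trees of the clique, which exist since $|N(\Hier_K)|\gg f$), contributing $O(f\cdot|N(\Hier_K)|)=O(f^2\log n)$ edges of type $K$. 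I take $\tilde G$ to be the union of the original-edge sparsifier and all the $C_K$'s.

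For Part~1 (arboricity), I would orient $\tilde G$ and bound the maximum outdegree. Original edges are oriented from the endpoint whose component is lower in $\Hier$ to the one higher (well-defined by Lemma~\ref{lem:no-lateral-aux-edges}); the FT sparsifier already has arboricity $\tilde O(f)$, so this contributes $\tilde O(f)$ outdegree per vertex. For the $C_K$ edges, I would orient every type-$K$ edge from the endpoint in a \emph{lower} ancestor of $K$ to the one in a \emph{higher} ancestor, breaking ties arbitrarily. Since $\Hier$ has depth at most $\log n$, a fixed $v$ can be the ``lower'' endpoint of a type-$K$ edge only for components $K$ whose corresponding slot in $v$'s root-path contains $v$. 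For each such $K$, $v$ contributes at most $O(f\cdot\log n)$ outgoing edges, namely, the out-edges of $v$ in $C_K$. Summed over the ancestors used, this gives outdegree $O(f^2\log^2 n)$, hence arboricity $O(f^2\log^2 n)$.

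For Part~2 (connectivity preservation), fix a query $\ang{s,t,F}$; by Lemma~\ref{lem:G*} it suffices to show that, with high probability, any $x,y\in V-F$ in affected components are connected in $G^*-F$ iff they are connected in $\tilde G^*-F$. The ``$\Leftarrow$'' direction is immediate since $\tilde G^*\subseteq G^*$. For ``$\Rightarrow$'', I would argue edge-by-edge along a path $P$ in $G^*-F$: an original edge of $P$ is between two non-failed vertices whose connection survives in the FT sparsifier by its guarantee, and hence is replaced by a walk in $\tilde G^*-F$ using only original edges (which remain valid). A type-$K$ edge $\{u,w\}$ of $P$ has $K$ unaffected, so $V(\Hier_K)\cap F=\emptyset$ and $F\cap N(\Hier_K)$ has at most $f$ vertices; since $C_K$ is $(f+1)$-vertex-connected, $C_K-F$ is connected and supplies a walk from $u$ to $w$ in $\tilde G^*-F$ through type-$K$ edges, each of which remains valid because $K$ is unaffected and all vertices of $N(\Hier_K)$ lie in affected (ancestor) components. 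A union bound over the two (independent) failure events of the original-edge sparsifier and the clique sparsifiers---the latter are deterministic given $F$---keeps the error probability $1/\poly(n)$.

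The main obstacle I expect is the arboricity bound: a single vertex $v$ can lie in $N(\Hier_K)$ for many descendants $K$ of $K_v$, so a naive edge count blows up. The carefully designed orientation---charging each type-$K$ edge to the lower of its two hierarchy positions and leveraging the $\log n$ depth of $\Hier$---is the delicate step, and may need refinement (for instance, a finer amortization over levels of $\Hier$ or a more structured clique sparsifier respecting the hierarchy) to land exactly at $O(f^2\log^2 n)$ rather than a slightly worse polylog bound.
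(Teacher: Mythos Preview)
Your approach differs substantially from the paper's, and as written it has two genuine gaps.

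\textbf{Part 2 fails.} Neither of your two sparsifiers is confined to $V(G^*)$. When you replace an original edge $\{u,w\}\in E(G^*-F)$ by a path in a standard $f$-FT connectivity certificate of $G$, that path is only guaranteed to avoid $F$; nothing stops it from entering \emph{unaffected} components, and every original edge with an endpoint in an unaffected component violates (C2) and is therefore absent from $G^*$. The same problem hits your clique sparsifiers: for unaffected $K$, the vertices of $N(\Hier_K)$ lie in strict ancestors of $K$, but those ancestors need not all be affected (only ancestors of components meeting $F\cup\{s,t\}$ are). A path in $C_K-F$ from $u$ to $w$ may thus pass through some $z\in N(\Hier_K)$ with $K_z$ unaffected, and the type-$K$ edges incident to $z$ fail (C2) and are not in $G^*$. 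Your edge-by-edge replacement therefore does not produce a walk in $\tilde G^*-F$.

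\textbf{Part 1 is not established.} A vertex $v$ can lie in $N(\Hier_K)$ for an unbounded number of strict descendants $K$ of $K_v$, and your ``lower-to-higher'' orientation does not cap how many such $K$ make $v$ the lower endpoint. The sentence ``a fixed $v$ can be the lower endpoint of a type-$K$ edge only for components $K$ whose corresponding slot in $v$'s root-path contains $v$'' is incorrect: every such $K$ sits strictly \emph{below} $K_v$, not on $v$'s root path, and the depth of $\Hier$ gives no bound on their number. You flag this as the main obstacle; it is not a refinement issue but invalidates the claimed $O(f^2\log^2 n)$ bound for this construction.

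\textbf{What the paper does instead.} The paper builds $\tilde G$ as a union of $O(f^3\log^2 n)$ minimum spanning forests, one per random sample $(\mathcal A_i,\mathcal B_i)$ of vertices (rate $1/f$) and component-types (rate $1/(f\log n)$), where the edge weight is $d(e)=\max\{d(u),d(v)\}$, the hierarchy depth. Arboricity is then immediate: each vertex lands in $O(f^2\log^2 n)$ samples w.h.p., and each forest contributes out-degree $1$. For Part~2, for every edge $e$ of $G^*-F$ some sample is ``good'' (contains both endpoints and $e$'s type, excludes $F$ and all affected types); the MSF cycle property with the depth weight is what keeps the replacement path from straying into unaffected components. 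The single mechanism---depth-weighted MSFs on fault-avoiding random subgraphs---handles the arboricity bound and the affected-component constraint simultaneously, which is precisely what your two-part construction lacks.
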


\begin{proof}
    Let $d(u)$ be the depth of $K_u$ in $\Hier(S)$, 
    and let the depth of an edge $e=\{u,v\} \in \hat{E}$ be 
    $d(e)=\max\{d(u),d(v)\}$.  
    Initially $\tilde{G}_0=(V,\emptyset)$.
    We construct $\tilde{G} = \tilde{G}_{O(f^3\log^2 n)}$
    by iteratively finding minimum spanning forests on subgraphs of $\hat{G}$ 
    with respect to the weight function $d$.
    
    Let $\mathcal{A}_i \subseteq V$ be sampled with probability $1/f$
    and $\mathcal{B}_i \subseteq \K(S)$ be sampled with probability $1/(f\log n)$.
    The subgraph $\hat{G}_i$ of $\hat{G}$ is obtained by including every edge $\{u,v\}$ if its type is 
    \emph{original} and $u,v\in \mathcal{A}_i$, or if its type is $K$, $u,v\in \mathcal{A}_i$, and $K\in \mathcal{B}_i$.  
    Find a minimum spanning forest 
    $M_i$ of $\hat{G}_i$ 
    (with respect to $d$)
    and set $\tilde{G}_i = \tilde{G}_{i-1}\cup M_i$.

    \underline{Part 1.} With high probability, 
    each vertex is included 
    in $O(f^{-1}\cdot f^3\log^2 n)$ 
    of the samples $(\mathcal{A}_i)$.  Since each $M_i$ has an orientation with out-degree 1, each vertex has outdegree $O(f^2\log^2 n)$.

    \underline{Part 2.} By Lemma~\ref{lem:G*}, 
    $x,y$ are connected in $G-F$ iff they are 
    connected in $G^*-F$.  Thus, it suffices
    to show that if $e=\{u,v\}$ is an edge of $G^*-F$,
    then $u,v$ are connected in $\tilde{G}^*-F$, with high probability.
    Call $\ang{\mathcal{A}_i,\mathcal{B}_i}$ \emph{good for $e,F$} 
    if (i) $u,v\in \mathcal{A}_i$, (ii) $F\cap \mathcal{A}_i = \emptyset$,
    (iii) for all affected $K$, $K\not\in \mathcal{B}_i$, 
    and (iv) if $e$ has type $K$, that $K\in \mathcal{B}_i$.
    Since there are at most $f\log n$ affected components,
    the probability that (i--iv) hold is
    at least 
    \[
    f^{-2}(1-f^{-1})^f (1-(f\log n)^{-1})^{f\log n}(f\log n)^{-1} = \Theta(1/(f^3\log n)).
    \]
    As there are $O(f^3\log^2 n)$ $\ang{\mathcal{A}_i,\mathcal{B}_i}$ samples,
    for \emph{every} edge $e=\{u,v\}$ in $G^*$,
    there is some $i$ for which $\ang{\mathcal{A}_i,\mathcal{B}_i}$ is good for 
    $e,F$, with high probability.  
    When $\ang{\mathcal{A}_i,\mathcal{B}_i}$ is good, the edge $e$ is eligible to be put in $M_i$.  
    If $e\not\in M_i$, it follows from 
    the \emph{cycle property} 
    of minimum spanning forests
    that the $M_i$-path from $u$ to $v$ uses only edges of 
    depth at most $d(u,v)$.  
    Since all ancestors of affected components are affected, 
    this $M_i$-path lies entirely inside $G^*-F$.
\end{proof}

Henceforth, we use $\hat{G}$ to refer to the 
sparsified and \emph{oriented} version of $\hat{G}$ 
returned by \Cref{lem:orienting-aux-graph}, i.e., $\hat{G}$ is now $\tilde{G}$. 
Note that the edges of $\hat{G}$ 
now have two extra attributes: 
a \emph{type} and an \emph{orientation}.  An oriented graph is not the same as a \emph{directed} graph.  When $\{u,v\}$ is oriented as $u\to v$, a path may still use it in either direction.
Informally, the orientation serves as a tool to reduce the label size while still allowing $\GetEdge$ from \Cref{sect:Boruvka} to be implemented efficiently.  
Each vertex $u$ will store explicit information about its $\tilde{O}(f^2)$ incident out-edges that are oriented 
as $u\to v$.

\section{Sketching and Labeling}\label{sect:sketching-and-labeling}

In this section, we first develop the specialized sketching tools that work together with each hierarchy $\Hier = \Hier(S)$, and then define the labels assigned by our scheme, which store such sketches.

\subsection{Sketching Tools}\label{sect:sketching-tools}

Fix $S\in\{S_1,\ldots,S_{f+1}\}$
and the hierarchy $\Hier=\Hier(S)$
from \Cref{thm:S-decomposition}.
All presented definitions are with 
respect to this hierarchy $\Hier$.

\paragraph{IDs and Ancestry Labels.}
Before formally defining the sketches, we need several preliminary notions of identifiers and ancestry labels.
We give each $v \in V$ a unique $\id(v) \in [1, n]$, and also each component $K \in \K(S)$ 
has a unique $\id(K) \in [1,n]$.
We also assign simple \emph{ancestry} labels:

\begin{lemma}\label{lem:anc-labels}
    One can give each $v \in V$ an $O(\log n)$-bit \emph{ancestry label} $\anc(v)$.
    Given $\anc(u)$ and $\anc(v)$ for $u,v\in V$, one can determine if $K_u \succeq K_v$ and if equality holds. In the latter case, one can also determine if $u$ is an ancestor of $v$ in $T(K_u)=T(K_v)$ and if $u=v$.
    Ancestry labels are extended to components $K \in \K$, by letting $\anc(K) \bydef \anc(r_K)$ where $r_K$ is the root of $T(K)$. 
\end{lemma}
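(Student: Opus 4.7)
The plan is to combine two standard interval-based ancestry encodings: one on the hierarchy tree $\Hier$, and one on each spanning tree $T(K)$ for $K \in \K(S)$. First, perform a DFS on $\Hier$ and assign each component $K$ a preorder/postorder pair $(\pre_{\Hier}(K), \post_{\Hier}(K))$, using the classical fact that for nodes of a rooted tree, $x$ is an ancestor of $y$ iff the subtree interval of $x$ contains that of $y$. Second, for each component $K$, root $T(K)$ at $r_K$ and do an internal DFS on $T(K)$ to assign each $v \in K$ a pair $(\pre_{T(K)}(v), \post_{T(K)}(v))$. This is well-defined because, by \Cref{lem:3logn-deg}, $T(K)$ is a spanning tree of $K$ with no outside Steiner points, so the within-component traversal involves exactly the vertices in $K$.

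I would then set
\[
\anc(v) \;\bydef\; \bigl(\id(v),\, \id(K_v),\, \pre_{\Hier}(K_v),\, \post_{\Hier}(K_v),\, \pre_{T(K_v)}(v),\, \post_{T(K_v)}(v)\bigr).
\]
Each coordinate is an integer in $[1,n]$ and hence fits in $O(\log n)$ bits, giving $|\anc(v)| = O(\log n)$ as required. Given $\anc(u)$ and $\anc(v)$, equality $u = v$ is decided by comparing $\id(u),\id(v)$; $K_u = K_v$ is decided from $\id(K_u),\id(K_v)$ (equivalently, equality of the $\Hier$-intervals); $K_u \succeq K_v$ is decided by testing whether the $\Hier$-interval of $K_v$ is contained in that of $K_u$; and when $K_u = K_v$, whether $u$ is a $T(K_u)$-ancestor of $v$ is decided by the analogous containment test on the $T(K_u)$-intervals. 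Correctness of each test is the standard interval property of DFS numberings.

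Finally, the extension $\anc(K) \bydef \anc(r_K)$ behaves correctly for the same comparisons: $r_K$ is the root of $T(K)$, so its $T(K)$-interval spans all of $K$ and its component ID and $\Hier$-interval are those of $K$ itself. Thus the same decoding procedure reports $K \succeq K'$ (respectively $K = K'$) via the $\Hier$-interval coordinates of $\anc(r_K),\anc(r_{K'})$.

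There is no serious obstacle here; the only points meriting care are (i) confirming that $T(K)$ is genuinely a tree on the vertex set $K$ (so the within-component DFS is defined), which is exactly the content of \Cref{lem:3logn-deg}, and (ii) being consistent about the convention that $\succeq$ denotes ancestor-or-equal, so that interval containment (including equality) is the correct decoder.
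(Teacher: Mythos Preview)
Your proposal is correct and follows essentially the same approach as the paper: store DFS preorder/postorder timestamps for $K_v$ in $\Hier$ and for $v$ in $T(K_v)$, then decode ancestry via interval containment. The only differences are cosmetic---you add the redundant fields $\id(v),\id(K_v)$ (equality can already be tested from the timestamps), and the timestamps actually range over $[1,2N]$ rather than $[1,n]$, but neither affects correctness or the $O(\log n)$ bound.
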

\def\APPENDANCLABELS{
\begin{proof}[Proof of \Cref{lem:anc-labels}]
% \begin{proof}
    For an $N$-vertex rooted tree $T$ and a vertex $a \in V(T)$, let $\pre(a,T)$ and $\post(a,T)$ be the time stamps in $[1,2N]$ for the first and last time $a$ is visited in a DFS traversal of $T$.
    Then, $a=b$ iff $\pre(a,T) = \pre(b,T)$
    and $a$ is a strict ancestor of 
    $b$ iff $\pre(a,T) < \pre(b,T) < \post(b,T) < \post(a,T)$.
    We define 
    \[
    \anc(v) \bydef \ang{\pre(K_v, \Hier), \post(K_v, \Hier), \pre(v,T(K_v)), \post(v,T(K_v))},
    \]
    which clearly satisfies the requirements.

    We note that one can improve the space bound of $\anc(v)$
    by almost a factor of 2, using the more sophisticated
    ancestor labeling scheme of~\cite{AbiteboulAKMR06}.
    Their labels have size $\log n + O(\sqrt{\log n})$.
    However, the simple DFS-based ancestry labels suffice for our needs.
\end{proof}
}%\APPENDANCLABELS

The type of each $e \in \hat{E}$ is denoted by $\type(e) \in \{\perp\} \cup \{\id(K) \mid K \in \K\}$,
where $\perp \notin [1,n]$ is a non-zero $O(\log n)$-bit string  representing the \emph{original} type.
Let $\hat{E}_{\all}$ be the set of all possible edges having two distinct endpoints from $V$ and type from $\{\perp\} \cup [1,n]$. 
Each $e \in \hat{E}_{\all}$ is defined 
by the $\id$s of its endpoints, its $\type$,
and its orientation.
Recall that $\hat{E}$-edges were oriented 
in \Cref{sect:orientation}; the orientation
of $\hat{E}_{\all} - \hat{E}$ is arbitrary.
Let $\omega \bydef \ceil{\log (|\hat{E}_{\all}|)}=O(\log n)$.

The following lemma introduces 
\emph{unique edge identifiers} ($\uid$s).
It is a straightforward modification of \cite[Lemma 2.3]{GhaffariP16}, which is based on the notion of \emph{$\epsilon$-biased sets} \cite{NN93}:

\begin{lemma}[\cite{GhaffariP16}]\label{lem:uids}
    Using a random seed $\mathcal{S}_{\id}$ of $O(\log^2 n)$ bits, one can compute $O(\log n)$-bit identifiers $\uid(e)$ for each possible edge $e \in \hat{E}_{\all}$, with the following properties:
    \begin{enumerate}
        \item If $E' \subseteq \hat{E}_{\all}$ with $|E'| \neq 1$, then w.h.p.\ $\bigoplus_{e' \in E'} \uid(e') \neq \uid(e)$, for every $e \in \hat{E}_{\all}$.
        That is, the bitwise-XOR of more than one $\uid$ is, w.h.p., not a valid $\uid$ of any edge.\footnote{We emphasize that this holds for any \emph{fixed} $E'$ w.h.p., and not for all $E' \subseteq \hat{E}_{\all}$ simultaneously.}
        \item Let $e = \{u,v\} \in \hat{E}_{\all}$. Then given $\id(u)$, $\id(v)$, $\type(e)$ and $\mathcal{S}_{\id}$, one can compute $\uid(e)$.
    \end{enumerate}
\end{lemma}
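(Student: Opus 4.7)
The plan is to build $\uid$'s from a Naor--Naor small-biased generator \cite{NN93}. Fix a target length $L = c\log n$ for a sufficiently large constant $c$, and identify each $e \in \hat{E}_{\all}$ with a canonical index $\xid(e) \in [1, |\hat{E}_{\all}|]$ computable from $\id(u)$, $\id(v)$, $\type(e)$, and the orientation. I would let $\mathcal{S}_{\id}$ be the seed of a small-biased generator producing $N = L |\hat{E}_{\all}| = \poly(n)$ pseudorandom bits with bias $\epsilon$, and define $\uid(e)$ as the length-$L$ block of output bits at positions $(\xid(e)-1)L+1, \ldots, \xid(e) L$. Naor--Naor gives seeds of length $O(\log(N/\epsilon))$, and each bit is locally computable from the seed and its index; choosing $\epsilon = 2^{-\Theta(\log^2 n)}$ (small enough to comfortably absorb later union bounds) inflates the seed to $O(\log^2 n)$ bits while preserving local computability, immediately giving Property~2.

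For Property~1, I would first recast the statement. For a fixed $E' \subseteq \hat{E}_{\all}$ with $|E'| \neq 1$ and any $e \in \hat{E}_{\all}$, the event $\bigoplus_{e' \in E'} \uid(e') = \uid(e)$ is equivalent to $\bigoplus_{e' \in E' \triangle \{e\}} \uid(e') = 0$, and $E'\triangle\{e\}$ is non-empty: either $|E'|=0$ and $E'\triangle\{e\}=\{e\}$, or $|E'|\geq 2$ so $|E'\triangle\{e\}|\in\{|E'|-1,|E'|+1\}\geq 1$. So it suffices to prove the following for a fixed non-empty $E''\subseteq \hat{E}_{\all}$: $\Pr[\bigoplus_{e'\in E''}\uid(e')=0]$ is a tiny inverse polynomial in $n$; then a union bound over the $\poly(n)$ choices of $e$ finishes the job.

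Fix such an $E''$ and set $Z \bydef \bigoplus_{e' \in E''} \uid(e') \in \{0,1\}^L$. Each bit of $Z$ is the XOR of a non-empty subset of the $N$ generator output bits, so by the $\epsilon$-biased property it is $\epsilon$-close to an unbiased coin. Vazirani's XOR lemma then bounds the statistical distance between the distribution of $Z$ and the uniform distribution on $\{0,1\}^L$ by at most $2^{L}\epsilon$, so $\Pr[Z = 0] \leq 2^{-L} + 2^{L}\epsilon$. With $\epsilon = 2^{-\Theta(\log^2 n)}$ and $L = c \log n$, both terms are $n^{-\omega(1)}$, so the subsequent union bound over $e\in\hat{E}_{\all}$ yields failure probability $1/\poly(n)$ (in fact much smaller).

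The main obstacle I expect is calibrating the parameters so that $2^L\epsilon$ does not swamp $2^{-L}$ after the union bound, while still matching the lemma's $O(\log^2 n)$ seed length --- this is what forces the generator's bias to be sub-polynomial rather than just inverse-polynomial. A minor subtlety is the boundary case $|E'|=0$, which degenerates to the claim ``$\uid(e)\neq 0$ for every $e$, w.h.p.,'' and is just the $E''=\{e\}$ instance of the analysis above.
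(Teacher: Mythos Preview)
Your proposal is correct and follows precisely the approach the paper points to: the paper does not give its own proof of this lemma but states it as a ``straightforward modification of \cite[Lemma 2.3]{GhaffariP16}, which is based on the notion of \emph{$\epsilon$-biased sets} \cite{NN93},'' and your construction via a Naor--Naor small-biased generator, with $\uid(e)$ read off as a length-$L$ block indexed by a canonical encoding of $e$, is exactly this. The reduction of Property~1 to bounding $\Pr[\bigoplus_{e'\in E''}\uid(e')=0]$ for nonempty $E''$, followed by the observation that every nontrivial parity of $Z$ is a nonempty XOR of distinct generator output bits (hence $\epsilon$-biased) and an application of the XOR lemma, is sound; your parameter choice $\epsilon=2^{-\Theta(\log^2 n)}$ comfortably handles the $2^L\epsilon$ term and yields the stated $O(\log^2 n)$ seed length.
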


Next, we define $O(\log n)$-bit \emph{extended edge identifiers} ($\eid$s).
For an edge $e = \{u,v\} \in \hat{E}_{\all}$ 
oriented as $u\rightarrow v$,
\[
\eid(e) \bydef \ang{\uid(e), \id(u), \id(v), \type(e), \anc(u), \anc(v)}.
\]

The point of using $\uid$s is the following~\Cref{lem:eids}, allowing to distinguish between $\eid$s of edges and ``garbage strings" formed by XORing many of these $\eid$s:
\begin{lemma}\label{lem:eids}
    Fix any $E' \subseteq \hat{E}$. 
    Given $\bigoplus_{e \in E'} \eid(e)$ and the seed $\mathcal{S}_{\id}$, one can determine whether $|E'|=1$, w.h.p., and therefore obtain
    $\eid(e)$ for the unique edge $e$ such that $E' = \{e\}$.
\end{lemma}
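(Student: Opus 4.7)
The plan is to exploit the fact that $\eid(e)$ is a concatenation of disjoint fields, so the bitwise-XOR $X \bydef \bigoplus_{e\in E'}\eid(e)$ decomposes field-by-field. Parsing $X$ yields candidate values $U^*, I_1^*, I_2^*, T^*, A_1^*, A_2^*$, which are, respectively, the XORs of the $\uid$, $\id(u)$, $\id(v)$, $\type$, $\anc(u)$, and $\anc(v)$ components of the edges in $E'$. If $|E'|=1$, say $E'=\{e\}$ with $e$ oriented $u\to v$, these are literally the fields of $\eid(e)$.

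The distinguishing algorithm then proceeds as follows. First, perform a syntactic validity check: verify that $I_1^*, I_2^*\in [1,n]$ are distinct, and that $T^* \in \{\perp\}\cup [1,n]$. If any check fails, report $|E'|\neq 1$. Otherwise, $(I_1^*, I_2^*, T^*)$, together with the orientation convention that the first $\id$-field is the tail, uniquely determines a candidate edge $e^* \in \hat{E}_{\all}$. Invoke part 2 of \Cref{lem:uids} with the seed $\mathcal{S}_{\id}$ to compute $\uid(e^*)$ from $(\id(u^*), \id(v^*), \type(e^*)) = (I_1^*, I_2^*, T^*)$, and compare it against $U^*$. If they agree, declare $|E'|=1$ and return $X$ as $\eid(e^*)$; otherwise, declare $|E'|\neq 1$.

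For correctness, when $|E'|=1$ the reconstructed candidate coincides with the unique edge in $E'$, so the $\uid$-check succeeds deterministically and $X$ is returned as $\eid$ of that edge. When $|E'|\neq 1$, either the syntactic checks already fail, or they pass and some candidate $e^*\in \hat{E}_{\all}$ is produced. In the latter case, part 1 of \Cref{lem:uids}, applied to our fixed set $E'$, guarantees that with high probability $\bigoplus_{e\in E'}\uid(e)\neq \uid(e')$ for \emph{every} $e'\in \hat{E}_{\all}$; in particular $U^*\neq \uid(e^*)$, so the test correctly rejects. Taking a union bound over the (polynomially many) invocations the labeling scheme will make preserves a high-probability guarantee.

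The main subtlety I anticipate is that the candidate $e^*$ is itself a function of $E'$ (through the XORs of the $\id$/$\type$ fields), so one might worry that the failure event is adversarially correlated with $e^*$. This is precisely why the ``for every $e\in \hat{E}_{\all}$'' quantifier in part 1 of \Cref{lem:uids} matters: once $E'$ is fixed, the randomness of $\mathcal{S}_{\id}$ rules out all possible candidates simultaneously, regardless of which particular $e^*$ the remaining fields of $X$ end up pointing at. Beyond this point the argument reduces to a direct invocation of the $\epsilon$-biased-set machinery underlying \Cref{lem:uids}.
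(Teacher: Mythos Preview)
Your proof is correct and follows essentially the same approach as the paper: extract the $\id(u)$, $\id(v)$, $\type$ fields from the XOR, recompute the $\uid$ via \Cref{lem:uids}(2), and compare to the $\uid$ field, invoking \Cref{lem:uids}(1) for the $|E'|\neq 1$ case. Your added syntactic checks and the discussion of why the ``for every $e\in\hat{E}_{\all}$'' quantifier matters are sound elaborations, though the paper's proof simply omits them.
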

\def\APPENDEIDS{
\begin{proof}[Proof of \Cref{lem:eids}]
% \begin{proof}
    Compute the $\uid$ of the 2nd, 3rd, and 4th coordinates of $\bigoplus_{e \in E'} \eid(e)$
    using \Cref{lem:uids}(2),
    then compare it against the 1st coordinate.
    \Cref{lem:uids}(1) states that with high probability, they are equal iff $|E'|=1$.
\end{proof}
}%\APPENDEIDS

\paragraph{Defining Sketches.}
First, we take two pairwise independent hash families: a family $\Phi$ for hashing edges of functions $\varphi : \hat{E}_{\all} \to [0, 2^\omega)$, and a family $\mathscr{H}$ for hashing vertices of functions $h : V \to [1,2f]$.
These serve to replace the independent sampling of edges and vertices in forming sketches, as described in \Cref{sect:sketches-into} and \Cref{sect:intuition} respectively, so as to enable the use of the ``orientation trick" overviewed in \Cref{sect:orientation-intro}.

Let $p \bydef \ceil{c \log n}$ for a sufficiently large constant $c$.
For any $q \in [1,p]$ and $i \in [1,f]$ we choose random hash functions $h_{q,i} \in \mathcal{H}$ 
and $\varphi_{q,i} \in \varPhi$.
Recalling that $\hat{G}$ is \emph{oriented},
the subgraph $\hat{G}_{q,i}$ of $\hat{G}$ has the same vertex set $V$, and its edge set is defined by
\[
E(\hat{G}_{q,i})\bydef\{e = \{u,v\} \in E(\hat{G}) \mid \text{orientation is $u\rightarrow v$ and $h_{q,i}(v)=1$}\}.
\]
We then create the corresponding nested family of edge-subsets for $\hat{G}_{q,i}$, defined as
\[
E(\hat{G}_{q,i}) = \hat{E}_{q,i,0} \supseteq \hat{E}_{q,i,1} \supseteq \cdots \supseteq \hat{E}_{q,i,\omega} , \quad \text{where} \quad \hat{E}_{q,i,j} \bydef \{ e \in E(\hat{G}_{q,i}) \mid \varphi_{q,i} (e) < 2^{\omega - j} \}.
\]
Now, for an edge subset $E' \subseteq \hat{E}$, we define its sketch as follows:
\begin{align*}
    \sketch_{q,i} (E') &\bydef 
    \ang{
    \bigoplus_{e \in E' \cap \hat{E}_{q,i,0}} \eid(e),~
    \ldots,~
    \bigoplus_{e \in E' \cap \hat{E}_{q,i,\omega}} \eid(e)
    },  && \text{for $q \in [1,p]$, $i \in [1,f]$,}\\
    \sketch_{q} (E') &\bydef \ang{ \sketch_{q,1} (E'), \dots, \sketch_{q,f} (E') }, && \text{for $q \in [1,p]$,} \\
    \sketch(E') &\bydef \ang{\sketch_{1}(E'), \dots, \sketch_{p}(E')}.
\end{align*}
We can view $\sketch$ as a 3D array with dimensions $p\times f \times (\omega+1)$, which occupies $O(pf\omega\cdot \log n)=O(f\log^3 n)$ bits.  

\begin{observation}\label{obs:linear}
    Sketches are \emph{linear} w.r.t.\ the $\oplus$ operator: if $E_1, E_2 \subseteq \hat{E}$ then $\sketch_{q,i} (E_1 \oplus E_2) = \sketch_{q,i} (E_1) \oplus \sketch_{q,i} (E_2)$, and this property is inherited by $\sketch(\cdot)$.
\end{observation}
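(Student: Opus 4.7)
The claim is essentially an exercise in bookkeeping once one isolates two basic set-theoretic identities: (i) intersection distributes over symmetric difference, i.e., $(E_1 \oplus E_2) \cap X = (E_1 \cap X) \oplus (E_2 \cap X)$ for any $X$; and (ii) for any function $f$ from edges to bit-strings, $\bigoplus_{e \in A \oplus B} f(e) = \bigoplus_{e \in A} f(e) \oplus \bigoplus_{e \in B} f(e)$, since every element of $A \cap B$ contributes $f(e)$ twice on the right-hand side and thus cancels.

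My plan is to fix arbitrary $q \in [1,p]$, $i \in [1,f]$ and $j \in [0,\omega]$, and show that the $j$-th coordinate of $\sketch_{q,i}(E_1 \oplus E_2)$ equals the $j$-th coordinate of $\sketch_{q,i}(E_1) \oplus \sketch_{q,i}(E_2)$. Applying (i) with $X = \hat{E}_{q,i,j}$ rewrites the indexing set of the XOR on the left as $(E_1 \cap \hat{E}_{q,i,j}) \oplus (E_2 \cap \hat{E}_{q,i,j})$, and then applying (ii) with $f = \eid$ splits the single XOR into the XOR of two XORs, which by definition is exactly the $j$-th coordinate of $\sketch_{q,i}(E_1) \oplus \sketch_{q,i}(E_2)$.

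Once linearity holds coordinate-wise for each $\sketch_{q,i}$, the statements for $\sketch_q$ and $\sketch$ follow immediately, since these are defined by concatenating the $\sketch_{q,i}$'s over the appropriate index ranges, and the $\oplus$ operator on such tuples acts componentwise. There is no real obstacle here; the only thing to be careful about is that all the set operations and XORs are taking place over the fixed universe $\hat{E}_{\all}$ (on which $\eid$ is defined), so restricting to $E_1, E_2 \subseteq \hat{E}$ is harmless and identities (i)--(ii) apply directly.
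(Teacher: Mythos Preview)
Your proposal is correct and spells out exactly the routine verification that the paper leaves implicit: the observation is stated without proof in the paper, since coordinate-wise linearity of XOR-based sketches is standard. Your identification of the two key identities (distributivity of intersection over symmetric difference, and cancellation of doubly-counted terms in a $\bigoplus$) is precisely what is needed, and nothing more.
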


Note that
hash functions in $\mathscr{H},\varPhi$ can be specified in $O(\log n)$ bits, so a random seed $\mathcal{S}_{\hash}$ of $O(f\log^2 n)$ bits specifies \emph{all} hash functions $\{h_{q,i}, \varphi_{q,i} \}$.
The following lemma essentially states we can use this small seed to compute sketches from $\eid$s:

\begin{lemma}\label{lem:edge-sketch-from-seed}
    Given the seed $\mathcal{S}_{\hash}$ and $\eid(e)$ of some $e \in \hat{E}$, one can compute the entire $\sketch(\{e\})$.
\end{lemma}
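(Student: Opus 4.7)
The plan is to observe that $\sketch(\{e\})$ is essentially trivial to compute once we can (i) evaluate every hash function $h_{q,i}$ and $\varphi_{q,i}$ on the relevant inputs, and (ii) decide, for each triple $(q,i,j)$, whether the singleton $\{e\}$ intersects $\hat{E}_{q,i,j}$. Each entry $\bigoplus_{e' \in \{e\}\cap \hat{E}_{q,i,j}} \eid(e')$ of $\sketch(\{e\})$ is, by definition, either $\eid(e)$ (when $e\in \hat{E}_{q,i,j}$) or the all-zeros string (when $e\notin \hat{E}_{q,i,j}$); there is no XOR-cancellation to worry about.

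First, I would recall that $\mathcal{S}_{\hash}$ has $O(f\log^2 n)$ bits and is exactly the concatenation of the $O(\log n)$-bit descriptions of the $pf$ pairs $(h_{q,i},\varphi_{q,i})$ drawn from the pairwise independent families $\mathscr{H}$ and $\varPhi$. Thus from $\mathcal{S}_{\hash}$ we can unpack and evaluate any $h_{q,i}$ or $\varphi_{q,i}$ at any input. Next, I would unpack $\eid(e) = \ang{\uid(e), \id(u), \id(v), \type(e), \anc(u), \anc(v)}$, where the ordered pair $(u,v)$ records the orientation $u \to v$ of $e$ in $\hat{G}$. This gives us both the head endpoint identifier $\id(v)$ (needed to evaluate $h_{q,i}$) and the full descriptor $(\id(u),\id(v),\type(e))$ of $e$ as an element of $\hat{E}_{\all}$ (needed to evaluate $\varphi_{q,i}$).

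Now the main computation: for each $(q,i,j) \in [1,p]\times[1,f]\times[0,\omega]$, recall that
\[
\hat{E}_{q,i,j} \;=\; \{\,e' = \{u',v'\}\in E(\hat{G}) \;:\; u'\to v',\ h_{q,i}(v') = 1,\ \varphi_{q,i}(e') < 2^{\omega-j}\,\}.
\]
Applying the unpacked hash functions to the unpacked endpoint data, we check the two conditions $h_{q,i}(v) = 1$ and $\varphi_{q,i}(e) < 2^{\omega - j}$; if both hold we write $\eid(e)$ into entry $(q,i,j)$, otherwise we write the all-zeros string. Iterating over all $pf(\omega+1) = O(f\log^2 n)$ entries reconstructs $\sketch(\{e\})$ entirely.

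There is no real obstacle here: the only subtle point is that $\varphi_{q,i}$ is a function on $\hat{E}_{\all}$, so its input must be the canonical descriptor of the oriented, typed edge $e$, but the $\eid$ was designed precisely so that this descriptor is one of its coordinates. The lemma follows.
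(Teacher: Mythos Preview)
Your proof is correct and follows essentially the same approach as the paper: extract the endpoint identifiers, type, and orientation from $\eid(e)$, use $\mathcal{S}_{\hash}$ to evaluate each $h_{q,i}$ and $\varphi_{q,i}$, and thereby determine for every $(q,i,j)$ whether $e\in \hat{E}_{q,i,j}$, filling in $\eid(e)$ or zeros accordingly. The paper's own proof is just a terser statement of exactly this.
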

\def\APPENDEDGESKETCHFROMSEED{
\begin{proof}[Proof of \Cref{lem:edge-sketch-from-seed}]
% \begin{proof}
    The endpoint ids, type, and orientation of $e$ are encoded in $\eid(e)$.
    The seed $\mathcal{S}_{\hash}$ contains descriptions of all the hash functions 
    $\{h_{q,i},\varphi_{q,i}\}_{q,i}$,
    which can be used to determine
    if $e \in \hat{E}_{q,i,j}$ for each $q,i,j$, and hence build $\sketch(\{e\})$.
\end{proof}
}%\APPENDEDGESKETCHFROMSEED

The following \Cref{lem:ft-sketch-property} provides the key property of our sketches: an implementation of the $\GetEdge$ function (needed to implement connectivity queries, see \Cref{sect:Boruvka}), so long as the edge set \emph{contains no edges oriented \underline{from} an $F$-vertex}.

\begin{lemma}\label{lem:ft-sketch-property}
    Fix any 
    $F \subseteq V$, $|F| \leq f$,
    and let $E'\subseteq \hat{E}$ be a set of edges that
    contains no edges oriented from an $F$-vertex,
    and at least one edge with both endpoints in $V-F$.
    Then for any  $q \in [1,p]$, with constant probability, some entry of $\sketch_q(E')$ 
    is equal to $\eid(e)$, for some $e\in E'$ with both endpoints in $V-F$.
\end{lemma}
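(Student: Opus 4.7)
The plan is to show that each of the $f$ sub-sketches $\sketch_{q,1}(E'), \dots, \sketch_{q,f}(E')$ produces, with probability $\Omega(1/f)$, an entry equal to $\eid(e)$ for some $e \in E^\circ \bydef \{e \in E' : \text{both endpoints in } V-F\}$, and then to amplify over the $f$ independent choices of $i$. The events that witness success for different $i$ depend on disjoint hash functions and so are mutually independent, which is what converts $\Omega(1/f)$ into $\Omega(1)$.

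First I would fix $q$ and pick an arbitrary edge $e^* = (u^* \to v^*) \in E^\circ$ (which exists by hypothesis), letting $d^* \bydef v^*$. For each $i \in [1,f]$ I would consider two events depending on $h_{q,i}$: the avoidance event $A_i = \{h_{q,i}(v) \neq 1 \text{ for all } v \in F\}$ and the hit event $B_i = \{h_{q,i}(d^*) = 1\}$. The hypothesis that $E'$ has no edges oriented from $F$ means that on $A_i$ the sampled set $E' \cap E(\hat{G}_{q,i})$ contains no $F$-vertex as source \emph{or} destination, and is therefore contained in $E^\circ$; meanwhile $B_i$ guarantees $e^* \in E(\hat{G}_{q,i})$, so $|E' \cap E(\hat{G}_{q,i})| \geq 1$. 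Using pairwise independence of $\mathscr{H}$ together with a one-term inclusion-exclusion (valid since $d^* \notin F$), I get
\[
\Pr[A_i \cap B_i] \;\geq\; \Pr[B_i] \;-\; \sum_{v \in F} \Pr\!\left[h_{q,i}(d^*) = 1 \,\wedge\, h_{q,i}(v) = 1\right] \;\geq\; \frac{1}{2f} - \frac{f}{4f^2} \;=\; \frac{1}{4f}.
\]

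Conditional on $A_i \cap B_i$, the set $S \bydef E' \cap E(\hat{G}_{q,i})$ is a nonempty subset of $E^\circ$, and the independent hash $\varphi_{q,i}$ induces nested subsamples $S \cap \hat{E}_{q,i,0} \supseteq \dots \supseteq S \cap \hat{E}_{q,i,\omega}$. A standard $\ell_0$-sampling argument then yields an absolute constant $c_0 > 0$ such that $\Pr[\,\exists j : |S \cap \hat{E}_{q,i,j}| = 1 \mid A_i \cap B_i\,] \geq c_0$: for $|S| = 1$ the level $j = 0$ isolates trivially, while for $|S| = s \geq 2$ the level $j^* = \lceil \log_2 s \rceil$ has expected intersection size in $[1/2, 1]$ and a second-moment (Chebyshev) bound using pairwise independence of $\varPhi$ gives $\Omega(1)$ probability of exact isolation. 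Let $G_i$ be this joint success event: $\Pr[G_i] \geq c_0/(4f)$, and whenever $G_i$ occurs the sketch entry $\sketch_{q,i,j}(E') = \bigoplus_{e \in E' \cap \hat{E}_{q,i,j}} \eid(e)$ equals $\eid(e)$ for the isolated edge $e \in E^\circ$.

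Since the pairs $(h_{q,i}, \varphi_{q,i})$ are drawn independently across $i \in [1,f]$, the events $G_1, \dots, G_f$ are mutually independent, so $\Pr[\bigcup_i G_i] \geq 1 - (1 - c_0/(4f))^f \geq 1 - e^{-c_0/4} = \Omega(1)$, which is what the lemma asserts. I expect the main obstacle to be the careful treatment of the two simultaneous hash-level conditions --- ``avoid all of $F$'' and ``hit some destination of an $E^\circ$-edge'' --- under only pairwise independence of $\mathscr{H}$. Pivoting on a single fixed edge $e^*$ rather than arguing about all of $E^\circ$ at once is the key move: it reduces the joint event to a pair $(d^*, v)$ correlation, where two-point independence is exactly strong enough to leave $\Omega(1/f)$ mass after the inclusion-exclusion correction.
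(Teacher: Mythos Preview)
Your proposal is correct and follows essentially the same approach as the paper: fix a witness edge $e^*$, show each index $i$ is ``good'' (hits $v^*$, misses $F$) with probability $\ge 1/(4f)$ via pairwise independence, amplify across the $f$ independent indices, and then apply a pairwise-independent $\ell_0$-isolation argument on the surviving set. One minor point: at the level $j^*=\lceil\log_2 s\rceil$ you name, a naive Chebyshev/second-moment bound does not directly yield $\Pr[X=1]>0$ (the $\Pr[X=0]$ and $\Pr[X\ge 2]$ estimates can sum above $1$); the paper instead uses the level with $\E[X]\in[1/4,1/2)$ and a Markov argument on the ``other survivors'' conditioned on $\varphi_{q,i}(e^*)$, which cleanly gives $\Pr[X=1]\ge 1/8$.
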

\begin{proof}
    Let $e^*=\{u,v\} \in E'$ be any edge whose endpoints are in $V-F$, oriented as $u\rightarrow v$. Call an index $i \in [1,f]$ \emph{good} if $h_{q,i}(v) = 1$ and $h_{q,i}(a) \neq 1$ 
    for all $a \in F$.
    By pairwise independence of $h_{q,i}$ and linearity of expectation, 
    \[
    \E[|\{a\in F \mid h_{q,i}(a)=1\}| \;\mid\; h_{q,i}(v)=1] = f\cdot 1/(2f) = 1/2 .
    \]
    Thus, the probability that $i$ is good is
    \[
\Pr[h_{q,i}(v)=1]\cdot \Pr[\{a\in F \mid h_{q,i}(a)=1\}=\emptyset \mid h_{q,i}(v)=1] \geq 1/(2f) \cdot 1/2 = 1/(4f),
    \]
    and the probability that \emph{some} $i$ is 
    good is at least $1 - (1- 1/(4f))^f = \Omega(1)$.
    Conditioned on this event, fix some good $i$.
    Then $E'_{q,i} = E' \cap E(\hat{G}_{q,i})$ 
    contains no edges incident to $F$, and must
    be non-empty as it contains $e^*$.
    Suppose $|E'_{q,i}| \in [2^{j-1},2^j)$.
    Following~\cite{GibbKKT15}, we analyze the probability that 
    $E'_{q,i,j+1}$
    isolates exactly one edge,
    where $E'_{q,i,j+1} = E'\cap E(\hat{G}_{q,i,j+1})$.
    For every $e^*\in E'_{q,i}$, by linearity of expectation and pairwise independence,
    \[
    \E[|E'_{q,i,j+1}-\{e^*\}| \;\mid\; \varphi_{q,i}(e^*)] = (|E'_{q,i}|-1)2^{-(j+1)} < 1/2.
    \]
    Thus, by Markov's inequality, $\Pr[E'_{q,i,j+1}-\{e^*\}=\emptyset \mid \varphi_{q,i}(e^*)] > 1/2$.
    Summing over all $e^*\in E'_{q,i}$ we have
    \begin{align*}
\Pr[|E'_{q,i,j+1}|=1] &= \sum_{e^*\in E'_{q,i}} \Pr[e^*\in E'_{q,i,j+1}]\cdot \Pr[E'_{q,i,j+1}-\{e^*\}=\emptyset \mid \varphi_{q,i}(e^*)]\\
                     &\geq |E'_{q,i}|\cdot 2^{-(j+1)} \cdot 1/2 \geq 1/8.
    \end{align*}
    Thus, with constant probability, for any $q$, there exists indices $i,j+1$ such that
    $|E_{q,i,j+1}'|=1$, and if so, the $\eid$ of the isolated edge appears in the $(i,j+1)$-entry of $\sketch_q(E')$.
\end{proof}

We end this section by defining sketches \emph{for vertex subsets}, as follows:
\begin{align*}
    \sketch_{\up}(U) &\bydef \bigoplus_{u \in U} \sketch(\hat{E}_{\up}(u)) && \text{for $U \subseteq V(G)$,} \\
    \sketch^*(U) &\bydef \bigoplus_{u \in U} \sketch(E^*(u)) && \text{for $U \subseteq V(G^*)$.}
\end{align*}
Note that $\sketch_{\up}(U)$ depends only on the hierarchy $\Hier=\Hier(S)$, and thus it can be computed by the labeling algorithm.
However, $\sketch^*(U)$ also depends on the query $\ang{s,t,F}$, so it is only possible to compute such a $\sketch^*(\cdot)$ at query time.

\subsection{The Labels}\label{sect:labels}

We are now ready to construct the vertex labels.
We first construct auxiliary labels $L_{\Hier(S)}(K)$ \emph{for the components} of $\Hier(S)$ (\Cref{alg:comp-labels}),
then define the vertex labels $L_{\Hier(S)}(v)$ 
associated with $\Hier(S)$
(\Cref{alg:vertex-labels}).
The final vertex label $L(v)$ is the concatenation
of all $L_{\Hier(S_i)}(v)$, $i\in\{1,\ldots,f+1\}$.

\begin{algorithm}[H]
\caption{Creating label $L_{\Hier(S)}(K)$ of a component $K \in \K(S)$}\label{alg:comp-labels}
\begin{algorithmic}[1]
\State \textbf{store} $\id(K)$ and $\anc(K)$
\State \textbf{store} $\sketch_{\up}(K)$ 
\For{each $v \in N(\Hier_K)$}
    \State \textbf{store} $\id(v)$ and $\anc(v)$
    \State \textbf{store} $\sketch(\hat{E}(v,K) \oplus \hat{E}_K (v))$
\EndFor
\end{algorithmic}
\end{algorithm}

\begin{algorithm}[H]
\caption{Creating label $L_{\Hier(S)}(v)$ of a vertex $v\in V$}\label{alg:vertex-labels}
\begin{algorithmic}[1]
\State \textbf{store} $\mathcal{S}_{\id}$ and $\mathcal{S}_{\hash}$
\State \textbf{store} $\id(v)$ and $\anc(v)$
\For{$K \succeq K_v$}
    \State \textbf{store} $L_{\Hier}(K)$
\EndFor
\If{$v \notin S$}
    \State \textbf{store} $\sketch_{\up} (T_v(K_v))$
    \For{each child $u$ of $v$ in $T(K_v)$}
        \State \textbf{store} $\id(u)$ and $\anc(u)$
        \State \textbf{store} $\sketch_{\up} (T_u(K_v))$
    \EndFor
    \For{each edge $e =\{v,u\} \in \hat{E}$ incident to $v$, oriented as $v\rightarrow u$}
        \State \textbf{store} $\eid(e)$
    \EndFor
\EndIf
\end{algorithmic}
\end{algorithm}

\paragraph{The final labels.}
The final label $L(v)$ is the concatenation of the labels for each of the hierarchies $\Hier(S_1),\ldots,\Hier(S_{f+1})$ from \Cref{thm:S-decomposition}.
\[
L(v)
\bydef
\ang{
L_{\Hier(S_1)}(v),
L_{\Hier(S_2)}(v),
\ldots,
L_{\Hier(S_{f+1})}(v)
}.
\]

\paragraph{Length analysis.}
First, fix $\Hier = \Hier(S)$.
The bit length of a component label $L_{\Hier(S)}(K)$ is dominated by $|N(\Hier_K)|$ times the length of a $\sketch(\cdot)$. By \Cref{thm:S-decomposition}(\ref{prop:item-S3}), $|N(\Hier_K)| = O(f \log n)$, 
resulting in $O(f^2 \log^4 n)$ bits for $L_{\Hier(S)}(K)$.
A vertex label $L_{\Hier(S)}(v)$ stores $L_{\Hier(S)}(K)$ for every $K\succeq K_v$.
There are at most $\log n$ such components $K$ by \Cref{thm:S-decomposition}(\ref{prop:item-S1}), resulting in $O(f^2 \log^5 n)$ bits for storing the component labels.
In case $v \notin S$, the bit length of the $\sketch_{\up}(\cdot)$ information stored is dominated by $\deg_{T(K_v)}(v)$ times the length of a $\sketch_{\up}(\cdot)$.
By \Cref{thm:S-decomposition}(\ref{prop:item-S2}), $\deg_{T(K_v)}(v) < 3\log n$, so this requires additional $O(f \log^4 n)$ bits.
The $\eid$s stored are of edges oriented \emph{away} from $v$, and, by \Cref{lem:orienting-aux-graph}, 
there are $O(f^2 \log^2 n)$ such edges,
so they require an additional 
$O(f^2 \log^3 n)$ bits.
The length of $L_{\Hier(S)}(v)$ can therefore be 
bounded by $O(f^2 \log^5 n)$ bits.
In total $L(v)$ has bit length $O(f^3\log^5 n)$.

\section{Answering Queries}\label{sect:query}

In this section we explain how the high-level query algorithm of \Cref{sect:Boruvka} can be implemented using the vertex labels of \Cref{sect:sketching-and-labeling}.  Let $\ang{s,t,F}$ be the query.
To implement the \Boruvka{} steps, we 
need to initialize all sketches for the initial partition $\mathcal{P}_0$ in order to support $\GetEdge$ and $\Merge$.

The query algorithm first identifies a set $S_i$ for which $S_i\cap F=\emptyset$; we only use information stored in $L_{\Hier(S_i)}(v)$, for $v\in F\cup\{s,t\}$.
Recall that the high-level algorithm consists of $p=\Theta(\log n)$ \emph{rounds}.
Each round $q \in \{1, \dots, p\}$ is given as input a partition $\mathcal{P}_{q-1}$ of $V(G^*) - F$ into connected \emph{parts}, i.e., such that the subgraph of 
$G^*$ induced by each $P \in \mathcal{P}_{q-1}$ is connected.
It outputs a coarser partition $\mathcal{P}_{q}$, obtained by merging parts in $\mathcal{P}_{q-1}$ that are connected by edges of $G^*-F$.
$\Merge$ is easy to implement as all of our sketches
are linear w.r.t.~$\oplus$.  
\Cref{lem:ft-sketch-property} provides an implementation
of $\GetEdge$ for a part $P$, given a sketch of
the edge-set $E^*_{\cut}(P) - E^*(F\to P)$, 
where $E^*(F\to P)\subseteq E^*_{\cut}(P)$ 
is the set of $E^*$-edges oriented from $F$ to $P$.

We need to maintain the following invariants at the
end of round $q\in\{0,1,\ldots,p\}$.
\begin{itemize}
\item[(I1)] We have an \emph{ancestry representation}
$\{\anc(P) \mid P\in\mathcal{P}_q\}$ for the partition, so 
that given any $\anc(v)$, $v \in V(G^*)$, 
we can locate which part $P$ contains $v$.

\item[(I2)] For each part $P\in\mathcal{P}_q$, we know
$\sketch_F^*(P) \bydef \sketch^*(P) \oplus \sketch(E^*(F\to P))$,
which is a sketch of the edge set 
$E^*_{\cut}(P)\oplus E^*(F\to P) = E^*_{\cut}(P) - E^*(F\to P)$.
\end{itemize}

We now show how to initialize $\anc(P)$ and $\sketch_F^*(P)$ to support (I1) and (I2) for $P\in\mathcal{P}_0$.

\paragraph{Initialization.}
For an affected component $K \in \K(S_i)$, let $\mathcal{T}_F(K)$ be the set of connected components of $T(K)-F$.
The initial partition is 
\[
\mathcal{P}_0 = \bigcup_{K \text{ affected}} \mathcal{T}_F (K).
\]
Each $Q \in \mathcal{T}_F(K)$ can be defined by
a \emph{rooting vertex} $r_Q$, that is either the root $r_K$ of $T(K)$, or a $T(K)$-\emph{child} of some $x \in F\cap K$, 
as well as a
set of \emph{ending faults} $F_Q$ containing all $x \in F \cap T_{r_Q}(K)$ having no strict ancestors from $F$ in $T_{r_Q} (K)$. It may be that $F_Q = \emptyset$.
Then,
\begin{equation}\label{eq:initial-part}
    Q = T_{r_Q} (K) - \bigcup_{x\in F_Q} T_{x}(K) = T_{r_Q} (K) \oplus \bigoplus_{x\in F_Q} T_{x}(K) ~.
\end{equation}
The last equality holds as $F_Q$ contains mutually unrelated vertices in $T_{r_Q} (K)$.
See \Cref{fig:initial-parts} for an illustration.
\begin{figure}
    \centering
    \includegraphics[height=8.5cm]{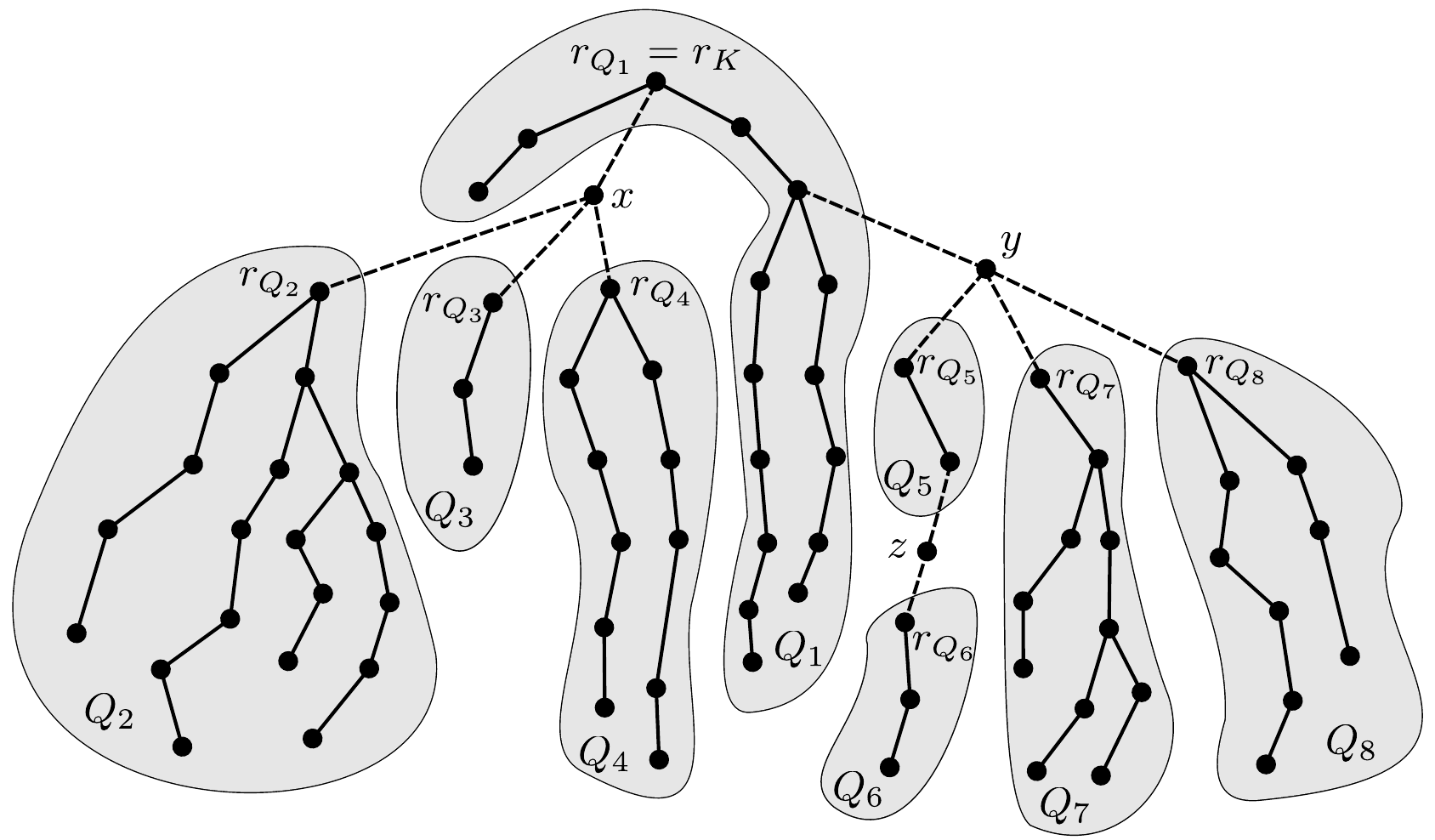}
    \caption{
        Illustration of $T(K)$,
        with faults $F \cap K = \{x,y,z\}$.
        Their incident $T(K)$-edges are dashed.
        $T(K)-F$ breaks into $\mathcal{T}_F(K) = \{Q_1, \dots, Q_8\}$,
        with rooting vertices
        $r_{Q_1} = r_K$ (root of $T(K)$), $r_{Q_2},r_{Q_3},r_{Q_4}$ (children of $x$), $r_{Q_5}, r_{Q_7}, r_{Q_8}$ (children of $y$), and $r_{Q_6}$ (child of $z$). 
        The ending fault-sets are $F_{Q_1} = \{x,y\}$, $F_{Q_5} = \{z\}$, and $F_{Q_i} = \emptyset$ for $i\not\in\{1,5\}$. 
        An example for \cref{eq:initial-part} is $Q_1 = T_{r_{K}}(K) \oplus T_{x}(K) \oplus T_y(K)$.
    }
    \label{fig:initial-parts}
\end{figure}
It is easily verified that the $\anc(\cdot)$-labels of all 
roots of affected components, faults, and children of faults, are stored in the given input labels.
By \Cref{lem:anc-labels}, we can deduce the ancestry relations between all these vertices.
It is then straightforward to find, for each $Q \in \mathcal{P}_0$, its ancestry representation given by $\anc(Q) \bydef \ang{\anc(r_Q), \{\anc(x) \mid x\in F_Q \}}$.
This clearly satisfies (I1) for $\mathcal{P}_0$.

We now turn to the computation of 
$\sketch_F^*(Q) = \sketch^*(Q)\oplus \sketch(E^*(F\to Q))$.
\Cref{lem:E*cut} shows how to compute $\sketch^*(Q)$
for $Q\in \mathcal{P}_0$.
\begin{lemma}\label{lem:initial-sketches}
    For any $Q \in \mathcal{T}_F(K)$,
    \begin{align}
        \sketch^*(Q)
        = 
        \sketch_{\up}(Q) \oplus \bigoplus_{K \operatorname{affected}} {\ }\bigoplus_{v \in N(\Hier_K) \cap Q} \sketch(\hat{E}(v,K) \oplus \hat{E}_K(v)).
        \label{eq:sketch*-of-Q}
    \end{align}
\end{lemma}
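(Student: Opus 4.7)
The proof is essentially an exercise in unwinding definitions and applying the linearity of $\sketch$ (\Cref{obs:linear}) together with the cut-decomposition of \Cref{lem:E*cut}. The plan is as follows.

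First, I would unfold the definition of $\sketch^*(Q)$, which gives
$\sketch^*(Q) = \bigoplus_{u \in Q} \sketch(E^*(u))$.
By the linearity of $\sketch$ w.r.t.\ the $\oplus$ operator, this equals $\sketch\!\paren{\bigoplus_{u \in Q} E^*(u)}$, and by \Cref{obs:cut-edges} the edge set inside is exactly $E^*_{\cut}(Q)$. So at this point
$\sketch^*(Q) = \sketch\paren{E^*_{\cut}(Q)}$.

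Second, I would plug in the formula from \Cref{lem:E*cut}, which expresses $E^*_{\cut}(Q)$ as
$\paren{\bigoplus_{v \in Q} \hat{E}_{\up}(v)} \oplus \paren{\bigoplus_{K \operatorname{affected}} \bigoplus_{v \in Q \cap N(\Hier_K)} \hat{E}(v,K) \oplus \hat{E}_K(v)}$,
and apply the linearity of $\sketch$ once more to push the $\bigoplus$'s outside. The first summand becomes $\bigoplus_{v\in Q}\sketch(\hat{E}_{\up}(v))$, which is exactly $\sketch_{\up}(Q)$ by definition. The second summand is already in the form appearing on the right-hand side of \cref{eq:sketch*-of-Q}. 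Chaining these equalities yields the claimed identity.

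I do not anticipate any real obstacle: the statement is purely algebraic once one has \Cref{obs:linear}, \Cref{obs:cut-edges}, and \Cref{lem:E*cut} in hand, and no properties specific to $Q$ being a subtree of $T(K)-F$ are used. The only mild subtlety worth double-checking is that the linearity of $\sketch$ extends over arbitrary finite $\oplus$-sums (which follows by a trivial induction on the number of terms, since each $\sketch_{q,i,j}$ is an $\oplus$ of $\eid$'s indexed by an edge set), so that the two outer $\bigoplus$'s in the second summand may indeed be pulled past $\sketch$.
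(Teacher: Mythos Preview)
Your proposal is correct and follows essentially the same approach as the paper's own proof: apply $\sketch(\cdot)$ to both sides of \Cref{eq:E*cut} with $U=Q$, use linearity (\Cref{obs:linear}), \Cref{obs:cut-edges}, and the definitions of $\sketch^*$ and $\sketch_{\up}$. Your observation that nothing about $Q$ being a subtree of $T(K)-F$ is actually used is also accurate---the identity holds for any $Q\subseteq V(G^*)$.
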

\begin{proof}
    This follows by applying $\sketch(\cdot)$ on both side of \Cref{eq:E*cut} from \Cref{lem:E*cut} (with $U = Q$), and using linearity of sketches, \Cref{obs:cut-edges}, and the definitions of $\sketch^*(\cdot)$ and $\sketch_{\up}(\cdot)$.
\end{proof}

It follows from $F \cap S_i = \emptyset$ 
that for each affected $K \in \mathcal{K}(S_i)$ and 
$Q \in \mathcal{T}_F (K)$, 
$\sketch_{\up} (T_a (K))$ for each $a \in \{r_Q\} \cup F_Q$ 
can be found in the input vertex labels.
This lets us compute $\sketch_{\up}(Q)$ using 
\cref{eq:initial-part} and linearity.
By \cref{eq:sketch*-of-Q}, computing $\sketch^*(Q)$ now amounts to finding $\sketch(\hat{E}(v,K) \oplus \hat{E}_K (v))$ for each affected $K$ and $v \in N(\Hier_K) \cap Q$.
The label $L_{\Hier(S_i)}(K)$ stores this sketch for 
each $v\in N(\Hier_K)$; moreover, 
we can check if $v \in Q$ using the 
ancestry labels $\anc(v), \anc(Q)$.

By definition $\sketch(E^*(F\to Q)) = \bigoplus_e \sketch(\{e\})$,
where the $\bigoplus$-sum is over all $e=\{a,v\}$ oriented 
as $a\to v$, where $a\in F, v\in Q$ and $\type(e)$ is not an affected component.  (If $\type(e)$ is affected, $e\notin E^*$.)
By \Cref{lem:edge-sketch-from-seed}, 
each such $\sketch(\{e\})$ can be constructed from $\eid(e)$ stored in $L_{\Hier(S_i)}(a)$, and we can check whether $v\in Q$ using $\anc(v),\anc(Q)$.  In this way we can construct $\sketch^*_F(Q)$ for each $Q\in \mathcal{P}_0$, 
satisfying (I2).

\paragraph{Executing round $q$.}
For each $P \in \mathcal{P}_{q-1}$, we use 
\Cref{lem:ft-sketch-property} applied 
to $\sketch^*_{F,q}(P)$ 
(i.e., the $q$th subsketch of $\sketch^*_F(P)$)
to implement $\GetEdge$: with constant probability 
it returns the $\eid(e_P)$ for a single cut edge 
$e_P = \{u,v\} \in E^*_{\cut}(P)$ 
with $u\in P, v \in (V(G^*)-(P\cup F))$, 
or reports {\sc fail} otherwise.
By (I1), given $\anc(v)$ we can locate which 
part $P' \in \mathcal{P}_{q-1}$ contains $v$.
Note that since $\GetEdge$ only depends on 
$\sketch^*_{F,q}(P)$, its \textsc{fail}-probability 
is independent of the outcome of rounds $1,\ldots,q-1$.

The output partition $\mathcal{P}_{q}$ of round $q$ is obtained by merging the connected parts of $\mathcal{P}_{q-1}$ along the discovered 
edges $\{e_P\}$.  
This ensures the connectivity of each new part in $G^* - F$.
The ancestry representation of a new part $R \in \mathcal{P}_q$ 
is the collection
$
\anc(R) = \{\anc(P) \mid R \supseteq P \in \mathcal{P}_{q-1}\}
$, which establishes (I1) after round $q$.
We then compute
\begin{align}
\bigoplus_{P \in \mathcal{P}_{q-1} : P\subseteq R} \sketch_F^*(P)
&=\bigoplus_{P \in \mathcal{P}_{q-1} : P\subseteq R} (\sketch^*(P) \oplus \sketch(E^*(F\to P))\nonumber\\
&=\sketch^*(R) \oplus \sketch(E^*(F\to R))\label{eqn:sketchF*}\\
&= \sketch_F^*(R).\nonumber
\end{align}
\Cref{eqn:sketchF*} follows from linearity (\Cref{obs:linear}),
disjointness of the parts $\{P\in \mathcal{P}_{q-1} \mid P\subseteq R\}$,
and disjointness of the edge sets $\{E^*(F\to P) \mid P\subseteq R\}$.
This establishes (I2) after round $q$.

\paragraph{Finalizing.}
After the final round $p$ is executed, we use the ancestry representations of the parts in $\mathcal{P}_{p}$ and $\anc(s),\anc(t)$ to find the parts $P_s, P_t \in \mathcal{P}_{p}$ with $s \in P_s$ and $t \in P_t$.
We output \emph{connected} iff $P_s = P_t$.

\medskip

With high probability, the implementation of 
$\GetEdge$ using 
\Cref{lem:eids} and 
\Cref{lem:ft-sketch-property} 
reports no false positives, i.e., 
an edge that is \emph{not} in the cut-set.
Assuming no false positives,
the correctness of the 
algorithm was established in \Cref{sect:Boruvka}.
It is straightforward to prepare the initial sketches
for $P\in \P_0$ in time $\tilde{O}(f^4)$, which is 
dominated by enumerating the $\tilde{O}(f^3)$ 
edges $e\in \bigcup_{P\in\P_0} E^*(F\to P)$
and constructing $\sketch(\{e\})$ in $\tilde{O}(f)$ time.
The time to execute \Boruvka's algorithm is linear in the total
length of all $\P_0$-sketches, which is $\tilde{O}(f^2)$.
This concludes the proof of \Cref{thm:main-randomized}.

\section{Routing}\label{sect:routing}

In this section, we explain how to use our VFT labels to provide compact routing schemes in the presence of $f$ vertex faults.
For two adjacent vertices $u,v$ in $G$, denote by $\port(u,v) \in [1,n]$ the port number in $u$ that specifies the edge connecting it to $v$.

Our scheme relies on the ability to route messages along $f+1$ spanning trees of $G$, one for each hierarchy $\Hier(S_1), \dots, \Hier(S_{f+1})$, which we now describe.
Fix some hierarchy $\Hier = \Hier(S_i)$.
The following lemma is a consequence of \Cref{thm:S-decomposition}(\ref{prop:item-S1}):

\begin{lemma}\label{lem:one-spanning-tree}
    There exists a spanning tree $T = T(\Hier)$ of $G$ such that for every $u,v \in V$ and $K \in \K$:
    \begin{enumerate}
        \item if $u,v \in K$, then the $T$-path between $u,v$ is the same as the $T(K)$-path between them.
        \item if $u,v \in V(\Hier_K)$, then the $T$-path between $u,v$ goes only through vertices in $V(\Hier_K)$.
    \end{enumerate}
\end{lemma}
\def\APPENDONESPANNINGTREE{
\begin{proof}[Proof of \Cref{lem:one-spanning-tree}]
% \begin{proof}
    We define a tree $T(\Hier_K)$ spanning the subgraph of $G$ induced on $V(\Hier_K)$, for each component $K \in \K$, in the following recursive manner:
    If $K$ is a leaf component, $T(\Hier_K) = T(K)$.
    Else, let $K_1, \dots, K_\ell$ be the children-components of $K$ in $\Hier$.
    For each $1\leq i \leq \ell$, choose some edge $e_i \in E(G) \cap (K \times V(\Hier_{K_i}))$, which exists by \Cref{thm:S-decomposition}(\ref{prop:item-S1}), and define
    \[
    T(\Hier_K) = T(K) \cup \Big( T(\Hier_{K_1}) \cup \dots \cup T(\Hier_{K_\ell}) \Big) \cup \{e_1, \dots, e_\ell\} .
    \]
    Finally, define $T(\Hier) = T(\Hier_R)$ where $R \in \K$ is the root-component of the hierarchy $\Hier$.
    All claimed properties of $T$ can be easily shown to hold by induction.
\end{proof}
}%\APPENDONESPANNINGTREE

To route on $T$, we employ the Thorup-Zwick tree routing scheme~\cite{TZ01-b} in a black-box manner:

\begin{lemma}[Tree Routing~\cite{TZ01-b}]\label{lem:tree-routing}
    One can assign each vertex $v \in V$ a \emph{routing table} $\tabl_T(v)$ and a \emph{destination label} $\dest_T (v)$ with respect to the tree $T$, both of $O(\log n)$ bits.
    For any two vertices $u,v \in V$, given $\tabl_T (u)$ and $\dest_T (v)$, one can find the port number of the $T$-edge from $u$ that heads in the direction of $v$ in $T$.
\end{lemma}

The key observation for our routing scheme is that when the query algorithm of \Cref{sect:query} answers a query $\ang{s,t,F}$ positively, it also induces an $s$-to-$t$ path in $G-F$, which alternates between $T$-paths and single $G$-edges (possibly not in $T$).
This is formalized in the following lemma:

\begin{lemma}[Succinct Path Representation]\label{lem:succinct-path-rep}
    The labels $L_{\Hier}(\cdot)$ and query algorithm can be modified, while keeping the $L_{\Hier}(\cdot)$-label size $O(f^2 \log^5 n)$ bits, so that the following holds.

    Given the $L_{\Hier}(\cdot)$-labels of a query $\ang{s,t,F}$, $|F|\leq f$, $F\cap S_i = \emptyset$ (recall that $\Hier=\Hier(S_i)$), where $s,t$ are connected in $G-F$, the query algorithm outputs, with high probability, a \emph{succinct representation} of an $s$-to-$t$ path in $G-F$ of the following form:
    \begin{equation}\label{eq:G-path-representation}
        s = v_0 \stackrel{T}{\leadsto} u_1 \stackrel{G}{\to} v_1 \stackrel{T}{\leadsto} u_2 \stackrel{G}{\to} v_2 \stackrel{T}{\leadsto} \cdots u_k \stackrel{G}{\to} v_{k} \stackrel{T}{\leadsto} u_{k+1} = t , \qquad \text{with $k = O(f \log n)$}.
    \end{equation}
    An arrow of the form $\stackrel{T}{v \leadsto u}$ represents the unique $T$-path between $u,v$, and is augmented with the destination labels $\dest_T (v)$ and $\dest_T (u)$ of~\Cref{lem:tree-routing}.
    An arrow of the form $u \stackrel{G}{\to} v$ represents a single $G$-edge $e = \{u,v\} \in E(G)$, and is augmented with the port numbers $\port(u,v)$ and $\port(v,u)$.
\end{lemma}

\begin{proof}
    We first describe the slight modifications of the $L_{\Hier}(\cdot)$-labels required to support the lemma.
    Then, we show how the query algorithm can be used to obtain the succinct path representation.

    \medskip
    \underline{Labels modification:}
    First, the ancestry label $\anc(v)$ of a vertex $v \in V$ is modified so that it also includes $\dest_T (v)$ from \Cref{lem:tree-routing}, which still keeps its size $O(\log n)$.
    Next, we modify the extended identifiers of the $\hat{G}$-edges, as follows.
    Let $e = \{x,y\} \in \hat{E}$.
    \begin{itemize}
        \item If $e =\{x,y\}$ is of type \emph{original}, meaning it exists in $G$, we augment $\eid(e)$ with $\port(x,y)$ and $\port(y,x)$.

        \item Else, $e$ has type $K'$ for some $K' \in \K$, with $x,y \in N(\Hier_{K'})$.
        We choose two vertices $x',y' \in V(\Hier_{K'})$ adjacent (in $G$) to $x,y$ respectively, 
        and augment $\eid(e)$ with the (modified) ancestry labels of $x',y'$, and with the port numbers of the $G$-edges $\{x,x'\}, \{y,y'\} \in E(G)$, i.e., with $\port(x,x'), \port(x', x), \port(y,y'), \port(y',y)$.
    \end{itemize}

    \medskip
    \underline{Query:}
    Consider the \Boruvka-based query algorithm of~\Cref{sect:query} applied on the query $\ang{s,t,F}$.
    Let $Q_s, Q_t \in \P_0$ be the parts in the initial partition that contain $s,t$, respectively.
    As the query $\ang{s,t,F}$ is answered positively (with high probability), $Q_s$ and $Q_t$ end up in the same final part after the \Boruvka\ execution.
    Thus, we can find a sequence of initial parts $Q_s = Q_0, Q_1, \dots, Q_\ell = Q_t$ such that for each $1 \leq i \leq \ell$, there is a $G^*$-edge $e_i = \{x_i, y_i\}$ between $x_i \in Q_{i-1}$ and $y_i \in Q_i$ that was discovered through the sketches during the execution.
    We also denote $y_0 = s$ and $x_{\ell+1} = t$.
    
    Recall that each initial part $Q_i \in \P_0$ is a connected component of $T(K_i) - F$ for some (unique) affected component $K_i \in \K$.
    Thus, the $T(K_i)$-path between $y_{i-1}$ and $x_i$ is fault-free, and by \Cref{lem:one-spanning-tree}(1) this is exactly the $T$-path between them.
    So, by concatenating the $T$-paths using the $G^*$-edges $e_1, \dots, e_\ell$, we obtain an $s$-to-$t$ path in $G^* - F$ that can be represented as follows:
    \begin{equation}\label{eq:G^*-path-representation}
        s = y_0 \stackrel{T}{\leadsto} x_1 \stackrel{G^*}{\to} y_1 \stackrel{T}{\leadsto} x_2 \stackrel{G^*}{\to} y_2 \stackrel{T}{\leadsto} \cdots x_\ell \stackrel{G^*}{\to} y_\ell \stackrel{Q_\ell}{\leadsto} x_{\ell+1} = t.
    \end{equation}
    
    Our goal now becomes replacing each $G^*$-edge $e_i$, represented as $x_i \stackrel{G^*}{\to} y_i$ in \Cref{eq:G^*-path-representation}, with an $x_i$-to-$y_i$ path in $G-F$, to obtain a representation as described in \Cref{eq:G-path-representation}.
    This is done as follows:
    \begin{itemize}
        \item If $e_i$'s type is \emph{original}, it also exists in $G$, and we may replace $x_i \stackrel{G^*}{\to} y_i$ with $x_i \stackrel{G}{\to} y_i$.

        \item Else, $e_i$ is of type $K'_i$ for some \emph{unaffected} component $K'_i \in \K$ with $x_i, y_i \in N(\Hier_{K'_i})$.
        Let $x'_i, y'_i \in \Hier_{K'_i}$ be the vertices adjacent to $x_i, y_i$ that are specified in $\eid(e_i)$.
        As $K'_i$ is unaffected, $V(\Hier_{K'_i}) \cap F = \emptyset$.
        Hence by \Cref{lem:one-spanning-tree}(2), the $T$-path from $x'_i$ to $y'_i$ exists in $G-F$.
        Thus, we may replace $x_i \stackrel{G^*}{\to} y_i$ with $x_i \stackrel{G}{\to} x'_i \stackrel{T}{\leadsto} y'_i \stackrel{G}{\to} y_i$.
    \end{itemize}
    After these replacements, we obtain the representation of \Cref{eq:G-path-representation}.
    Since there are only $O(f \log n)$ initial parts in $\P_0$, it holds that $\ell = O(f \log n)$.
    As any arrow in \Cref{eq:G^*-path-representation} is replaced by at most $3$ arrows to obtain \Cref{eq:G-path-representation}, we get that $k = O(f \log n)$ as well.
\end{proof}

We can now describe the final routing scheme.
% \begin{proof}[Proof of Theorem \ref{thm:randomized-routing}]
    The labels are the concatenation of the modified labels from \Cref{lem:succinct-path-rep} for all $f+1$ hierarchies,
    and the routing tables are the concatenation of the tables from \Cref{lem:tree-routing} for all $f+1$ spanning trees for the different hierarchies.
    That is:
    \begin{align*}
        L(v) &= \ang{ L_{\Hier(S_1)}(v), \dots, L_{\Hier(S_{f+1})}(v) } \\
        R(v) &= \ang{ \tabl_{T(\Hier(S_1))}(v), \dots, \tabl_{T(\Hier(S_{f+1}))}(v) }
    \end{align*}
    Suppose a source vertex $s$ holds a message to be routed to a destination $t$ avoiding a set $F \subseteq V$, $|F| \leq f$,
    where the labels of $s,t,F$ are all known to $s$.
    Let $S_i$ be such that $F \cap S_i = \emptyset$, and from now on denote $\Hier = \Hier(S_i)$ and $T = T(\Hier(S_i))$.
    By \Cref{lem:succinct-path-rep}, $s$ can locally compute the succinct representation of an $s$-to-$t$ path as in \Cref{eq:G-path-representation}, consisting of $O(f \log^2 n)$ bits (or determine $s,t$ are disconnected in $G-F$, so routing is impossible), with high probability. 
    The message is routed along the represented path in $G-F$.
    The header is this succinct representation, together with a marker indicating the current location of the message in this representation during the routing process, which is updated by the vertices along the path.
    When the message arrives a vertex $v_{i-1}$, the marker is set to the arrow $v_{i-1} \stackrel{T}{\leadsto}{u_i}$.
    The message is then routed along the $T$-path from $v_{i-1}$ to $u_i$ using the tree routing of \Cref{lem:tree-routing}.
    Upon arrival at $u_i$, the marker is set to the arrow $u_i \stackrel{G}{\to}{v_i}$.
    The message is then forwarded from $u_i$ through $\port(u_i, v_i)$.
    This process repeats until the message reaches $u_{k+1} = t$.
    As $k = O(f \log n)$, and each tree path is of hop-length at most $n$, the total hop-length of the route is $O(f n \log n)$.
% \end{proof}
This completes the proof of \Cref{thm:randomized-routing}.

\section{Derandomization}\label{sect:derandomization}

In this section, we derandomize our label construction by adapting the approach of Izumi, Emek, Wadayama and Masuzawa \cite{IzumiEWM23}, and combining it with the \emph{miss-hit hashing} technique of Karthik and Parter \cite{KarthikP21}.
This yields a deterministic labeling scheme with polynomial construction time and $\widetilde{O}(f^7)$-bit labels, such that every connectivity query 
$\ang{s,t,F}$, $|F| \leq f$, 
is always answered correctly.

\subsection{A Deterministic $(S_1, \dots, S_{f+1})$ Partition}\label{sect:derand-partition}
We start by derandomizing the construction of the partition $(S_1, \dots, S_{f+1})$ of \Cref{thm:S-decomposition} using 
the \emph{method of conditional expectations}~\cite{MitzenmacherU05}.

\begin{lemma}
    Given the initial hierarchy $\Hier^0$ of \Cref{thm:DP-decomposition},
    there is an $O(f^2 n \log n)$-time deterministic algorithm
    computing a partition $(S_1, \dots, S_{f+1})$ of $V(G)$,
    such that for every $\gamma \in \C$ with $|N(\Hier^0_{\gamma})| \geq 3(f+1)\ln n$ and every $i \in \{1,\dots,f+1\}$, $N(\Hier^0_{\gamma}) \cap S_i \neq \emptyset$.
\end{lemma}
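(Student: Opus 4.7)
The plan is to use the method of conditional expectations with a potential function that tracks the expected number of ``bad'' events. For each $\gamma \in \C$ with $|N(\Hier^0_\gamma)| \geq 3(f+1)\ln n$ and each color $i \in \{1,\dots,f+1\}$, call the pair $(\gamma, i)$ \emph{bad} if $N(\Hier^0_\gamma) \cap S_i = \emptyset$. Under a uniformly random coloring $\phi : V \to \{1,\dots,f+1\}$, the probability of any single bad pair is $\le (1-1/(f+1))^{3(f+1)\ln n} < n^{-3}$, so by union bound, letting
\[
\Phi \;\bydef\; \sum_{(\gamma,i) \text{ admissible}} \Pr\big[N(\Hier^0_\gamma) \cap S_i = \emptyset\big],
\]
we have $\Phi < 1$ initially (there are at most $n(f+1) \le n^2$ admissible pairs).

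Next, I would process vertices $v_1, v_2, \dots, v_n$ in some order, and upon reaching $v_k$, assign $\phi(v_k)$ to be the color $c \in \{1,\dots,f+1\}$ that minimizes $\E[\Phi \mid \phi(v_1),\dots,\phi(v_{k-1}),\phi(v_k)=c]$. Averaging over $c$ equals the current value of $\Phi$, so the greedy choice keeps $\Phi$ non-increasing. Once all vertices are colored, every term in $\Phi$ is either $0$ or $1$, and since $\Phi < 1$ throughout, no admissible pair is bad. Setting $S_i = \phi^{-1}(i)$ then yields the required partition.

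The main implementation issue is computing these conditional expectations efficiently. I would observe that for a partial assignment the conditional probability of $(\gamma,i)$ has the closed form
\[
\Pr\big[N(\Hier^0_\gamma) \cap S_i = \emptyset \mid \text{partial}\big] \;=\;
\begin{cases}
0, & \text{if some colored } v \in N(\Hier^0_\gamma) \text{ has } \phi(v)=i, \\
\bigl(f/(f+1)\bigr)^{r_{\gamma}}, & \text{otherwise,}
\end{cases}
\]
where $r_\gamma$ is the number of uncolored vertices in $N(\Hier^0_\gamma)$. Thus coloring $v_k$ with color $c$ only changes the $(\gamma,i)$-term when $v_k \in N(\Hier^0_\gamma)$: for $i=c$ the term drops to $0$, and for $i\ne c$ it is multiplied by $(f+1)/f$. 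For each $v_k$, I would precompute the list of $\gamma$'s with $v_k \in N(\Hier^0_\gamma)$ (this can be done once for all $v_k$ in $O(n^2)$ total time from $\Hier^0$), maintain the current potential per pair, and in $O(f^2)$ time per affected $(\gamma,\cdot)$ row compute the optimal $c$. Since only admissible $\gamma$ (with $|N(\Hier^0_\gamma)| \ge 3(f+1)\ln n$) contribute, a careful amortized bound gives $O(f^2 n \log n)$ overall. The main obstacle is this bookkeeping --- keeping the per-pair probabilities updated in a way that the per-vertex decision costs only $\tilde{O}(f^2)$ rather than a naive $\Omega(nf)$.
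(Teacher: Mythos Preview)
Your approach is essentially the paper's: both apply the method of conditional expectations to the random coloring $\phi:V\to\{1,\dots,f+1\}$, greedily fixing one vertex at a time so as not to increase the potential. Your choice of potential is actually a mild simplification. The paper tracks, per component $\gamma$, the single event ``not all $f{+}1$ colors appear in $N(\Hier^0_\gamma)$'', whose conditional probability needs an inclusion--exclusion formula $\psi(x,y)=\sum_{k=1}^{y}(-1)^{k+1}\binom{y}{k}(1-k/(f{+}1))^{x}$. You instead track the $(\gamma,i)$ pairs separately, for which the conditional probability is the elementary $(f/(f{+}1))^{r_\gamma}$ or $0$. Both potentials start below $1$ and the greedy argument goes through identically; your bookkeeping per affected $\gamma$ is even a bit cheaper.

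The gap is in the running-time justification. Your claimed $O(n^2)$ precomputation already exceeds the $O(f^2 n\log n)$ target, and ``a careful amortized bound'' does not follow from restricting to admissible $\gamma$: admissible means $|N(\Hier^0_\gamma)|$ is \emph{large}, so the total work $\sum_{\gamma\text{ admissible}} |N(\Hier^0_\gamma)|$ is a priori unbounded (it can be $\Theta(m\log n)$). The paper's fix, which you are missing, is \emph{truncation}: for each admissible $\gamma$, keep only some $3(f{+}1)\ln n$ vertices of $N(\Hier^0_\gamma)$ and run the argument on these truncated sets (hitting the truncated set with every color is a stronger requirement, so correctness is preserved). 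After truncation, $\sum_\gamma |N(\Hier^0_\gamma)| \le n\cdot 3(f{+}1)\ln n = O(fn\log n)$, and the per-vertex decision costs $O(f{+}1)$ times the number of affected $\gamma$'s, giving $O(f^2 n\log n)$ total. With this one addition your argument is complete.
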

\begin{proof}
    Recall the randomized construction, in which a coloring $\phi : V\to \{1,\ldots,f+1\}$ is chosen uniformly at random, and $S_i = \{v\in V \mid \phi(v) = i\}$.
    We denote by $\Ind(\Event_0)$ the indicator variable
    for some event $\Event_0$.  
    Define $\Event(\gamma)$ 
    to be the (bad) event that not all colors are represented
    in $N(\Hier^0_{\gamma})$, if 
    $|N(\Hier^0_{\gamma})|\geq 3(f+1)\ln n$, 
    and $\emptyset$ otherwise.
    Letting $X=\sum_\gamma \Ind(\Event(\gamma))$ be the sum
    of these indicators, the analysis in the end of \Cref{sect:decomp} shows that
    $\E[X]<1/n$, and we are happy with any coloring 
    $\phi$ for which $X\leq \E[X]$ since this implies $X=0$.
    We arbitrarily order the vertices as $V=\{v_1,\ldots,v_n\}$.  
    For $j$ from $1$ to $n$, we fix $\phi(v_j)$ 
    such that $\E[X \mid \phi(v_1),\ldots,\phi(v_j)] \leq \E[X \mid \phi(v_1),\ldots,\phi(v_{j-1})]$.  
    In other words, $\phi(v_j)=i$ where $i$ is
    \begin{align*}
    &\argmin_i \Big(\E[X \mid \phi(v_1),\ldots,\phi(v_{j-1}),\phi(v_j)=i] - \E[X \mid \phi(v_1),\ldots,\phi(v_{j-1})]\Big)\\
    \intertext{which, by linearity of expectation, is}
    &=\argmin_i \sum_\gamma\Big( \E[\Ind(\Event(\gamma)) \mid \phi(v_1),\ldots,\phi(v_{j-1}),\phi(v_j)=i]
    - \E[\Ind(\Event(\gamma)) \mid \phi(v_1),\ldots,\phi(v_{j-1})]\Big)
    \end{align*}
    The conditional expectations of the indicator 
    variables can be computed with the inclusion-exclusion 
    formula.  Suppose, after
    $\phi(v_1),\ldots,\phi(v_j)$ are fixed, that
    $N(\Hier^0_{\gamma})$ has $x$ remaining vertices to be colored and is currently missing $y$ colors.
    Then the conditional probability of $\Event(\gamma)$ is
    \[
    \psi(x,y) = \sum_{k=1}^y (-1)^{k+1}{y\choose k}\left(1-\frac{k}{f+1}\right)^x.
    \]
    We can artificially truncate 
    $|N(\Hier^0_{\gamma})|$ at $3(f+1)\ln n$ if it is larger,
    so there are at most $O(f^2\log^2 n)$ $\psi$-values 
    that are ever computed.  
    The time to set $\phi(v_j)$ is $f+1$ times the number 
    of affected indicators, namely 
    $|\{\gamma \mid v\in N(\Hier^0_{\gamma})\}|$.  
    Note that $\sum_v |\{\gamma \mid v\in N(\Hier^0_{\gamma})\}| = \sum_\gamma |N(\Hier^0_{\gamma})| = O(fn\log n)$, so the total time
    to choose a partition $(S_1,\ldots,S_{f+1})$ is 
    $O(f^2 n\log n)$.
\end{proof}

This lemma, together with the construction in \Cref{sect:decomp}, shows that all the $f+1$ hierarchies $\Hier(S_1), \dots, \Hier(S_{i+1})$ of \Cref{thm:S-decomposition} can be computed \emph{deterministically} in polynomial time.

\subsection{Miss-Hit Hashing}
Karthik and Parter \cite{KarthikP21} constructed small \emph{miss-hit hash families}, a useful tool for derandomizing a wide variety of fault-tolerant constructions.
\begin{theorem}[\protect{\cite[Theorem 3.1]{KarthikP21}}]\label{thm:miss-hit-hash}
    Let $N,a,b$ be positive integers with $N \geq a \geq b$.
    There is an \emph{$(a,b)$-miss-hit hash family} 
    $\mathscr{H} = \{h_i : [N] \to \{0,1\} \mid i\in[1,k]\}$, $k = O(a \log N)^{b+1}$,
    such that the following holds:
    For any $A, B \subseteq [N]$ with $A \cap B = \emptyset$, $|A| \leq a$ and $|B|\leq b$, there is some $i$ such that $h_i(a) = 0$ for all $a \in A$, and $h_i(b) = 1$ for all $b \in B$. (That is, $h_i$ \emph{misses} $A$ and \emph{hits} $B$.)
    The family $\mathscr{H}$ can be computed deterministically from $N,a,b$ in $\poly(N, k)$ time.
\end{theorem}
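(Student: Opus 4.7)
The plan is a two-stage construction: first build a small family of \emph{partitioning} functions that, for every disjoint pair $(A,B)$ we care about, has at least one member isolating $B$'s image from $A$'s image; then enumerate the small candidate target-bucket subsets to mark as the ``hit'' side. Concretely, in Stage~1 I would build a family $\mathscr{G}=\{g_1,\dots,g_M\}$ of functions $g_j:[N]\to[r]$ with image size $r=\Theta(a+b)$ and $M=\poly(a,\log N)$, such that for every disjoint $(A,B)$ with $|A|\leq a$ and $|B|\leq b$, some $g_j$ is \emph{good} for $(A,B)$, meaning that $g_j$ is injective on $B$ and $g_j(A)\cap g_j(B)=\emptyset$. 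A uniformly random $g:[N]\to[r]$ is good for a fixed $(A,B)$ with probability $\Omega(1)$ for suitable $r$, so a union bound over the at most $N^{a+b}$ pairs shows existence with $M=\poly(a,\log N)$; derandomization via an $(a+b)$-wise $\epsilon$-biased hash family (with polynomially small $\epsilon$) yields an explicit $\mathscr{G}$ in $\poly(N)$ time.

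In Stage~2, for each $g_j\in\mathscr{G}$ and each subset $T\subseteq[r]$ with $|T|\leq b$, define $h_{j,T}(v)=\mathbbold{1}[g_j(v)\in T]$, and let $\mathscr{H}=\{h_{j,T}\}$. The size is
\[
|\mathscr{H}|=|\mathscr{G}|\cdot\sum_{t=0}^{b}\binom{r}{t}=\poly(a,\log N)\cdot O\bigl((a+b)^{b}\bigr)=O(a\log N)^{b+1},
\]
matching the claimed bound. The miss-hit property is then immediate: for disjoint $(A,B)$ as in the statement, pick a good $g_j$ and set $T=g_j(B)\subseteq[r]$; then $|T|=|B|\leq b$, so $h_{j,T}\in\mathscr{H}$, and by construction $h_{j,T}(v)=1$ for every $v\in B$ (since $g_j(v)\in g_j(B)=T$), while $h_{j,T}(v)=0$ for every $v\in A$ (since $g_j(A)\cap T=g_j(A)\cap g_j(B)=\emptyset$).

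The main obstacle is controlling the size of $\mathscr{G}$. A naive appeal to an $(N,a+b,a+b)$-perfect hash family---forcing $g_j$ to be injective on all of $A\cup B$---would force size $2^{\Omega(a+b)}\log N$, exponential in $a$ and unacceptable. The resolution is that ``goodness'' is strictly weaker than full injectivity on $A\cup B$: $g$ need not be injective on $A$, only on $B$, and the images $g(A),g(B)$ merely need to be disjoint. A random $g$ with image of size $\Theta(a+b)$ already satisfies this weaker condition with constant probability, and the condition admits efficient derandomization via $\epsilon$-biased hash families and splitter-based constructions, keeping $|\mathscr{G}|$ polynomial in $a$ and $\log N$---which is exactly what Stage~2's enumeration needs in order to yield the claimed $O(a\log N)^{b+1}$ overall.
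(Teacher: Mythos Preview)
The paper does not prove this theorem at all; it is quoted verbatim as Theorem~3.1 of Karthik and Parter~\cite{KarthikP21} and used as a black box throughout Section~\ref{sect:derandomization} and Section~\ref{sect:alternative}. There is therefore no ``paper's own proof'' to compare against.

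As for your proof sketch on its own merits, the two-stage plan (build a small separating family $\mathscr{G}$, then enumerate the $\binom{r}{\leq b}$ possible hit-sets) is the right shape, but two of the parameters/steps do not work as written. First, $r=\Theta(a+b)$ is too small: for a uniformly random $g:[N]\to[r]$ with $r=\Theta(a)$, the expected number of $(x,y)\in A\times B$ with $g(x)=g(y)$ is $ab/r=\Theta(b)$, so the ``good'' probability is not $\Omega(1)$ unless $b=O(1)$. You need $r=\Theta(ab)$ (then $\Pr[g(A)\cap g(B)\neq\emptyset]\le ab/r=O(1)$ and $\Pr[\text{collision in }B]\le b^2/(2r)=O(b/a)\le O(1)$). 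The enumeration still works, since $\binom{\Theta(ab)}{\le b}=O(a)^b$, and together with $|\mathscr{G}|=O(a\log N)$ this gives the claimed $O(a\log N)^{b+1}$.

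Second, and more seriously, the derandomization of Stage~1 is where all the content lies, and ``$(a+b)$-wise $\epsilon$-biased with polynomially small $\epsilon$'' does not give $|\mathscr{G}|=\poly(a,\log N)$ as stated. Converting an $\epsilon$-biased space to an $\epsilon'$-almost $k$-wise independent one costs a factor $2^{\Theta(k)}$ in $1/\epsilon$, and here $k\approx(a+b)\log r$, so the resulting sample space is exponential in~$a$. A workable explicit route is via codes: take a Reed--Solomon code mapping $[N]$ to $[q]^M$ with relative agreement below $1/(ab)$ (achievable with $q,M=O(ab\log N)$); the $M$ coordinate functions then form an $\{a,b\}$-separating family, since any two codewords agree on fewer than $M/(ab)$ positions and a union bound over $A\times B$ leaves at least one separating coordinate. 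This gives the right order of magnitude; getting the exponent down to exactly $b+1$ (rather than $b+2$) requires tightening this step, which is essentially the content of the cited result.
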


Fix the set $S \in \{S_1, \dots, S_{f+1}\}$ and the hierarchy $\Hier(S)$, and let $\hat{G} = \hat{G}(\Hier(S))$ be the corresponding auxiliary graph.
Let $\id(v)$, $v\in V$, and $\type(K)$, $K\in \K(S)$,
be \emph{distinct} integers in $[N]$, $N=O(n)$,
and let $\type(e)=\type(K)$ if $e$ is a type-$K$ edge.
When $h : [N] \to \{0,1\}$, 
we use the shorthand notation
$h(v) \bydef h(\id(v))$
and
$h(K) \bydef h(\type(K))$.

\paragraph{Orientation.}
Our first use of hit-miss hashing gives a deterministic counterpart to the orientation of \Cref{sect:orientation}.
\begin{lemma}\label{lem:derand-orientation}
    Within $\poly(n)$-time, we can determinstically compute a subgraph $\tilde{G}$ of $\hat{G}$ such that:
    \begin{enumerate}
        \item $\tilde{G}$ has arboricity $O(f^4 \log^8 n)$, i.e., admits an $O(f^4 \log^8 n)$-outdegree orientation.
        \item Let $\ang{s,t,F}$ be a query, and $G^* = G^*(\Hier(S), \ang{s,t,F})$ the corresponding query graph.
        Let $\tilde{G}^* = \tilde{G} \cap G^*$.
        Let $x,y \in V(G^*) - F$.
        Then $x,y$ are connected in $G - F$ iff they are connected in $\tilde{G}^* - F$.
    \end{enumerate}
\end{lemma}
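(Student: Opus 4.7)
The plan is to follow the randomized construction of \Cref{lem:orienting-aux-graph} nearly verbatim, replacing the random sample pairs $\ang{\mathcal{A}_i,\mathcal{B}_i}$ with a deterministic collection extracted from a single miss-hit hash family. The structural idea that produces the claimed $O(f^4\log^8 n)$ bound---rather than the $\widetilde{O}(f^5)$ that two separate hash families would give---is to put vertex identifiers and component types into a \emph{common} universe $[N']$ of size $N'=O(n)$, with the two sets of identifiers kept disjoint, and to sample both at once with a single hash.

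Concretely, for each hash $h:[N']\to\{0,1\}$ I will set $\mathcal{A}(h)=\{v\in V : h(\id(v))=1\}$ and $\mathcal{B}(h)=\{K\in\K(S) : h(\type(K))=1\}$, define $\hat{G}_h$ as the subgraph of $\hat{G}$ keeping $e=\{u,v\}$ exactly when $u,v\in\mathcal{A}(h)$ and (when $e$ has type $K\in\K(S)$) also $K\in\mathcal{B}(h)$, let $M_h$ be the minimum spanning forest of $\hat{G}_h$ under the depth weights of \Cref{lem:orienting-aux-graph}, and take $\tilde{G}=\bigcup_h M_h$. To choose the family $\mathscr{H}$, I will invoke \Cref{thm:miss-hit-hash} with $a=O(f\log n)$ and $b=3$: for any query $\ang{s,t,F}$, the ``miss'' set $A=F\cup\{\text{affected components}\}$ has size $O(f\log n)$ by \Cref{thm:S-decomposition}(\ref{prop:item-S3}), while for any valid edge $e\in E(G^*-F)$ of type $K_0\in\K(S)\cup\{\perp\}$ the ``hit'' set $B=\{u,v\}\cup(\{K_0\}\setminus\{\perp\})$ satisfies $|B|\le 3$ and $A\cap B=\emptyset$ (because validity condition (C1) forces $K_0$ unaffected, $u,v\in V-F$ forces $\{u,v\}\cap F=\emptyset$, and the vertex/type universes are disjoint by construction). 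The resulting family then has size $|\mathscr{H}|=O(a\log N')^{b+1}=O(f^4\log^8 n)$, immediately giving Part~1 since $\tilde{G}$ is the union of $|\mathscr{H}|$ forests.

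For Part~2 the argument is then a syntactic substitution of the miss-hit guarantee for the probabilistic ``good sample'' event: by \Cref{lem:G*} it suffices to show every valid edge $e=\{u,v\}\in E(G^*-F)$ has $u,v$ connected in $\tilde{G}^*-F$, and applying the miss-hit property to the sets $A,B$ above produces some $h\in\mathscr{H}$ for which $\ang{\mathcal{A}(h),\mathcal{B}(h)}$ is ``good for $e,F$'' in the sense of \Cref{lem:orienting-aux-graph}. Consequently $e\in\hat{G}_h$, and either $e\in M_h\subseteq\tilde{G}^*$ or the MSF cycle property yields an $M_h$-path from $u$ to $v$ whose edges all have depth at most $d(u,v)$ and thus lie in $\tilde{G}^*-F$, identical to the randomized case. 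The only step I expect to need care is the universe-unification bookkeeping, since using two separate $(a,b)$-miss-hit families for vertices and components would only yield $\widetilde{O}(f^5)$, and it is precisely the merge into a single $(a,b)=(O(f\log n),3)$ family that squeezes the bound down to the claimed $\widetilde{O}(f^4)$; polynomial-time constructibility is inherited from \Cref{thm:miss-hit-hash}.
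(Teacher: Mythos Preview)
Your proposal is correct and essentially identical to the paper's proof: both unify vertex identifiers and component types into a single $O(n)$-size universe, apply \Cref{thm:miss-hit-hash} with parameters $(a,b)=(O(f\log n),3)$ to obtain $k=O(f^4\log^8 n)$ hashes, and then reuse the MSF/cycle-property argument from \Cref{lem:orienting-aux-graph} verbatim. One minor nit: the $O(f\log n)$ bound on the miss set $A=F\cup\{\text{affected components}\}$ follows from the $\log n$ height of $\Hier$ in \Cref{thm:S-decomposition}(\ref{prop:item-S1}), not from part~(\ref{prop:item-S3}).
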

\begin{proof}
    The proof is the same as \Cref{lem:orienting-aux-graph}, except we replace the random sampling with miss-hit hashing.
    Formally, we take an $(O(f \log n),3)$-miss-hit family 
    $\mathscr{G} = \{g_i : [N] \to \{0,1\} \mid i\in[1,k]\}$,
    $k = O(f^4 \log^8 n)$ using \Cref{thm:miss-hit-hash}.
    For $i\in [1,k]$, we set $\mathcal{A}_i = \{v \in V(\hat{G}) \mid g_i (v) = 1\}$ and $\mathcal{B}_i = \{K \in \K(S) \mid g_i (K) = 1\}$, and proceed exactly as in \Cref{lem:orienting-aux-graph}.
    
    Part 1 follows immediately, as the output $\tilde{G}$ is the union of $k = O(f^4 \log^8 n)$ forests.
    By the arguments in \Cref{lem:orienting-aux-graph}, part 2 holds provided that for any query $\ang{s,t,F}$, $|F|\leq f$ and edge $e = \{u,v\}$ of $G^* - F$ of type $K$, there is a \emph{good} pair $\ang{\mathcal{A}_i, \mathcal{B}_i}$.
    A good pair is one for which
    (i) $F \cap \mathcal{A}_i = \emptyset$,
    (ii) $\mathcal{B}_i \cap \{K \in \K(S) \mid K \operatorname{affected} \} = \emptyset$,
    (iii) $u,v \in \mathcal{A}_i$,
    and (iv) $K \in \mathcal{B}_i$.
    That is, we want some $g_i \in \mathscr{G}$ to miss the $O(f \log n)$ elements of $F \cup \{K \in \K(S) \mid K \operatorname{affected}\}$ 
    and hit the $3$ elements $u,v,K$.
    Such $g_i$ exist by the miss-hit property of $\mathscr{G}$.     
\end{proof}

Henceforth, $\hat{G}$ refers to the \emph{oriented} version of $\hat{G}$ 
returned by \Cref{lem:derand-orientation}, i.e., $\hat{G}$ is now $\tilde{G}$.

\paragraph{A Miss-Hit Subgraph Family.}
We next use hit-miss hashing to construct subgraphs of $\hat{G}$, which can be thought of as analogous to the subgraphs $\{\hat{G}_{q,i}\}$ of \Cref{sect:sketching-tools}.
Let $\mathscr{H} = \{h_i : [N] \to \{0,1\} \mid i\in[1,k]\}$ be an $(a_{\miss} = O(f \log n), b_{\hit} = 2)$-miss-hit family, so $k = O(f^3 \log^6 n)$ by \Cref{thm:miss-hit-hash}.
For each $i\in[1,k]$, define the subgraph $\hat{G}_i$ of $\hat{G}$ 
by including the edges
\[
E(\hat{G}_i) \bydef \{e = \{u,v\} \in E(\hat{G}) \mid \text{$h_i (\type(e)) = 1$, and orientation is $u\rightarrow v$ with $h_i (v)=1$}\}.
\]

\subsection{Geometric Representations and $\epsilon$-Nets}

In this section, we adapt the geometric view of \cite{IzumiEWM23} to 
our setting.  
The goal is to replace the randomized edge sampling 
effected by the $\{\varphi_{q,i}\}$ hash functions
with a polynomial-time deterministic procedure.

The approach of \cite{IzumiEWM23} uses a spanning tree 
for the \emph{entire} graph,
while we only have spanning trees $T(K)$ for each component $K \in \K(S)$.
For this reason, we construct a \emph{virtual} tree $T_{\Hier(S)}$,
which is formed as follows.  Let $z_K$ be a vertex representing $K$ in $\Hier(S)$.  $T_{\Hier(S)}$ is on the vertex set 
$V(G)\cup \{z_K \mid K\in\K(S)\}$.  Initially 
form a tree on $\{z_K \mid K \in \K(S)\}$
by including edges $\{\{z_K,z_{K'}\} \mid K' \text{ parent of } K\}$,
then attach each $T(K)$ tree by including edges 
$\{\{z_K,r_K\}\}\cup E(T(K))$, where $r_K$ is the root of $K$.

Let $\pre(a)\bydef \pre(a,T_{\Hier(S)})$ and $\post(a)\bydef \post(a,T_{\Hier(S)})$ be the time stamps for the first and last time $a \in V(T_{\Hier(S)})$ is visited in a DFS traversal (Euler tour) of $T_{\Hier(S)}$.
Following~\cite{DuanP20}, 
we identify each edge $e = \{u,v\} \in \hat{E}$ with the 
2D-point $(\pre(u), \pre(v))$ where $\pre(u) < \pre(v)$.

We denote subsets of the plane $\mathbb{R}^2$ by $\{\cdot\}$-enclosed inequalities in the coordinate variables $x,y$.
E.g., $\{x \geq 3, y \leq 7\} \bydef \{(x,y) \in \mathbb{R}^2 \mid x\geq 3, y \leq 7\}$ and $\{ |y| < 2 \} \bydef \{(x,y)\in \mathbb{R}^2 \mid -2 < y < 2\}$.

\begin{lemma}\label{lem:rectangles}
    Fix a query $\ang{s,t,F}$, $|F| \leq f$, $F \cap S = \emptyset$.
    Let $\mathcal{P}_0$ be the corresponding initial partition of \Cref{sect:Boruvka},
    i.e., the connected components of $\bigcup_{K \operatorname{affected}} T(K)-F$.
    Let $P$ be a union of parts from $\mathcal{P}_0$.
    Let $e = \{u,v\} \in E(\hat{G})$ be represented as a point $e = (x(e),y(e)) = (\pre(u), \post(v))$.
    \begin{enumerate}
        \item $e$ crosses the cut $(P, V(\hat{G}) - P)$ iff it lies in the region
        \[
        R_1 \bydef
        \bigoplus_{a \in A}
        \{\pre(a) \leq x \leq \post(a)\}
        \oplus
        \{\pre(a) \leq y \leq \post(a)\},
        \]
        where the $\bigoplus$ ranges over  
        $A = \bigcup_{Q \in \mathcal{P}_0 : Q \subseteq P} (F_Q \cup \{r_Q\})$, with $|A| = O(f \log n)$.
        
        \item $K_u$ and $K_v$ are affected, i.e., $e$ satisfies (C2), iff it lies in the region
        \[
        R_2 \bydef
        \bigcup_{K_1 \operatorname{affected}} \bigcup_{K_2 \operatorname{affected}} 
        \{ \pre(r_{K_1}) \leq x \leq \post(r_{K_1}) \}
        \cap
        \{ \pre(r_{K_2}) \leq y \leq \post(r_{K_2}) \}.
        \]

        \item $u,v \notin F$ iff $e$ lies in the region
        \[
        R_3 \bydef
        \bigcap_{a \in F}
        \Big(
        \{x \leq \pre(a)-1\} \cup \{x \geq \pre(a) + 1\}
        \Big)
        \cap
        \Big(
        \{y \leq \pre(a)-1\} \cup \{y \geq \pre(a) + 1\}
        \Big).
        \]

        \item Let $R \bydef R_1 \cap R_2 \cap R_3 \subset \mathbb{R}^2$. 
        Define $\hat{E}_i (P) = R\cap E(\hat{G}_i)$ to be the $\hat{G}_i$-edges 
        that satisfy 1,2, and 3.  The region $R$ is the union of 
        $O(f^2 \log^2 n)$ disjoint axis-aligned rectangles in the plane.
    \end{enumerate}
\end{lemma}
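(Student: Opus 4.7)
My approach would first dispatch Parts~1--3, which are essentially translations of combinatorial membership conditions into interval conditions via the Euler-tour characterization of subtrees in $T_{\Hier(S)}$. For Part~1, I start from \cref{eq:initial-part}, which expresses each $Q\in \mathcal{T}_F(K)$ as the symmetric difference $T_{r_Q}(K)\oplus\bigoplus_{x\in F_Q} T_x(K)$ of subtrees. Since the subtree of a vertex $a$ coincides with $\{u \mid \pre(a)\le \pre(u)\le \post(a)\}$, and since the parts $Q\subseteq P$ are vertex-disjoint while the sets $\{r_Q\}\cup F_Q$ are disjoint across $Q$ (the $r_Q$'s are distinct non-faults; a fault $y$ lying in both $F_Q$ and $F_{Q'}$ with $T_{r_{Q'}}(K)\subseteq T_{r_Q}(K)$ would contradict the ``highest-fault'' characterization once the separating fault between $r_Q$ and $r_{Q'}$ is taken into account), summing with $\oplus$ over $Q\subseteq P$ yields $\Ind[u\in P] = \bigoplus_{a\in A}\Ind[\pre(a)\le \pre(u)\le \post(a)]$, and analogously for $v$. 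Expanding $\Ind[u\in P]\oplus \Ind[v\in P]=1$ and distributing $\oplus$ then gives exactly the formula defining $R_1$, and the size bound $|A|=O(f\log n)$ follows from $|\mathcal{P}_0|=O(f\log n)$ together with $\sum_Q |F_Q|\le |F|\le f$. Part~2 is immediate: the subtree of $r_K$ in $T_{\Hier(S)}$ equals $K$ exactly, since $r_K$'s parent in $T_{\Hier(S)}$ is the virtual node $z_K$ and $T(K)$ spans $K$ with no Steiner points by \Cref{lem:3logn-deg}; hence $x\in [\pre(r_{K_1}),\post(r_{K_1})]$ iff $K_u=K_1$, and similarly for $y$ and $K_v$. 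Part~3 uses only injectivity of the timestamp function: the union $\{x\le \pre(a)-1\}\cup\{x\ge \pre(a)+1\}$ is equivalent to $u\neq a$ for $x=\pre(u)$, and intersecting over $a\in F$ gives $u\notin F$.

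For Part~4, I would invoke a grid-decomposition argument. Every defining inequality in $R_1, R_2, R_3$ toggles its truth value only at $x$-coordinates drawn from the finite list $\{\pre(r_K),\post(r_K)\mid K \text{ affected}\}\cup\{\pre(a),\post(a)\mid a\in A\}\cup\{\pre(a)\pm 1\mid a\in F\}$, and similarly for $y$. Since there are $O(f\log n)$ affected components (each of the $\le f+2$ query vertices has at most $\log n$ ancestors), $|A|=O(f\log n)$, and $|F|\le f$, each axis carries $O(f\log n)$ breakpoints. These breakpoints partition $\mathbb{R}^2$ into a grid of $O(f^2\log^2 n)$ open axis-aligned rectangular cells, on each of which every defining inequality takes a constant truth value. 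Thus $R=R_1\cap R_2\cap R_3$ is the disjoint union of those cells on which the combined condition evaluates to true, yielding the claimed $O(f^2\log^2 n)$ disjoint axis-aligned rectangles.

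The main obstacle is Part~1: the XOR-of-symmetric-differences formula for $R_1$ is concise but requires carefully unpacking the indicator-function decomposition and verifying disjointness of the sets $\{r_Q\}\cup F_Q$ across $Q\subseteq P$, so that the per-part $\oplus$'s assemble into a single global $\bigoplus_{a\in A}$. Once this identity is in hand, Parts~2 and~3 follow by direct translation, and Part~4 is a standard breakpoint-counting argument.
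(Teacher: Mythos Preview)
Your proposal is correct and follows essentially the same approach as the paper's proof: Parts~1--3 via the Euler-tour/ancestry characterization of subtrees (the paper states $P=\bigoplus_{a\in A} T_a(K_a)$ and then uses ``an edge has one endpoint in $\bigoplus_i U_i$ iff it has one endpoint in an odd number of $U_i$'s'', which is exactly your indicator-XOR identity), and Part~4 by counting $O(f\log n)$ breakpoints per axis to get an $O(f\log n)\times O(f\log n)$ grid of cells. The only cosmetic difference is that the paper names the three grids $\Gamma_1,\Gamma_2,\Gamma_3$ explicitly and describes each $R_i$ as a ``checkerboard'' of DAARs before intersecting, whereas you go directly to the union grid; your disjointness argument for the $F_Q$'s is also spelled out in slightly more detail than the paper's one-line assertion.
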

\begin{proof}
    \underline{Part 1.}
    Observe that the ending-fault sets $\{F_Q \mid Q \in \mathcal{P}_0\}$ are mutually disjoint subsets of $F$, and $|F| \leq f$.
    The rooting vertex $r_Q$ of each $Q \in \mathcal{P}_0$ is unique and non-faulty.
    Moreover, as $F \cap S = \emptyset$, there are only $O(f \log n)$ parts in $\mathcal{P}_0$, by \Cref{thm:S-decomposition}.
    Thus, the $\bigoplus$-range $A \bydef \bigcup_{Q \in \mathcal{P}_0 : Q \subseteq P} (F_Q \cup \{r_Q\})$ consists of $O(f \log n)$ vertices.
    By the above observations, \Cref{eq:initial-part}, and the disjointness 
    of initial parts, we obtain that
    \begin{align}
        P
        &=
        \bigoplus_{Q \in \mathcal{P}_0 : Q \subseteq P}
        \:
        \bigoplus_{a \in F_Q \cup \{r_Q\}}
        T_{a} (K_{a})
        =
        \bigoplus_{a \in A} T_{a} (K_{a}).
        \label{eq:subtree-xor}
    \end{align}
    
    Let us focus on one $T_{a} (K_{a})$.
    By the construction of $T_{\Hier(S)}$, this is also 
    the subtree of $T_{\Hier(S)}$ rooted at $a$.
    So, by the property of DFS timestamps, 
    for any $w\in V$,
    $w \in T_a (K_a)$ iff $\pre(a) \leq \pre(w) \leq \post(a)$.
    Thus, $e$ is has exactly one endpoint in $T_a (K_a)$ iff exactly one of the conditions ``$\pre(a) \leq x(e) \leq \post(a)$" and ``$\pre(a) \leq y(e) \leq \post(a)$" holds.
    Finally, we use the fact that if $U=\bigoplus_i U_i$,
    an edge has exactly one endpoint in $U$ iff it has exactly one endpoint in an \emph{odd} number of the $U_i$s.  This fact, together with \Cref{eq:subtree-xor}, yields the result.

    \underline{Part 2.}
    As observed in the proof of Part 1, a vertex $w$ belongs to $K = V(T_{r_K} (K))$ iff $\pre(r_K) \leq \pre(w) \leq \post(r_K)$, and the result immediately follows.

    \underline{Part 3.}
    Immediate from the fact that $\pre(w) \neq \pre(a)$ iff $w \neq a$, 
    and that timestamps are integers.

    \underline{Part 4.}
    We use the acronym \emph{DAARs} for \emph{disjoint axis-aligned rectangles}.    
    $R_1$ is the symmetric difference 
    of $O(f \log n)$ horizontal  and $O(f \log n)$ vertical strips.
    This gives a ``checkerboard'' pattern of DAARs,
    whose vertices lie at the intersections of the
    grid $\Gamma_1 \subset \mathbb{R}^2$:
\[
    \Gamma_1 = \bigcup_{Q\in \P_0}\,\bigcup_{a\in F_Q\cup \{r_Q\}}
    \{x = \pre(a)\} \cup \{x = \post(a)\} \cup \{y = \pre(a)\} \cup \{y = \post(a)\}.
\]
    $R_2$ is the Cartesian product $I_2 \times I_2$, 
    where $I_2 \subseteq \mathbb{R}$ is the disjoint union of $O(f \log n)$ intervals:
    $I_2 = \bigcup_{K \operatorname{affected}} [\pre(r_K), \post(r_K)]$.
    This also forms a set of DAARs, whose vertices lie at intersection points 
    of the grid $\Gamma_2\subset \mathbb{R}^2$:
    \[
    \Gamma_2 = \bigcup_{K \operatorname{affected}} \{x = \pre(r_K)\} \cup \{x = \post(r_K)\} \cup \{y = \pre(r_K)\} \cup \{y = \post(r_K)\}.
    \]
    
    $R_3$ is obtained by \emph{removing}, for each $a \in F$, the vertical and horizontal  strips $\{|x -\pre(a)| < 1\}$ and $\{|y-\pre(a)| < 1\}$.
    This again yields a set of DAARs, whose vertices lie at intersection points of the
    grid $\Gamma_3\subset \mathbb{R}^2$:
    \[
    \Gamma_3 = \bigcup_{a \in F} \{x = \pre(a)-1\} \cup \{x = \pre(a)+1\} \cup \{y = \pre(a)-1\} \cup \{y = \pre(a)+1\}.
    \]
    Therefore, the intersection $R = R_1 \cap R_2 \cap R_3$ consists of DAARs
    whose vertices are the intersection points of the 
    grid $\Gamma = \Gamma_1 \cup \Gamma_2 \cup \Gamma_3$.
    Note that $\Gamma_1, \Gamma_2, \Gamma_3$ are individually 
    $O(f \log n) \times O(f\log n)$ grids, so the same is also true of $\Gamma$.
    Disjointness now implies that there are only $O(f^2 \log^2 n)$ such rectangles.
\end{proof}

\paragraph{Nested Edge-Subsets from $\epsilon$-Nets.}
Following \cite{IzumiEWM23}, we use the notion of \emph{$\epsilon$-nets}, and the efficient construction of such $\epsilon$-nets for the class of unions of bounded number of disjoint axis-aligned rectangles~\cite{IzumiEWM23}.

\begin{definition}
    Let $\mathcal{Z}$ be a family of regions in the plane $\mathbb{R}^2$
    and $X$ be a finite set of points in $\mathbb{R}^2$.
    For any $\epsilon>0$, an \emph{$\epsilon$-net
    of $\ang{\mathcal{Z}, X}$} is a subset $Y \subseteq X$ such that 
    for every $Z \in \mathcal{Z}$, 
    if $|Z \cap X| \geq \epsilon |X|$, then $Z \cap Y \neq \emptyset$.
\end{definition}

\begin{lemma}[\cite{IzumiEWM23}]\label{lem:eps-net}
Let $\alpha \geq 1$.
Let $\mathcal{R}_{\alpha}$ be the family of all regions formed by the union of at most $\alpha$ disjoint axis-aligned rectangles in the plane.
Let $X$ be a finite set of points in the plane.
Let $\epsilon = (\alpha \log |X|) / |X|$.
There is an $\epsilon$-net $Y$ for $\ang{\mathcal{R}_{\alpha}, X}$ such that $|Y| \leq (1-\eta) |X|$, where $\eta \in (0,1)$ is some absolute constant.
The $\epsilon$-net $Y$ can be computed deterministically in $\poly(|X|, \alpha)$ time.
\end{lemma}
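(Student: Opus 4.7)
My plan is to reduce the $\alpha$-rectangle case to the single-rectangle case via pigeonhole, establish existence via random sampling (exploiting the $O(|X|^4)$ combinatorial complexity of single axis-aligned rectangles on $X$), and then derandomize using the method of conditional expectations.

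For the reduction, I claim that any $\epsilon'$-net $Y \subseteq X$ for the family of single axis-aligned rectangles with $\epsilon' \bydef \epsilon/\alpha = \log|X|/|X|$ is automatically an $\epsilon$-net for $\mathcal{R}_\alpha$. Indeed, given any $Z = Z_1 \cup \cdots \cup Z_\alpha$ with $|Z \cap X| \geq \epsilon|X|$, disjointness and pigeonhole give some $Z_j$ with $|Z_j \cap X| \geq \epsilon'|X|$, so $Y \cap Z_j \neq \emptyset$ and hence $Y \cap Z \neq \emptyset$. So I only need to build $Y$ for single rectangles; this reduction also explains why the runtime can be made $\poly(|X|, \alpha)$ (indeed $\poly(|X|)$ with no $\alpha$-dependence).

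For existence, observe that a single axis-aligned rectangle is determined, up to its intersection with $X$, by its four bounding coordinates, each from the $O(|X|)$ distinct coordinates appearing in $X$. Hence at most $N = O(|X|^4)$ distinct restrictions $Z \cap X$ exist; let $\mathcal{Z}^*$ collect those of size $\geq \epsilon'|X| = \log|X|$. Sample $Y$ by independently keeping each $x \in X$ with probability $1-\eta$ for a constant $\eta \in (0, 1/16)$. For each fixed $Z \in \mathcal{Z}^*$, $\Pr[Y \cap Z = \emptyset] \leq \eta^{\log|X|} = |X|^{-|\log_2 \eta|}$, so a union bound gives total failure probability $O(|X|^{4-|\log_2\eta|}) = o(1)$. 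A Chernoff bound simultaneously yields $|Y| \leq (1-\eta/2)|X|$ with high probability, so both requirements hold with positive probability.

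For derandomization, use the pessimistic estimator $\Psi \bydef \sum_{Z \in \mathcal{Z}^*} \eta^{k_Z}$, where $k_Z$ is the count of points of $Z \cap X$ not yet committed to exclusion. Initially $\Psi < 1$, and $\Psi$ is maintainable in $\poly(|X|)$ time per decision since $|\mathcal{Z}^*| \leq N = \poly(|X|)$. Processing $x \in X$ in some order and always committing to whichever decision (include or exclude) does not increase $\Psi$ keeps $\Psi < 1$ throughout, which at the end forces $Y$ to be an $\epsilon'$-net. The size constraint is enforced by coupling with a standard moment-generating pessimistic estimator for the Chernoff bound à la Raghavan. The main obstacle I anticipate is combining these two estimators so that their weighted sum starts strictly below $1$ while both remain efficiently updatable; this requires a careful choice of mixing coefficients and of $\eta$, but is a standard, if technically delicate, application of this template.
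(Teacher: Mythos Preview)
The paper does not supply a proof of this lemma; it is quoted from \cite{IzumiEWM23}, so there is no argument in the present paper to compare your sketch against.

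On its own merits, your approach is sound. The pigeonhole reduction from $\mathcal{R}_\alpha$ to single rectangles is correct and is exactly where disjointness of the rectangles is used; the $O(|X|^4)$ bound on combinatorially distinct single-rectangle restrictions is the standard shatter-function argument; and the random-sampling existence proof with $\eta<1/16$ goes through as you describe. One small wrinkle: your pessimistic estimator $\Psi=\sum_Z \eta^{k_Z}$ with $k_Z$ counting points ``not yet committed to exclusion'' does not have the supermartingale property you need---once a point of $Z$ is \emph{included} in $Y$, the term for $Z$ must drop to $0$, but under your definition $k_Z$ does not change (the included point is still ``not committed to exclusion''), so the term stays positive. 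The standard fix is to take $k_Z$ to be the number of \emph{unprocessed} points of $Z$ and to zero out the $Z$-term as soon as some point of $Z$ is committed to $Y$; with that correction the conditional-expectations argument is routine. If you want to avoid combining two estimators altogether, sample $Y$ uniformly among subsets of size exactly $\lfloor(1-\eta)|X|\rfloor$: the size constraint is then automatic, the hypergeometric tail still gives $\Pr[Y\cap Z=\emptyset]\le \eta^{|Z\cap X|}$, and only a single estimator needs to be derandomized.
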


We are now ready to define some edge sets 
that are analogous to those of \Cref{sect:sketching-tools}
sampled with $\{\varphi_{q,i}\}$.\footnote{In contrast to
\Cref{sect:sketching-tools}, we use the \emph{same} edge sets
to implement each \Boruvka{} step, i.e., there is no longer a parameter ``$p$.''}
Fix $i \in \{1, \dots, k\}$.
We iteratively construct a nested family of edge-subsets
\[
E(\hat{G}_i) = \hat{E}_{i,0} \supseteq \hat{E}_{i,1} \supseteq \cdots \supseteq \hat{E}_{i,h} = \emptyset,
\]
by applying \Cref{lem:eps-net},
taking $\hat{E}_{i,j+1}$ to be an 
$O(f^2 \log^3 n / |\hat{E}_{i,j}|)$-net 
for $\langle \mathcal{R}_{O(f^2 \log^2 n)}, \hat{E}_{i,j} \rangle$.
The size of the sets reduces by a constant fraction in each level, 
so $h = \log_{1/(1-\eta)} n = O(\log n)$ levels suffice.
Lemma~\ref{lem:hitting-sets} summarizes the key property of these sets.

\begin{lemma}\label{lem:hitting-sets}
    Let $\ang{s,t,F}$, $P$, and $\hat{E}_i(P)$ be as in \Cref{lem:rectangles}.
    Suppose $|\hat{E}_i (P) \cap \hat{E}_{i,j}| = \Omega(f^2 \log^3 n)$ for some $j$.
    Then $\hat{E}_i (P) \cap \hat{E}_{i,j+1} \neq \emptyset$.
\end{lemma}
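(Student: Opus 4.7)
The plan is to show that this lemma is essentially a direct consequence of Lemma~\ref{lem:rectangles}(4) combined with the defining property of $\epsilon$-nets. There are no hidden difficulties; all the geometric work has already been carried out in Lemma~\ref{lem:rectangles}.

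First, I would unpack the construction of $\hat{E}_{i,j+1}$. By definition (just above the lemma statement), $\hat{E}_{i,j+1}$ is an $\epsilon$-net for $\langle \mathcal{R}_{\alpha}, \hat{E}_{i,j}\rangle$ with $\alpha = O(f^2 \log^2 n)$, instantiated via \Cref{lem:eps-net} with the parameter $\epsilon = (\alpha \log|\hat{E}_{i,j}|)/|\hat{E}_{i,j}|$. Since $|\hat{E}_{i,j}|\le |\hat{E}_{\all}| = \poly(n)$, this is $\epsilon = c f^2 \log^3 n / |\hat{E}_{i,j}|$ for a suitable absolute constant $c$. The hypothesis $|\hat{E}_i(P)\cap \hat{E}_{i,j}| = \Omega(f^2 \log^3 n)$ should be interpreted with a constant factor chosen to be at least $c$, so that $|\hat{E}_i(P)\cap \hat{E}_{i,j}| \ge \epsilon\cdot |\hat{E}_{i,j}|$.

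Next, I would invoke \Cref{lem:rectangles}(4), applied to the query $\ang{s,t,F}$ and to the subset $P$ that is a union of parts of $\mathcal{P}_0$. This yields that the region $R = R_1 \cap R_2 \cap R_3 \subset \mathbb{R}^2$ associated with $P$ is a union of $O(f^2 \log^2 n)$ disjoint axis-aligned rectangles, hence $R \in \mathcal{R}_{\alpha}$ for the very same $\alpha = O(f^2 \log^2 n)$ used to construct the net. By definition $\hat{E}_i(P) = R \cap E(\hat{G}_i)$, and since $\hat{E}_{i,j} \subseteq E(\hat{G}_i)$, we have $\hat{E}_i(P)\cap \hat{E}_{i,j} = R \cap \hat{E}_{i,j}$.

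Finally, the $\epsilon$-net property of $\hat{E}_{i,j+1}$ applied to the region $R \in \mathcal{R}_{\alpha}$ says that whenever $|R \cap \hat{E}_{i,j}| \ge \epsilon \cdot |\hat{E}_{i,j}|$, we must have $R \cap \hat{E}_{i,j+1} \neq \emptyset$. Combining the previous identity, this gives $\hat{E}_i(P) \cap \hat{E}_{i,j+1} \neq \emptyset$, as required. The main ``obstacle'' was really the geometric encoding in Lemma~\ref{lem:rectangles}, which bundled a combinatorial cut condition (affected components, validity, ending-fault representation of $P$) into a region of controlled rectangle-complexity; once that is done, the present lemma is an immediate one-line application of \Cref{lem:eps-net}.
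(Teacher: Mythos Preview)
Your proposal is correct and follows essentially the same approach as the paper's proof: invoke \Cref{lem:rectangles}(4) to represent $\hat{E}_i(P)$ as $R\cap E(\hat{G}_i)$ with $R\in\mathcal{R}_{O(f^2\log^2 n)}$, then apply the $\epsilon$-net property of $\hat{E}_{i,j+1}$ to the range $R$ with $\epsilon = O(f^2\log^3 n/|\hat{E}_{i,j}|)$. Your version is slightly more explicit about the identity $\hat{E}_i(P)\cap\hat{E}_{i,j}=R\cap\hat{E}_{i,j}$ and the constant bookkeeping, but the argument is the same.
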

\begin{proof}
    Let $\epsilon = O(f^2 \log^3 n / |\hat{E}_{i,j}|)$.
    By \Cref{lem:rectangles}(4), there is a range (rectangle set) 
    $R \in \mathcal{R}_{O(f^2 \log^2 n)}$ with
    $R \cap E(\hat{G}_i) = \hat{E}_i(P)$.
    Thus, $|R \cap \hat{E}_{i,j}| = |\hat{E}_i (P) \cap \hat{E}_{i,j}| \geq \epsilon |\hat{E}_{i,j}|$.
    As $\hat{E}_{i,j+1}$ is an $\epsilon$-net,
    $\hat{E}_i (P) \cap \hat{E}_{i,j+1} = R \cap \hat{E}_{i,j+1} \neq \emptyset$.
\end{proof}

\subsection{Deterministic Sketches and Modifications to Labels}

\paragraph{Defining Deterministic ``Sketches".} 
For an edge $e = \{u,v\} \in \hat{E}$ with $\id(u) < \id(v)$, define
$
\id(e) \bydef \ang{\id(u), \id(v), \type(e), \anc(u), \anc(v)}.
$
Note that $\id(e)$ consists of $O(\log n)$ bits, so we can identify $\id(e)$ (and $e$ itself) with an integer from $[M]$ with $M = \poly(n)$.
The following tool, developed by \cite{IzumiEWM23} using \emph{Reed-Solomon codes}, can be seen as analogous to the $\uid$s of \Cref{sect:sketching-tools}.

\begin{lemma}[\cite{IzumiEWM23}]\label{lem:xids}
    Let $\beta \geq 1$.
    There is a function $\xid_{\beta} : [M] \to \{0,1\}^{O(\beta \log n)}$ 
    with the following property:
    Let $|E'| \subseteq \hat{E}$, $|E'| \leq \beta$.
    Given \emph{only} the bit-string $\bigoplus_{e \in E'} \xid_{\beta} (e)$, one can recover the entire set $\{\id(e) \mid e \in E'\}$.
    When $|E'|>\beta$ the output of the recovery is undefined.
    Given $M, \beta$, the function $\xid_\beta$ can be computed deterministically in time $\poly(M, \beta)$.
\end{lemma}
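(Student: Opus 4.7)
The plan is to use \emph{power-sum fingerprinting} over the binary field $\mathbb{F}_{2^{\ell}}$, where $\ell = O(\log n)$ is chosen so that $2^{\ell} \geq M$. This is a standard construction (essentially a Reed-Solomon / BCH encoding) that exploits the fortunate coincidence that addition in characteristic $2$ is bitwise XOR, so that XOR-summing labels precisely corresponds to summing vectors in $\mathbb{F}_q^{2\beta}$.

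I would first fix a concrete representation of $\mathbb{F}_q$, $q = 2^{\ell}$, via a deterministically constructed irreducible polynomial of degree $\ell$ over $\mathbb{F}_2$ (obtainable in $\poly(\ell)$ time by exhaustive search or by known deterministic algorithms). Identify each $m \in [M]$ with a distinct element of $\mathbb{F}_q$ and define
$$\xid_\beta(m) \bydef \paren{m,\, m^2,\, m^3,\, \ldots,\, m^{2\beta}} \in \mathbb{F}_q^{2\beta},$$
which occupies $2\beta\ell = O(\beta \log n)$ bits. Because field addition in $\mathbb{F}_{2^{\ell}}$ is bitwise XOR of the underlying bit representations, coordinate-wise summation yields
$$\bigoplus_{e \in E'} \xid_\beta(e) \;=\; \paren{p_1,\, p_2,\, \ldots,\, p_{2\beta}}, \qquad p_j \bydef \sum_{e \in E'} \id(e)^j,$$
i.e., the first $2\beta$ power sums of the multiset $\{\id(e) \mid e \in E'\}$, from which I would recover the multiset itself.

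For the recovery step, I would run the Berlekamp-Massey algorithm on the sequence $p_1, p_2, \ldots, p_{2\beta}$. Since $k = |E'| \leq \beta$, this algorithm produces in $\poly(\beta, \ell)$ time the minimum-length linear recurrence satisfied by the sequence, and its characteristic polynomial is exactly the \emph{error locator} $\Lambda(x) = \prod_{e \in E'}(x - \id(e))$. Factoring $\Lambda$ over $\mathbb{F}_q$ (using, e.g., Berlekamp's deterministic factoring algorithm, which is efficient over characteristic-$2$ fields) yields the roots, which are precisely $\{\id(e) \mid e \in E'\}$.

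The main obstacle I anticipate is the well-known characteristic-$2$ degeneracy of Newton's identities: the relation $k\sigma_k = \sum_{i=1}^{k} (-1)^{i+1} \sigma_{k-i} p_i$ fails when $k$ is even because $2 = 0$ in $\mathbb{F}_2$. I would sidestep this entirely by using Berlekamp-Massey, which extracts $\Lambda$ directly from the power-sum sequence without going through the elementary symmetric polynomials. As a bonus, since $p_{2j} = p_j^2$ in characteristic $2$, the even power sums are redundant, so storing only the odd powers $(m, m^3, \ldots, m^{2\beta-1})$ suffices to recover all $2\beta$ power sums, shaving the label length to $\beta \ell = O(\beta \log n)$. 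Construction time is dominated by precomputing, for each of the $M$ inputs, a table of $O(\beta)$ field powers, which totals $\poly(M, \beta)$ time as required.
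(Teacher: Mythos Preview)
Your proposal is correct and aligns with what the paper invokes: the lemma is not proved in the paper but is cited from \cite{IzumiEWM23}, with the only hint being that the construction uses ``Reed-Solomon codes.'' Your power-sum fingerprinting over $\mathbb{F}_{2^\ell}$ is exactly this Reed-Solomon/BCH syndrome construction, and the Berlekamp--Massey plus characteristic-$2$ factoring pipeline for recovery is the standard decoding route, so there is no substantive difference in approach.
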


Set $\beta = \Theta(f^2 \log^3 n)$.
For any $E' \subseteq \hat{E}$, 
define $\dsketch(E')$ to be the $k \times h$ 
matrix with $ij$ entry
\[
\dsketch_{ij} (E') \bydef \bigoplus_{e \in E' \cap \hat{E}_{i,j}} \xid_{\beta}(e).
\]
The entire matrix occupies $O(k \cdot h \cdot \beta \log n) = O(f^5 \log^{11} n) $ bits.
Note that $\dsketch(\cdot)$ is linear w.r.t.\ the $\oplus$ operator.
For vertex subsets, we define $\dsketch_{\up}(\cdot)$ and $\dsketch^*(\cdot)$ analogously to $\sketch_{\up}(\cdot)$ and $\sketch^*(\cdot)$ in \Cref{sect:sketching-tools}.

Whereas the randomized sketches used the \emph{seed}
$\mathcal{S}_{\hash}$ to construct $\sketch(\{e\})$
from $\eid(e)$, the deterministic sketches construct
$\dsketch(\{e\})$ from global parameters and $\id(e)$.
\begin{lemma}\label{lem:det-seeds}
    Given \emph{only} the list of integers
    $\ang{N, a_{\miss}, b_{\hit}, M, \beta}$,
    one can deterministically compute the family $\mathscr{H}$ and the function $\xid_{\beta}$.
    Given $\mathscr{H},\xid_\beta$, and $\id(e)$, for any $e\in \hat{E}$, 
    one can compute $\dsketch(\{e\})$.
\end{lemma}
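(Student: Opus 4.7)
The plan is to verify the two assertions of the lemma separately, with each reducing essentially to a bookkeeping argument on top of previously established tools.

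For the first assertion, one simply invokes the cited constructions. Given the triple $(N, a_{\miss}, b_{\hit})$, an application of \Cref{thm:miss-hit-hash} produces the $(a_{\miss}, b_{\hit})$-miss-hit family $\mathscr{H}$ deterministically in $\poly(N,k)$ time, where $k = O(a_{\miss} \log N)^{b_{\hit}+1}$. Given $(M,\beta)$, an application of \Cref{lem:xids} constructs $\xid_{\beta}$ deterministically in $\poly(M,\beta)$ time. Both citations are explicitly deterministic, so this half of the lemma is immediate.

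For the second assertion, the key observation is that for a single edge $e$, each entry of $\dsketch(\{e\})$ collapses to a trivial sum: by the definition of $\dsketch_{ij}$,
\[
\dsketch_{ij}(\{e\}) =
\begin{cases}
\xid_{\beta}(\id(e)) & \text{if } e \in \hat{E}_{i,j},\\
0 & \text{otherwise.}
\end{cases}
\]
Thus it suffices to decide, for each pair $(i,j) \in [1,k]\times[0,h]$, whether $e \in \hat{E}_{i,j}$. For the base level $j=0$, membership in $\hat{E}_{i,0} = E(\hat{G}_i)$ requires only that $h_i(\type(e))=1$ and that $h_i$ hits the head of $e$ (in the orientation of $\hat{G}$); both of these can be evaluated using $\mathscr{H}$ from the endpoint-ids and type encoded in $\id(e)$. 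For $j\geq 1$, the set $\hat{E}_{i,j}$ is obtained by applying the deterministic $\epsilon$-net constructor of \Cref{lem:eps-net} to $\hat{E}_{i,j-1}$, so one reproduces the entire nested family $\hat{E}_{i,0}\supseteq \hat{E}_{i,1}\supseteq \cdots \supseteq \hat{E}_{i,h}$ from $\hat{G}$ and $\mathscr{H}$ in polynomial time, and then reads off whether $e$ lies in each level.

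The main subtlety, and the only place one might worry, is that the $\epsilon$-nets are \emph{global} objects defined on the full edge set of $\hat{G}_i$, so the computation of $\dsketch(\{e\})$ is not purely local to $e$; it presumes access to $\hat{G}$ during preprocessing. This is fine because both \Cref{thm:miss-hit-hash} and \Cref{lem:eps-net} are canonical deterministic procedures, so every party that reconstructs $\mathscr{H}$ from $\ang{N,a_{\miss},b_{\hit}}$ and rebuilds the $\epsilon$-net hierarchy on $\hat{G}$ obtains the identical nested family. Combined with $\xid_{\beta}$ from $(M,\beta)$, this yields $\dsketch(\{e\})$ from $\id(e)$ alone in polynomial time, completing the lemma.
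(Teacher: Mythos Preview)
Your handling of the first assertion is correct and matches the paper's terse proof.

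For the second assertion, you correctly pinpoint the crux: membership $e \in \hat{E}_{i,j}$ for $j \geq 1$ is determined by the global $\epsilon$-net construction of \Cref{lem:eps-net} applied to all of $\hat{E}_{i,j-1}$, which depends on the entire graph $\hat{G}$---not just on $\mathscr{H}$, $\xid_\beta$, and $\id(e)$. But your resolution does not actually close this gap: you say one ``rebuilds the $\epsilon$-net hierarchy on $\hat{G}$'', which smuggles $\hat{G}$ into the hypothesis, and then conclude that $\dsketch(\{e\})$ is obtained ``from $\id(e)$ alone''---these two claims are in tension. Contrast with the randomized analogue \Cref{lem:edge-sketch-from-seed}, where $e \in \hat{E}_{q,i,j}$ is decidable purely by evaluating the hash functions $h_{q,i}$ and $\varphi_{q,i}$ on data contained in $\eid(e)$; here the $\epsilon$-nets play the role of $\varphi_{q,i}$ but are not determined by any short seed.

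In fairness, the paper's own one-line proof (``follows directly from \ldots and the definition of $\dsketch$'') does not engage with this point either---you have gone further than the paper and surfaced a genuine imprecision in the lemma as stated. The place it bites is at \emph{query} time: in the deterministic counterpart of \Cref{sect:query} one must assemble $\dsketch(E^*(F \to Q))$ from the $\id(e)$ values stored in the labels of $F$, but the decoder sees only the labels of $\{s,t\}\cup F$ and has no means to reconstruct $\hat{G}$ or its $\epsilon$-net hierarchy. So your ``preprocessing access to $\hat{G}$'' reading does not rescue the intended application, and neither your argument nor the paper's one-liner justifies the second sentence of the lemma as literally written.
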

\begin{proof}
    Follows directly from \Cref{thm:miss-hit-hash}, \Cref{lem:xids}, and the definition of $\dsketch$.
\end{proof}

\paragraph{The labels.}
The labels are constructed as in \Cref{sect:labels}, with the following modifications.
Replace $\sketch$ by $\dsketch$ and $\eid$ by $\id$.
Change Line 1 of \Cref{alg:vertex-labels} to ``\textbf{store} the list $\ang{N, a_{\miss}, b_{\hit}, M, \beta}$." 

The length analysis is as in \Cref{sect:labels}, but the size of a $\dsketch(\cdot)$ is now $O(f^5 \log^{11} n)$ bits, and there are $O(f^4 \log^8 n)$ edges oriented away from a vertex $v$. 
This results in a total label length of $O(f^7 \log^{13} n)$ bits.

\subsection{Modifications to the Query Algorithm}
The algorithm to answer a connectivity query $\ang{s,t,F}$, $|F|\leq f$, 
is virtually identical to the one described in \Cref{sect:Boruvka,sect:query}, 
except that there is now no probability of failure.

Fix some round $q$ and part $P \in \mathcal{P}_{q-1}$.
Our goal is to implement $\GetEdge(P,F)$,
namely to return $\id(e_P)$ for some $e_P \in E^*_{\cut}(P)$ 
that is not incident to $F$,
or report that no such $e_P$ exists.
Recall that by (I2), we know $\dsketch_F^*(P)$,
which is the $\dsketch(\cdot)$ of the edge set 
$E^*_{\cut}(P) - E^*(F\to P)$.
We now show how to extract $e_P$ (if such an edge exists) from this sketch.

\medskip 

First, we enumerate all indices $i$ such that 
$h_i \in \mathscr{H}$ misses both $F$ and all affected $K\in\K(S)$.\footnote{Recall that can compute $\mathscr{H}$ from the given labels, by \Cref{lem:det-seeds}.}
That is, we compute the set
\[
I \bydef \{ i\in [1,k] \mid \text{$h_i(a) = h_i (K) = 0$  for every $a \in F$ and affected $K \in \K(S)$}\}.
\]

\begin{lemma}\label{lem:miss-indices}
    Recall the definition of $\hat{E}_i (P) = R\cap E(\hat{G}_i) = R_1\cap R_2\cap R_3 \cap E(\hat{G}_i)$ in the geometric view of  \Cref{lem:rectangles}.
    For any $i \in I$, 
    $\hat{E}_i (P) = \paren{E^*_{\cut} (P) - E^*(F \to P)} \cap E(\hat{G}_i)$.
\end{lemma}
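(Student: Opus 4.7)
The plan is to prove set-equality by separately establishing both inclusions, using the geometric characterization of \Cref{lem:rectangles} to unpack $R = R_1 \cap R_2 \cap R_3$, and using the miss-hit property of $h_i$ (for $i \in I$) to handle (C1) and avoid $F$.

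First, for the forward inclusion $\hat{E}_i(P) \subseteq (E^*_{\cut}(P) - E^*(F \to P)) \cap E(\hat{G}_i)$, I would take $e = \{u,v\} \in \hat{E}_i(P)$, oriented $u \to v$. Membership in $R_1, R_2, R_3$ gives (by parts 1--3 of \Cref{lem:rectangles}) that $e$ crosses the cut $(P, V(\hat{G}) - P)$, satisfies (C2), and that neither $u$ nor $v$ lies in $F$. To get that $e \in E^*$, I still need (C1), which I obtain from $i \in I$: membership $e \in E(\hat{G}_i)$ forces $h_i(\type(e)) = 1$, while $i \in I$ means $h_i$ evaluates to $0$ on the type of every affected component; hence $\type(e)$ is either \emph{original} or an unaffected component, i.e.\ (C1) holds. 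Combined with (C2) and the cut condition, $e \in E^*_{\cut}(P)$. Finally, $u \notin F$ (from $R_3$) prevents $e$ from being oriented from $F$ to $P$, so $e \notin E^*(F \to P)$.

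Next, for the reverse inclusion, I would take $e = \{u,v\} \in (E^*_{\cut}(P) - E^*(F \to P)) \cap E(\hat{G}_i)$, oriented $u \to v$. Since $e \in E^*_{\cut}(P)$, it crosses the cut (giving $R_1$) and satisfies (C2) (giving $R_2$). The delicate step is $R_3$: I must show $u, v \notin F$. Because $e \in E(\hat{G}_i)$, we have $h_i(v) = 1$, and $i \in I$ forces $h_i$ to vanish on $F$, so $v \notin F$. For $u$, I argue by contradiction: if $u \in F$, then since every part of $\mathcal{P}_{q-1}$ is disjoint from $F$ we have $u \notin P$, and because $e$ crosses the cut this forces $v \in P$. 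But then the orientation $u \to v$ with $u \in F$, $v \in P$, and $e \in E^*$ yields $e \in E^*(F \to P)$, contradicting the hypothesis. Hence $u \notin F$, which completes $R_3$.

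I expect the main (minor) obstacle to be bookkeeping around the orientation and the distinction between ``not oriented from $F$'' and ``not incident to $F$''. These are not the same in general, but the miss-hit selector $h_i$ rules out $v \in F$ automatically, while the $E^*(F\to P)$ exclusion, together with $F \cap P = \emptyset$, rules out $u \in F$; together they upgrade ``not in $E^*(F\to P)$'' to the stronger condition $R_3$ needed by the geometric description. No further calculation is required.
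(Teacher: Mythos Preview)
Your proof is correct. It differs in presentation from the paper's argument, though the underlying facts used are the same. The paper does not chase elements through two inclusions; instead it identifies, for general $i$, the symmetric difference between $R\cap E(\hat{G})$ and $E^*_{\cut}(P)-E^*(F\to P)$ as being contained in two ``bad'' sets $W_1 = E^*(P\to F)$ and $W_2 = \{e\in R\cap E(\hat{G}) : \type(e)\text{ is an affected }K\}$, and then observes that the miss property of $h_i$ for $i\in I$ forces $W_1\cap E(\hat{G}_i)=W_2\cap E(\hat{G}_i)=\emptyset$, so the two sets agree after intersecting with $E(\hat{G}_i)$. Your direct two-inclusion argument is arguably cleaner and makes the role of each hypothesis (the $R_1,R_2,R_3$ conditions, the orientation, and the miss on $F$ and on affected types) more transparent; the paper's approach has the small advantage of stating explicitly which edges can cause the two sets to disagree before restricting to $\hat{G}_i$. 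One nit: you refer to ``every part of $\mathcal{P}_{q-1}$'' being disjoint from $F$, but in the lemma's setting $P$ is a union of parts of $\mathcal{P}_0$; the conclusion $P\cap F=\emptyset$ is of course still valid since $\mathcal{P}_0$ partitions $V(G^*)-F$.
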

\begin{proof}
    If $e\in R\cap E(\hat{G}) = R_1\cap R_2\cap R_3\cap E(\hat{G})$ then 
    $e$ has one endpoint in $P$ ($e\in R_1$),
    has both endpoints in affected components ($e\in R_2$), 
    and neither endpoint is in $F$ ($e\in R_3$).
    Thus, the sets 
    $R\cap E(\hat{G})$ and 
    $E^*_{\cut}(P) - E^*(F\to P)$
    disagree on the following two edge sets.
    \begin{itemize}
    \item $W_1$ is the edge set $E^*(P\to F)$.
    \item $W_2$ is the set of all $e\in R\cap E(\hat{G})$
    of type $K$, for some affected $K\in\K(S)$.
    \end{itemize}
    Observe that $W_1-W_2$ is a subset of $E^*_{\cut}(P)-E^*(F\to P)$ but disjoint from $R\cap E(\hat{G})$, 
    whereas $W_2$ is a subset of $R\cap E(\hat{G})$
    but disjoint from $E^*_{\cut}(P)-E^*(F\to P)$.
    By definition of $i\in I$, $h_i$ misses $F$ and all affected $K\in \K(S)$, so $W_1\cap E(\hat{G}_i) = W_2\cap E(\hat{G}_i) = \emptyset$.
    Thus,
    \begin{align*}
    \hat{E}_i(P) \bydef R\cap E(\hat{G}_i) 
    &= \paren{E^*_{\cut}(P)\cup W_2 - E^*(F \to P) - (W_1-W_2)} \cap E(\hat{G}_i) && \text{holds for any $i$}\\
    &= \paren{E^*_{\cut} (P) - E^*(F \to P)} \cap E(\hat{G}_i). && \text{holds for $i\in I$.}
    \end{align*}
\end{proof}

Suppose there is some $i \in I$ such that the $i$th row of $\dsketch_F^*(P)$ is not all zeros.
By \Cref{lem:miss-indices}, for each $j\in [0, h)$, 
the $ij$ entry of this $\dsketch$ equals
\[
\dsketch_{ij} (E^*_{\cut} (P) - E^* (F \to P))
=
\dsketch_{ij} (\hat{E}_i (P))
=
\bigoplus_{e \in \hat{E}_i (P) \cap \hat{E}_{i,j}} \xid_{\beta} (e).
\]
By \Cref{lem:hitting-sets},
if $j$ is the \emph{largest} index 
such that entry $ij$ is nonzero, then
$1 \leq |\hat{E}_i (P) \cap \hat{E}_{i,j}| \leq \beta$.
Therefore, we can apply \Cref{lem:xids} and recover $\eid(e_P)$ 
for some edge $e_P \in \hat{E}_i (P) \cap \hat{E}_{i,j}$,
which is in $E^*_{\cut}(P)$ and has neither endpoint in $F$.

On the other hand, if, for every $i\in I$, row $i$
of $\dsketch_F^*(P)$ is all-zero, 
we report that no such edge $e_P$ exists, i.e.,
$P$ is a connected component in $G^* - F$.

In contrast to the randomized sketches,
there is zero probability of false-positives (returning
an incorrect edge $e_P$).  We now prove that
the probability the procedure \textsc{fail}s 
to return some $e_P$ when there is such an edge is also zero.
Suppose $e_P=\{u,v\}\in E^*_{\cut}(P)$ 
is an eligible edge oriented as $u\to v$. Then $v\not\in F$.
Also, $\type(e_P)$ is not an affected component by (C1).
By the properties of the miss-hit family $\mathscr{H}$, 
there exists $i \in I$ such that $h_i$ hits both $\type(e_P)$ and $v$.
Hence, by \Cref{lem:miss-indices}, $e_P \in \hat{E}_i(P) \neq \emptyset$.
Thus, by \Cref{lem:miss-indices} and \Cref{lem:hitting-sets}, row $i$ of $\dsketch_F^*(P)$
cannot be all-zero.

This concludes the proof of \Cref{thm:main-det}.

\section{Lower Bounds}\label{sect:lowerbounds}

A labeling scheme for answering connectivity queries
$\ang{s,t,F}$ could assign different lengths to the deleted vertices $F$ and the query vertices $s,t$.  We could also 
consider queries without $s,t$, that just report whether $F$ is a cut, or count how many connected components are in $G-F$, etc.
Theorems~\ref{thm:label-lb} and \ref{thm:is-a-cut-query}
give some lower bounds on the label-lengths of such schemes.

\begin{theorem}\label{thm:label-lb}
Consider a vertex fault tolerant 
labeling scheme $(L_0,L_1)$ 
where $L_i$ assigns $b_i$-bit labels.
Given $L_0(s)$, $L_0(t)$ and $\{L_1(v) \mid v\in F\}$, 
where $|F|\leq f$,
it reports whether $s,t$ are connected in $G-F$.
Then either $b_0 = \Omega(f)$ or $b_1=\Omega(n)$.
\end{theorem}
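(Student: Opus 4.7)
My plan is to prove the contrapositive via an information-theoretic encoding argument based on a bipartite graph family in which each ``bit'' of a parameter is recoverable by exactly one query.

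I would take vertex set $V = \{s\} \cup F_0 \cup T_0$ with $|F_0| = f$ and $|T_0| = n-f-1$, and for each function $x : F_0 \times T_0 \to \{0,1\}$ define the graph $G_x$ whose edge set consists of $\{s,v\}$ for every $v \in F_0$ together with $\{v,w\}$ for every pair $(v,w)\in F_0\times T_0$ with $x(v,w)=1$. The key structural claim, which I would verify first, is that for every $(v^*, w^*)\in F_0\times T_0$ the query $\langle s, w^*, F_0\setminus\{v^*\}\rangle$ (whose fault set has size $f-1\le f$) returns \emph{connected} iff $x(v^*,w^*)=1$. This is because deleting $F_0\setminus\{v^*\}$ leaves $v^*$ as $s$'s only surviving neighbor, and the surviving $T_0$-neighbors of $v^*$ are exactly $\{w : x(v^*,w)=1\}$.

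With this in place I would execute the counting step: across the entire query family above, the only labels the decoder ever consults are $L_0(s)$, the $L_0$-labels of vertices in $T_0$, and the $L_1$-labels of vertices in $F_0$, totalling $(n-f)b_0 + f b_1$ bits. Since distinct $x$'s yield distinct decoded answer-vectors, this tuple must injectively encode $x$, hence
\[
(n-f)\,b_0 \;+\; f\,b_1 \;\geq\; f\,(n-f-1).
\]
For $f \le n/2$ this immediately gives $b_0 = \Omega(f)$ or $b_1 = \Omega(n-f) = \Omega(n)$. For $f > n/2$ I would apply the same argument with a reduced parameter $f' = \lfloor n/4\rfloor \le f$ (a scheme tolerating $f$ faults trivially handles $\le f'$ faults), yielding $b_0 = \Omega(n)$ or $b_1 = \Omega(n)$; since $f \le n$ both conclusions imply the claim.

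The main delicate step is the structural claim---ensuring that removing the fault set really does isolate $s$ except through the single vertex $v^*$. The rigid bipartite, $s$-centred topology (no edges inside $F_0$, no edges inside $T_0$, no edge $\{s,w\}$ for $w\in T_0$) makes this automatic, so the entire argument reduces to the counting inequality above; I anticipate no conceptual obstacles beyond the $f \ge n/2$ boundary case handled by the reduction to $f'$.
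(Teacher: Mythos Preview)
Your proof is correct and follows essentially the same bipartite-encoding argument as the paper: both construct a family of bipartite graphs in which deleting all but one vertex on one side lets a single connectivity query recover one edge bit, then count total label bits against the number of graphs. The paper takes $s\in L$, $t\in R$, $F=R\setminus\{t\}$ in a subgraph of $K_{n,f+1}$, while you instead add a fixed hub $s$ and vary edges between $F_0$ and $T_0$; this is a cosmetic variation, though your version is slightly more careful about fitting into exactly $n$ vertices and explicitly handling the $f>n/2$ regime.
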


\begin{proof}
Suppose that $b_0=o(f)$ and $b_1=o(n)$.  Consider
any subgraph $G$ of the complete bipartite graph 
$K_{n,f+1} = (L\cup R, L\times R)$, where $|L|=n$, $|R|=f+1$.
For every $s\in L$ and $t\in R$ we set $F=R-\{t\}$
and query whether $s$ and $t$ are connected in $G-F$,
which is tantamount to asking whether the edge $\{s,t\}$ exists.
If the labeling scheme is capable of answering all queries with high probability, it can reconstruct $G$.  However, this is not possible, as the total number of bits in all labels used to reconstruct $G$ is 
$o(fn)$, but there are $2^{(f+1)n}$ choices of $G$.
\end{proof}

\begin{theorem}\label{thm:is-a-cut-query}
Consider a $b$-bit vertex labeling scheme $L_0$,
that given $\{L_0(v) \mid v\in F\}$, 
$|F|\leq f$, 
reports whether $G-F$ is disconnected.
Then $b=\Omega(4^f / f^{3/2})$.
\end{theorem}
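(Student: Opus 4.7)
The plan is a standard counting/encoding argument, in the spirit of \Cref{thm:label-lb}: I construct a family of $2^{\binom{2f}{f}}$ graphs indexed by subsets of $\binom{V_1}{f}$ for a distinguished set $V_1$ of $2f$ vertices, such that the is-a-cut queries uniquely determine the index. Since each such query only reads $f$ labels, the combined $2fb$ bits stored on $V_1$ must exceed $\binom{2f}{f}=\Theta(4^f/\sqrt{f})$, yielding $b=\Omega(4^f/f^{3/2})$.

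Concretely, fix a set $V_1=\{x_1,\ldots,x_{2f}\}$, and for every $F\in\binom{V_1}{f}$ prepare an auxiliary vertex $y_F$. For each pattern $\pi:\binom{V_1}{f}\to\{0,1\}$, define $G_\pi$ on vertex set $V_1\cup\{y_F\mid\pi(F)=1\}$ whose edges are the clique on $V_1$ together with the star connecting each active $y_F$ to all vertices of $F$. I will then prove the key claim: for every query set $F^*\in\binom{V_1}{f}$, the graph $G_\pi-F^*$ is disconnected iff $\pi(F^*)=1$. The ``only if'' direction is immediate---if $\pi(F^*)=1$, then $y_{F^*}$ is present and all of its $f$ neighbors (namely $F^*$) are deleted, so $y_{F^*}$ becomes an isolated vertex while $V_1-F^*$ remains nonempty. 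For the ``if'' direction, suppose $\pi(F^*)=0$. The surviving graph contains the clique on $V_1-F^*$ (still connected, of size $f\ge 1$), and each remaining $y_F$ satisfies $F\neq F^*$, $|F|=|F^*|=f$, hence $F-F^*\neq\emptyset$; so $y_F$ has a neighbor in the clique $V_1-F^*$, and the entire graph is connected.

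Having established the claim, two patterns $\pi\neq\pi'$ disagree at some $F^*$, and then the query $F^*$ on $G_\pi$ and on $G_{\pi'}$ must return different answers. Because the query inspects only $\{L_0(v)\mid v\in F^*\}$, this multiset differs between the two labelings, so in particular the labelings restricted to $V_1$ must differ. Hence the map $\pi\mapsto (L_0(v))_{v\in V_1}$ is injective from $\{0,1\}^{\binom{2f}{f}}$ into $(\{0,1\}^b)^{2f}$, forcing $2fb\ge\binom{2f}{f}$. Using Stirling, $\binom{2f}{f}=\Theta(4^f/\sqrt{f})$, so $b=\Omega(4^f/f^{3/2})$.

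There is no real obstacle---the only points that need verification are (i) the connectivity of $G_\pi-F^*$ when $\pi(F^*)=0$, handled by the clique on $V_1$ and the size argument $|F-F^*|\ge 1$ for $F\neq F^*$ of equal size $f$; and (ii) that the total vertex count $2f+\binom{2f}{f}$ is finite, so the construction is a legitimate finite graph on which the scheme must work. Note this matches \Cref{thm:label-lb} in spirit: there, queries involving $s,t$ together with $F$ force $\Omega(n)$-bit labels via a quadratic-sized bipartite encoding, whereas here the limitation to queries on only $F$ forces the exponential-in-$f$ dependence.
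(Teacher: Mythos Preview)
Your proof is correct and follows essentially the same counting argument as the paper: encode $\binom{2f}{f}$ independent bits in a graph so that each bit is recoverable by an is-a-cut query on a specific $f$-subset of a fixed $2f$-vertex set, forcing $2fb\ge\binom{2f}{f}$. The only cosmetic difference is the gadget---the paper uses a bipartite graph on a \emph{fixed} vertex set where each $v_i$ is either fully adjacent to $R$ or only to $F_i$, whereas you use a clique on $V_1$ with auxiliary vertices $y_F$ present or absent; both work, though note you have interchanged the names ``if'' and ``only if'' when proving the key claim (the content is fine).
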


\begin{proof}
    Denote $n = \binom{2f}{f}$.
    Fix a bijection $\varphi: [n] \to \binom{[2f]}{f}$ mapping integers in $[n]$ to $f$-subsets of $[2f]$.
    We construct a bipartite graph $G = (L \cup R, E)$, with $L = \{v^*\} \cup \{v_1, \dots, v_n\}$ and $R = \{u_1, \dots, u_{2f} \}$, as follows:
    First connect $v^*$ to all of $R$.
    Then, for each $i \in [n]$, $v_i$ has edges either (1) to all of $R$, or (2) only to $F_i \bydef \{u_j \mid j \in \varphi(i)\}$.
    There are $2^n$ possible choices of $G$.
    For each $v_i$, we can determine if (1) or (2) holds by querying $F_i$, since $G-F_i$ is connected iff option (1) holds.
    Hence, we can reconstruct $G$ from the $2fb$ bits in the labels of $R$.
    Therefore, $2bf = \Omega(n)$, so $b = \Omega(n/f) = \Omega(\binom{2f}{f} / f)$.
    As the central binomial coefficient is asymptotically $\binom{2f}{f} = \Theta(4^f / \sqrt{f})$ we obtain $b = \Omega(4^f / f^{3/2}) = \Omega(n/\log n)$.
\end{proof}

\section{Conclusion}\label{sect:conclusion}

In this work we provide a new $f$-VFT labeling scheme for \emph{connectivity} whose label length is polynomial in the number of faults $f$. The main novelty in our approach is in devising a generalization of the Duan-Pettie 
low-degree decomposition~\cite{DuanP20}, 
that can be stored distributively in short labels. 
Beyond optimizing the $\tilde{O}(f^3)$-bound 
of our randomized construction, our work leaves several interesting open problems. 

\paragraph{Distances.}
The spanning trees of Theorem~\ref{thm:S-decomposition} have no \emph{stretch}
guarantee.  It is an interesting open problem to develop $f$-VFT labeling schemes
for \emph{approximate distances}, and routing schemes with good stretch guarantees, on general graphs.  See~\cite{TZ05,AbrahamG11} for non-fault-tolerant
distance labelings for general graphs, \cite{GavoillePPR04,GawrychowskiU23,Peleg00b,GavoillePPR04,AlstrupGHP16a} 
for distance labeling schemes for restricted graph classes, and \cite{CourcelleT07,AbrahamCGP16,BaswanaCHR20,Bar-NatanCGMW22} 
for VFT distance labeling schemes on restricted graph classes.
See~\cite{ChechikLPR12,Chechik13b,Rajan12,DoryP21} for EFT distance labeling schemes on general graphs.

\paragraph{Zero-Error Labels.}
Any randomized $f$-VFT or $f$-EFT labeling scheme for connectivity with error probability $1/\poly(n)$ on each query can be made error-free, with high probability, at the cost of increasing the label length by an $\Theta(f)$-factor.%
\footnote{Concatenate $2f+1$ independent copies of the labels.
A connectivity query is answered as the majority-vote according to the $2f+1$ labels.
The error probability is ${2f+1\choose f+1}(n^{-c})^{f+1} \leq n^{-(c-1)(f+1)}$. 
For $c\geq 3$, by a union bound all $n^{f+2}$ queries are answered correctly, w.h.p.}
This transformation yields $\tilde{O}(f)$-bit labels for $f$-EFT connectivity~\cite{DoryP21} and $\tilde{O}(f^4)$-bit labels for $f$-EVT, from Theorem~\ref{thm:main-randomized}.
Whether these label-lengths can be achieved by a polynomial-time deterministic algorithm, or failing that, a \emph{Las Vegas} randomized algorithm, is an interesting open problem. 
It is also open whether $\tilde{\Omega}(f)$ bits are even necessary for zero-error $f$-EFT connectivity labeling schemes.

\paragraph{Cut Labels.}
Theorem~\ref{thm:is-a-cut-query} suggests another interesting open problem: given labels for $F$, to determine if $G-F$ is disconnected.  
Is there a labeling scheme for this problem with size  $\tilde{O}(\min\{4^f,n\})$, or even $\tilde{O}(1)$ when $f$ is constant?  
This problem is open for all $f\geq 2$; cf.~\cite{ParterP22a}.

% % \cleardoublepage
% \phantomsection
% \addcontentsline{toc}{section}{References}
\bibliographystyle{alphaurl}
\bibliography{references}

\newcommand{\etalchar}[1]{$^{#1}$}
\begin{thebibliography}{GKKT15}

\bibitem[AAK{\etalchar{+}}06]{AbiteboulAKMR06}
Serge Abiteboul, Stephen Alstrup, Haim Kaplan, Tova Milo, and Theis Rauhe.
\newblock Compact labeling scheme for ancestor queries.
\newblock {\em {SIAM} J. Comput.}, 35(6):1295--1309, 2006.
\newblock \href {https://doi.org/10.1137/S0097539703437211} {\path{doi:10.1137/S0097539703437211}}.

\bibitem[ACG12]{AbrahamCG12}
Ittai Abraham, Shiri Chechik, and Cyril Gavoille.
\newblock Fully dynamic approximate distance oracles for planar graphs via forbidden-set distance labels.
\newblock In {\em Proceedings 44th ACM Symposium on Theory of Computing ({STOC})}, pages 1199--1218, 2012.
\newblock \href {https://doi.org/10.1145/2213977.2214084} {\path{doi:10.1145/2213977.2214084}}.

\bibitem[ACGP16]{AbrahamCGP16}
Ittai Abraham, Shiri Chechik, Cyril Gavoille, and David Peleg.
\newblock Forbidden-set distance labels for graphs of bounded doubling dimension.
\newblock {\em {ACM} Trans. Algorithms}, 12(2):22:1--22:17, 2016.
\newblock \href {https://doi.org/10.1145/2818694} {\path{doi:10.1145/2818694}}.

\bibitem[AG11]{AbrahamG11}
Ittai Abraham and Cyril Gavoille.
\newblock On approximate distance labels and routing schemes with affine stretch.
\newblock In {\em Proceedings 25th International Symposium on Distributed Computing ({DISC})}, pages 404--415, 2011.
\newblock \href {https://doi.org/10.1007/978-3-642-24100-0_39} {\path{doi:10.1007/978-3-642-24100-0_39}}.

\bibitem[AGHP16]{AlstrupGHP16a}
Stephen Alstrup, Inge~Li G{\o}rtz, Esben~Bistrup Halvorsen, and Ely Porat.
\newblock Distance labeling schemes for trees.
\newblock In {\em Proceedings 43rd International Colloquium on Automata, Languages, and Programming (ICALP)}, volume~55 of {\em LIPIcs}, pages 132:1--132:16. Schloss Dagstuhl - Leibniz-Zentrum f{\"{u}}r Informatik, 2016.
\newblock \href {https://doi.org/10.4230/LIPIcs.ICALP.2016.132} {\path{doi:10.4230/LIPIcs.ICALP.2016.132}}.

\bibitem[AGM12]{AhnGM12}
Kook~J. Ahn, Supdipto Guha, and Andrew McGregor.
\newblock Analyzing graph structure via linear measurements.
\newblock In {\em Proceedings of the 23rd Annual ACM-SIAM Symposium on Discrete Algorithms (SODA)}, pages 459--467, 2012.

\bibitem[BCG{\etalchar{+}}22]{Bar-NatanCGMW22}
Aviv Bar{-}Natan, Panagiotis Charalampopoulos, Pawel Gawrychowski, Shay Mozes, and Oren Weimann.
\newblock Fault-tolerant distance labeling for planar graphs.
\newblock {\em Theor. Comput. Sci.}, 918:48--59, 2022.
\newblock \href {https://doi.org/10.1016/j.tcs.2022.03.020} {\path{doi:10.1016/j.tcs.2022.03.020}}.

\bibitem[BCHR20]{BaswanaCHR20}
Surender Baswana, Keerti Choudhary, Moazzam Hussain, and Liam Roditty.
\newblock Approximate single-source fault tolerant shortest path.
\newblock {\em {ACM} Trans. Algorithms}, 16(4):44:1--44:22, 2020.
\newblock \href {https://doi.org/10.1145/3397532} {\path{doi:10.1145/3397532}}.

\bibitem[Bor26]{Bor26}
Otakar Bor\r{u}vka.
\newblock {O} jist\'{e}m probl\'{e}mu minim\'{a}ln\'{\i}m.
\newblock {\em Pr\'{a}ce Moravsk\'{e} P\v{r}\'{\i}rodov\v{e}deck\'{e} Spole\v{c}nosti}, 3:37--58, 1926.
\newblock In Czech.

\bibitem[BT96]{DiBattistaT96}
Giuseppe~Di Battista and Roberto Tamassia.
\newblock On-line maintenance of triconnected components with {SPQR}-trees.
\newblock {\em Algorithmica}, 15:302--318, 1996.

\bibitem[CGKT08]{CourcelleGKT08}
Bruno Courcelle, Cyril Gavoille, Mamadou~Moustapha Kant{\'{e}}, and Andrew Twigg.
\newblock Connectivity check in 3-connected planar graphs with obstacles.
\newblock {\em Electron. Notes Discret. Math.}, 31:151--155, 2008.
\newblock \href {https://doi.org/10.1016/j.endm.2008.06.030} {\path{doi:10.1016/j.endm.2008.06.030}}.

\bibitem[Che13]{Chechik13b}
Shiri Chechik.
\newblock Fault-tolerant compact routing schemes for general graphs.
\newblock {\em Inf. Comput.}, 222:36--44, 2013.
\newblock \href {https://doi.org/10.1016/j.ic.2012.10.009} {\path{doi:10.1016/j.ic.2012.10.009}}.

\bibitem[CK09]{ChuzhoyK09}
Julia Chuzhoy and Sanjeev Khanna.
\newblock An $o(k^3\log n)$-approximation algorithm for vertex-connectivity survivable network design.
\newblock In {\em Proceedings of the 50th Annual {IEEE} Symposium on Foundations of Computer Science (FOCS)}, pages 437--441, 2009.

\bibitem[CLPR12]{ChechikLPR12}
Shiri Chechik, Michael Langberg, David Peleg, and Liam Roditty.
\newblock $f$-sensitivity distance oracles and routing schemes.
\newblock {\em Algorithmica}, 63(4):861--882, 2012.
\newblock \href {https://doi.org/10.1007/s00453-011-9543-0} {\path{doi:10.1007/s00453-011-9543-0}}.

\bibitem[CT07]{CourcelleT07}
Bruno Courcelle and Andrew Twigg.
\newblock Compact forbidden-set routing.
\newblock In {\em Proceedings 24th Annual Symposium on Theoretical Aspects of Computer Science (STACS)}, volume 4393 of {\em Lecture Notes in Computer Science}, pages 37--48. Springer, 2007.
\newblock \href {https://doi.org/10.1007/978-3-540-70918-3\_4} {\path{doi:10.1007/978-3-540-70918-3\_4}}.

\bibitem[DGR21]{DuanGR21}
Ran Duan, Yong Gu, and Hanlin Ren.
\newblock Approximate distance oracles subject to multiple vertex failures.
\newblock In {\em Proceedings of the 32nd {ACM-SIAM} Symposium on Discrete Algorithms (SODA)}, pages 2497--2516, 2021.

\bibitem[DP20]{DuanP20}
Ran Duan and Seth Pettie.
\newblock Connectivity oracles for graphs subject to vertex failures.
\newblock {\em {SIAM} J. Comput.}, 49(6):1363--1396, 2020.
\newblock \href {https://doi.org/10.1137/17M1146610} {\path{doi:10.1137/17M1146610}}.

\bibitem[DP21]{DoryP21}
Michal Dory and Merav Parter.
\newblock Fault-tolerant labeling and compact routing schemes.
\newblock In {\em Proceedings of the 40th {ACM} Symposium on Principles of Distributed Computing (PODC)}, pages 445--455, 2021.
\newblock \href {https://doi.org/10.1145/3465084.3467929} {\path{doi:10.1145/3465084.3467929}}.

\bibitem[FR94]{FurerR94}
Martin F{\"{u}}rer and Balaji Raghavachari.
\newblock Approximating the minimum-degree {S}teiner tree to within one of optimal.
\newblock {\em J.~Algor.}, 17(3):409--423, 1994.
\newblock \href {https://doi.org/10.1006/jagm.1994.1042} {\path{doi:10.1006/jagm.1994.1042}}.

\bibitem[GKKT15]{GibbKKT15}
David Gibb, Bruce~M. Kapron, Valerie King, and Nolan Thorn.
\newblock Dynamic graph connectivity with improved worst case update time and sublinear space.
\newblock {\em CoRR}, abs/1509.06464, 2015.

\bibitem[GP16]{GhaffariP16}
Mohsen Ghaffari and Merav Parter.
\newblock {MST} in log-star rounds of congested clique.
\newblock In {\em Proceedings of the 35th {ACM} Symposium on Principles of Distributed Computing ({PODC})}, pages 19--28, 2016.
\newblock \href {https://doi.org/10.1145/2933057.2933103} {\path{doi:10.1145/2933057.2933103}}.

\bibitem[GPPR04]{GavoillePPR04}
Cyril Gavoille, David Peleg, St{\'{e}}phane P{\'{e}}rennes, and Ran Raz.
\newblock Distance labeling in graphs.
\newblock {\em J. Algorithms}, 53(1):85--112, 2004.
\newblock \href {https://doi.org/10.1016/j.jalgor.2004.05.002} {\path{doi:10.1016/j.jalgor.2004.05.002}}.

\bibitem[GU23]{GawrychowskiU23}
Pawel Gawrychowski and Przemyslaw Uznanski.
\newblock Better distance labeling for unweighted planar graphs.
\newblock {\em Algorithmica}, 85(6):1805--1823, 2023.
\newblock \href {https://doi.org/10.1007/s00453-023-01133-z} {\path{doi:10.1007/s00453-023-01133-z}}.

\bibitem[HKNS15]{HenzingerKNS15}
Monika Henzinger, Sebastian Krinninger, Danupon Nanongkai, and Thatchaphol Saranurak.
\newblock Unifying and strengthening hardness for dynamic problems via the online matrix-vector multiplication conjecture.
\newblock In {\em Proceedings of the 47th Annual {ACM} Symposium on Theory of Computing (STOC)}, pages 21--30, 2015.

\bibitem[IEWM23]{IzumiEWM23}
Taisuke Izumi, Yuval Emek, Tadashi Wadayama, and Toshimitsu Masuzawa.
\newblock Deterministic fault-tolerant connectivity labeling scheme with adaptive query processing time.
\newblock In {\em Proceedings of the 42nd ACM Symposium on Principles of Distributed Computing (PODC)}, 2023.
\newblock URL: \url{https://doi.org/10.48550/arXiv.2208.11459}.

\bibitem[KB10]{KhannaB10}
Neelesh Khanna and Surender Baswana.
\newblock Approximate shortest paths avoiding a failed vertex: Optimal size data structures for unweighted graphs.
\newblock In {\em Proceedings 27th Int'l Symposium on Theoretical Aspects of Computer Science (STACS)}, pages 513--524, 2010.

\bibitem[KKM13]{KapronKM13}
Bruce~M. Kapron, Valerie King, and Ben Mountjoy.
\newblock Dynamic graph connectivity in polylogarithmic worst case time.
\newblock In {\em Proceedings of the 24th Annual ACM-SIAM Symposium on Discrete Algorithms (SODA)}, pages 1131--1142, 2013.

\bibitem[KP21]{KarthikP21}
{Karthik {C. S.}} and Merav Parter.
\newblock Deterministic replacement path covering.
\newblock In {\em Proceedings of the 32nd {ACM-SIAM} Symposium on Discrete Algorithms (SODA)}, pages 704--723, 2021.
\newblock \href {https://doi.org/10.1137/1.9781611976465.44} {\path{doi:10.1137/1.9781611976465.44}}.

\bibitem[KPP16]{KopelowitzPP16}
Tsvi Kopelowitz, Seth Pettie, and Ely Porat.
\newblock Higher lower bounds from the {3SUM} conjecture.
\newblock In {\em Proceedings of the 27th Annual {ACM-SIAM} Symposium on Discrete Algorithms ({SODA})}, pages 1272--1287, 2016.
\newblock \href {https://doi.org/10.1137/1.9781611974331.ch89} {\path{doi:10.1137/1.9781611974331.ch89}}.

\bibitem[KTBC91]{KanevskyTBC91}
Arkady Kanevsky, Roberto Tamassia, Giuseppe~Di Battista, and Jianer Chen.
\newblock On-line maintenance of the four-connected components of a graph.
\newblock In {\em Proceedings of the 32nd IEEE Symposium on Foundations of Computer Science (FOCS)}, pages 793--801, 1991.

\bibitem[LS22]{LongS22}
Yaowei Long and Thatchaphol Saranurak.
\newblock Near-optimal deterministic vertex-failure connectivity oracles.
\newblock In {\em Proceedings 63rd Annual {IEEE} Symposium on Foundations of Computer Science ({FOCS})}, pages 1002--1010, 2022.
\newblock \href {https://doi.org/10.1109/FOCS54457.2022.00098} {\path{doi:10.1109/FOCS54457.2022.00098}}.

\bibitem[MU05]{MitzenmacherU05}
Michael Mitzenmacher and Eli Upfal.
\newblock {\em Probability and Computing: Randomized Algorithms and Probabilistic Analysis}.
\newblock Cambridge University Press, 2005.
\newblock \href {https://doi.org/10.1017/CBO9780511813603} {\path{doi:10.1017/CBO9780511813603}}.

\bibitem[NI92]{NagamochiI92}
Hiroshi Nagamochi and Toshihide Ibaraki.
\newblock A linear-time algorithm for finding a sparse $k$-connected spanning subgraph of a $k$-connected graph.
\newblock {\em Algorithmica}, 7(5{\&}6):583--596, 1992.

\bibitem[NMN01]{NesetrilMN01}
Jaroslav Nesetril, Eva Milkov{\'{a}}, and Helena Nesetrilov{\'{a}}.
\newblock {O}takar {B}or\r{u}vka on minimum spanning tree problem ---translation of both the 1926 papers, comments, history.
\newblock {\em Discret. Math.}, 233(1--3):3--36, 2001.
\newblock \href {https://doi.org/10.1016/S0012-365X(00)00224-7} {\path{doi:10.1016/S0012-365X(00)00224-7}}.

\bibitem[NN93]{NN93}
Joseph Naor and Moni Naor.
\newblock Small-bias probability spaces: Efficient constructions and applications.
\newblock {\em SIAM J.~Comput.}, 22(4):838--856, 1993.

\bibitem[Pel00]{Peleg00b}
David Peleg.
\newblock Proximity-preserving labeling schemes.
\newblock {\em J. Graph Theory}, 33(3):167--176, 2000.

\bibitem[PP22]{ParterP22a}
Merav Parter and Asaf Petruschka.
\newblock {\~{O}}ptimal dual vertex failure connectivity labels.
\newblock In {\em Proceedings of the 36th International Symposium on Distributed Computing ({DISC})}, volume 246 of {\em LIPIcs}, pages 32:1--32:19. Schloss Dagstuhl - Leibniz-Zentrum f{\"{u}}r Informatik, 2022.
\newblock \href {https://doi.org/10.4230/LIPIcs.DISC.2022.32} {\path{doi:10.4230/LIPIcs.DISC.2022.32}}.

\bibitem[PSS{\etalchar{+}}22]{PilipczukSSTV22}
Michal Pilipczuk, Nicole Schirrmacher, Sebastian Siebertz, Szymon Torunczyk, and Alexandre Vigny.
\newblock Algorithms and data structures for first-order logic with connectivity under vertex failures.
\newblock In {\em Proceedings of the 49th International Colloquium on Automata, Languages, and Programming ({ICALP})}, volume 229 of {\em LIPIcs}, pages 102:1--102:18. Schloss Dagstuhl - Leibniz-Zentrum f{\"{u}}r Informatik, 2022.
\newblock \href {https://doi.org/10.4230/LIPIcs.ICALP.2022.102} {\path{doi:10.4230/LIPIcs.ICALP.2022.102}}.

\bibitem[PY21]{PettieY21}
Seth Pettie and Longhui Yin.
\newblock The structure of minimum vertex cuts.
\newblock In {\em Proceedings of the 48th International Colloquium on Automata, Languages, and Programming ({ICALP})}, volume 198 of {\em LIPIcs}, pages 105:1--105:20. Schloss Dagstuhl - Leibniz-Zentrum f{\"{u}}r Informatik, 2021.
\newblock \href {https://doi.org/10.4230/LIPIcs.ICALP.2021.105} {\path{doi:10.4230/LIPIcs.ICALP.2021.105}}.

\bibitem[Raj12]{Rajan12}
Varun Rajan.
\newblock Space efficient edge-fault tolerant routing.
\newblock In {\em {IARCS} Annual Conference on Foundations of Software Technology and Theoretical Computer Science ({FSTTCS})}, volume~18 of {\em LIPIcs}, pages 350--361. Schloss Dagstuhl - Leibniz-Zentrum f{\"{u}}r Informatik, 2012.
\newblock \href {https://doi.org/10.4230/LIPIcs.FSTTCS.2012.350} {\path{doi:10.4230/LIPIcs.FSTTCS.2012.350}}.

\bibitem[TZ01]{TZ01-b}
Mikkel Thorup and Uri Zwick.
\newblock Compact routing schemes.
\newblock In {\em Proceedings 13th {ACM} Symposium on Parallel Algorithms and Architectures (SPAA)}, pages 1--10, 2001.

\bibitem[TZ05]{TZ05}
Mikkel Thorup and Uri Zwick.
\newblock Approximate distance oracles.
\newblock {\em J.~ACM}, 52(1):1--24, 2005.

\bibitem[vdBS19]{BrandS19}
Jan van~den Brand and Thatchaphol Saranurak.
\newblock Sensitive distance and reachability oracles for large batch updates.
\newblock In {\em Proceedings of the 60th Annual {IEEE} Symposium on Foundations of Computer Science ({FOCS})}, pages 424--435, 2019.
\newblock \href {https://doi.org/10.1109/FOCS.2019.00034} {\path{doi:10.1109/FOCS.2019.00034}}.

\bibitem[WY13]{weimann2013replacement}
Oren Weimann and Raphael Yuster.
\newblock Replacement paths and distance sensitivity oracles via fast matrix multiplication.
\newblock {\em ACM Transactions on Algorithms (TALG)}, 9(2):14, 2013.

\end{thebibliography}

\begin{appendix}

\APPENDDPHIER

\section{An Alternative Deterministic Labeling Scheme}\label{sect:alternative}

In this section, we provide an alternative deterministic polynomial-time construction of labels supporting connectivity queries $\ang{s,t,F}$ with $|F| \leq f$, with label length of $\poly(f,\log n)$ bits.
This approach provides asymptotically weaker bounds compared to our main scheme, yet it has the benefit of using labels for \emph{edge failures} in a more ``black-box" manner.
Similarly to our main construction, the approach is based on computing a sort of low-degree hierarchy.
The main building block is an extension of Duan and Pettie's $\Decomp$ procedure (\Cref{thm:Decomp}), using the derandomized FT-sampling approach of Karthik and Parter \cite{KarthikP21}, resulting in a new procedure we call $\FTDecomp$.
Repeated applications of $\FTDecomp$ then construct the final hierarchy.
Interestingly, the properties satisfied by this hierarchy, and the way it is used by the labeling scheme, are fundamentally different than in our main construction.

We need the following notations.
For $u \in V$, $N(u)$ is the neighbor-set of $u$ in $G$, and $N^+(u) = N(u) \cup \{u\}$.
We extend this notation to a vertex subsets $U \subseteq V$ by $N^+(U) = \bigcup_{u \in U} N^+(U)$.
For $F \subseteq V$, we say that $s,t \in V\setminus F$ are $F$-connected, if $s,t$ are connected in $G - F$.
We say that the tuples $\ang{s,t,F}$ and $\ang{s',t',F'}$ are \emph{equivalent} if $s,t$ are $F$-connected iff $s',t'$ are $F'$-connected.
For a tree $T$, $\pi(s,t,T)$ denotes the $T$-path between $s,t \in V(T)$.

\subsection{$f$-Respecting-Decompositions}\label{sec:respect-decomp}
The heart of our approach is based on generating vertex decompositions that are \emph{respected} by replacement paths under $f$ vertex faults, as formalized in the following definitions.

\begin{definition}\label{def:respect}
    A \emph{decomposition} $(U, \Gamma)$ is specified by a vertex subset $U \subseteq V$, and a collection of mutually disjoint vertex subsets $\Gamma = \{\gamma_1, \dots, \gamma_k\}$, $\gamma_j \subseteq V$, where each $\gamma_j$ is associated with a spanning tree $T(\gamma_j)$ of the induced graph $G[\gamma_j]$.
    Note that $U$ might intersect $V(\Gamma) \bydef \bigcup_j \gamma_j$.
    The \emph{degree} of the decomposition is $\Delta(\Gamma) \bydef \max_{j} \Delta(T(\gamma_j))$,
    the maximum degree of the trees.
    
    A path $P$ in $G$ \emph{respects} $(U,\Gamma)$ if for every segment $P'$ of $P$ having no $U$-vertices, there is some $\gamma_j \in \Gamma$ such that $V(P')\subseteq \gamma_j$. 
    
    A triplet $\ang{s,t,F}$ \emph{respects} the decomposition $(U,\Gamma)$ if either (i) $s,t$ are not $F$-connected, or (ii) $s,t$ are $F$-connected, and there exists some $s$-$t$ path in $G - F$ that respects $(U,\Gamma)$. 
    Note that in case (i), $\ang{s,t,F}$ respects \emph{any} decomposition.
    
    A triplet $\ang{s,t,F}$ is \emph{captured} by $(U,\Gamma)$ if there exists some $\gamma_j \in \Gamma$ such that $s,t$ are connected in $G[\gamma_j] - F$.
    Note that being captured is a special case of respecting $(U,\Gamma)$.

    The decomposition $(U,\Gamma)$ is said to be an \emph{$f$-respecting-decomposition ($f$-RD)} w.r.t.\ a tuple-set $\mathcal{Q} \subseteq \{\ang{s,t,F} \mid s,t \in V, F\subseteq V, |F| \leq f\}$, if every $\ang{s,t,F} \in Q$ respects $(U, \Gamma)$.
\end{definition}

\begin{observation}\label{obs:concatenation-respecting}
Let $(U, \Gamma)$ be a decomposition, and let $P$ be a (not necessarily simple) path in $G$.
Suppose that $P$ can be written as a concatenation $P = P_1 \circ P_2 \circ \cdots\circ P_k$, where each $P_i$ respects $(U, \Gamma)$.
Then $P$ also respects $(U,\Gamma)$.
\end{observation}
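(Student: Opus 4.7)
The plan is to verify the defining condition of ``respects'' directly: given any segment $P'$ of $P$ that contains no $U$-vertex, exhibit some $\gamma_j \in \Gamma$ with $V(P') \subseteq \gamma_j$. First I would fix such a $P'$ and cut it along the concatenation boundaries of $P_1 \circ \cdots \circ P_k$, writing it as a contiguous concatenation $P' = P'_{i_0} \circ P'_{i_0+1} \circ \cdots \circ P'_{i_0+m}$, where $P'_i$ denotes the restriction of $P'$ to $P_i$ (those indices $i$ for which this restriction is non-empty necessarily form a contiguous range, since $P'$ is itself a connected subpath of $P$). Each non-empty $P'_i$ is a segment of $P_i$ and inherits from $P'$ the property of containing no $U$-vertex; so the hypothesis that $P_i$ respects $(U,\Gamma)$ yields an index $j(i)$ with $V(P'_i) \subseteq \gamma_{j(i)}$.

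The key gluing step is to show that all of the indices $j(i)$ coincide. For any two consecutive non-empty indices $i$ and $i'$, the sub-segments $P'_i$ and $P'_{i'}$ share a vertex, namely the concatenation point between $P_i$ and $P_{i'}$, which lies on $P'$ and is therefore not in $U$. This shared vertex belongs to both $\gamma_{j(i)}$ and $\gamma_{j(i')}$, and since the sets in $\Gamma$ are mutually disjoint by \Cref{def:respect}, we must have $\gamma_{j(i)} = \gamma_{j(i')}$. Propagating this equality through the contiguous range of non-empty indices gives a single $\gamma_j$ containing $V(P')$, as required.

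I do not foresee a genuine obstacle; the statement is essentially a closure property of ``respects'' under concatenation, and its proof relies only on two structural features of a decomposition: the pairwise disjointness of the parts $\gamma_j \in \Gamma$, and the fact that each concatenation point of the $P_i$ is a vertex of $P$, hence forced out of $U$ whenever it appears inside a $U$-free segment. The only minor subtlety to keep in mind is the degenerate case in which some $P'_i$ consists of a single boundary vertex, but this still qualifies as a segment of $P_i$ and the argument above goes through verbatim.
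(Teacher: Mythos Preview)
Your argument is correct and is exactly the direct verification one would expect: restrict a $U$-free segment to each $P_i$, invoke the hypothesis, and glue using disjointness of the $\gamma_j$'s at the shared concatenation vertices. The paper states this as an \emph{Observation} without proof, so your write-up simply spells out what the authors left implicit; there is no alternative approach to compare against.
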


Fix a decomposition $(U,\Gamma)$, a vertex $s \in \gamma_j$ in some $\gamma_j \in \Gamma$, and a subset $F \subseteq V$.
Let $U(s,F)$ be the set of all $u \in U-F$ such that $N^+(u)$ intersects the connected component of $s$ in $G[\gamma_j] - F$.
Note that every $u \in U(s,F)$ is $F$-connected to $s$.
When $s \in U - V(\Gamma)$, i.e., $s$ is a $U$-vertex that is not in any of the $\Gamma$-components, we define $U(s,F) = \{s\}$.

\begin{observation}\label{obs:not captured}
Let $(U,\Gamma)$ be an $f$-RD w.r.t.\ $\mathcal{Q}$, and let $\ang{s,t,F} \in \mathcal{Q}$.
\begin{enumerate}
\item If $\langle a,b \rangle\in U(s,F) \times U(t,F)$, then $\ang{a,b,F}$ respects $(U,\Gamma)$ and is equivalent to $\ang{s,t,F}$.
\item If $\ang{s,t,F}$ is not captured by $(U,\Gamma)$, and $s,t$ are $F$-connected, then $U(s,F) \times U(t,F) \neq \emptyset$.
\end{enumerate}
\end{observation}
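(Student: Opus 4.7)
The plan is to handle Parts 1 and 2 separately, leveraging the definitions of $U(\cdot,F)$ together with the concatenation property of respecting paths recorded in \Cref{obs:concatenation-respecting}.

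For Part 1, the first step is to extract from $a \in U(s,F)$ a path $P_a$ in $G-F$ from $a$ to $s$: if $s \in U - V(\Gamma)$ then $a=s$ and $P_a$ is trivial; otherwise $s$ lies in some $\gamma_j \in \Gamma$, and the definition supplies a witness $v \in N^+(a)$ in the connected component of $s$ in $G[\gamma_j]-F$, producing a path $a \to v \to \cdots \to s$ whose tail is in $G[\gamma_j]-F$. Symmetrically one builds a path $P_b$ from $t$ to $b$. These two paths directly certify that $a$ is $F$-connected to $s$ and $b$ to $t$, so $\ang{a,b,F}$ and $\ang{s,t,F}$ are equivalent. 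For the respecting assertion, the disconnected case is immediate; otherwise I would apply the $f$-RD hypothesis $\ang{s,t,F} \in \mathcal{Q}$ to obtain a respecting $s$-to-$t$ path $P^*$ in $G-F$. Since every internal vertex of $P_a$ (respectively $P_b$) lies in the single component $\gamma_j$, every $U$-free sub-path of $P_a$ (resp.\ $P_b$) is contained in $\gamma_j$, so $P_a$ and $P_b$ respect $(U,\Gamma)$. Invoking \Cref{obs:concatenation-respecting} on the concatenation $P_a \circ P^* \circ P_b$ concludes Part 1.

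For Part 2, the plan is to pick any respecting $s$-to-$t$ path $P$ in $G-F$, guaranteed by $F$-connectivity and the $f$-RD hypothesis, and to locate its first and last $U$-vertices. If $P$ had no $U$-vertex at all, then $V(P)$ would itself be a $U$-free sub-path lying entirely in some single $\gamma_j$ by the respecting property, which would capture $\ang{s,t,F}$, contradicting the hypothesis. So let $u$ denote the first $U$-vertex along $P$; I will show $u \in U(s,F)$ by splitting on whether $s \in U$. If $s \in U$ then $u=s$, and $u \in U(s,F)$ is a direct unfolding of the definition in either sub-case ($s \in V(\Gamma)$ or $s \in U - V(\Gamma)$). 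If $s \notin U$, then the prefix of $P$ from $s$ up to but excluding $u$ is a non-empty $U$-free sub-path, which by the respecting property lies inside some $\gamma_j$; this forces $s \in \gamma_j$ and pins the predecessor $v$ of $u$ on $P$ to lie in $\gamma_j \cap N(u)$ and in the component of $s$ in $G[\gamma_j]-F$, yielding $u \in U(s,F)$. Symmetrically, the last $U$-vertex of $P$ belongs to $U(t,F)$, so $U(s,F) \times U(t,F) \neq \emptyset$.

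The main obstacle will be bookkeeping rather than a conceptual step: the definition of $U(s,F)$ branches on whether $s$ lies in $V(\Gamma)$, in $U \setminus V(\Gamma)$, or in their intersection, and the respecting-structure of a path near its endpoints depends on whether those endpoints are themselves $U$-vertices. Being careful about degenerate cases---when $P_a$ or $P_b$ collapses to length $0$, or when the prefix of $P$ before its first $U$-vertex is a single vertex---is what keeps both arguments clean and free of edge-case gaps.
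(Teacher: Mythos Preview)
Your proposal is correct. The paper states \Cref{obs:not captured} without proof, treating it as an immediate consequence of the definitions; your argument supplies precisely those details, using the natural route of constructing short connector paths $P_a, P_b$ inside a single $\gamma_j$ and invoking \Cref{obs:concatenation-respecting}, and for Part~2 reading off the first and last $U$-vertices of a respecting path.
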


The key technical contribution of the construction is given by Algorithm $\FTDecomp$, which can be viewed as the extension of Duan and Pettie's Alg. $\Decomp$ (\Cref{thm:Decomp}) to the fault-tolerant setting, in the following sense.
While $\Decomp$ computes low-degree trees that provide connectivity guarantees for a given terminal set, $\FTDecomp$ provides connectivity guarantees in the presence of $f$ faults, as formalized in the following theorem:

\begin{theorem}\label{thm:key-theorem}
There is a deterministic polynomial time algorithm $\FTDecomp$ that given as input an $f$-RD $(U,\Gamma)$ w.r.t.\ (a possibly implicit set of) tuples $\mathcal{Q}$, 
and an integer $\tau \geq 2$,
computes an $f$-RD $(B,\Lambda)$ w.r.t.\ $\mathcal{Q}'$, which satisfies the following:
\begin{enumerate}
\item If $\ang{s,t,F}\in \mathcal{Q}$ is not captured by $(U,\Gamma)$, then $\ang{a,b,F} \in \mathcal{Q}'$ for every $\langle a,b\rangle \in U(s,F) \times U(t,F)$. 
\item $|B|\leq |U|/\tau$.
\item $\Delta(\Lambda) = \Delta(\Gamma) + O(\tau \cdot (f \log n)^{12})$.
\end{enumerate}
\end{theorem}

Before describing the algorithm, we state the following useful result from \cite{KarthikP21}, which can be seen as a direct corollary of \Cref{thm:miss-hit-hash}.
\begin{theorem}[\cite{KarthikP21}]\label{thm:det-FT-sampling}
    Let $\mathcal{U}$ be a universe-set of $N$ elements, and $a \geq b$ be integers.
    There is a deterministic construction of a family of subsets $\mathcal{S}=\{S_1,\ldots, S_k\}$, $S_i \subseteq \mathcal{U}$, with $k = O(a \log N)^{b+1}$, having the following property: 
    If $A,B \subseteq \mathcal{U}$ with $A \cap B = \emptyset$, $|A| \leq a$, $|B| \leq b$, then there is some $S_i \in \mathcal{S}$ such that $A \cap S_i = \emptyset$ and $B \subseteq S_i$.
\end{theorem}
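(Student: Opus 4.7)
The plan is to realize $\FTDecomp$ as a fault-tolerant lift of $\Decomp$ (Theorem~\ref{thm:Decomp}), where the $f$ vertex faults are handled externally by the deterministic FT-sampling of Theorem~\ref{thm:det-FT-sampling}. Conceptually, $(U,\Gamma)$ describes connectivity of $G-F$ at a coarse scale, with $U$-vertices as ``hubs'' and $\Gamma$-components as transit regions whose internal structure is $T(\gamma_j)$. The goal is to produce a smaller hub-set $B \subseteq U$ (with $|B|\le|U|/2$) together with enlarged but still low-degree transit regions $\Lambda$ so that one hub-to-hub ``hop'' in the old decomposition corresponds to staying inside a single $\Lambda$-component in the new one.

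First I would build an auxiliary graph $G^\Gamma$ on $V(G)$: take $G$ and, for every $\gamma_j \in \Gamma$, add ``type-$j$ shortcut'' edges between all vertex pairs inside $\gamma_j$, each of which can later be \emph{lifted} back to the corresponding $T(\gamma_j)$-path in $G$. Next, apply Theorem~\ref{thm:det-FT-sampling} with $\mathcal{U}=V$, miss-parameter $a=f$, and a small hit-parameter $b$, obtaining a family $\mathcal{S}=\{S_1,\dots,S_k\}$ of size $k=O((f\log n)^{b+1})$ with the guarantee that for every $F$ of size $\leq f$ and every pair $\{x,y\}\subseteq V-F$, some $S_i$ contains $\{x,y\}$ and is disjoint from $F$. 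Then, for each $S_i$, invoke $\Decomp(G^\Gamma[S_i],\,U\cap S_i,\,4)$ to obtain a Steiner forest $T_i$ and residual set $B_i\subseteq U\cap S_i$ with $|B_i|<|U\cap S_i|/2$ and $\Delta(T_i-B_i)\le 4$. Finally, define $B$ to be those $u \in U$ that land in $B_i$ for \emph{every} $S_i \ni u$, and take $\Lambda$ to be the connected components of $\bigcup_i (T_i-B_i)$ after every shortcut edge is replaced by its $T(\gamma_j)$-lift in $G$.

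For Property~1: given $\ang{s,t,F} \in \mathcal{Q}$ uncaptured by $(U,\Gamma)$ with $s,t$ $F$-connected, Observation~\ref{obs:not captured}(2) yields $\ang{a,b}\in U(s,F)\times U(t,F)$; FT-sampling supplies some $S_i \ni \{a,b\}$ disjoint from $F$, and the $a$-$b$ subtree of $T_i-B_i$ lifts to a genuine $G-F$ path that respects $(B,\Lambda)$ by construction (each shortcut used lies inside a single $\Lambda$-component). Property~2 follows from Theorem~\ref{thm:Decomp}'s bound $|B_i|<|U\cap S_i|/2$ together with the intersection rule defining $B$: any $u \in B$ failed in every one of its sampled sets, and the hit-covering of $\mathcal{S}$ forces $|B|\le|U|/2$. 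Property~3 holds because each $T_i-B_i$ contributes at most $4$ to the degree of any vertex; summing over the $k$ samples gives degree at most $4k$, and choosing $b = 11$ yields $k=O((f\log n)^{12})$; the additive $\Delta(\Gamma)$ accounts for the intra-$T(\gamma_j)$ degree absorbed when lifting shortcut edges.

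The main obstacle will be calibrating parameters so that $|B|\le|U|/2$, Property~1, and faultless lifting of shortcut edges hold \emph{simultaneously}. The first two are in mild tension: shrinking $B$ risks losing hub coverage for some uncaptured triple, while strengthening Property~1 risks growing $B$. The intended resolution is to choose a constant $b$ just large enough that each $u$ must ``fail'' (i.e., enter $B_i$) in essentially every sampled set containing it before it can be sacrificed into $B$, following Karthik--Parter-style accounting. A subtler point is that when lifting a shortcut in $G^\Gamma[S_i]$ to an actual path in $T(\gamma_j)\subseteq G$, the internal vertices of that path must avoid $F$; this should be handled by mildly enlarging the miss-parameter $a$ in Theorem~\ref{thm:det-FT-sampling} to $f + O(1)$ so that the bounded-size set of Steiner/lift vertices appearing along any single hub-to-hub hop can be co-missed together with $F$, while preserving the overall $k=O((f\log n)^{12})$ bound.
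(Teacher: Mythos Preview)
Your proposal is for the wrong statement. Theorem~\ref{thm:det-FT-sampling} is a cited combinatorial fact about subset families, and the paper disposes of it in one line as a direct corollary of the miss-hit hash families of Theorem~\ref{thm:miss-hit-hash}: take $S_i=h_i^{-1}(1)$; then $A\cap S_i=\emptyset$ iff $h_i$ misses $A$, and $B\subseteq S_i$ iff $h_i$ hits $B$. What you have actually written is an outline for Theorem~\ref{thm:key-theorem} (the $\FTDecomp$ algorithm), which \emph{uses} Theorem~\ref{thm:det-FT-sampling} as a black box.

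Even read as a sketch of Theorem~\ref{thm:key-theorem}, several steps fail. Defining $B$ by intersection (those $u$ in $B_i$ for every $S_i\ni u$) breaks the respecting property: if $u\in B_1\setminus B$, a path routed through $T_1-B_1$ has $u$ as a break point that is not in $B$, so it need not respect $(B,\Lambda)$. The paper instead takes $B=\bigcup_i B_i$ and compensates by invoking $\Decomp$ with degree threshold $2\sigma=4K$ rather than $4$, so that $|B_i|\le|U|/\sigma$ and the union still satisfies $|B|\le|U|/2$. Your lifting argument for shortcut edges through an affected $\gamma_j$ also cannot be repaired by ``co-missing'' the internal lift vertices together with $F$: those vertices are determined by the tree $T_i$, hence by the sample $S_i$, and are unknown when $S_i$ is chosen. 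The paper sidesteps this by putting the components $\gamma_j\in\Gamma$ themselves into the sampling universe (so $\mathcal{U}=U\cup\Port(\Gamma)\cup\Gamma$, $a=2f$, $b=5$), missing all affected components except the one currently under analysis, and handling that remaining component by a separate case argument (Lemma~\ref{lem:key-respecting-lemma}). Finally, your choice $b=11$ is reverse-engineered from the target exponent rather than from an actual hit requirement; in the paper the exponent $12$ arises as $K\cdot\sigma=O(K^2)$ with $K=O(f\log n)^6$.
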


\subsection{Description of $\FTDecomp$}

At a high level, the algorithm $\FTDecomp$ has three steps.
In the first step, we define a \emph{$(U,\Gamma)$-graph} $\widehat{G}$ whose vertex set intersects with $V$, but additionally includes \emph{virtual} vertices connected by binary trees, whose role is to control the increase in the degree of the output $f$-RD.
The second step uses the procedure of \Cref{thm:det-FT-sampling} to compute $\poly(f,\log n)$ subgraphs of $\widehat{G}$.
This collection provides the extra property of preserving FT-connectivity among the vertices in $U$. 
The last step employs the $\Decomp$ procedure of \Cref{thm:Decomp} on each of the subgraphs in the collection.
The $\Decomp$-outputs are then used to determine the components of $\Lambda$ and the new terminal set $B$ (which correspond to the ``bad" nodes in the outputs of $\Decomp$).

\paragraph{Step 1: Creating $\widehat{G}$.}
Denote $\Gamma=\{\gamma_1,\ldots, \gamma_\ell\}$.
Recall that $U$ might intersect $V(\Gamma) = \bigcup_{j} \gamma_j$. 
A vertex $v \in V(\Gamma)$ is called a \emph{portal} if $N^+(v) \cap U \neq \emptyset$.
We denote by $\Port(\Gamma)$ the set of all portal vertices, and by $\Port(\gamma_j) \bydef \Port(\Gamma) \cap \gamma_j$ the portals in $\gamma_j$.
For every $\gamma_j \in \Gamma$, we create a \emph{binary tree} $\widehat{T}(\gamma_j)$ whose leaf vertices are $\Port(\gamma_j)$,
and whose internal vertices are new \emph{virtual} vertices not in $V(G)$.
We denote by $\widehat{V}(\gamma_j)$ the set of all (virtual and non-virtual) vertices in $\widehat{T}(\gamma_j)$.

The graph $\widehat{G}$ is then constructed as follows:
The vertex set is $V(\widehat{G}) \bydef U \cup \bigcup_j \widehat{V}(\gamma_j)$.
The edges $E(\widehat{G})$ consist of: (1) every \emph{original} $G$-edge connecting a vertex from $U$ to a vertex in $\Port(\Gamma) \cup U$, and (2) every \emph{virtual} edge coming from some binary tree $\hat{T}(\gamma_j)$.
That is,
\[
E(\widehat{G})
\bydef
\paren{ 
    E(G)
    \cap
    \paren{
        U \times (\Port(\Gamma) \cup U)
    }
}
\cup
\bigcup_{\gamma_j \in \Gamma} \widehat{T}(\gamma_j).
\]
See \Cref{fig:FTDecomp} for an illustration.

\begin{figure}
\begin{center}
\includegraphics[height=7cm]{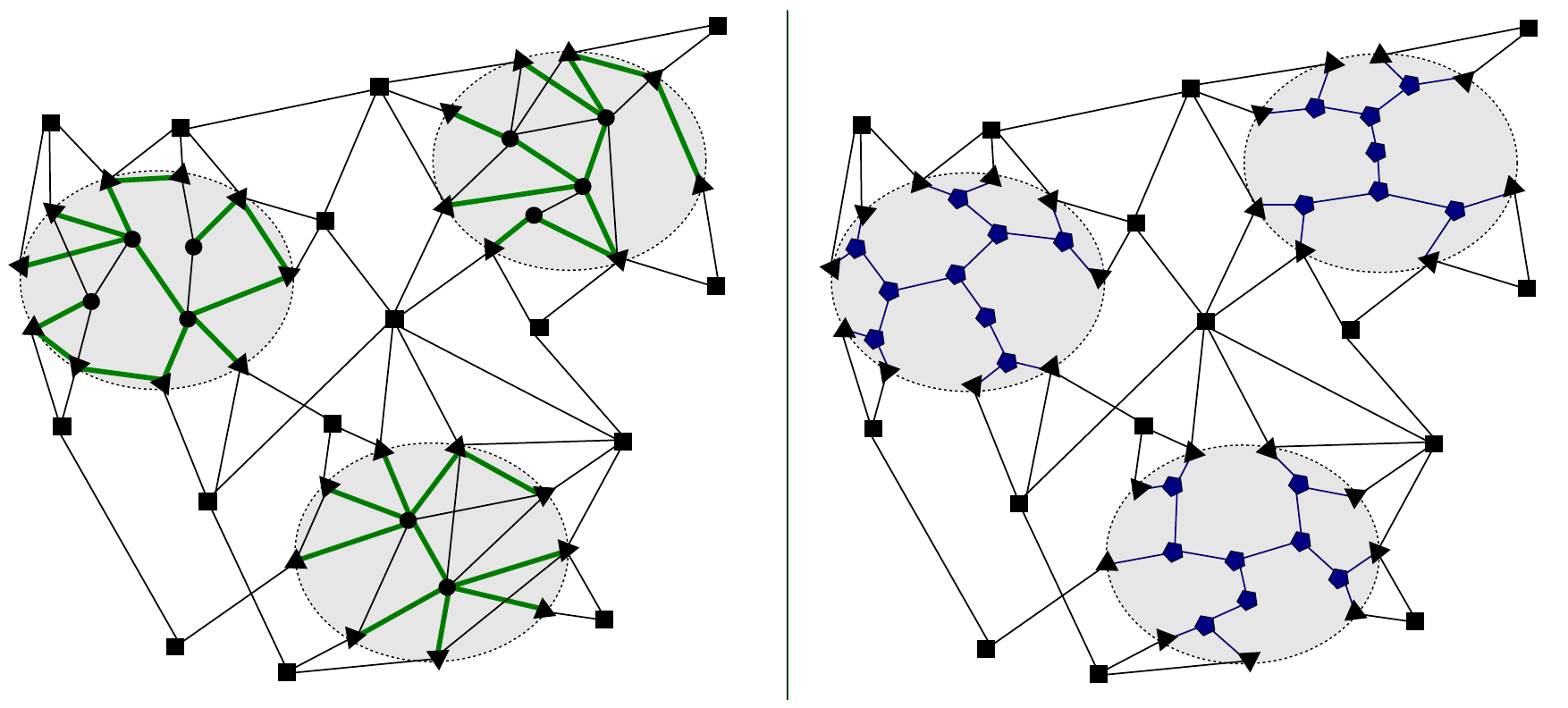}
\caption{\label{fig:FTDecomp}
Left: Illustration of the decomposition $(U, \Gamma)$ in $G$.
The $\Gamma$-components are marked by the gray regions.
$U$-vertices are denoted by squares, portal vertices by triangles, and internal vertices inside the $\Gamma$-components are circular.
(In general, portal vertices can also belong to $U$.)
Each $\gamma_j \in \Gamma$ is spanned by the tree $T(\gamma_j)$ with thick green edges.
Right: Illustration of the graph $\widehat{G}$.
The blue pentagons are the new virtual vertices.
Together with the blue edges, they constitute the virtual binary trees $\widehat{T}(\gamma_j)$, having the portals of $\gamma_j$ as leaves, for each $\gamma_j \in \Gamma$.
All internal vertices and edges of $\Gamma$-components in $G$ are removed from $\hat{G}$.
}
\end{center}
\end{figure}

\paragraph{Step 2: A covering subgraph family for $\widehat{G}$.}
We apply \Cref{thm:det-FT-sampling} with universe $\mathcal{U} = U \cup \Port(\Gamma) \cup \Gamma$, $a = 2f$ and $b=5$.
Note the the universe consists of two types of objects: original vertices of $G$, and \emph{components} in $\Gamma$.
This yields subsets $S_1, \dots, S_K \subseteq U \cup \Port(\Gamma) \cup \Gamma$,
with $K = O(f \log n)^6$.
Each $S_i$ defines a subgraph $\widehat{G}_i$ of $\widehat{G}$ as follows.
First, denote $\Gamma_i \bydef \Gamma \cap S_i$, $U_i \bydef U \cap S_i$ and $\Port_i (\Gamma) \bydef \Port(\Gamma) \cap S_i$.
Next, for every $\gamma_{j} \in \Gamma_i$, we define
$\Port_i (\gamma_{j}) \bydef \Port(\gamma_j) \cap \Port_i (\Gamma)$,
$Q_{i,j} \bydef \Port(\gamma_j) - \Port_i (\gamma_j)$,
and
$\widehat{T}_i (\gamma_j) \bydef \widehat{T}(\gamma_j) - Q_{i,j}$.
That is, $\widehat{T}_i (\gamma_j)$ is obtained by removing leaf-vertices that are not present in $S_i$ from $\widehat{T} (\gamma_j)$.
Then, the subgraph $\widehat{G}_i$ is defined as the subgraph induced by $\widehat{G}$ on all vertices present in $U_i$, or in $\Port_i (\Gamma)$, or in some tree $\hat{T}_i(\gamma_j)$ for $\gamma_j \in \Gamma_i$.
Its edge set is therefore
\[
E(\widehat{G}_i)
=
\paren{ 
    E(G)
    \cap
    \paren{
        U_i \times (\Port_i(\Gamma) \cup U_i)
    }
}
\cup
\bigcup_{\gamma_j \in \Gamma_i} \widehat{T}_i(\gamma_j)~.
\]

\paragraph{Step 3: Applying $\Decomp$ and obtaining the output.}
Let $\sigma \bydef \tau \cdot K$.
For each $i \in \{1,\dots, K\}$, we apply $\Decomp$ on $\widehat{G}_i$ with terminal set $U_i$ and degree bound $2 \sigma$, and denote the output by $(\TD_i, B_i)$.
That is, $(\TD_i, B_i) \gets \Decomp(\widehat{G}_i, U_i, 2\sigma)$.

Denote by $F_{i,1}, \dots, F_{i,k}$ the collection of trees of the forest $TD_i - B_i$.
For each such tree, let
\[
\Gamma_{i,j}
\bydef
\{\gamma_{\ell} \in \Gamma \mid \widehat{V}(\gamma_\ell) \cap V(F_{i,j}) \neq \emptyset \}.
\]
That is, $\Gamma_{i,j}$ is the set of all components $\gamma_\ell \in \Gamma_i$ whose binary tree $\widehat{T}(\gamma_\ell)$ intersects $F_{i,j}$.
The tree $F_{i,j}$, which may contain virtual vertices/edges, is \emph{translated} into a subgraph of $G$, with only real vertices/edges, by
\[
T_{i,j}
\bydef
F_{i,j} [ V(G) \cap V(F_{i,j})] \cup \bigcup_{\gamma_\ell \in \Gamma_{i,j}} T(\gamma_\ell)
~.
\]
That is, $T_{i,j}$ is formed by including all the $G$-edges of $F_{i,j}$, and, in addition, if there is some (possibly virtual) $v \in \widehat{V}(\gamma_{\ell}) \cap V(F_{i,j})$, then the entire $G$-tree $T(\gamma_{\ell})$ is also included in $T_{i,j}$.
By adding these trees we compensate for the removal of the virtual vertices and guarantee that each $T_{i,j}$ is a connected subgraph of $G$.
This is shown in \Cref{lem:Tij-connected}, found in the following analysis of $\FTDecomp$ in \Cref{sect:FTDecomp-analysis}.

We are now ready to define the output decomposition $(B, \Lambda)$.
Denote $\Lambda_i = \{T_{i,1}, \dots, T_{i,k_i}\}$.
The final components $\Lambda = \{\lambda_1, \dots, \lambda_r\}$ are defined as the connected components of the union graph $G' \bydef \bigcup_{i=1}^K \bigcup_{j=1}^{k_i} T_{i,j}$.
Additionally, if $\gamma_j \in \Gamma$ is not contained in any of these components, we also add it as a component in $\Lambda$.
The spanning tree $T(\lambda_j)$ for each $\lambda_j \in \Lambda$ is taken to be some spanning tree of $G'[\lambda_j]$ (or $T(\lambda_j) = T(\gamma_{\ell})$ in case $\lambda_j$ is an additional component $\gamma_\ell \in \Gamma$ that was not in $G'$).
Finally, we define $B = \bigcup_{i=1}^K B_i$.

Note that virtual vertices have degree at most $3 < \sigma$ in each $\widehat{G}_i$, and thus $\Decomp$ will not include such vertices in $B_i$.
That is, each $B_i$ consists of real vertices, and hence so does $B$.
Also note that as each $T_{i,j}$ is connected (as shown in the following \Cref{lem:Tij-connected}), any path that respects a decomposition $(B_i, \Lambda_i)$, for some $i$, also respects $(B, \Lambda)$.

\medskip
This completes the description of $\FTDecomp$.

\subsection{Analysis of $\FTDecomp$}\label{sect:FTDecomp-analysis}
We now analyze the $\FTDecomp$ algorithm and prove its properties stated in \Cref{thm:key-theorem}.

First, we prove that each $T_{i,j}$ is connected, which yields that respecting $(B_i, \Lambda_i)$ implies respecting $(B, \Lambda)$, as explained at the end of the prior section.
\begin{lemma}\label{lem:Tij-connected}
    Each $T_{i,j}$ is connected.
\end{lemma}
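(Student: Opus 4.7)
The plan is to exploit the local structure of $\widehat{G}$ around virtual vertices: by construction, every virtual vertex $v \in \widehat{V}(\gamma_\ell) - \gamma_\ell$ belongs to a unique binary tree $\widehat{T}(\gamma_\ell)$, lies outside $V(G)$, and therefore has all its $\widehat{G}$-edges (hence all $\widehat{G}_i$-edges, and in particular all $F_{i,j}$-edges) inside $\widehat{T}_i(\gamma_\ell)$. As a consequence, any $F_{i,j}$-path through virtual vertices stays within a single binary tree between two consecutive real vertices, and must enter and leave that tree through leaves of $\widehat{T}_i(\gamma_\ell)$, which are precisely portals of $\gamma_\ell$.

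First I would show that every pair of real vertices $u,v \in V(G) \cap V(F_{i,j})$ is connected in $T_{i,j}$. Take the $F_{i,j}$-path $P$ between them and decompose it into maximal \emph{segments} whose interior vertices are all virtual. Each segment is either a single real edge of $G$ -- already contained in $F_{i,j}[V(G) \cap V(F_{i,j})] \subseteq T_{i,j}$ -- or has nonempty virtual interior confined to some $\widehat{T}_i(\gamma_\ell)$ with both real endpoints being portals of $\gamma_\ell$. In the latter case $\gamma_\ell \in \Gamma_{i,j}$ (the virtual interior witnesses $\widehat{V}(\gamma_\ell) \cap V(F_{i,j}) \neq \emptyset$), so $T(\gamma_\ell) \subseteq T_{i,j}$ spans $\gamma_\ell$ and in particular connects the two endpoints.

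Second I would argue that each added subtree $T(\gamma_\ell)$ with $\gamma_\ell \in \Gamma_{i,j}$ is attached inside $T_{i,j}$ to the part just handled. By the definition of $\Gamma_{i,j}$ there exists some $z \in \widehat{V}(\gamma_\ell) \cap V(F_{i,j})$: if $z$ is real it is a portal of $\gamma_\ell$ and thus lies simultaneously in $T(\gamma_\ell)$ and in the real part of $F_{i,j}$; if $z$ is virtual, extending a walk along $F_{i,j}$ from $z$ stays inside $\widehat{T}_i(\gamma_\ell)$ by the local observation, and either exits at a portal of $\gamma_\ell$ (again providing the bridge to the real part of $F_{i,j}$) or never reaches a real vertex, in which case $F_{i,j}$ is contained entirely in $\widehat{T}_i(\gamma_\ell)$ and $T_{i,j}$ collapses to the connected tree $T(\gamma_\ell)$.

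Everything reduces to the local observation that virtual vertices have no $\widehat{G}_i$-edges outside their own binary tree, which is immediate from the definitions of $\widehat{G}$ and $\widehat{G}_i$. I do not expect any serious obstacle; the rest is bookkeeping over the types of vertices and edges appearing along an $F_{i,j}$-path, together with the invariant that the endpoints of a virtual segment are forced to be portals of the same component.
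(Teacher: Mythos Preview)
Your proposal is correct and follows essentially the same approach as the paper: both arguments replace each maximal virtual stretch of an $F_{i,j}$-path by the corresponding $T(\gamma_\ell)$, using that virtual vertices have all their $\widehat{G}_i$-neighbors inside their own binary tree and hence such a stretch must begin and end at portals of a single $\gamma_\ell \in \Gamma_{i,j}$. Your write-up is in fact more careful than the paper's, since you explicitly verify that every added tree $T(\gamma_\ell)$ is attached to the real part of $F_{i,j}$ (and handle the degenerate case where $F_{i,j}$ contains no real vertex), whereas the paper leaves this implicit in the phrase ``these only contribute by connecting the portals.''
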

\begin{proof}
    By the translation process that produces $T_{i,j}$ from the connected tree $F_{i,j}$, it suffices to show that if $u,v \in \Port_i(\gamma_\ell) \cap V(F_{i,j})$ (for some $\gamma_{\ell} \in \Gamma_i$),
    then $u,v$ are connected in $T_{i,j}$.
    This is because the only vertices and edges that are removed from $F_{i,j}$ in the translation process are virtual, and these only contribute by connecting the portals.
    Indeed, since $\widehat{V}(\gamma_{\ell}) \cap V(F_{i,j}) \supseteq \{u,v\} \neq \emptyset$, the translation process finds that $\gamma_{\ell} \in \Gamma_{i,j}$, so $T(\gamma_{\ell}) \subseteq T_{i,j}$, and this tree connects all vertices in $\gamma_{\ell}$, $u,v$ in particular.
\end{proof}
 
We next consider the easiest property of $\FTDecomp$, Part 2 of \Cref{thm:key-theorem}:
By the properties of $\Decomp$ (\Cref{thm:Decomp}),
for every $i \in \{1, \ldots, K\}$, $|B_i| \leq |U_i| / (2\sigma - 2) \leq |U| / \sigma$.
Therefore, $|B| \leq \sum_i |B_i| \leq K |U| / \sigma = |U| / \tau$.

\medskip
Next, we show Part 3 of \Cref{thm:key-theorem}:
\begin{lemma}
    For each component $\lambda \in \Lambda$, $\Delta(T(\lambda_j)) = \Delta(\Gamma) + O(K \sigma) = \Delta(\Gamma) + O(\tau \cdot (f \log n)^{12})$.
\end{lemma}
\begin{proof}
    Consider first a $\lambda_j \in \Lambda$ such that $\lambda_j=\gamma_\ell$ for some $\gamma_\ell \in \Gamma$. In such a case, $\Delta(T(\lambda_j)) \leq \Delta(\Gamma)$ and we are done.
    We next focus on $\Lambda$-components that are contained in $V(G')$.
    We distinguish between two types of edges:
    $E_1 =\bigcup_{\gamma_j \in \Gamma} E(T(\gamma_j))$
    and
    $E_2 = E(G')- E_1$.
    That is, $E_2$ consists of all \emph{original} $G$-edges contained in some $F_{i,j}$.
    Consider some $v \in V(G')$.
    As $\{T(\gamma_j)\}_{\gamma_j \in \Gamma}$ are vertex disjoint, we have
    $\deg(v, E_1) \leq \Delta(\Gamma)$.
    Next, for each $i \in \{1,\dots,K\}$, $v$ can belong to at most one of the disjoint trees $\{F_{i,j}\}_j$, whose degrees are at most $2\sigma$ by \Cref{thm:Decomp}.
    Hence, $\deg(v, E_2) \leq K \cdot 2\sigma$.
    This shows that if $\lambda_j \in \Lambda$, $\lambda_j \subseteq V(G')$, then $\Delta (T(\lambda_j)) = \Delta(\Gamma) + O(K \sigma)$.
\end{proof}

We now focus on proving Part 1 of \Cref{thm:key-theorem}, which we call ``the respecting property."

\paragraph{The respecting property.}
Let $\ang{s,t,F} \in \mathcal{Q}$ that is not captured by $(U,\Gamma)$, and $\ang{a,b} \in U(s,F) \times U(t,F)$.
Our goal is to prove that $\ang{a,b,F}$ respects $(B, \Lambda)$, i.e., $\ang{a,b,F} \in \mathcal{Q}'$.
Note that $\mathcal{Q}$ might be implicit, that is, unknown to the algorithm $\FTDecomp$.
Even so, our following analysis shows that this desired property holds.

If $s,t$ are $F$-disconnected,
then so are $a,b$ by \Cref{obs:not captured}(1),
hence $\ang{a,b,F}$ respects any decomposition and we are done.
From now on, we assume that $s,t$ are $F$-connected.
Hence, by \Cref{obs:not captured}(1), $a,b$ are $F$-connected, and $\ang{a,b,F}$ respects $(U, \Gamma)$,
so there exists an $a$-$b$ path in $G-F$ that respects $(U,\Gamma)$. Fix such a path $P_{a,b,F}$ for the rest of this analysis.

A component $\gamma_j \in \Gamma$ is called \emph{affected} if $\gamma_j \cap F\neq \emptyset$.
Let $A(\Gamma,F)$ be the set of affected components, $A(\Gamma,F) \bydef \{\gamma_j \in \Gamma \mid \gamma_j \cap F\neq \emptyset\}$. 
Consider a tree $TD_i$ for some $i \in \{1,\ldots, K\}$.
We say that a $TD_i$-path $\widetilde{P}$ is \emph{safe} if $\widetilde{P}$ has no vertex from $F$, and also $\widetilde{P}$ does not contain any virtual vertex of the affected components.
That is, $\widetilde{P}$ is safe if
\begin{equation}\label{eq:FT-TD-path}
V(\widetilde{P}) \cap \Big(F \cup \bigcup_{\gamma_j \in A(\Gamma,F)} \big(\widehat{V}(\gamma_j) - V(G)\big) \Big) = \emptyset.
\end{equation}

\begin{lemma}\label{lem:FT-in-TD}
Let $P$ be an $x$-$y$ segment in $P_{a,b,F}$ for $x,y\in U$. 
Suppose $x,y$ are connected by a safe path in $TD_i$ for some $i \in \{1,\ldots, K\}$,
Then, there is an $x$-$y$ path $P' \subseteq G - F$ that respects $(B, \Lambda)$.
\end{lemma}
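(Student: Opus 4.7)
The plan is to take the safe $\TD_i$-path $\widetilde{P}$ from $x$ to $y$, slice it at its $B_i$-vertices, and convert each piece into a $G$-path that lies inside one of the translated trees $T_{i,j}$, hence inside a single $\Lambda$-component.

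First decompose $\widetilde{P} = \widetilde{P}_1 \circ \cdots \circ \widetilde{P}_r$ where the breakpoints lie in $\{x,y\} \cup B_i$ and no $\widetilde{P}_q$ has a $B_i$-vertex in its interior. Since $\widetilde{P}_q$ lies in $\TD_i - B_i$, it is contained in a single tree $F_{i,j_q}$ of the forest $\TD_i - B_i$. Now convert $\widetilde{P}_q$ into a $G$-path $P'_q$ by the following local surgery: keep every real vertex and every real edge of $\widetilde{P}_q$ — these are real $G$-edges between vertices of $U_i \cup \Port_i(\Gamma)$, and by construction they already belong to $T_{i,j_q}$ — and replace every maximal \emph{virtual sub-segment} of $\widetilde{P}_q$ (one that enters $\widehat{T}(\gamma_\ell)$ at a portal $p_1$, traverses internal virtual vertices of $\widehat{T}(\gamma_\ell)$, and exits at a portal $p_2$) with the unique $T(\gamma_\ell)$-path from $p_1$ to $p_2$. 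Two facts make the substitution sound: (i) because virtual vertices of $\widehat{T}(\gamma_\ell)$ appear in $V(F_{i,j_q})$, we have $\gamma_\ell \in \Gamma_{i,j_q}$, and therefore $T(\gamma_\ell) \subseteq T_{i,j_q}$; and (ii) by the safety of $\widetilde{P}$, every such $\gamma_\ell$ is unaffected, so $\gamma_\ell \cap F = \emptyset$ and $T(\gamma_\ell)$ is disjoint from $F$.

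Setting $P' \bydef P'_1 \circ \cdots \circ P'_r$, the walk $P'$ avoids $F$: the real vertices inherited from $\widetilde{P}$ are not in $F$ by safety, and each inserted $T(\gamma_\ell)$-path lies in an unaffected component, hence is disjoint from $F$. And $P'$ respects $(B, \Lambda)$: each $P'_q$ lies in $T_{i,j_q}$, which lies in the unique $\lambda \in \Lambda$ absorbing it in the union graph $G'$; consecutive $P'_q$'s are separated by vertices of $B_i \subseteq B$, so every maximal $B$-free sub-walk of $P'$ sits inside some $P'_q$ and hence inside a single $\Lambda$-component.

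The main subtlety is the virtual-to-real surgery, and the binary trees $\widehat{T}(\gamma_\ell)$ were introduced precisely so that the local safety condition — forbidding virtual vertices of affected components on $\TD_i$-paths — can be transferred to the global statement that every $\gamma_\ell$ whose virtual interior is visited by $\widetilde{P}$ must be unaffected, which is exactly what allows us to route through $T(\gamma_\ell)$ in $G$ without hitting $F$.
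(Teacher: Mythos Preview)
Your proof is correct and follows essentially the same route as the paper's: take the safe $\TD_i$-path $\widetilde{P}$, replace each maximal virtual sub-segment lying in some $\widehat{T}(\gamma_\ell)$ by the corresponding $T(\gamma_\ell)$-path (safety guarantees such $\gamma_\ell$ are unaffected, so the result avoids $F$), and observe that the translated path respects $(B_i,\Lambda_i)$ and hence $(B,\Lambda)$. The only cosmetic difference is that you explicitly slice at $B_i$-vertices first and then translate each piece into a single $T_{i,j_q}$, whereas the paper translates globally and then notes that $\widetilde{P}$ already respects $(B_i,\{F_{i,j}\})$; the two orderings yield the same argument.
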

\begin{proof}
Let $\widetilde{P}$ be the tree-path between $x,y$ in $TD_i$.
Every virtual vertex in $\widetilde{P}$ belongs to one of the binary trees $\widehat{T}(\gamma_\ell)$ for $\gamma_\ell \in \Gamma_i$.
To obtain a $G$-path from $\widetilde{P}$, we perform a ``translation process" and replace every maximal $\widetilde{P}$-segment from such $\widehat{T}(\gamma_\ell)$ with the \emph{real} $T(\gamma_\ell)$-path connecting the endpoints of the segment.
(Note that the maximality implies that these endpoints are portal vertices, hence they are real.)
Since $\widetilde{P}$ is safe, the resulting path $P'$ is in $G - F$.
Finally, we show that $P'$ respects $(B, \Lambda)$.
Recall that $\{F_{i,j}\}_{j=1}^{k_i}$ are the connected components of $\TD_i - B_i$, and therefore the $\TD_i$-path $\widetilde{P}$ respects the decomposition $(B_i,\{F_{i,j}\}_{j=1}^{k_i})$ of $\widehat{G}_i$.
As the translation process in which $P'$ is obtained from $\widetilde{P}$ replaces $F_{i,j}$-segments with $T_{i,j}$-segments, $P'$ respects $(B_i, \Lambda_i)$, and hence also $(B, \Lambda)$.
\end{proof}

\begin{lemma}\label{lem:key-respecting-lemma}
Let $P$ be an $x$-$y$ segment in $P_{a,b,F}$ for $x,y\in U$.
Suppose $\emptyset \neq V(P) - \{x,y\} \subseteq \gamma_j$ for some $\gamma_j \in \Gamma$.
Then there is an $x$-$y$ path $P'$ in $G - F$ that respects $(B,\Lambda)$.
\end{lemma}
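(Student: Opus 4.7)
The plan is to construct a safe $\TD_i$-path between $x$ and $y$ for an appropriate index $i$, and then invoke \Cref{lem:FT-in-TD} to extract the desired $P'$. The leverage we have on the old decomposition $(U,\Gamma)$ is that the segment $P$ makes just one brief "detour" through $\gamma_j$, so we only need to carve out a narrow fault-tolerant corridor through the binary tree $\widehat{T}(\gamma_j)$.

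First I would identify the two entry portals into $\gamma_j$ along $P$: let $p_x$ be the $P$-neighbor of $x$ and $p_y$ the $P$-neighbor of $y$ (possibly $p_x=p_y$ if $P$ has a single internal vertex). Both lie in $\gamma_j$ and are $G$-adjacent to $U$-vertices, so $p_x,p_y \in \Port(\gamma_j)$. Next, I would apply the covering family $\{S_1,\ldots,S_K\}$ from Step~2 of $\FTDecomp$ (i.e., \Cref{thm:det-FT-sampling} with $a=2f$, $b=5$) to the hit set $H \bydef \{x,y,p_x,p_y,\gamma_j\}$ and the miss set $M \bydef (F \cap (U \cup \Port(\Gamma))) \cup A(\Gamma,F)$. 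Here $|H|\le 5$ and $|M|\le 2f$, and $H\cap M=\emptyset$ since $P_{a,b,F}$ avoids $F$ (so $x,y,p_x,p_y \notin F$) and $\gamma_j$ is unaffected (otherwise the internal $\gamma_j$-vertices of $P$ would meet $F$). This yields an index $i$ with $H \subseteq S_i$ and $M \cap S_i = \emptyset$.

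For this $i$, I would argue that $x$ and $y$ lie in the same Steiner tree of $\TD_i$. Indeed, in $\widehat{G}_i$ the edges $\{x,p_x\}$ and $\{p_y,y\}$ are present (since $x,y\in U_i$ and $p_x,p_y\in\Port_i(\gamma_j)$), and the tree $\widehat{T}_i(\gamma_j)$ still connects $p_x$ to $p_y$ because it is obtained from a binary tree by deleting some leaves. Thus $x,y$ are connected in $\widehat{G}_i$, and since $\TD_i$ is a Steiner forest for $U_i$, there is a $\TD_i$-path $\widetilde{P}$ between them. I would then verify that $\widetilde{P}$ is safe in the sense of~\eqref{eq:FT-TD-path}: any non-virtual vertex of $\widehat{G}_i$ lies in $U_i \cup \Port_i(\Gamma)$, and these sets miss $F$ because $S_i$ misses $F \cap (U \cup \Port(\Gamma))$; meanwhile, virtual vertices of an affected $\gamma_k$ can occur in $\widehat{G}_i$ only through $\widehat{T}_i(\gamma_k)$, which requires $\gamma_k \in \Gamma_i \subseteq S_i$ — precluded because $S_i$ misses $A(\Gamma,F)$. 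Applying \Cref{lem:FT-in-TD} to this safe $\widetilde{P}$ then delivers the required $x$--$y$ path $P'$ in $G-F$ that respects $(B,\Lambda)$.

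The main obstacle here is accounting: pinning down the right hit/miss budget. One has to notice that failed vertices lying strictly \emph{inside} $\Gamma$-components (neither in $U$ nor in $\Port(\Gamma)$) need not be "missed" at all, because they are never present in $\widehat{G}$ — it is the \emph{affected components themselves} that must be missed wholesale, which is why $|M|\le 2f$ (and not $f$ plus the total number of components). Similarly, the hit set must include not just $x,y$ but also the two portals $p_x,p_y$ and the component $\gamma_j$ itself, to force the relevant edges and $\widehat{T}_i(\gamma_j)$ into $\widehat{G}_i$; this is exactly the source of the $b=5$ parameter baked into $\FTDecomp$.
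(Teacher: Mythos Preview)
Your argument has a genuine gap: you claim $\gamma_j$ is unaffected, justifying this by ``otherwise the internal $\gamma_j$-vertices of $P$ would meet $F$.'' That inference is wrong. The hypothesis only says $V(P)-\{x,y\}\subseteq\gamma_j$ and that $P$ avoids $F$; it does \emph{not} say $V(P)-\{x,y\}=\gamma_j$. The component $\gamma_j$ may well contain failed vertices off of $P$, so $\gamma_j\in A(\Gamma,F)$ is entirely possible. When that happens your hit set $H$ and miss set $M$ intersect (since $\gamma_j\in H\cap M$), so the covering family gives you no index $i$, and your whole construction collapses.

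The paper's proof anticipates exactly this. Its miss condition is $\Gamma_i\cap(A(\Gamma,F)-\{\gamma_j\})=\emptyset$: it deliberately allows $\gamma_j$ itself to be both hit and affected. Consequently the $\TD_i$-path $\widetilde P$ need not be safe --- it may traverse virtual vertices of $\widehat T(\gamma_j)$. The paper then branches: if $\widetilde P$ is safe, \Cref{lem:FT-in-TD} finishes as you describe; if not, a separate argument shows that the \emph{original} segment $P$ already respects $(B,\Lambda)$. That second branch uses the structure of $\Decomp$ (\Cref{thm:Decomp}): the unsafe virtual vertex $w'\in\widehat V(\gamma_j)$ lies in some component $F_{i,\ell}$ of $\TD_i-B_i$, and since $x,z_x\notin B$ one can connect $x$ to $w'$ in $\widehat G_i-B_i$ via the edge $(x,z_x)$ and the binary-tree path from $z_x$ to $w'$; hence $x\in F_{i,\ell}$ too, and the translation $T_{i,\ell}$ swallows all of $T(\gamma_j)$, placing $x$ and $z_x$ in a common $\Lambda$-component. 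This is the missing idea your proposal needs to handle the affected-$\gamma_j$ case.
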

\begin{proof}
    For $w \in \{x,y\}$, let $z_w$ be the closest vertex in $\Port(\gamma_j)$ to $w$ on the path $P$, which exists as $V(P) - \{x,y\} \neq \emptyset$.
    (It might be that $w = z_w$.)
    We say a subgraph $\widehat{G}_i$ of $\widehat{G}$ \emph{nice} if it satisfies all the following properties:
    \begin{itemize}
        \item \underline{`hit' properties}: $x,y \in U_i$, $z_x,z_y \in \Port_i(\gamma_j)$, and $\gamma_j \in \Gamma_i$.
        \item \underline{`miss' properties:} $\paren{U_i \cup \Port_i(\gamma_j)} \cap F = \emptyset$ and $\Gamma_i \cap \paren{A(F, \Gamma) - \{\gamma_j\}} = \emptyset$.
    \end{itemize}
    The construction of the $\{\widehat{G}_i\}$ subgraphs using \Cref{thm:det-FT-sampling} guarantees that at least one such nice $\widehat{G}_i$ exists, which we fix for the rest of the proof.
    As $x,y \in U_i$ are connected in $\widehat{G}_i$ (since $\gamma_j \in \Gamma_i$), they are also connected in the Steiner tree $\TD_i$ by an $x$-$y$ path $\widetilde{P} \subseteq \TD_i$.
    If $\widetilde{P}$ is safe, then we are done by Lemma \ref{lem:FT-in-TD}.

    Assume $\widetilde{P}$ is \emph{not} safe.
    In this case, we prove that $P$ \emph{itself} respects $(B, \Lambda)$, so we can take $P' = P$.
    Note that $V(P[z_x, z_y]) \subseteq \gamma_j$, and $\gamma_j$ is entirely contained in some $\Lambda$-component.
    Therefore, by \Cref{obs:concatenation-respecting}, it suffices to prove that $P[x,z_x]$ and $P[y, z_y]$ respect $(B,\Lambda)$.
    Due to symmetry, we focus on $(x, z_x)$.
    If $x = z_x$, this is trivial.
    Suppose now that $x \neq z_x$, so $P[x,z_x]$ is just a single edge between $x$ and $z_x$.
    If $x \in B$ or $z_x \in B$, then $P[x,z_x]$ clearly respects $(B, \Lambda)$, so further assume $x, z_x \notin B$.
    In this case, we need to show that some $\Lambda$-component contains both $x$ and $z_x$.
    Since $\widehat{G}_i$ is nice, we have that $V(\widetilde{P}) \cap F=\emptyset$ and that any affected $\gamma' \neq \gamma_j$ is not in $\Gamma_i$.
    Hence, as $\widetilde{P}$ is not safe, by \Cref{eq:FT-TD-path} it must be that $\gamma_j \in A(\Gamma,F)$ and 
    $V(\widetilde{P}) \cap \paren{\widehat{V}(\gamma_j) - V(G)} \neq \emptyset$.
    That is, $\widetilde{P}$ contains a virtual vertex $w'$ from $\widehat{V}(\gamma_j)$. 
    As $B$ contains no virtual vertices, $(x,z_x)\circ \pi(z_x,w',\widehat{T}(\gamma_j))$ is a path in $\widehat{G}_i - B_i$ that connects $x$ and $w'$.
    By \Cref{thm:Decomp}, $x$ and $w'$ are must be in the same component $F_{i, \ell}$ of $\TD_i - B_i$.
    As $w' \in \widehat{V}(\gamma_j)$, the translation process by which $T_{i,\ell}$ is obtained from $F_{i,\ell}$ guarantees that $x \in T_{i,\ell}$ and $z_x \in T(\gamma_j) \subseteq T_{i,\ell}$.
    Namely, both $x$ and $z_x$ belong to the same $\Lambda_i$-component $T_{i,\ell}$, and hence also to the same $\Lambda$-component.
\end{proof}
\begin{lemma}\label{lem:only-U}
    Let $x,y$ be two consecutive $U$-vertices on $P_{a,b,F}$.
    Then there is an $x$-$y$ path $P'$ in $G-F$ that respects $(U,\Gamma)$.
\end{lemma}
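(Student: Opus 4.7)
The plan is to simply take $P'$ to be the subpath of $P_{a,b,F}$ joining $x$ and $y$; I will argue that this choice satisfies both conditions demanded by the lemma without any further construction. Recall that $P_{a,b,F}$ was fixed earlier in the analysis as an $a$-$b$ path lying in $G-F$ that respects $(U,\Gamma)$. Since $P' := P_{a,b,F}[x,y]$ is a subpath of $P_{a,b,F}$, it automatically avoids $F$, so the first requirement $P'\subseteq G-F$ is free.

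The remaining step is to verify that $P'$ respects $(U,\Gamma)$ in the sense of \Cref{def:respect}. I will take an arbitrary segment $Q$ of $P'$ that contains no $U$-vertex and produce $\gamma_j\in\Gamma$ with $V(Q)\subseteq\gamma_j$. Since $x$ and $y$ are \emph{consecutive} $U$-vertices along $P_{a,b,F}$, no vertex strictly between them on $P_{a,b,F}$ lies in $U$. Therefore any $U$-free segment $Q$ of $P'$ is necessarily disjoint from $\{x,y\}$, and hence is also a segment of the longer path $P_{a,b,F}$ containing no $U$-vertex. Invoking the respecting property of $P_{a,b,F}$ immediately yields some $\gamma_j\in\Gamma$ with $V(Q)\subseteq\gamma_j$, as required.

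The only edge case to note is when $x$ and $y$ are adjacent on $P_{a,b,F}$, so that $P'$ is the single edge $\{x,y\}$: in that situation every $U$-free segment of $P'$ is empty, the respecting condition is vacuously satisfied, and nothing else is needed. Overall the proof is an essentially definitional unwinding of \Cref{def:respect} together with the hypothesis that $x,y$ are consecutive $U$-vertices, so there is no substantive obstacle; the role of the lemma is structural, setting up the pairs $(x,y)$ on which one can subsequently invoke \Cref{lem:key-respecting-lemma} to upgrade $(U,\Gamma)$-respecting subpaths into $(B,\Lambda)$-respecting ones and thereby complete the proof of Part~1 of \Cref{thm:key-theorem}.
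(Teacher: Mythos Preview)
Your argument is internally correct for the statement \emph{as literally written}, but that statement contains a typo: the conclusion should read ``respects $(B,\Lambda)$'' rather than ``respects $(U,\Gamma)$''. This is clear both from the paper's own proof of the lemma, which finishes by invoking \Cref{lem:FT-in-TD} (whose conclusion is a $(B,\Lambda)$-respecting path), and from the way the lemma is used immediately afterward to assemble the $(B,\Lambda)$-respecting path $P'_{a,b,F}$. Your trivial choice $P'=P_{a,b,F}[x,y]$ only inherits the $(U,\Gamma)$-respecting property of $P_{a,b,F}$; it says nothing about $(B,\Lambda)$, since in general neither $x$ nor $y$ need lie in $B$, and there is no reason the single edge $\{x,y\}$ should land inside a $\Lambda$-component.

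There is also a misreading of the lemma's role. \Cref{lem:key-respecting-lemma} already handles each segment $P_j$ of $P_{a,b,F}$ that has at least one internal vertex (its hypothesis is $\emptyset\neq V(P)-\{x,y\}\subseteq\gamma_j$). \Cref{lem:only-U} is precisely the complementary case: two consecutive $U$-vertices joined by a single edge, so $V(P_j)-\{x_j,y_j\}=\emptyset$ and \Cref{lem:key-respecting-lemma} does not apply. The paper's proof treats this edge case by choosing, via \Cref{thm:det-FT-sampling}, some $\widehat{G}_i$ with $x,y\in U_i$, $F\cap U_i=\emptyset$, and $\Gamma_i\cap A(\Gamma,F)=\emptyset$; then the edge $\{x,y\}$ lies in $\widehat{G}_i$, the $\TD_i$-path between $x$ and $y$ is safe, and \Cref{lem:FT-in-TD} produces the desired $(B,\Lambda)$-respecting replacement. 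Your proposal skips this entire mechanism and therefore does not deliver what the surrounding argument needs.
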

\begin{proof}
    By \Cref{thm:det-FT-sampling}, there exists some $\widehat{G}_i$ such that $x,y \in U_i$, $F \cap U_i = \emptyset$, and $\Gamma_i \cap A(\Gamma, F) = \emptyset$.
    Thus, the edge between $x,y$ is present in $\widehat{G}_i$, implying that $x,y$ are connected by a path $\widetilde{P}$ in $\TD_i$.
    By choice of $G_i$, $\widetilde{P}$ must be safe.
    The result now follows from \Cref{lem:FT-in-TD}.
\end{proof}

We are now ready to finish the proof of Part 1 of \Cref{thm:key-theorem}.
As $P_{a,b,F}$ respects $(U,\Gamma)$, it can be broken into $P_{a,b,F} = P_1 \circ \cdots \circ P_{\ell}$ where each $P_j$ has its endpoints $x_j, y_j \in U$, and $V(P_j) - \{x_j, y_j\} \subseteq \gamma_j$ for some $\gamma_j \in \Gamma$.
Applying \Cref{lem:key-respecting-lemma,lem:only-U} on every $P_j$ and using \Cref{obs:concatenation-respecting} yields a new $a$-$b$ path $P'_{a,b,F} = P'_1 \circ \cdots \circ P'_{\ell}$ in $G - F$ that respects $(B, \Lambda)$, as required.

\medskip
This concludes the proof of \Cref{thm:key-theorem}.

\subsection{Hierarchy and Labels Construction}

\paragraph{A hierarchy of decompositions.}
We use \Cref{thm:key-theorem} to construct a \emph{hierarchy} of decompositions, as follows.
We initialize $U_0 = V$, $\Gamma_0 = \emptyset$ and let $\tau=\lceil n^{\epsilon} \rceil$. 
(Note that every query $\ang{s,t,F}$ respects $(U_0, \Gamma_0)$.)
We then iteratively invoke $\FTDecomp$ to obtain
\begin{align*}
(U_1,\Gamma_1) &\leftarrow  \FTDecomp(U_0,\Gamma_0, \tau),				\\
(U_2,\Gamma_2) &\leftarrow  \FTDecomp(U_1,\Gamma_1, \tau),				\\
		&\cdots\\
(U_i,\Gamma_i) &\leftarrow  \FTDecomp(U_{i-1},\Gamma_{i-1}, \tau),		\\
		&\cdots\\
(\emptyset,\Gamma_R) &\leftarrow \FTDecomp(U_{R-1},\Gamma_{i-1}, \tau).
\end{align*}
It follows from \Cref{thm:key-theorem} that $R \leq 1/\epsilon$ and 
$\Delta(\Gamma_i) = i \cdot O(\tau f \log n)^{12} = \widetilde{O}(n^\epsilon f^{12} / \epsilon)$
for every $i \in \{0, \dots, R\}$.

\paragraph{Vertex names.} For every $i \in \{1,\ldots, R-1\}$ and $u \in V(\Gamma_i)$, let $\gamma_i(v)$ be the $\Gamma_i$-component containing $v$. For a component $\gamma \in \Gamma_i$, let $\anc_{T(\gamma)}(\cdot)$ be ancestry labels for the tree $T(\gamma)$, constructed in a similar fashion as in \Cref{lem:anc-labels}. Let $\anc(v)$ be the concatenation of all its ancestry labels with respect to the $\Gamma_i$-components containing it, so $\anc(v)$ consists of $O(\log^2 n)$ bits.
That is, $\anc(v) = \ang{\id(v), \anc_{T(\gamma_1(v))}(v), \dots, \anc_{T(\gamma_R(v))}(v) } $
(If $v$ does not belong to any component of $\Gamma_i$, then the $i$th ancestry label is just replaced with a null symbol $\perp$.)

For every $i \in \{0,\ldots, R-1\}$ and $v \in V$, let $U_i(v)=U_i \cap N^+(\gamma_i(v))$ 
and let $U_{i,f}(v)$ be a arbitrary subset of $f+1$ vertices in $U_i(v)$ (if exists). That is, if $|U_i(v)|\leq f+1$, then define $U_{i,f}(v)=U_i(v)$, and otherwise let $U_{i,f}(v)\subseteq U_i(v)$ such that $|U_{i,f}(v)|=f+1$. 

We are now ready to define the \emph{name} of each vertex $v$, denoted as $\name(v)$. The latter consists of a list $\ang{\name_0(v), \ldots, \name_R(v)}$ of $R$ sub-names, where the $i$-th name of $v$, $\name_{i}(v)$, is defined in a (backward) inductive manner, as follows. 
Let $\name_R(v)=\anc(v)$ and for every $i \in \{0,\ldots, R-1\}$, let:

\[\name_i(v) = 
   \begin{cases}
     \ang{\anc(v), \name_{i+1}(v)}, & \text{if~} v \in U_i,\\
     \ang{\anc(v), \{\name_{i+1}(u)\}_{u \in U_{i,f}(v)} } & \text{otherwise.}
   \end{cases}
\]
From now on, we identify a vertex with its name. E.g., when we say that a data structure ``stores" a vertex $v$, we mean that it stores the name of $v$, namely, $\name(v)$. 

\begin{observation}\label{obs:name-length}
The name $\name(v)$ consists of $b=\widetilde{O}(f^R)$ bits.
\end{observation}

\paragraph{Deterministic labels given low-degree spanning trees.}
Our construction uses, in an almost black-box manner, the deterministic labeling scheme against \emph{edge} faults by \cite{IzumiEWM23} constructed by using the low-degree spanning trees $T(\gamma)$, up to small variations.
This yields the following lemma,
whose proof is deferred to \Cref{sect:FT-edge-basic}. Let $b$ be a bound on number of bits in the name $n(v)$ of each vertex $v$. By Obs. \ref{obs:name-length}, $b=\widetilde{O}(f^{1/\epsilon})$.

\begin{lemma}\label{lem:FT-edge-basic}
    Let $\gamma \in \Gamma_i$.
    In $\poly(n,b)$-time,
    one can construct $\tilde{O}(f^5 \Delta(T(\gamma))^3 \cdot b)$-bit labels $L_\gamma (v)$ for each $v \in \gamma$, with the following properties.
    Let $s \in \gamma$ and $F \subseteq V$, $|F| \leq f$.
    Suppose one is given the $i$-th \emph{names} of $s$ and of all vertices in $F$, (i.e., the list $\{\name_i(v)\}_{v \in F \cup \{s\}}$) and the \emph{labels} $\{L_\gamma(v) \mid v \in F \cap \gamma\}$. Then:
    \begin{enumerate}
        \item For any $t \in \gamma$, given also the $i$-th name of $t$, one can determine if $s,t$ are connected in $G[\gamma]-F$.
        \item One can find the $(i+1)$-th name of some $u \in U_i (s, F)$, or determine that $U_i (s, F) = \emptyset$.
    \end{enumerate}
\end{lemma}

\def\APPENDMARKEDVERTEX{
% \begin{proof}
First consider the case where $\gamma \cap F=\emptyset$.
Then Part 1 of \Cref{lem:FT-edge-basic} is trivial, as $s$ is connected to any $t \in \gamma$ in $G[\gamma]$.
For Part 2, recall that the $i$-th name of $s$, $\name_i(s)$, contains the $(i+1)$-th names of all vertices in $U_{i,f}(v)$.
If $U_{i,f}(v) - F = \emptyset$, we determine that $U_i (s, F) = \emptyset$.
Otherwise, we report the $(i+1)$-th name of an arbitrary vertex in $U_{i,f}(v) - F$. 
From now on, we assume that $\gamma \cap F = \emptyset$.

\medskip
Denote $d = \Delta(T(\gamma))$, and let $r$ be the root of $T(\gamma)$.
We use the following result of \cite{IzumiEWM23}:
\begin{claim}[Lemmas 2 and 5 in \cite{IzumiEWM23}]\label{cl:det-sketches}
    Let $H$ be an $O(n)$-vertex graph with spanning tree $T$, where each $v \in V(H)$ has a unique $b$-bit identifier.
    Let $\alpha \geq 1$.
    There is a $\poly(n)$-time deterministic algorithm assigning each $v \in V(H)$ a string $\dsketch_{H,T}^\alpha (v)$ of $\tilde{O}(\alpha^2 b)$ bits with the following property:
    Let $U \subseteq V(H)$, and denote $\dsketch_{H,T}^\alpha (U) \bydef \bigoplus_{u \in U} \dsketch_{H,T}^\alpha (u)$.
    Suppose there are at most $\alpha$ outgoing $T$-edges from $U$.
    Then given (only) $\dsketch_{H,T}^\alpha (U)$, we can compute the identifiers of the endpoints of one non-tree edge $e \in E(H) - E(T)$ that is outgoing from $U$, or determine that such $e$ does not exist.
\end{claim}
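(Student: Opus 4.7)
The plan is to specialize the geometric / $\epsilon$-net sketching machinery of \Cref{sect:derandomization} to the single-tree setting. Root $T$ arbitrarily, compute DFS pre/post-order timestamps $\pre(\cdot),\post(\cdot)$, and identify each non-tree edge $e=\{u,v\}\in E(H)-E(T)$ with the planar point $(\pre(u),\pre(v))$, assuming $\pre(u)<\pre(v)$.

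Key geometric observation: if $U\subseteq V(H)$ has at most $\alpha$ outgoing $T$-edges, then, as vertex sets, $U=T_{r_1}\oplus T_{r_2}\oplus\cdots\oplus T_{r_k}$ for at most $k\leq\alpha+1$ rooted subtrees of $T$, whose roots $r_i$ mark the membership transitions of $U$ traversed root-to-leaves along tree edges (plus possibly the root of $T$ itself). A non-tree edge $e=\{u,v\}$ is outgoing from $U$ iff exactly one endpoint lies in $U$, iff exactly one endpoint lies in an odd number of the $T_{r_i}$'s. For a single $T_{r_i}$, the event ``exactly one endpoint in $T_{r_i}$'' corresponds geometrically to the arm region
\[
\{\pre(r_i)\leq x\leq\post(r_i)\}\ \oplus\ \{\pre(r_i)\leq y\leq\post(r_i)\}.
\]
Taking the XOR over $i=1,\ldots,k$, the region $R_U$ of non-tree edges outgoing from $U$ is the symmetric difference of $O(\alpha)$ horizontal strips and $O(\alpha)$ vertical strips, which collapses into a checkerboard pattern of $O(\alpha^2)$ disjoint axis-aligned rectangles (DAARs); hence $R_U\in\mathcal{R}_{O(\alpha^2)}$ in the notation of \Cref{lem:eps-net}.

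Using \Cref{lem:eps-net}, build a nested family $E(H)-E(T)=E_0\supseteq E_1\supseteq\cdots\supseteq E_h=\emptyset$ where $E_{j+1}$ is an $\epsilon_j$-net of $\langle\mathcal{R}_{O(\alpha^2)},E_j\rangle$ with $\epsilon_j=\Theta(\alpha^2\log n/|E_j|)$; the constant-factor shrinkage per level yields $h=O(\log n)$. Set $\beta=\Theta(\alpha^2\log n)$ and invoke the Reed--Solomon extended identifiers $\xid_\beta$ of \Cref{lem:xids}, applied to the $O(b)$-bit identifiers of non-tree edges, so each $\xid_\beta(e)$ has length $\tilde{O}(\beta b)=\tilde{O}(\alpha^2 b)$ bits. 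Define the sketch of $v$ entry-by-entry as
\[
\dsketch_{H,T}^\alpha(v)_j\;=\;\bigoplus_{e\in E_j,\ v\in e}\xid_\beta(e),\qquad j=0,1,\ldots,h-1.
\]
By linearity and the double-cancellation of edges with both endpoints in $U$, we have $\dsketch_{H,T}^\alpha(U)_j=\bigoplus_{e\in E_j\cap R_U}\xid_\beta(e)$. Concatenating the $h$ entries gives total per-vertex length $O(h)\cdot\tilde{O}(\alpha^2 b)=\tilde{O}(\alpha^2 b)$.

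To decode, scan $j$ from $h-1$ downward and take the largest $j$ at which $\dsketch_{H,T}^\alpha(U)_j\neq 0$; the $\epsilon$-net property (cf.\ \Cref{lem:hitting-sets}) forces $1\leq|E_j\cap R_U|\leq\beta$ (else level $j+1$ would also be nonzero), so \Cref{lem:xids} recovers the identifiers of all edges in $E_j\cap R_U$ and we return the endpoints of any one of them. If every entry is zero, no outgoing non-tree edge exists. The main technical obstacle is verifying the geometric structure of $R_U$ as a union of $O(\alpha^2)$ DAARs, which requires careful accounting of how the $O(\alpha)$ horizontal and vertical strips XOR-collapse into a grid with $O(\alpha)\times O(\alpha)$ cells; once this is established the parameter choices $\beta=\Theta(\alpha^2\log n)$ and $h=O(\log n)$ force the bit-length and correctness claims routinely.
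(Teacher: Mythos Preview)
The paper does not prove this claim; it is quoted from \cite{IzumiEWM23}. Your reconstruction is correct and is precisely the single-tree instance of the $\epsilon$-net plus Reed--Solomon machinery that the paper itself adapts in \Cref{sect:derandomization}: the subtree-XOR representation of $U$, the $O(\alpha^2)$-DAAR characterization of the outgoing-edge region, the geometrically shrinking $\epsilon$-net chain of height $O(\log n)$, and decoding via $\xid_\beta$ at the highest nonzero level all match the argument of \cite{IzumiEWM23}. One small point to make explicit: your invocation of \Cref{lem:xids} implicitly takes $M=2^{O(b)}$ rather than $M=\poly(n)$, so the per-codeword length becomes $O(\beta b)$ rather than $O(\beta\log n)$; the Reed--Solomon construction extends to this regime without change, but it is worth stating.
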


To support Part 2 of \Cref{lem:FT-edge-basic}, we need an auxiliary construction.
Define the graph $G'(\gamma)$ as follows: Start with $T(\gamma)$.
Add two new \emph{virtual} vertices $x, y$ (not in $V$). Connect  $x$ as a child of $r$, and $y$ as a child of $x$.
Also, connect every $u \in U_i - \gamma$ as a child of $x$.
Denote by $T'(\gamma)$ the resulting tree.
This tree will be a spanning tree for $G'(\gamma)$, which is obtained from it by adding the following edge-sets: $\{ \{u,y\} \mid u \in U_i \cap \gamma\}$, and $\{ \{u,w\} \in E(G) \mid u \in U_i-\gamma, w \in \gamma \}$.
Note that the edges in the first set are virtual, while the second is a subset of real edges from $G$.

Next, we make use of \Cref{thm:det-FT-sampling} to construct subgraphs of $G[\gamma]$ and $G'(\gamma)$.
Apply  \Cref{thm:det-FT-sampling} with universe $V \cup \{x,y\}$ and parameters $a = f$, $b = 2$ to get $K = \tilde{O}(f^3)$ subsets $V_1, \dots, V_K \subseteq V \cup \{x,y\}$.
For every $k \in \{1, \dots, K\}$, let $G_k [\gamma]$ and $G'_k(\gamma)$ be obtained from $G[\gamma]$ and $G'(\gamma)$ by restricting the edges to
\begin{align*}
E(G_k[\gamma]) &= \paren{E(G[\gamma]) \cap (V_k \times V_k)} \cup E(T(\gamma)), \\
E(G'_k(\gamma)) &= \paren{E(G'(\gamma)) \cap (V_k \times V_k)} \cup E(T'(\gamma)).
\end{align*}
That is, every non-$T(\gamma)$ edge that has an endpoint outside $V_k$ is removed from $G[\gamma]$ to obtain $G_k [\gamma]$, and similarly for $G'_k (\gamma)$.

We are now ready to define the $L_\gamma (\cdot)$ labels.
For $v \in \gamma$, let $T_v (\gamma)$ be the subtree of $T(\gamma)$ rooted at $v$.
Note that this is also the subtree of $T'(\gamma)$ rooted at $v$.
Let $v_1, \dots, v_\ell$, $\ell \leq d$ be the children of $v$ in $T(\gamma)$.
Denote also $v_0 = v$.
The label $L_\gamma(v)$ stores, for each $j \in \{0, \dots, d\}$,
the names of $v_j$, and the sketches
\begin{align*}
&\{\dsketch_{G_k[\gamma], T(\gamma)}^{fd} (T_{v_j} (\gamma)) \mid k \in \{1,\dots K\}\}, ~~
\{\dsketch_{G'_k(\gamma), T'(\gamma)}^{fd+1} (T_{v_j} (\gamma)) \mid k \in \{1,\dots K\}\}.
\end{align*}

The bit-length of $L_\gamma (v)$ is therefore
$\widetilde{O}(d \cdot K \cdot (fd)^2) = \widetilde{O}(f^5 d^3)$.
We now prove the desired properties of the labels.

Suppose we are given as input the names and labels as described in the lemma.
Removing $F\cap \gamma$ from $T(\gamma)$ breaks it into at most $fd$ connected parts, which we denote by $\mathcal{P}$.
Using the stored subtree sketches and the ancestry labels found in the stored names, we can compute for each part $P \in \mathcal{P}$:
\begin{itemize}
    \item $\dsketch_{G_k[\gamma], T(\gamma)}^{fd} (P)$ and $\dsketch_{G'_k(\gamma), T'(\gamma)}^{fd+1} (P)$ for every $k \in \{1, \dots, K\}$.
    \item An ancestry representation $\anc(P)$, that can be used together with $\anc_{T(\gamma)} (v)$ of any $v \in \gamma$ to determine if $v \in P$.
\end{itemize}
This is done similarly to the initialization process presented in \Cref{sect:query}.
We initialize $S$ as the part in $\mathcal{P}$ that contains $s$, which we identify using $\anc_{T(\gamma)} (s)$ (found in $s$'s name).
We iteratively grow $S$ into the connected component of $s$ in $G[\gamma] - F$, as follows.
At the beginning of iteration $j$, the set $S$ is the union of parts $P_1, \dots P_j \in \mathcal{P}$ such that $G[S]$ is connected.
Note that $S$ has at most $fd$ outgoing $T(\gamma)$-edges.
For each $i \in \{1, \dots, k\}$, we compute
\[
\dsketch_{G_k[\gamma], T(\gamma)}^{fd} (S) = \dsketch_{G_k[\gamma], T(\gamma)}^{fd} (P_1) \oplus \cdots \oplus \dsketch_{G_k[\gamma], T(\gamma)}^{fd} (P_j) ~,
\]
and use it to find an edge $e_k \in E(G_k[\gamma]) - E(T(\gamma))$ that is outgoing from $S$, or determine that such $e_k$ does not exist.
Call edge $e_k$ \emph{good} if it is not incident to $F$.
Suppose that $S$ is not yet a connected component of $G[\gamma] - F$, and fix an outgoing edge $e$ from $S$ in $G[\gamma] - F$.
Then by the construction of the $\{G_k [\gamma]\}$ subgraphs using  \Cref{thm:det-FT-sampling}, there is some $k$ such that $E(G_k [\gamma]) - E(T(\gamma))$ contains $e$ but no other edges incident to $F$.
This means that at least one good edge $e_k$ is found.
Using the $\anc_{T(\gamma)}$-labels stored in the names of $e_k$'s endpoints, we can determine the part $P_{j+1}$ that contains the non-$S$ endpoint of $e_k$, and grow $S$ by setting $S \gets S \cup P_{j+1}$.
At the final iteration $j^*$, no good edge is found, implying that $S = P_1 \cup \cdots \cup P_{j^*}$ is the connected component in $G[\gamma] - F$ that contains $s$.

To show Part 1, suppose now that we are also given the name of $t \in \gamma$.
Then, using $\anc_{T(\gamma)} (t)$, we can check if $t \in P_j$ for any $j \in \{1,\dots, j^*\}$, and thus determine if $s,t$ are connected in $G[\gamma] - F$.

Finally, we show Part 2.
Note that $S$ has at most $fd+1$ outgoing $T'(\gamma)$ edges: it has at most $fd$ outgoing $T(\gamma)$ edges as a union of parts from $\mathcal{P}$, and possibly also the $T'(\gamma)$ edges between and $x$, in case $r \in S$.
So, for each $k \in \{1,\dots, K\}$, we compute
\[
\dsketch_{G'_k(\gamma), T'(\gamma)}^{fd+1} (S) = \dsketch_{G'_k(\gamma), T'(\gamma)}^{fd+1} (P_1) \oplus \cdots \oplus \dsketch_{G'_k(\gamma), T'(\gamma)}^{fd+1} (P_j^*) ~,
\]
and use it to find an edge $e_k \in E(G'_k(\gamma)) - E(T'(\gamma))$ that is outgoing from $S$, or determine that such $e_k$ does not exist.
Again, $e_k$ is \emph{good} if it is not incident to $F$.
If a good $e_k$ is found, then one of its endpoints belongs to $U_i (s,F)$; if it is a virtual edge of the form $\{u,y\}$, then $u \in U_i(s,F)$, and otherwise it is some edge $\{u,v\} \in E(G)$ with $u \in U_i - \gamma$ and $v \in S$.
So, in this case, we report the name of the $U_i(s,F)$-endpoint.
If $U_i (s,F) \neq \empty$, then there must be some good edge outgoing from $S$ in $G'(\gamma)$. In this case, the construction of $\{G'_k (\gamma)\}$ graphs using \Cref{thm:det-FT-sampling} guarantees that one good edge will be found, by a similar argument as before.
Thus, if no good edge is found, we can safely report that $U_i (s,F) = \emptyset$.
% \end{proof}
}%\APPENDMARKEDVERTEX

\paragraph{The final vertex labels.}
We are now ready to construct the final labels for the vertices of $G$, by \Cref{alg:alternative-labels}.

\begin{algorithm}[H]
\caption{Constructing the label $L(v)$ of $v \in V$}\label{alg:alternative-labels}
\begin{algorithmic}[1]
\State \textbf{store} $v$
\For{$i \in \{0, \dots, R\}$}
    \If{$v \in \gamma$ for some (unique) $\gamma \in \Gamma_i$}
        \textbf{store} $L_\gamma (v)$
    \EndIf
\EndFor
\end{algorithmic}
\end{algorithm}

\paragraph{Length analysis.}
Consider a label $L_{\gamma}(v)$.
Recall that $\Delta(T(\gamma)) = \widetilde{O}(n^\epsilon f^{12} / \epsilon)$.
Thus, by \Cref{lem:FT-edge-basic}, such a label requires $\widetilde{O}(f^5 \cdot (n^\epsilon f^{12} / \epsilon)^3 \cdot b) = \widetilde{O}(f^{1/\epsilon + 41} n^{3\epsilon} / \epsilon^3)$.
As $R \leq 1/\epsilon$, we obtain that the final label $L(v)$ has length of $\widetilde{O}(f^{1/\epsilon + 41} n^{3\epsilon} / \epsilon^4)$ bits.

\subsection{Answering Queries}

In this section, we describe the algorithm for answering connectivity queries $\ang{s,t,F}$, $|F|\leq f$ given the labels $L(s)$, $L(t)$ and $L(v)$ for each $v \in F$.

The algorithm uses two subroutines that are straightforward to implement using \Cref{lem:FT-edge-basic}.
The first subroutine, $\mathsf{AreCaptured}(z, w, F, i)$ (\Cref{alg:AreCaptured}), is given the labels of $F$ and only the \emph{names} of $z,w$, and tests if the tuple $\ang{z,w,F}$ is captured by $(U_i, \Gamma_i)$.

\begin{algorithm}[H]
\caption{$\mathsf{AreCaptured}(z, w, F, i)$}\label{alg:AreCaptured}
\textbf{Input}: names of $z, w \in V(G)$, labels $L(v)$ of each $v \in F$, number of level $i \in\{0, \dots, R\}$.\\
\textbf{Output:} \emph{yes} if $\ang{z,w,F}$ is captured by $(U_i, \Gamma_i)$, \emph{no} otherwise.
\begin{algorithmic}[1]
\If{$z,w$ both belong to the same $\gamma \in V(\Gamma_i)$}
    \State Extract the $L_{\gamma}(\cdot)$-labels of $F \cap \gamma$ from their $L(\cdot)$-labels, and apply \Cref{lem:FT-edge-basic}(1).
    \If{the result is that $z,w$ are connected in $G[\gamma]-F$}
        \Return \emph{yes}
    \EndIf
\EndIf
\State \Return \emph{no}
\end{algorithmic}
\end{algorithm}

The second subroutine, $\mathsf{FindU}(z, F, i)$ (\Cref{alg:FindU}),
is designed to find some $u \in U_i (z,F)$, or determine that such does not exist. It is given the $L(\cdot)$-labels of $F$ and \emph{only the name} of $z$.

\begin{algorithm}[H]
\caption{$\mathsf{FindU}(z, F, i)$}\label{alg:FindU}
\textbf{Input}: $i$-th name of vertex $z \in V(G)$, labels $L(v)$ of each $v \in F$, number of level $i \in\{0, \dots, R\}$.\\
\textbf{Output:} $(i+1)$-th name of some $u \in U_i (z, F)$, or $\mathsf{null}$ if such does not exist.
\begin{algorithmic}[1]
\If{$z$ belongs to $\gamma \in V(\Gamma_i)$}
    \State Extract the $L_{\gamma}(\cdot)$-labels of $F \cap \gamma$ from their $L(\cdot)$-labels, and apply \Cref{lem:FT-edge-basic}(2).
    \If{the result is the $(i+1)$-th name of some vertex $u$}
        \Return $\name_{i+1}(u)$ \Comment{since $u \in U_i (z, F)$}
    \Else{}
        \Return $\mathsf{null}$ \Comment{since $U_i (z,F) = \emptyset$}
    \EndIf
\ElsIf{$z \in U_i - V(\Gamma_i)$}
    \Return $z$
\Else{}
    \Return $\mathsf{null}$
\EndIf
\end{algorithmic}
\end{algorithm}

We are now ready to describe how we answer the connectivity query $\ang{s,t,F}$ (\Cref{alg:alternative-query}).

\begin{algorithm}[H]
\caption{$\mathsf{AreConnected}(s,t,F)$}\label{alg:alternative-query}
\textbf{Input:} Label $L(s)$, $L(t)$ of $s,t \in V(G)$, and labels $L(v)$ of each $v \in F$, $|F| \leq f$. \\
\textbf{Output:} \emph{connected} is $s,t$ are connected in $G-F$, \emph{disconnected} otherwise.
\begin{algorithmic}[1]
\State $s_0, t_0 \gets s,t$
\For{$i = 0$ to $R$}
    \If{$\mathsf{AreCaptured}(s_i,t_i,F,i)$}
        \Return \emph{connected}
    \EndIf
    \State $s_{i+1}\gets \mathsf{FindU}(s_i, F, i)$ and $t_{i+1} \gets \mathsf{FindU}(t_i, F, i)$ \Comment{these are $(i+1)$-th names or $\mathsf{null}$}
    \If{$\mathsf{null} \in \{s_{i+1}, t_{i+1}\}$}
        \Return \emph{disconnected}
    \EndIf
\EndFor
\State \Return \emph{disconnected}
\end{algorithmic}
\end{algorithm}

\paragraph{Correctness.}
We maintain the invariant that in the beginning of every \emph{executed} iteration $i$, $\ang{s_i,t_i,F}$ is equivalent to $\ang{s,t,F}$ and respects $(U_i, \Gamma_i)$.
This clearly holds before iteration $0$.

Consider the execution of some iteration $i$.
If $\ang{s_i, t_i, F}$ are captured by $(U_i, \Gamma_i)$, then $s_i,t_i$ are connected in $G-F$, so the returned answer (\emph{connected}) is correct.
Suppose now that $\ang{s_i, t_i, F}$ are not captured by $(U_i, \Gamma)$.
If $\mathsf{null} \in \{s_{i+1}, t_{i+1}\}$, then $U_i (s_i, F) \times U_i (t_i,F) = \emptyset$, so by \Cref{obs:not captured}(2), $s_i,t_i$ are not connected in $G-F$, so the returned answer (\emph{disconnected}) is correct.
Otherwise, it holds that $\ang{s_{i+1}, t_{i+1}} \in  U_i (s_i, F) \times U_i (t_i,F)$.
Since $(U_{i+1}, \Gamma_{i+1})$ is the output of $\FTDecomp$ on $(U_i, \Gamma_i)$, \Cref{thm:key-theorem}(1) guarantees that  the tuple $\ang{s_{i+1},t_{i+1}, F}$ \emph{respects} the next decomposition $(U_{i+1}, \Gamma_{i+1})$, and \Cref{obs:not captured} shows that this tuple is equivalent to $\ang{s_i, t_i, F}$, and hence also to $\ang{s,t,F}$.
Thus, the invariant holds at the beginning of the next iteration $i+1$.

Finally, suppose the last iteration $R$ was executed.
The analysis of this iteration goes through exactly as before, only now $U_R = \emptyset$, so it cannot be that $\mathsf{null} \notin \{s_{R+1}, t_{R+1}\}$.
Therefore, this iteration must return an answer, which is correct as previously shown.

\subsection{Proof of \Cref{lem:FT-edge-basic}}\label{sect:FT-edge-basic}
\APPENDMARKEDVERTEX

\section{Missing Proofs}\label{sect:missing-proofs}
\APPENDEDGECLASS
\APPENDESTARCUT
\APPENDANCLABELS
\APPENDEIDS
\APPENDEDGESKETCHFROMSEED
\APPENDONESPANNINGTREE

\end{appendix}

\end{document}